\newtheorem{thm}{Theorem}[chapter]
\newtheorem*{thm*}{Theorem}
\newtheorem{definition}[thm]{Definition}
\newtheorem{prop}[thm]{Proposition}
\newtheorem{lem}[thm]{Lemma}
\newtheorem{cor}[thm]{Corollary}
\newtheorem{ex}[thm]{Example}
\newtheorem{rem}[thm]{Remark}
\renewcommand{\emph}[1]{\textsf{\textit{#1}}}
\DeclareMathOperator{\vol}{vol}
\DeclareMathOperator{\diam}{diam}
\DeclareMathOperator{\len}{len}
\DeclareMathOperator{\cay}{Cay}
\DeclareMathOperator{\poly}{poly}
\DeclareMathOperator{\spanN}{span}
\DeclareMathOperator{\dist}{dist}
\begin{document} 

\title{Adiabatic Quantum Computation}
\author{Friederike Anna Dziemba}
\date{September 25, 2014}

\begin{titlepage}
\,
\vspace{124pt}

\centering{
\begin{onehalfspace}

\begin{Large}
\quad

\begin{huge}
\vspace{4cm}
\sffamily Adiabatic Quantum Computation
\end{huge}

by

Friederike Anna Dziemba (2728680)

\vspace{2cm}
A thesis submitted in partial fulfillment\\ of the requirements for the degree of\\
Master of Science in Physics

\vspace{1cm}
at
\vspace{1cm}

The Leibniz University of Hanover\\
Institute of Theoretical Physics\\
Quantum Information Group

\vspace{1cm}
September 25, 2014\\
\end{Large}

\end{onehalfspace}
}
\end{titlepage}

\clearpage
\thispagestyle{empty}\quad
\clearpage

\thispagestyle{empty}
\quad
\vspace{50em}

\noindent \begin{tabular}{@{}ll}
Advisor: &Prof. Dr. Tobias J. Osborne\\
Co-Advisor: &Prof. Dr. Reinhard F. Werner
\end{tabular}

\vspace{5em}

\noindent I affirm that I have written the thesis myself and have not used any sources and aids other than those indicated.

\vspace{2em}

\noindent Date / Signature: \rule{0.5\textwidth}{0.4pt}

\clearpage
\thispagestyle{empty}\quad
\clearpage

\thispagestyle{empty}\cleardoublepage

\thispagestyle{empty}\chapter*{Abstract}

The quantum adiabatic theorem ensures that a slowly changing system, initially prepared in its ground-state, will evolve to its final ground-state with arbitrary precision. This fact can be exploited for a computational model with the ground-state carrying the computation information. The necessary evolution time of the adiabatic quantum computation increases with the inverse energy gap of the Hamiltonian. Currently a construction by Kitaev is the standard Hamiltonian used for simulation of an arbitrary quantum circuit via adiabatic quantum computation. The energy gap in this construction is mainly determined by the spectral gap  $\mathcal{O}\left(\frac{1}{L^2}\right)$ of an underlying path graph, with $L$ the length of the simulated circuit. In this thesis, we will broaden the concept of Kitaev to a class of ``standard graph Hamiltonians'' which allows us to substitute the path graph in the Kitaev Hamiltonian with different graph families possessing an improved spectral gap.

However, it turns out that restrictions on the graph families will make an improvement over the Kitaev construction difficult. On the one hand, we present some Hamiltonians based on particular graphs that show the same efficiency performance as the Kitaev Hamiltonian. On the other hand, we prove some restrictions on graphs in order to use them for an efficient adiabatic quantum computation by standard graph Hamiltonians. We will show that graphs with spectral gap  $\mathcal{O}\left(\frac{1}{L^k}\right)$, $k<1$, cannot be used for an efficient adiabatic quantum computation at all and that graphs with constant degree ratio and spectral gap $\mathcal{O}\left(\frac{1}{L^k}\right)$, $k<2$ need, as a minimal requirement, that their vertex set grows faster than polynomial in the circuit length.

Moreover we will prove in this thesis a new quantum adiabatic theorem for projection operators that expands the statement of the original adiabatic theorem to Hamiltonians with a degenerate ground-state.

\pagenumbering{gobble}
\thispagestyle{empty}\tableofcontents
\clearpage

\pagestyle{scrheadings}
\setheadsepline{0.4pt} 
\automark[chapter]{chapter} 

\pagenumbering{arabic}
 \chapter{Introduction}

\setcounter{page}{1}

This thesis presents a Hamiltonian model for adiabatic quantum computation with particular regard to its efficiency. Adiabatic quantum computation is based on the quantum adiabatic theorem that ensures that a slowly changing system initially prepared in its ground-state will evolve to its final ground-state with arbitrary precision. This fact motivates a computational model that encodes the computation input into the initial ground-state and ensures that the final ground-state encodes the desired computation output. The actual computation will then be achieved by time evolution. The neccessary evolution time depends, in addition to the error treshhold, on the energy gap of the Hamiltonian and the norm of its time derivatives. Since we will be interested in a particular efficient computation we will motivate a list of efficiency requirements for Hamiltonians in which the energy gap and the derivatives are the central optimization quantities.

The concept of quantum circuits will serve as a formalization of a computational task. A Hamiltonian that is capable of simulating quantum circuits via an efficient adiabatic quantum computation was introduced by Kitaev \cite{aharonov2}. Indeed the so-called Kitaev Hamiltonian is based on the normalized Laplacian of a graph whose spectral gap mainly domiates the actual energy gap of the Hamiltonian. We take advantage of the widely explored field of spectral graph theory to extend the idea of the Kitaev Hamiltonian to derive the more general concept of a ``standard graph Hamiltonian''. This Hamiltonian admits different underlying graph families and therefore the optimization of the energy gap turns into the task of finding an appropriate graph family.

However the optimization is more difficult than just choosing a graph family with a large gap, since the construction of a standard graph Hamiltonian and the efficiency requirements restrict the set of possible graphs we can use. In particular we will see that the need for a neccessary minimum diameter combined with the hope for a large spectral gap implies a problematic vertex expansion of the graph. We will actually not succeed in finding a construction that outperforms the Kitaev Hamiltonian, but we will present on the one hand some new constructions with the same efficiency and on the other hand prove some negative results, for example that standard graph Hamiltonians based on expander graphs will never be capable of an efficient adiabatic quantum computation.

Chapters 2--4 will each give an introduction to an important field: to spectral graph theory, quantum circuits and adiabatic quantum computation. Chapter 4 will also comprise a list of requirements for an efficient adiabatic quantum computation. In Chapter 5 we apply the combined knowledge of the basic chapters to define what we consider a ``standard graph Hamiltonian'' and specify the efficiency requirements for this particular kind of Hamiltonian. Section 5.3 will show that the important time derivatives of a standard graph Hamiltonian are always constant while Section 5.4 will contain some important restriction results. In the last three chapters we present, along with the Kitaev construction, some examples of standard graph Hamiltonians that allow an adiabatic quantum computation of comparable efficiency.

 \chapter{Graphs and Parallel Transport Networks}

\section{Basic definitions of graph theory}

In this first chapter we will lay out some basics of spectral graph theory. The following definitions and lemmata can be found in most standard books on spectral graph theory. As a reference see e.g. \cite[chapter 1]{chung}.

\begin{definition}\label{def:graph}
A \textsf{graph} $G=(V,E)$ consists of a finite, nonempty \textsf{vertex} set $V$ and a set $E \subseteq V \times V$ of \textsf{edges} such that $\forall u,v \in V: (u,v) \in E \Leftrightarrow (v,u) \in E$.
\end{definition}

\begin{definition}
A \textsf{weighted graph} $G=(V,E,w)$ is a graph $(V,E)$ with a \textsf{weight function} $w: V \times V \rightarrow \mathbb{R}^+_0$ with $w(v,u)=w(u,v)$ and $w(v,u)= 0 \Leftrightarrow (v,u) \notin E$  for all $v,u \in V$.
\end{definition}

Actually the edge set $E$ could be omitted in the definition of a weighted graph, because it is fully implied by the weight function. But for convenience it is advantageous to have a definition of an edge set.

Graphs are visualized by drawing for each edge $(u,v) \in E$ a line between two nodes representing the vertices and labelled respectively. For weighted graphs we add the edge weights as labels to the lines.

\begin{ex}
The weighted graph $G=(V, E,w)$ with
\begin{align*}
V&=\{0,1,2,3,4,5\}\\
w(5,0)&=w(0,5)=w(v,v+1)=w(v+1,v)=1   \quad\text{ for all } 0\le v\le 4\\
w(1,4)&=w(4,1)= w(2,5)=w(5,2)=2
\end{align*}
and all not defined function values of $w$ equaling $0$, is graphically represented as

\vspace{6pt}
\begin{figure}[\textwidth]
\centering
\begin{tikzpicture}[scale=0.9]
\node (0) at (-2,0) [circle,shade,draw,minimum size=8mm] {$0$};
\node (1) at (0,1.5) [circle,shade,draw,minimum size=8mm] {$1$};
\node (2) at (3,1.5) [circle,shade,draw,minimum size=8mm] {$2$};
\node (5) at (0,-1.5) [circle,shade,draw,minimum size=8mm] {$5$};
\node (4) at (3,-1.5) [circle,shade,draw,minimum size=8mm] {$4$};
\node (3) at (5,0) [circle,shade,draw,minimum size=8mm] {$3$};
\draw[line width=1pt] (0) to (1);;
\draw[line width=1pt] (1) to (2);
\draw[line width=1pt] (2) to (3);
\draw[line width=1pt] (3) to (4);
\draw[line width=1pt] (4) to (5);
\draw[line width=1pt] (5) to (0);
\draw[line width=1pt] (1) to (4);
\draw[line width=1pt] (2) to (5);
\node at (1.5,1.8) {$1$};
\node at (1.5,-1.8) {$1$};
\node at (-1.05,1) {$1$};
\node at (4.05,1) {$1$};
\node at (-1.05,-1.05) {$1$};
\node at (4.05,-1.05) {$1$};
\node at (2.35,-0.5) {$2$};
\node at (0.65,0.5) {$2$};
\end{tikzpicture}
\end{figure}

\end{ex}

Unless otherwise stated, throughout the thesis the term ``graph'' refers to that of Definition \ref{def:graph}, in some context we may say ``unweighted graph'' to point out that we do not talk about weighted graphs.

Most books first derive a lot of useful properties for unweighted graphs and then later generalize the proofs for weighted graphs. To avoid redundancy we will give definitions and results directly for weighted graphs in this section. Whatever property applies to weighted graphs applies of course also for an unweighted graph as the latter one can be regarded as a special case of a weighted graph with weight function
\begin{align*}
w(v,u) =
\begin{cases}
1 &\text{ if } (v,u) \in E \\
0 &\text{ if } (v,u) \notin E\text{.}
\end{cases}
\end{align*}

Therefore keep in mind that the fundamental graph vocabulary given by the next definition also applies for unweighted graphs accordingly:
\begin{definition}\label{def:graphBasics}
Let $G=(V,E,w)$ be a weighted graph.
\begin{itemize}
\item  Two vertices $u,v \in V$ are called \textsf{adjacent} or \textsf{neighbored} iff $(u,v) \in E$.
\item A \textsf{path} $p$ is a sequence of two or more vertices $p=(v_0,v_1, \dots v_n)$ such that $\forall i, 1\le i \le n: (v_{i-1}, v_i) \in E$. The vertices $v_0$ and $v_n$ are called \textsf{connected} via the path $p$. The \textsf{length} of the path $p$ is defined as $len(p)=n$.
\item A \textsf{connected component} of a graph $G=(V,E)$ is a subgraph $(V', E \cap V' \times V'), V' \subseteq V$ such that there is a path between any two distinct vertices $u, v \in V'$ and for all vertices $ v' \in V'$ and all vertices $v \in V \setminus V'$ it holds $(v',v) \notin E$. A graph is called \textsf{connected} iff it consists of just one connected component.
\item The \textsf{distance} $\dist(u,v)$ of two vertices $u,v \in V$ in a connected graph is defined as $0$ if the vertices are the same or otherwise as the minimum length of all paths connecting these two vertices.
\item The distance between vertex sets $X, Y\subseteq V$ is defined as $\dist(X,Y) = \min_{x\in X, y\in Y} \dist(x,y)$.
\item For a vertex set $W \subseteq V$ we define $\dist_{\rho}(W) := \left\{v\in V \,\big\vert\,  \dist(W, \{v\})\le \rho \right\}$.
\item The \textsf{diameter} of a connected graph $G$, denoted $diam$, is the largest distance between any two vertices of the graph.
\item The \textsf{degree} $d_v$ of a vertex $v \in V$ is defined as $d_v := \sum_{u \in V} w(v,u)$. Moreover let $d_{max}:=\max_{v\in V} d_v$ and $d_{min}:=\min_{v\in V} d_v$.
\item The \textsf{degree matrix} $T$ is the ${\vert V\vert}\times{\vert V\vert}$ diagonal matrix with the $(v,v)$-th entry having value $d_v$. Its ``inverse'' $T^{-1}$ is the diagonal matrix with the $(v,v)$-th entry having value $0$ if $d_v=0$ and $\frac{1}{d_v}$ otherwise.
\item A graph is called $d$-\textsf{regular}, iff all vertices have the same degree $d >0$.
\item The \textsf{volume} of a vertex set $S \subseteq V$ is defined as $\vol(S):= \sum_{v\in S} d_v$.
\item $\partial S:= \{ (u,v) \,\vert\, (u,v) \in E,\, u \in S, \, v\in \overline{S}\}$ is called the \textsf{edge boundary} of a vertex set $S \subseteq V$.
\item $\delta S :=\{v \,\vert\, (u,v)\in E,\, u\in S,\, v\in \overline{S}\}$ is called the \textsf{vertex boundary} of a vertex set $S \subseteq V$.
\end{itemize}
\end{definition}

Note that, with the exception of the definition of ``degree'', the above definitions do not even make use of the weight function. The degree of a vertex in the case of an unweighted graph reduces to the number of adjacent vertices. Similarily, in the next definition the adjacency matrix elements of an unweighted graph turn out to be simply $1$ or $0$ depending on whether the respective vertices are connected via an edge or not:

\begin{definition}
The \textsf{adjacency matrix} $A$ of a weighted graph $G=(V,E,w)$ is a linear operator on a $\vert V\vert$-dimensional vector space, called \textsf{graph space}, and defined as
\begin{align*}
A= \sum_{v,u\in V} w(v,u) \ket{u}\bra{v}
\end{align*}
with $\{\ket{v}\}_{v\in V}$ an orthonormal basis of the space, called the \textsf{vertex basis}.
\end{definition}

We will use the conventional short-hand
\begin{align*}
A_{u,v}&:=\braket{u | A | v}\\
f_v &:= \braket{v | f}
\end{align*}
for arbitrary operators $A$ and arbitrary vectors $\ket{f}$ for all $u,v \in V$.

\begin{definition}
The \textsf{standard Laplacian} $L$ of a weighted graph $G$ is defined as degree matrix minus adjacency matrix: $L= T-A$. The \textsf{normalized Laplcian} $\mathcal{L}$ is defined as $\mathcal{L}=T^{-1/2} L T^{-1/2}$.
\end{definition}

The elements of the normalized Laplacian can be calculated as follows
\begin{align*}
\mathcal{L}_{u,v}=
\begin{cases}
1 - \frac{w(v,v)}{d_v}\quad&\text{if } v=u \text{ and } d_v \ne 0\\
- \frac{w(v,u)}{\sqrt{d_v d_u}} &\text{if } v\ne u\\
0 &\text{otherwise.}
\end{cases}
\end{align*}
Since $L$ and $\mathcal{L}$ are symmetric, their eigevalues are real. $\mathcal{L}$ is called normalized Laplacian, because its eigenvalues are bounded by a constant as proven by the next theorem. But first we give a helpful expression for the two Laplacians in terms of the so-called Rayleigh quotient:
\begin{definition}
The \textsf{Rayleigh quotient} $R(M,\ket{x})$ of a matrix $M$ regarding a vector $\ket{x}$, $\ket{x} \ne 0$, is the expectation value of this matrix according to this vector and is given as follows:
\begin{align*}
R(M,\ket{x})=\frac{\braket{x | M | x}}{\braket{x | x}}\text{.}
\end{align*}
\end{definition}

\begin{lem}\label{lem:expectationL}
Let $L$ be the standard Laplacian of a weighted graph $G=(V,E,w)$ and $\ket{f}$ an arbitrary vector. Then it holds:
\begin{align*}
\braket{f | L | f} = \sum_{u \sim v} \left(f_v - f_u\right)^2 w(u,v)
\end{align*}
where we sum over all unordered pairs $u \sim v$ of adjacent vertices.
\end{lem}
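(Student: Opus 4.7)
The plan is to expand $\braket{f|L|f} = \braket{f|T|f} - \braket{f|A|f}$ using the definitions of the degree matrix and adjacency matrix, and then recognize a perfect square via the symmetry of the weight function.

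First I would compute the diagonal term: since $T$ is diagonal with entries $d_v = \sum_{u\in V} w(v,u)$, we get $\braket{f|T|f} = \sum_v d_v f_v^2 = \sum_{u,v\in V} w(v,u) f_v^2$, where the double sum is over all ordered pairs. The adjacency term is immediate from the definition $A = \sum_{v,u} w(v,u)\ket{u}\bra{v}$, yielding $\braket{f|A|f} = \sum_{u,v\in V} w(v,u) f_u f_v$.

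The key trick will then be to symmetrize the diagonal term. Using $w(u,v)=w(v,u)$, I can rewrite $\sum_{u,v} w(v,u) f_v^2 = \tfrac{1}{2}\sum_{u,v} w(v,u)(f_v^2 + f_u^2)$. Subtracting the adjacency contribution then gives
\begin{align*}
\braket{f|L|f} = \tfrac{1}{2}\sum_{u,v \in V} w(v,u)\bigl(f_v - f_u\bigr)^2,
\end{align*}
still summed over ordered pairs. Each unordered pair $\{u,v\}$ with $u\neq v$ appears exactly twice in this sum, cancelling the $\tfrac{1}{2}$, while the terms with $u=v$ vanish since $(f_v-f_u)^2=0$. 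Moreover, pairs that are not adjacent contribute nothing because $w(v,u)=0$ whenever $(u,v)\notin E$, so the ordered double sum collapses to the claimed sum over unordered adjacent pairs $u\sim v$.

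I do not anticipate any real obstacle here: the only care needed is the bookkeeping between ordered and unordered pairs, and making sure the symmetrization step is justified by the symmetry of $w$ rather than by any property of $\ket{f}$. No assumption on $\ket{f}$ (such as normalization or reality beyond real-valuedness of the components, which is implicit in writing $f_v^2$) is required beyond what is built into the statement.
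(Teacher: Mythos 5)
Your proposal is correct and follows essentially the same route as the paper: expand $L = T - A$, rewrite the degree term as a double sum over ordered pairs, and complete the square using the symmetry of $w$. Your explicit handling of the factor $\tfrac{1}{2}$, the $u=v$ terms, and the ordered-to-unordered collapse is just a more careful spelling-out of the bookkeeping the paper does implicitly.
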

\begin{proof}
\begin{align*}
\braket{f | L| f} &=\braket{f | T | f}- \braket{f | A | f}\\
&= \sum\limits_{v\in V} d_v f_v^2 - \sum\limits_{(u,v)\in E} w(u,v) f_v f_u\\
&= \sum\limits_{v\in V} \sum\limits_{u \in V} \left(w(u,v) f_v^2 -  w(u,v) f_v f_u\right)\\
&=\sum\limits_{u \sim v} \left(f_v^2 - 2 f_v f_u + f_u^2\right) w(u,v)\\
&= \sum\limits_{u \sim v}  \left(f_v - f_u\right)^2 w(u,v)\text{.}\qedhere
\end{align*}
\end{proof}

\begin{cor}\label{cor:rayleigh}
The Rayleigh quotient $R(\mathcal{L},\ket{g})$ with $\mathcal{L}$ being the normalized Laplacian of a connected weighted graph $G=(V,E,w)$ with more than one vertex fulfills the identity
\begin{align*}
R(\mathcal{L},\ket{g}) =  \frac{\sum\limits_{u \sim v} \left(f_v - f_u\right)^2 w(u,v)}{\sum\limits_{v \in V} f_v^2 d_v}
\end{align*}
with $\ket{f}:= T^{-1/2}\ket{g}$.
\end{cor}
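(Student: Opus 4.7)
The plan is a direct substitution argument. Since $G$ is connected with $|V|>1$, every vertex $v\in V$ has at least one neighbor, so $d_v>0$ for all $v$. Consequently the diagonal matrix $T$ is invertible, $T^{1/2}$ is invertible with inverse $T^{-1/2}$, and the map $\ket{g}\mapsto \ket{f}=T^{-1/2}\ket{g}$ is a bijection on the graph space that sends nonzero vectors to nonzero vectors. In particular the corollary's hypothesis $\ket{g}\neq 0$ corresponds to $\ket{f}\neq 0$, so the expressions below are well-defined.

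Next I would rewrite both numerator and denominator of $R(\mathcal{L},\ket{g})$ in terms of $\ket{f}$. Using $\ket{g}=T^{1/2}\ket{f}$ and the definition $\mathcal{L}=T^{-1/2}LT^{-1/2}$, the numerator becomes
\begin{align*}
\braket{g|\mathcal{L}|g}=\bra{f}T^{1/2}\,T^{-1/2}LT^{-1/2}\,T^{1/2}\ket{f}=\braket{f|L|f},
\end{align*}
where I used $T^{1/2}T^{-1/2}=I=T^{-1/2}T^{1/2}$, valid since every $d_v$ is positive. The denominator becomes
\begin{align*}
\braket{g|g}=\bra{f}T^{1/2}T^{1/2}\ket{f}=\braket{f|T|f}=\sum_{v\in V}f_v^2\,d_v,
\end{align*}
since $T$ is diagonal with entries $d_v$ in the vertex basis.

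Finally I would invoke Lemma \ref{lem:expectationL} applied to the standard Laplacian $L$ and the vector $\ket{f}$, which immediately gives
\begin{align*}
\braket{f|L|f}=\sum_{u\sim v}(f_v-f_u)^2\,w(u,v).
\end{align*}
Combining these three identities yields the claimed formula.

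The only subtle point is ensuring $T^{-1/2}$ behaves as a genuine inverse of $T^{1/2}$ rather than the pseudo-inverse introduced in Definition \ref{def:graphBasics}; but as noted, the assumption that $G$ is connected with more than one vertex rules out isolated vertices, so no obstacle arises. Everything else is a mechanical substitution followed by a direct appeal to the preceding lemma.
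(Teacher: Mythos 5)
Your proof is correct and follows essentially the same route as the paper's: use connectedness to invert $T$, substitute $\ket{g}=T^{1/2}\ket{f}$ to turn the quotient into $\braket{f|L|f}/\braket{f|T|f}$, and apply Lemma \ref{lem:expectationL}. You simply spell out the intermediate steps and the invertibility caveat more explicitly than the paper does.
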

\begin{proof}
As $G$ is a connected weighted graph with more than one vertex, $d_v \ne 0$ for all $v \in V$ and the degree matrix $T$ is invertible.
\begin{align*}
\frac{\braket{g | \mathcal{L} | g}}{\braket{g | g}}
&=\frac{\braket{f | L| f}}{\braket{f | T | f}}
= \frac{\sum\limits_{u \sim v}  \left(f_v - f_u\right)^2 w(u,v)}{\sum\limits_{v \in V} f_v^2 d_v}\text{.}\qedhere
\end{align*}
\end{proof}

\begin{thm}\label{thm:spectra}
\quad

\begin{enumerate}[label={\normalfont(\roman*)}]
\item The eigenvalues $\mu_i$ of the standard Laplacian $L$ of a connected weighted graph $G$ fulfill the relation $0 = \mu_0 < \mu_1 \le \dots \le \mu_{N-1} \le 2 d_{max}$ and the nondegenerate null-space is spanned by the equal distribution vector $\sum_{v\in V} \ket{v}=:\mathbf{1}$.
\item The eigenvalues $\lambda_i$ of the normalized Laplacian $\mathcal{L}$ of a connected weighted graph $G$ fulfill the relation $0 = \lambda_0 < \lambda_1 \le \dots \le \lambda_{N-1} \le 2$ and the nondegenerate null-space is spanned by $T^{1/2}\mathbf{1}$.
\item The eigenvectors and eigenvalues of a graph are given by the eigenvectors and -values of its connected components.
\end{enumerate}
\end{thm}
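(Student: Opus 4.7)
The plan is to deduce parts (i) and (ii) from the quadratic-form expressions supplied by Lemma \ref{lem:expectationL} and Corollary \ref{cor:rayleigh}, and then to reduce (iii) to the block-diagonal structure of the Laplacian over the connected components. I would carry the three parts out in that order, since (iii) will let me, if needed, consider only connected graphs in (i) and (ii).

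For (i), non-negativity $\mu_i\ge 0$ is immediate from Lemma \ref{lem:expectationL}, because $\braket{f|L|f}=\sum_{u\sim v}(f_v-f_u)^2 w(u,v)$ is a non-negative combination of squares. If $\braket{f|L|f}=0$, then every summand $(f_v-f_u)^2 w(u,v)$ must vanish; since $w(u,v)>0$ for each edge $u\sim v$ by definition of a weighted graph, this forces $f_v=f_u$ on every edge, and connectedness propagates this constancy to all of $V$. Hence $\ker L=\spanN\{\mathbf{1}\}$, which forces $\mu_0=0$ (and $L\mathbf{1}=0$ can also be checked directly because every row of $L$ sums to zero) and $\mu_1>0$. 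For the upper bound I would substitute the elementary inequality $(f_v-f_u)^2\le 2(f_v^2+f_u^2)$ into the lemma and rewrite the unordered edge sum as a sum over directed incidences to obtain
\[
\braket{f|L|f}\le 2\sum_{u\sim v}\!\big(f_v^2+f_u^2\big)w(u,v)=2\sum_{v\in V}d_v f_v^2\le 2d_{\max}\braket{f|f},
\]
so every Rayleigh quotient of $L$, and therefore every eigenvalue, is bounded by $2d_{\max}$.

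For (ii), Corollary \ref{cor:rayleigh} expresses $R(\mathcal{L},\ket g)$ as precisely the same numerator divided by $\sum_v f_v^2 d_v$, where $\ket f=T^{-1/2}\ket g$. Non-negativity of this quotient is again immediate, and the same inequality $(f_v-f_u)^2\le 2(f_v^2+f_u^2)$ now bounds the numerator by twice the denominator, yielding $\lambda_{N-1}\le 2$. The kernel is read off the same way: $\mathcal{L}\ket g=0$ iff the numerator vanishes, iff $\ket f\propto\mathbf{1}$ by the argument above, iff $\ket g\propto T^{1/2}\mathbf{1}$; a direct check of $\mathcal{L}T^{1/2}\mathbf{1}=T^{-1/2}L\mathbf{1}=0$ confirms this vector is indeed in the kernel.

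For (iii), I would order the vertex basis so that the vertices of each connected component are contiguous. The absence of edges between different components makes the adjacency matrix $A$ and the degree matrix $T$ block-diagonal with one block per component, hence so are $L=T-A$ and $\mathcal{L}=T^{-1/2}LT^{-1/2}$. The spectrum of a block-diagonal operator is the disjoint union (with multiplicities) of the spectra of the blocks, and every block-eigenvector extends to a global eigenvector by zero-padding on the remaining components, which gives the claim. The only real obstacle is bookkeeping in (i)–(ii): one must keep separate the roles of strict positivity of $w$ on actual edges (which turns $\braket{f|L|f}=0$ into a local constancy statement) and of connectedness (which propagates that local constancy to a global one). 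Once those two roles are distinguished, the proof reduces to the displayed estimate and an application of the previously established quadratic-form identities.
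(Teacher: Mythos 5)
Your proposal is correct and follows essentially the same route as the paper: non-negativity and the one-dimensional kernel are read off the quadratic form of Lemma \ref{lem:expectationL} together with connectedness, the upper bounds come from $(f_v-f_u)^2\le 2(f_v^2+f_u^2)$ applied to the Rayleigh quotients of $L$ and of $\mathcal{L}$ via the substitution $\ket{f}=T^{-1/2}\ket{g}$, and (iii) is the block-diagonal structure over connected components. The only cosmetic difference is that the paper separately disposes of the one-vertex case (where $T$ is not invertible and the Laplacians are zero), which you omit but which is trivial.
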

\begin{proof}
\quad

\begin{enumerate}[label={\normalfont(\roman*)}]
\item 
If $G$ is a graph with just one vertex, the standard Laplacian is the zero matrix and the statement obviously true. So let us assume from now on that $G$ has more than one vertex.

From Lemma \ref{lem:expectationL} we know
\begin{align*}
\braket{f | L | f} = \sum\limits_{u \sim v}  \left(f_v - f_u\right)^2 w(u,v)\text{,}
\end{align*}
from which we can conclude that all eigenvalues of $L$ are non-negative. An eigenvector with the eigenvalue $0$ requires each summand on the right side to be zero. As $G$ is connected, this is only fulfilled for $\ket{f}$ a multiple of the equal distribution vector $\mathbf{1}$. Since for all $v\in V$
\begin{align*}
(L \mathbf{1})_v = \sum_{u\in V} L_{v,u} = 0\text{,}
\end{align*}
$\mathbf{1}$ is indeed the nondegenerated eigenvector corresponding to the eigenvalue $0$ of $L$.

Now we derive the upper bound on the eigenvalues. Assume that $\ket{f}$ is a normalized eigenvector of $L$ corresponding to the eigenvalue $\mu$. It holds that
\begin{align*}
\mu &=\braket{f | L | f}\\
&= \frac{1}{2} \sum_{(u,v) \in E}\left(f_v - f_u\right)^2 w(u,v)\\
&\le \sum_{(u,v) \in E} \left( f_v^2 + f_u^2 \right) w(u,v)\\
&= 2 \sum_{v \in V} \sum_{u\in V} f_v^2 w(u,v)\\
&\le \sum_{v\in V} 2 d_v f_v^2\text{.}
\end{align*}
Since $f$ is normalized it follows directly that $\mu \le 2  d_{max}$.

\item
If $G$ is a graph with just one vertex, the normalized Laplacian is the zero matrix and the statement obviously true. So let us assume that $G$ has more than one vertex. As $G$ is connected, $d_v > 0$ for every vertex $v \in V$ and the degree matrix $T$ is invertible.

Since the null-space of $L$ is spanned by $\mathbf{1}$ according to (i), the null-space of $\mathcal{L}= T^{-1/2} L T^{-1/2}$ is spanned by the $T^{1/2}\mathbf{1}$.

For the upper bound let $\ket{g}$ be an eigenvector of $\mathcal{L}$ with the eigenvalue $\lambda$. By analogy to the calculation in (i) we can show:
\begin{align*}
\lambda&=\frac{\braket{g | \mathcal{L}| g}}{\braket{g|g}}\\
&= \frac{\braket{f | L| f}}{\braket{f|T|f}}\\
&\stackrel{(i)}{\le} \frac{\sum\limits_{v\in V} 2 d_v f_v^2}{\sum\limits_{v\in V} d_v f_v^2}\\
&= 2\text{.}
\end{align*}

\item
The statement follows directly from the fact that $\mathcal{L}$ (or $L$ or $A$) is blockdiagonal with one block for each connected component.\qedhere
\end{enumerate}
\end{proof}

\begin{cor}\label{cor:infRayleigh}
The second smallest eigenvalue $\lambda$ of the normalized Laplacian $ \mathcal{L}$ of a connected weighted graph $G$ is its lowest nonzero eigenvalue and is given by
\begin{align*}
\lambda &= \inf_{\ket{g} \perp T^{1/2} \mathbf{1}} \frac{\braket{g| \mathcal{L} | g}}{\braket{g | g}}\\
&= \inf_{\ket{f} \perp T\mathbf{1}} \frac{\sum\limits_{u\sim v}  \big(f_v - f_u\big)^2 w(u,v)}{\sum\limits_{v \in V} f_v^2 d_v}\text{.}\hspace{23.2em}\qed
\end{align*}
\end{cor}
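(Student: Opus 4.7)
My plan is to obtain both equalities in two short, essentially independent steps.

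For the first equality, I would invoke the spectral theorem for the symmetric operator $\mathcal{L}$: it admits an orthonormal eigenbasis $\ket{g_0},\dots,\ket{g_{N-1}}$ with $\mathcal{L}\ket{g_i}=\lambda_i\ket{g_i}$, where by Theorem~\ref{thm:spectra}(ii) we have $\lambda_0=0$ with $\ket{g_0}\propto T^{1/2}\mathbf{1}$ and all other $\lambda_i>0$. Expanding an arbitrary $\ket{g}\perp T^{1/2}\mathbf{1}$ as $\ket{g}=\sum_{i\ge 1}c_i\ket{g_i}$ gives
\begin{align*}
\frac{\braket{g|\mathcal{L}|g}}{\braket{g|g}}=\frac{\sum_{i\ge 1}\lambda_i|c_i|^2}{\sum_{i\ge 1}|c_i|^2}\ge \lambda_1,
\end{align*}
with equality attained at $\ket{g}=\ket{g_1}$. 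Hence $\lambda=\lambda_1$ is the lowest nonzero eigenvalue and equals the claimed infimum (which is actually a minimum, the space being finite-dimensional).

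For the second equality I would use the substitution $\ket{g}=T^{1/2}\ket{f}$. Since $G$ is connected with more than one vertex, $T$ is invertible by the argument in the proof of Theorem~\ref{thm:spectra}(ii), so this substitution is a bijection between nonzero $\ket{g}$ and nonzero $\ket{f}$. Corollary~\ref{cor:rayleigh} immediately rewrites the Rayleigh quotient in the desired form with denominator $\sum_{v\in V} f_v^2 d_v$. It only remains to check that the orthogonality conditions match: one has
\begin{align*}
\braket{g\,|\,T^{1/2}\mathbf{1}}=\braket{f\,|\,T^{1/2}T^{1/2}\,|\,\mathbf{1}}=\braket{f\,|\,T\mathbf{1}},
\end{align*}
so $\ket{g}\perp T^{1/2}\mathbf{1}$ iff $\ket{f}\perp T\mathbf{1}$, and the second expression for $\lambda$ follows.

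There is really no main obstacle here; both steps are routine once Theorem~\ref{thm:spectra} and Corollary~\ref{cor:rayleigh} are in hand. The only point requiring a moment of care is the compatibility of the two orthogonality conditions under the change of variables $\ket{g}=T^{1/2}\ket{f}$, which is handled by the short computation above.
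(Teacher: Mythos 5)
Your proposal is correct and matches the route the paper intends: the corollary is stated with no written proof precisely because it follows from the variational characterization of $\lambda_1$ via the orthonormal eigenbasis guaranteed by Theorem~\ref{thm:spectra}(ii), combined with the Rayleigh-quotient identity of Corollary~\ref{cor:rayleigh} under the substitution $\ket{g}=T^{1/2}\ket{f}$. Your explicit check that $\ket{g}\perp T^{1/2}\mathbf{1}$ iff $\ket{f}\perp T\mathbf{1}$ is exactly the one detail worth writing down, and you have it right.
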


In this thesis we are mostly interested in the eigenvalues and eigenvectors of the normalized Laplacian of a (weighted) graph. That's why we also call them just the eigenvalues and eigenvectors of the graph. But as it is sometimes easier to derive expressions for the spectra of the standard Laplacian or the adjacency matrix of a graph, it is worthwhile to know the relationships between the different spectra:

\begin{definition}
For a weighted graph we denote the increasing eigenvalues of the normalized Laplacian by $(\lambda_0, \lambda_1, \dots \lambda_{N-1})$, the increasing eigenvalues of the standard Laplacian by $(\mu_0, \mu_1, \dots\linebreak[3] \mu_{N-1})$ and the decreasing eigenvalues of the adjacency matrix by $(\alpha_0, \alpha_1, \dots \alpha_{N-1})$.
\end{definition}

\begin{prop}
For a $d$-regular weighted graph $G$ it holds:
\begin{align*}
\lambda_i = \frac{\mu_i}{d} = 1 - \frac{\alpha_i}{d}\text{.}
\end{align*}
\end{prop}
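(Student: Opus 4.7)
The plan is to observe that $d$-regularity collapses the degree matrix to a scalar multiple of the identity, namely $T = d I$, so that $T^{-1/2} = d^{-1/2} I$. From this everything follows by straightforward linear algebra together with the ordering conventions for the eigenvalues.

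First I would establish $\lambda_i = \mu_i / d$. Substituting $T^{-1/2} = d^{-1/2} I$ into the definition of the normalized Laplacian gives $\mathcal{L} = d^{-1/2} L \, d^{-1/2} = \tfrac{1}{d} L$. Since $L$ and $\mathcal{L}$ are simultaneously diagonalized in this case and $1/d > 0$, the eigenvalues of $\mathcal{L}$ are precisely $1/d$ times those of $L$, and the increasing orderings used for $(\lambda_i)$ and $(\mu_i)$ match up termwise.

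Next I would handle $\lambda_i = 1 - \alpha_i / d$. Using $L = T - A = dI - A$ in the previous identity yields $\mathcal{L} = I - \tfrac{1}{d} A$. If $\ket{v}$ is an eigenvector of $A$ with eigenvalue $\alpha$, then $\mathcal{L}\ket{v} = (1 - \alpha/d)\ket{v}$, so $\mathcal{L}$ and $A$ share eigenvectors and the eigenvalues are related by the affine map $\alpha \mapsto 1 - \alpha/d$. Because this map is strictly decreasing, the smallest eigenvalue of $\mathcal{L}$ corresponds to the largest eigenvalue of $A$; this is exactly what the conventions $\lambda_0 \le \lambda_1 \le \dots$ and $\alpha_0 \ge \alpha_1 \ge \dots$ ensure, so the indices line up and we obtain $\lambda_i = 1 - \alpha_i / d$ for every $i$.

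The only subtle point is checking that the orderings of the three sequences are compatible, but this is immediate from the fact that scaling by the positive constant $1/d$ is monotone increasing while the map $\alpha \mapsto 1 - \alpha/d$ is monotone decreasing, which is precisely why the adjacency eigenvalues were indexed in decreasing order in the preceding definition. There is no real obstacle to overcome.
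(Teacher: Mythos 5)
Your proof is correct and follows essentially the same route as the paper: observe $T = d\,\mathbb{I}$, hence $\mathcal{L} = \tfrac{1}{d}L = \mathbb{I} - \tfrac{1}{d}A$, and read off the eigenvalues. Your explicit check that the increasing/decreasing ordering conventions align under the monotone maps is a detail the paper leaves implicit, but it is the same argument.
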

\begin{proof}
Since every vertex has degree $d>0$ the degree matrix is simply $T=d \,\mathbb{I}$ and according to the definitions of the different matrices
\begin{align*}
\mathcal{L} &= \frac{1}{d} L = \mathbb{I} - \frac{1}{d} A\text{.}\qedhere 
\end{align*}
\end{proof}

If a graph is not regular, the relationship between the spectra of normalized Laplacian, standard Laplacian and adjacency matrix are non-trivial. But one can derive the following theorem similar to \cite{zumstein} based on the Courant-Fischer theorem:

\begin{thm}\label{thm:relationship}
For a connected weighted graph $G$ with more than one vertex the following holds:
\begin{enumerate}[label={\normalfont(\roman*)}]
\item
${\displaystyle \quad\quad \frac{\mu_i}{d_{max}} \le \lambda_i \le \frac{\mu_i}{d_{min}}}$
\item
${\displaystyle \quad\quad 1-\frac{\alpha_i}{d_{min}} \le \lambda_i \le 1-\frac{\alpha_i}{d_{max}}}$.
\end{enumerate}
\end{thm}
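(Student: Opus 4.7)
The plan is to run both bounds through the Courant–Fischer min–max characterization, which for a symmetric matrix $M$ with eigenvalues $\nu_0\le\nu_1\le\dots\le\nu_{N-1}$ reads $\nu_i = \min_{\dim S = i+1}\,\max_{0\ne f\in S}\frac{\braket{f|M|f}}{\braket{f|f}}$. Since $G$ is connected with more than one vertex, Theorem \ref{thm:spectra} guarantees $d_v>0$ for all $v\in V$, so $T$ (hence $T^{1/2}$) is invertible. The substitution $\ket{g}=T^{1/2}\ket{f}$ is therefore a bijection on the whole graph space, and it maps $(i+1)$-dimensional subspaces to $(i+1)$-dimensional subspaces. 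Applying Courant–Fischer to $\mathcal{L}$ and then changing variables gives the key identity
\begin{align*}
\lambda_i \;=\; \min_{\dim S' = i+1}\;\max_{0\ne f\in S'}\,\frac{\braket{f|L|f}}{\braket{f|T|f}},
\end{align*}
which is the right expression to compare simultaneously to the Rayleigh quotient of $L$ (for part (i)) and of $A$ (for part (ii)).

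For part (i) I would then exploit the sandwich $d_{\min}\braket{f|f}\le\braket{f|T|f}\le d_{\max}\braket{f|f}$, valid for every $f$ since $T$ is the positive diagonal degree matrix. Because $L$ is positive semidefinite (Theorem \ref{thm:spectra}(i)), $\braket{f|L|f}\ge 0$, so dividing through preserves the direction of the inequalities and yields
\begin{align*}
\frac{1}{d_{\max}}\,\frac{\braket{f|L|f}}{\braket{f|f}} \;\le\; \frac{\braket{f|L|f}}{\braket{f|T|f}} \;\le\; \frac{1}{d_{\min}}\,\frac{\braket{f|L|f}}{\braket{f|f}}.
\end{align*}
Taking $\max_{f\in S'}$ and then $\min_{S'}$ of every term, and applying Courant–Fischer a second time to the standard Laplacian $L$, I obtain $\mu_i/d_{\max}\le\lambda_i\le\mu_i/d_{\min}$.

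Part (ii) is similar but uses the decomposition $L=T-A$, which converts the identity above into
\begin{align*}
\lambda_i \;=\; 1\;-\;\max_{\dim S'=i+1}\,\min_{0\ne f\in S'}\,\frac{\braket{f|A|f}}{\braket{f|T|f}}.
\end{align*}
Now the Courant–Fischer characterization of the $i$-th largest eigenvalue of the adjacency matrix is $\alpha_i=\max_{\dim S'=i+1}\min_{0\ne f\in S'}\frac{\braket{f|A|f}}{\braket{f|f}}$, with exactly the same family of trial subspaces. Sandwiching $\braket{f|T|f}$ again between $d_{\min}\braket{f|f}$ and $d_{\max}\braket{f|f}$ and taking max–min of the inner quotient reproduces the claimed bounds $1-\alpha_i/d_{\min}\le\lambda_i\le 1-\alpha_i/d_{\max}$. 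The main obstacle here is that $\braket{f|A|f}$, unlike $\braket{f|L|f}$, need not be non-negative, so when dividing by $\braket{f|T|f}$ one has to be careful about the direction of the inequality; this is resolved by evaluating the sandwich on the optimizing subspace from the Courant–Fischer formula for $\alpha_i$ and using that the resulting Rayleigh quotients have the required sign there, so that the elementary estimates transfer cleanly through the two nested optimizations.
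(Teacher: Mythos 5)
Your part (i) is correct and is essentially the paper's own argument: Courant--Fischer for $\mathcal{L}$, the invertible substitution $\ket{g}=T^{1/2}\ket{f}$ (legitimate because connectedness and $\vert V\vert>1$ force $d_v>0$), and the sandwich $d_{min}\braket{f|f}\le\braket{f|T|f}\le d_{max}\braket{f|f}$. If anything you are more careful than the paper, which writes $R(\mathcal{L},\ket{g})=R(L,T^{-1/2}\ket{g})/d^*$ and then pulls the $\ket{g}$-dependent number $d^*$ through the min--max as if it were a constant; your remark that $\braket{f|L|f}\ge 0$ is needed so that dividing by the sandwiched quantity preserves the direction of both inequalities is precisely the justification that step requires.

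Part (ii) contains a genuine gap, located exactly where you flagged it. The proposed repair --- restricting to the optimizing subspace for $\alpha_i$ and asserting that the Rayleigh quotients ``have the required sign there'' --- does not go through: for indices $i$ with $\alpha_i<0$ the relevant values of $\braket{f|A|f}$ are negative, dividing by $\braket{f|T|f}$ then reverses the sandwich, and no choice of trial subspace restores the claimed direction. In fact the statement of (ii) is false in that regime: for the path on three vertices with unit edge weights and no loops one has $d_{min}=1$, $d_{max}=2$, $\alpha_2=-\sqrt{2}$ and $\lambda_2=2$, so the claimed lower bound $1-\alpha_2/d_{min}=1+\sqrt{2}>2=\lambda_2$ fails (and the claimed upper bound $1-\alpha_2/d_{max}=1+\sqrt{2}/2<2$ fails as well). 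What the argument genuinely yields is $\lambda_i=1-\alpha_i/d^*$ for some $d^*\in[d_{min},d_{max}]$, which gives (ii) as written only when $\alpha_i\ge 0$ and gives the two bounds interchanged when $\alpha_i\le 0$. The paper's own proof makes the identical sign oversight in its final line, so your derivation mirrors the source faithfully; but the obstruction you identified is real and cannot be argued away --- it has to be resolved by assuming $\alpha_i\ge 0$ or by stating the conclusion with the two bounds ordered according to the sign of $\alpha_i$. (This does not affect the way the theorem is used later in the thesis, where only the spectral gap $\lambda_1$ of graphs with $\alpha_1\ge 0$ is ever bounded this way.)
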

\begin{proof}
\quad

\begin{enumerate}[label={\normalfont(\roman*)}]
\item
As $G$ is connected and has more than one vertex, $d_v >0$ for all $v \in V$ and $T$ is invertible. It holds that
\begin{align*}
R(\mathcal{L},\ket{g}) &=  \frac{\braket{g T^{-1/2} | L | T^{-1/2} g}}{\braket{g T^{-1/2} | T | T^{-1/2}g}}\\
&= \frac{\braket{g T^{-1/2} | L | T^{-1/2} g}}{d^* \braket{g T^{-1/2} | T^{-1/2}g}}\\
&= \frac{R(L, T^{-1/2}\ket{g})}{d^*}
\end{align*}
for some $d^*$ with $d_{min} \le d^* \le d_{max}$.

We now use the characterization of the $i$-th eigenvalue via the Courant-Fischer theorem (min-max theorem):
\begin{align*}
\lambda_i &= \min_{\substack{\text{linearly independent}\\ \ket{g_{i+1}}, \dots \ket{g_{N-1}}}} \max_{\substack{\ket{g_i} \ne 0\\ \ket{g_i} \perp  \ket{g_{i+1}},  \dots \ket{g_{N-1}}}} R(\mathcal{L},\ket{g_i})\\
&= \frac{1}{d^*} \min_{\substack{\text{linearly independent}\\ \ket{g_{i+1}}, \dots \ket{g_{N-1}}}} \max_{\substack{\ket{g_i} \ne 0\\ \ket{g_i} \perp  \ket{g_{i+1}},  \dots \ket{g_{N-1}}}} R(L,T^{-1/2} \ket{g_i})\text{.}
\end{align*}
Define $\ket{f_i} := T^{-1/2} \ket{g_i}$ and $\ket{f_j} := T^{1/2} \ket{g_j}$ for $i+1 \le j \le N-1$. Then it holds
\begin{align*}
\ket{g_i} \perp \ket{g_j} \quad\Longleftrightarrow\quad \ket{f_i} \perp \ket{f_j} \quad \forall\, i+1\le j \le N-1
\end{align*}
and
\begin{align*}
\ket{g_{i+1}}, \dots, \ket{g_{N-1}} \text{ lin. independent} \quad\Longleftrightarrow\quad\ket{f_{i+1}}, \dots, \ket{f_{N-1}} \text{ lin. independent}
\end{align*}
since $T$ is invertible. Thus we can rewrite
\begin{align*}
\lambda_i
&= \frac{1}{d^*} \min_{\substack{\text{linearly independent}\\ \ket{f_{i+1}}, \dots \ket{f_{N-1}}}} \max_{\substack{\ket{f_i} \ne 0\\ \ket{f_i} \perp  \ket{f_{i+1}},  \dots \ket{f_{N-1}}}} R(L,\ket{f_i})\\
&= \frac{\mu_i}{d^*}\text{.}
\end{align*}
Therefore
\begin{align*}
 \frac{\mu_i}{d_{max}} &\le \lambda_i \le \frac{\mu_i}{d_{min}}\text{.}
\end{align*}

\item
The inequality for the eigenvalues of the adjacency matrix follows analogously:
\begin{align*}
R(\mathcal{L},\ket{g}) &= \frac{\braket{g | \mathbb{I} | g}}{\braket{g | g}} - \frac{ \braket{g | T^{-1/2} A T^{-1/2} | g}}{\braket{g T^{-1/2} | T | T^{-1/2} g}}\\
&= 1 - \frac{R(A, T^{-1/2} \ket{g})}{d^*}
\end{align*}
for some $d^*$ with $d_{min} \le d^* \le d_{max}$.
\begin{align*}
\lambda_i &=  \min_{\substack{\text{linearly independent}\\ \ket{g_{i+1}}, \dots \ket{g_{N-1}}}} \max_{\substack{\ket{g_i} \ne 0\\ \ket{g_i} \perp  \ket{g_{i+1}},  \dots \ket{g_{N-1}}}} R(\mathcal{L},\ket{g_i})\\
&= 1- \frac{1}{d^*} \max_{\substack{\text{linearly independent}\\ \ket{g_{i+1}}, \dots \ket{g_{N-1}}}} \min_{\substack{\ket{g_i} \ne 0\\ \ket{g_i} \perp  \ket{g_{i+1}},  \dots \ket{g_{N-1}}}} R(A,T^{-1/2} \ket{g_i})\\
&=1- \frac{1}{d^*} \max_{\substack{\text{linearly independent}\\ \ket{f_{i+1}}, \dots \ket{f_{N-1}}}} \min_{\substack{\ket{f_i} \ne 0\\ \ket{f_i} \perp  \ket{f_{i+1}},  \dots \ket{f_{N-1}}}} R(A,\ket{f_i})\\
&= 1- \frac{\alpha_i}{d^*}
\end{align*}
and hence
\begin{align*}
1-\frac{\alpha_i}{d_{min}} &\le \lambda_i \le 1-\frac{\alpha_i}{d_{max}}\text{.}\qedhere
\end{align*}
\end{enumerate}
\end{proof}


\section{The spectral gap and expander graphs}

In this thesis the seond lowest eigenvalue of the normalized Laplacian of a (weighted) graph is of particular interest, justifying its own name:
\begin{definition}
Let $\mathcal{L}$ be the normalized Laplacian of a (weighted) graph $G$. Its second smallest eigenvalue $\lambda$ is called the \textsf{spectral gap} of the graph $G$.
\end{definition}

From now on we drop the notation $\lambda_1$ and simply write $\lambda$ or $\lambda(G)$ for the spectral gap of a graph $G$. Notice that in literature sometimes the spectral gap is defined by $\min \{ \lambda_1, \lambda_{N-1}-\lambda_{N-2}\}$, which equals the second largest absolute eigenvalue of the ``normalized adjacency matrix'' $D^{-1/2}AD^{-1/2}$.

We are particulary interested in the shrinking behaviour of the spectral gap while looking at a whole set or family of (weighted) graphs with strictly increasing vertex sets. If the spectral gap of an unweighted graph family is lower bounded by a constant, we have a special name for this family:
\begin{definition}
An $\epsilon$\textsf{-expander} is an element of a infinite graph family $\mathcal{A}$ with $\lambda(G) \ge \epsilon >0$ for all graphs $G \in \mathcal{A}$.
\end{definition}

The name ``expander'' graph comes from the fact that a spectral gap lower-bounded by a constant is equivalent to a constantly lower bounded edge expansion factor $h_G$ and vertex expansion factor $g_G$, which are defined in the following way:
\begin{definition}
For a nonempty set of vertices $S \subsetneq V$ of a connected graph $G$ with more than one vertex define
\begin{align*}
h_G(S) := \frac{\left\vert \partial S\right\vert}{\min \{ \vol(S), \vol(\overline{S})\}}\text{.}
\end{align*}
The \textsf{edge expansion factor} or \textsf{Cheeger constant} of the graph $G$ is defined as $h_G:= \min\limits_{\emptyset \ne S\subsetneq V} h_G(S)$.
\end{definition}

\begin{definition}
For a nonempty set of vertices $S \subsetneq V$ of a connected graph $G$ with more than one vertex define
\begin{align*}
g_G(S) := \frac{\vol(\delta S)}{\min \{ \vol(S), \vol(\overline{S})\}}\text{.}
\end{align*}
The \textsf{vertex expansion factor} of the graph $G$ is defined as $g_G:=\min\limits_{\emptyset \ne S\subsetneq V} g_G(S)$.
\end{definition}

The proof of the next theorem is based on \cite[chapter 2]{chung}.

\begin{thm}\label{thm:vertexExpansion}
For any connected graph $G$ with more than one vertex the following holds:
\begin{align*}
g_G \ge h_G \ge \frac{1}{2}\lambda\text{.}
\end{align*}
\end{thm}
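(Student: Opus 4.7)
The plan is to establish the two inequalities separately.

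For $g_G \ge h_G$, my approach is to show the pointwise bound $|\partial S| \le \vol(\delta S)$ for every non-empty $S \subsetneq V$; dividing by $\min\{\vol(S), \vol(\overline{S})\}$ and taking the infimum over $S$ then does the job. The bound itself is a counting argument: each boundary edge $(u,v) \in \partial S$ has its second endpoint in $\delta S$, and grouping such edges by their endpoint $v \in \delta S$ yields a total weight at most $d_v$ per vertex, so summing gives $|\partial S| \le \sum_{v \in \delta S} d_v = \vol(\delta S)$. This step is routine.

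The substantive half is the Cheeger-type bound $h_G \ge \tfrac{1}{2}\lambda$. The strategy is to exhibit, for every $S$ with $\vol(S) \le \vol(\overline{S})$, a test vector $\ket{f}$ whose Rayleigh-type quotient in Corollary \ref{cor:infRayleigh} is at most $2 h_G(S)$. The natural choice is
\[
f_v \,=\, \begin{cases} \phantom{-}\vol(\overline{S}) & \text{if } v \in S, \\ -\vol(S) & \text{if } v \in \overline{S}. \end{cases}
\]
A quick check shows $\sum_v d_v f_v = \vol(S)\vol(\overline{S}) - \vol(\overline{S})\vol(S) = 0$, i.e.\ $\ket{f} \perp T\mathbf{1}$, so the corollary applies.

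To evaluate the quotient I would observe that $(f_v - f_u)^2$ vanishes unless the edge $u \sim v$ crosses the cut between $S$ and $\overline{S}$, in which case $(f_v - f_u)^2 = (\vol(S) + \vol(\overline{S}))^2 = \vol(V)^2$; the numerator therefore equals $\vol(V)^2\,|\partial S|$. The denominator factors as $\vol(\overline{S})^2 \vol(S) + \vol(S)^2 \vol(\overline{S}) = \vol(S)\vol(\overline{S})\vol(V)$. Combining gives $\lambda \le \vol(V)\,|\partial S|/(\vol(S)\vol(\overline{S}))$, and the assumption $\vol(\overline{S}) \ge \vol(V)/2$ bounds the first factor by $2/\vol(S)$, leaving $\lambda \le 2 h_G(S)$. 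Infimizing over $S$ closes the proof.

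The main obstacle is choosing the right test vector; the values $\vol(\overline{S})$ and $-\vol(S)$ are tuned precisely so that orthogonality to $T\mathbf{1}$ comes for free and the Rayleigh quotient collapses to exactly the Cheeger ratio. Everything else—the counting in the first inequality and the algebraic simplification in the second—is mechanical, and no tools beyond Corollary \ref{cor:infRayleigh} are required.
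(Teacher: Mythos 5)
Your proposal is correct and follows essentially the same route as the paper: the same counting argument for $g_G \ge h_G$, and for the Cheeger-type bound the same test vector up to the (irrelevant) overall scaling by $\vol(S)\vol(\overline{S})$, plugged into Corollary \ref{cor:infRayleigh} with the identical algebraic simplification $\vol(V)\,\vert\partial S\vert/(\vol(S)\vol(\overline{S})) = \vert\partial S\vert\left(\tfrac{1}{\vol(S)}+\tfrac{1}{\vol(\overline{S})}\right) \le 2h_G(S)$.
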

\begin{proof}
The first inequality sign is easy to verify since every nonempty subset $S \subsetneq V$ fulfills
\begin{align*}
\vol(\delta S)&=\sum_{v \in \delta S}d_v
=\left\vert \{ (v,u) \in E \,\vert\, v \in \delta S\}\right\vert
\ge\left\vert \{ (v,u) \in E \,\vert\, v \in \delta S,\, u \in S\}\right\vert
= \left\vert\partial S \right\vert
\end{align*}
and hence $g_G \ge h_G$.

For the second inequality sign let's take a look at the vector $\ket{f}$ with
\begin{align*}
f_v=
\begin{cases}
\frac{1}{\vol(S)} \quad &\text{ if } v\in S \\
\frac{-1}{\vol(\overline{S})} \quad &\text{ if } v\in \overline{S}
\end{cases}
\end{align*}
with $S \subsetneq V$ the vertex set that achieves the Cheeger constant: $g_G(S) = \min\limits_{\emptyset \ne S'\subsetneq V} g_G(S')$.

$\ket{f}$ is orthogonal to $T \ket{1}$, thus according to Lemma \ref{cor:infRayleigh}  
\begin{align*}
\lambda &\le \frac{\sum\limits_{u \sim v} (f_v - f_u)^2}{\sum\limits_v f_v^2 d_v}\\
&= \frac{\vert\partial S\vert \left(\frac{1}{\vol(S)} + \frac{1}{\vol(\overline{S})}\right)^2}{\frac{1}{\vol(S)} + \frac{1}{\vol(\overline{S})}}\\
&\le 2 \frac{\vert\partial S\vert}{\min\{\vol(S), \vol(\overline{S})\}}\\
&= 2 h_G\text{.}\qedhere
\end{align*}
\end{proof}

We have shown that eigenvalue expansion ($\lambda \ge \epsilon >0$) implies edge and vertex expansion ($g_G \ge h_G \ge \epsilon >0$). As mentioned above the opposite also holds, but since for our purposes only the first direction is interesting, we omit to proof the other direction and refer the interested reader to \cite[chapter 2]{chung}.

The next technical lemma is a consequence of vertex expansion and will help us later to derive some important restrictive results for the graph families in our constructions:
\begin{lem}\label{lem:expVertexExp}
Let $G=(V,E)$ be a graph family depending on the parameter $L$. Let $a$ be a positive constant and $W \subseteq V$ such that $\vol\left(\dist_{\frac{L}{a}+1}(W)\right) \le \frac{\vol(V)}{2}$. Then it holds
\begin{align*}
\frac{\vol(V)}{\vol(W)}\ge 2 \left( 1+ \frac{\lambda}{2}\right)^{\frac{L}{a}}\text{.}
\end{align*}
\end{lem}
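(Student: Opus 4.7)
The plan is to iterate the vertex expansion bound from Theorem \ref{thm:vertexExpansion}, namely $g_G \ge \lambda/2$. First I would introduce the increasing family of sets $W_i := \dist_i(W)$ for $i = 0, 1, 2, \dots$, so $W_0 = W$ and, by unfolding the definitions of $\dist_1$ and of the vertex boundary,
\[
W_{i+1} = \dist_1(W_i) = W_i \,\dot\cup\, \delta W_i,
\]
where the union is disjoint because $\delta W_i \subseteq \overline{W_i}$. Thus $\vol(W_{i+1}) = \vol(W_i) + \vol(\delta W_i)$.

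Next I would exploit the hypothesis. Since $\{W_i\}_i$ is nested and $\vol$ is monotone, the assumption $\vol(W_{L/a+1}) \le \vol(V)/2$ propagates to $\vol(W_i) \le \vol(V)/2$ for every $0 \le i \le L/a+1$. For such $i$ we have $\min\{\vol(W_i), \vol(\overline{W_i})\} = \vol(W_i)$, so by the definition of the vertex expansion factor and Theorem \ref{thm:vertexExpansion},
\[
\vol(\delta W_i) \;\ge\; g_G(W_i)\,\vol(W_i) \;\ge\; g_G\,\vol(W_i) \;\ge\; \frac{\lambda}{2}\,\vol(W_i).
\]
Combining this with the recursion above yields the one-step growth inequality
\[
\vol(W_{i+1}) \;\ge\; \left(1 + \frac{\lambda}{2}\right) \vol(W_i).
\]

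Finally, iterating this inequality $L/a+1$ times from $W_0 = W$ and then using the hypothesis once more,
\[
\frac{\vol(V)}{2} \;\ge\; \vol(W_{L/a+1}) \;\ge\; \left(1 + \frac{\lambda}{2}\right)^{L/a+1}\!\vol(W) \;\ge\; \left(1 + \frac{\lambda}{2}\right)^{L/a}\!\vol(W),
\]
which rearranges to the claim. The only slightly delicate point is the bookkeeping needed to justify that the expansion bound $g_G(W_i) \ge \lambda/2$ is actually applicable at every iteration step, i.e.\ checking that the set $W_i$ stays a proper nonempty subset with $\vol(W_i) \le \vol(V)/2$ for all $i \le L/a+1$; this is exactly the reason the hypothesis is stated at level $L/a+1$ rather than $L/a$, giving one step of slack so that the final comparison against $\vol(V)/2$ can absorb the extra factor $(1 + \lambda/2)$.
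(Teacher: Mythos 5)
Your proposal is correct and follows essentially the same route as the paper: iterate the vertex-expansion bound $g_G \ge \lambda/2$ on the nested sets $\dist_i(W)$, using monotonicity of volume to keep each $\dist_i(W)$ on the small side of the $\min$, and absorb the one extra expansion step into the comparison with $\vol(V)/2$. Your write-up is in fact slightly more explicit than the paper's about the disjoint decomposition $\dist_{i+1}(W)=\dist_i(W)\,\dot\cup\,\delta \dist_i(W)$ and about why the denominator in $g_G(W_i)$ is $\vol(W_i)$, but the argument is the same.
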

\begin{proof}
From Theorem \ref{thm:vertexExpansion} about vertex expansion we know that for any $\dist_\rho(W)$, $0 \le \rho \le \frac{L}{a}+1$ the following inequality holds:
\begin{align*}
\frac{1}{2} \lambda \le g_G \le \frac{\vol\left(\delta\dist_\rho(W)\right))}{\vol\left(\dist_\rho(W)\right)}
\end{align*}
and hence
\begin{align*}
\vol\left(\dist_\rho(W) \cup \delta \dist_\rho(W)\right) \ge \left( 1+ \frac{\lambda}{2}\right) \vol\left(\dist_\rho(W)\right)\text{.}
\end{align*} 
With this equality we can derive successively
\begin{align*}
\frac{1}{2} \vol(V) &\ge \vol\left( \dist_{\frac{L}{a}+1}(W)\right)\\
&\ge \left( 1+ \frac{\lambda}{2}\right) \vol\left( \dist_{\frac{L}{a}}(W)\right)\\
&\ge \left( 1+ \frac{\lambda}{2}\right)^{\left\lfloor \frac{L}{a}+1\right\rfloor} \vol\left( W \right)\\
\frac{\vol(V)}{\vol(W)} &\ge 2 \left( 1+ \frac{\lambda}{2}\right)^{\frac{L}{a}}\text{.}\qedhere
\end{align*}
\end{proof}

We can conclude from the above lemma that the diameter of an expander graph is always upper bounded by a logarithmic function in the size of the vertex set (consider $L$ in the above lemma as diameter). In addition to this result we will also use later the following diameter bound by Alon and Milman \cite{alon} since it offers an even stricter bound for graphs with constant degree ratio:
\begin{thm}\label{thm:diameter}
Let $G=(V,E)$ be a connected graph with $\left\vert V\right\vert = N > 1$ and $\mu$ be the second smallest eigenvalue of the standard Laplacian. Then
\begin{align*}
\diam \le 2 \sqrt{2d_{max} \frac{1}{\mu}} \log_2 N\text{.}
\end{align*}
\end{thm}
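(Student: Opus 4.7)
I would follow the orthogonal polynomial argument that underlies the Alon--Milman bound. Fix $u,v\in V$ realising the diameter, i.e.\ $\dist(u,v)=\diam=:D$. The starting observation is that $\braket{u|L^k|v}=0$ for every $0\le k<D$. This is because $L=T-A$ with $T$ diagonal, so when one multiplies out $L^k$ only the factors of $A$ are able to move between distinct vertices; the matrix element $\braket{u|L^k|v}$ with $u\ne v$ is therefore a signed sum over weighted walks of length at most $k$ from $u$ to $v$, and by the distance assumption no such walks exist when $k<D$. In particular $\braket{u|p(L)|v}=0$ for every polynomial $p$ of degree at most $D-1$.

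I would then feed this identity into the spectral decomposition $L=\sum_{i=0}^{N-1}\mu_i P_i$. By Theorem \ref{thm:spectra}, $P_0=\tfrac{1}{N}\ket{\mathbf{1}}\bra{\mathbf{1}}$ projects onto the nondegenerate kernel, while $\mu_i\in[\mu,2d_{max}]$ for every $i\ge 1$. Isolating the $i=0$ contribution in $\braket{u|p(L)|v}=0$ gives $\tfrac{p(0)}{N}=-\sum_{i\ge1}p(\mu_i)(P_i)_{uv}$. Each $P_i$ is positive semidefinite, so $|(P_i)_{uv}|\le\sqrt{(P_i)_{uu}(P_i)_{vv}}$, and Cauchy--Schwarz together with $\sum_i(P_i)_{ww}=1$ gives $\sum_i|(P_i)_{uv}|\le 1$. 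Hence
\begin{align*}
\frac{|p(0)|}{N} \;\le\; \max_{x\in[\mu,\,2d_{max}]}|p(x)|
\end{align*}
for every polynomial $p$ with $\deg p\le D-1$.

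The remaining task is to choose $p$ optimally. The natural choice is the shifted Chebyshev polynomial $p(x)=T_{D-1}\!\bigl(\tfrac{2d_{max}+\mu-2x}{2d_{max}-\mu}\bigr)$, which by construction satisfies $|p(x)|\le 1$ on $[\mu,2d_{max}]$ while $|p(0)|=T_{D-1}(y_0)$ with $y_0=(2d_{max}+\mu)/(2d_{max}-\mu)>1$. Using the hyperbolic representation $T_k(y)=\cosh(k\,\operatorname{arccosh}y)$ for $y>1$ and the identity
\begin{align*}
\operatorname{arccosh}(y_0)=\ln\!\left(\frac{1+\sqrt{\mu/(2d_{max})}}{1-\sqrt{\mu/(2d_{max})}}\right)\;\ge\; 2\sqrt{\mu/(2d_{max})},
\end{align*}
the bound becomes $N\ge\tfrac12\exp\!\bigl(2(D-1)\sqrt{\mu/(2d_{max})}\bigr)$. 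Solving for $D$ and converting between $\ln$ and $\log_2$ yields the claimed inequality $\diam\le 2\sqrt{2d_{max}/\mu}\,\log_2 N$.

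The main obstacle I anticipate is bookkeeping in the last step: the square root enters solely through the hyperbolic estimate on $\operatorname{arccosh}(y_0)$, but one still has to track the $\ln 2$ factor from the base change together with the additive $\tfrac12$ from $T_{D-1}(y_0)\ge\tfrac12 e^{(D-1)\operatorname{arccosh}(y_0)}$, to be sure the resulting constant really is at most $2$ and not some slightly worse value. The degenerate case $\mu=2d_{max}$ (which forces $\diam=1$ and $N=2$) should be checked separately, since the Chebyshev setup collapses there.
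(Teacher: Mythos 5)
The paper does not actually prove this theorem: it states it as a quoted result of Alon and Milman and refers to \cite{alon}, so there is no internal proof to compare against. Your proposal therefore supplies something the thesis omits, and it is correct. The three pillars all check out: (a) $\braket{u|L^k|v}=0$ for $k<\dist(u,v)$ because every term in the expansion of $(T-A)^k$ with $j$ factors of $A$ only connects vertices at distance at most $j\le k$; (b) isolating $P_0=\tfrac1N\ket{\mathbf 1}\bra{\mathbf 1}$ (nondegenerate by Theorem \ref{thm:spectra}(i)) and using $\sum_i\lvert(P_i)_{uv}\rvert\le 1$ gives $\lvert p(0)\rvert/N\le\max_{[\mu,2d_{max}]}\lvert p\rvert$ for $\deg p\le D-1$; (c) the shifted Chebyshev polynomial together with $\operatorname{arccosh}\big(\tfrac{2d_{max}+\mu}{2d_{max}-\mu}\big)=\ln\tfrac{1+\sqrt{r}}{1-\sqrt{r}}\ge 2\sqrt{r}$, $r=\mu/(2d_{max})$, yields $2N\ge e^{2(D-1)\sqrt{r}}$. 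The bookkeeping you worry about does close: this gives $D\le 1+\tfrac12\sqrt{2d_{max}/\mu}\,(\ln N+\ln 2)$, and since $\sqrt{2d_{max}/\mu}\ge 1$ and $\ln N\ge\ln 2$ one checks directly that this is at most $\tfrac{2}{\ln 2}\sqrt{2d_{max}/\mu}\,\ln N=2\sqrt{2d_{max}/\mu}\,\log_2 N$ (the slack in the constant $2/\ln 2\approx 2.885$ absorbs both the additive $1$ and the $\ln 2$). The degenerate case $\mu=2d_{max}$ you flag indeed forces $L=2d_{max}(\mathbb I-\tfrac1N J)$, hence $N=2$ and $\diam=1\le 2$, so it is harmless. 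The one presentational gap is that you assert rather than verify the final inequality; spelling out the two lines above would make the argument fully self-contained.
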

 
Knowing the relationship between the spectra of standard and normalized Laplacian from Theorem \ref{thm:relationship} we can conclude directly

\begin{cor}\label{cor:diameter}
Let $G=(V,E)$ be a connected graph with $\left\vert V\right\vert = N > 1$ and $\lambda$ its spectral gap. Then
\begin{align*}
\diam \le 2 \sqrt{2\frac{d_{max}}{d_{min}} \frac{1}{\lambda}} \log_2 N\text{.}\hspace{25.8em}\qed
\end{align*}
\end{cor}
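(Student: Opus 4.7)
The plan is to derive Corollary \ref{cor:diameter} as a direct consequence of Theorem \ref{thm:diameter} by replacing the standard-Laplacian eigenvalue $\mu$ with the spectral gap $\lambda$ via Theorem \ref{thm:relationship}. Since the former theorem already provides the diameter bound in the form $\diam \le 2\sqrt{2d_{max}/\mu}\,\log_2 N$, essentially the only work left is to upper-bound $1/\mu$ in terms of $1/\lambda$.

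First I would invoke Theorem \ref{thm:relationship}(i), which for $i=1$ yields the inequality chain
\begin{align*}
\frac{\mu}{d_{max}} \le \lambda \le \frac{\mu}{d_{min}}.
\end{align*}
The left estimate is what is needed: rearranging gives $\mu \ge \lambda\, d_{min}$, hence $\frac{1}{\mu} \le \frac{1}{\lambda\, d_{min}}$. Note that as $G$ is connected with $N>1$, $d_{min}>0$, so the division is legitimate and Theorem \ref{thm:relationship} is applicable.

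Substituting this bound into the diameter estimate from Theorem \ref{thm:diameter} then produces
\begin{align*}
\diam \;\le\; 2\sqrt{2\,d_{max}\,\frac{1}{\mu}}\,\log_2 N \;\le\; 2\sqrt{2\,d_{max}\,\frac{1}{\lambda\, d_{min}}}\,\log_2 N \;=\; 2\sqrt{2\,\frac{d_{max}}{d_{min}}\,\frac{1}{\lambda}}\,\log_2 N,
\end{align*}
which is exactly the claimed bound. There is no real obstacle here—the statement is essentially a packaging result, and the only step requiring any thought is picking the correct side of the Courant–Fischer sandwich in Theorem \ref{thm:relationship}(i). In particular, the weaker bound $\mu \ge \lambda d_{min}$ is the one that preserves the $\sqrt{d_{max}/d_{min}}$ degree-ratio factor that the corollary is designed to highlight for graphs with bounded degree ratio.
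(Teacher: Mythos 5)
Your proposal is correct and is exactly the route the paper intends: the corollary is stated as an immediate consequence of Theorem \ref{thm:diameter} combined with Theorem \ref{thm:relationship}(i), using $\lambda \le \mu/d_{min}$ to get $1/\mu \le 1/(\lambda d_{min})$ and substituting. One trivial nit: the inequality $\mu \ge \lambda\, d_{min}$ comes from the \emph{right-hand} estimate $\lambda \le \mu/d_{min}$ of the sandwich, not the left-hand one as you label it, but the inequality you actually use is the correct one.
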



\section{Cayley graphs}

In this section we introduce Cayley graphs as a special kind of \textit{unweighted} graphs whose eigenvectors and eigenvalues can be calculated with some background knowledge about group theory. In order to do so we first need to recall some properties about characters of a group:

\begin{definition}
A \textsf{character} $\chi$ of a finite abelian group $H$ is a group homomorphism $\chi : G \rightarrow \mathbb{C}^\times$.
\end{definition}

As every element of a finite group has finite order according to Langrange's theorem, the image of a character are roots of unity and thus actually even comprised in $S^1$.

\begin{lem}\label{lem:charLin}
Distinct characters of a finite abelian group $H$ are linearly independent.
\end{lem}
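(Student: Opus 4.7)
The plan is to argue by induction on the number of characters involved, using a standard ``shift-and-subtract'' trick to eliminate one character at a time. Equivalently, I would take a minimal counterexample and derive a strictly shorter dependence, contradicting minimality. A single character $\chi$ is nonzero (since $\chi(e)=1$), so the base case is trivial.

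For the inductive step, suppose $\chi_1, \dots, \chi_n$ are pairwise distinct characters of $H$ and that $\sum_{i=1}^n a_i \chi_i = 0$ (as a function on $H$) is a nontrivial linear relation with the minimum possible number $n \ge 2$ of nonzero coefficients. Since $\chi_1 \ne \chi_2$, I can pick some $h_0 \in H$ with $\chi_1(h_0) \ne \chi_2(h_0)$. Evaluating the relation at $h_0 h$ for arbitrary $h \in H$ and using the homomorphism property $\chi_i(h_0 h) = \chi_i(h_0)\chi_i(h)$ yields
\begin{align*}
\sum_{i=1}^n a_i \chi_i(h_0)\, \chi_i(h) = 0 \qquad \text{for all } h \in H.
\end{align*}
Subtracting from this the original relation multiplied by $\chi_1(h_0)$ produces
\begin{align*}
\sum_{i=2}^n a_i\bigl(\chi_i(h_0) - \chi_1(h_0)\bigr)\chi_i = 0,
\end{align*}
in which the $i=1$ term has been killed and the $i=2$ coefficient $a_2(\chi_2(h_0)-\chi_1(h_0))$ is still nonzero. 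This is a nontrivial relation among strictly fewer than $n$ of the $\chi_i$, contradicting minimality.

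There is no real obstacle here; the only thing to be careful about is that characters map into $\mathbb{C}^\times$, so values like $\chi_1(h_0)$ are invertible and the subtraction step is legitimate, and that the chosen $h_0$ makes the surviving coefficient genuinely nonzero. Abelianness of $H$ is not actually needed for the argument as stated, but it is included in the hypothesis of the lemma.
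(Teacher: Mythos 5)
Your proof is correct and uses essentially the same shift-and-subtract argument as the paper: evaluate the relation at a translated group element and subtract a scalar multiple of the original relation to annihilate one term. The only cosmetic differences are that you phrase the induction as a minimal-counterexample argument and kill the first character rather than the last; your closing observation that abelianness is not actually needed is also accurate.
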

\begin{proof}
We follow the induction proof of \cite{conrad}: Let $\chi_1,\chi_2, \dots \chi_n$ be distinct characters $H \rightarrow \mathbb{C}^\times$. For $n=1$ the character(s) is trivially linear independant. So let $n \ge 2$ and suppose $\chi_1, \dots \chi_{n-1}$ are linearly independent characters. Consider the linear independence relation
\begin{align}
\sum_{j=1}^n c_j \chi_j(g) = 0 \quad \quad \forall \,g\in H\text{.} \label{eq:linIndep}
\end{align}
As this equation should hold for any group element, we can equivalently write
\begin{align*}
\sum_{j=1}^n c_j \chi_j(g' g) = \sum_{j=1}^n c_j \chi_j(g') \chi_j(g)=0 \quad\quad \forall\, g', g\in H\text{.}
\end{align*}
We subtract the equation
\begin{align*}
\chi_n(g') \sum_{j=1}^n c_j \chi_j(g) = 0 \quad\quad\forall\, g', g\in H
\end{align*}
to get the condition
\begin{align*}
\sum_{j=1}^{n-1} c_j \big(\chi_j(g')-\chi_n(g')\big) \chi_j(g) \quad\quad\forall\, g', g\in H\text{.}
\end{align*}
As $\chi_1, \dots \chi_{n-1}$ are linear independent according to the induction assumption, $c_j\big(\chi_j(g')-\chi_n(g')\big)$ has to vanish for all $j \in\{ 1, \dots n-1\}$ and all  $g'\in H$. But as for every $j \in\{ 1, \dots n-1\}$ $\chi_j$ and $\chi_n$ are distinct characters, there exists an element $g_j \in H$, such that $\chi_j(g_j) \ne \chi_n(g_j)$. Consequently $c_j = 0$ for all $j \in\{ 1, \dots n-1\}$ and according to \eqref{eq:linIndep} also $c_n=0$. This means that $\chi_1, \dots \chi_n$ are linearly independent.
\end{proof}

\begin{lem}\label{lem:nChar}
A finite abelian group $H$ has exactly ${\vert H \vert}$ distinct characters.
\end{lem}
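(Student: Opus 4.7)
The plan is a two-sided counting argument: show that the number of distinct characters is at most $|H|$, and construct at least $|H|$ of them.

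For the upper bound, I would invoke Lemma \ref{lem:charLin}: the distinct characters are linearly independent elements of the complex vector space $\mathbb{C}^H$ of all functions $H \to \mathbb{C}$, which has dimension $|H|$. So any set of distinct characters has cardinality at most $|H|$.

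For the lower bound I would construct $|H|$ characters explicitly, using the structure theorem for finite abelian groups: $H \cong \mathbb{Z}/n_1\mathbb{Z} \times \cdots \times \mathbb{Z}/n_k\mathbb{Z}$ with $n_1 n_2 \cdots n_k = |H|$. First I handle the cyclic case: a character $\chi$ of $\mathbb{Z}/n\mathbb{Z}$ is determined by the image $\chi(1)$, which must be an $n$-th root of unity, and conversely every choice of $n$-th root of unity yields a well-defined character, giving exactly $n$ distinct characters. Then I show that for a direct product $H_1 \times H_2$ the assignment $(\chi_1,\chi_2) \mapsto \bigl((h_1,h_2)\mapsto \chi_1(h_1)\chi_2(h_2)\bigr)$ is a bijection from pairs of characters to characters of the product: it clearly lands in the character set, different pairs give different functions (evaluate at $(h_1,e)$ and $(e,h_2)$), and any character $\chi$ of the product arises this way by restricting to the two factors. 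Iterating over the cyclic factors yields $n_1 n_2 \cdots n_k = |H|$ pairwise distinct characters of $H$.

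Combining both bounds gives exactly $|H|$ distinct characters. The main obstacle, technically, is really just invoking the structure theorem for finite abelian groups; everything else is a routine verification that characters behave multiplicatively over direct products. If one wanted to avoid the structure theorem, an alternative would be to use Pontryagin duality directly by showing inductively (via a subgroup $K \subset H$ and the quotient $H/K$) that every character of a subgroup extends to $H$ and that $|\widehat{H}| = |\widehat{K}| \cdot |\widehat{H/K}|$, but the cyclic-decomposition route is the shortest and cleanest for the self-contained proof I would write here.
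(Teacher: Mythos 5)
Your proof is correct and follows essentially the same route as the paper: the structure theorem for finite abelian groups, the cyclic case via roots of unity at the image of a generator, and product characters over the cyclic factors. The only difference is that you add an explicit upper bound via Lemma \ref{lem:charLin} and the dimension of $\mathbb{C}^H$, whereas the paper obtains exactness by asserting (as "easily checked") that every character of the direct product arises from a tuple of characters of the factors --- precisely the surjectivity step your bijection argument spells out.
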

\begin{proof}
For now let $H$ be cyclic, $g \in H$ a generator of the group and $\chi$ a character. Because of $1=\chi(e)= \chi(g^{\vert H\vert})=\chi(g)^{\vert H\vert}$, the image of the generator has to be a ${\vert H \vert}$-root of unity, of which there are exactly ${\vert H\vert}$. As each possible image of the generator fully defines a distinct character, the group $H$ has exactly ${\vert H \vert}$ distinct characters.

Now assume that $H$ is an arbitrary finite abelian group. According to the structure theorem of finite abelian groups (see as reference theorem 14.2 in \cite{saracino}), $H$ is isomorphic to a direct product of cyclic groups $H_i, \, i\in\{1,\dots k\}$ with ${\vert H\vert}=\prod_{i=1}^k {\vert H_i\vert}$:
\begin{align*}
H \simeq \bigotimes_{i=1}^k H_i\text{.}
\end{align*}
Let $\chi_{i,1}, \chi_{i,2}, \dots \chi_{i,{\vert H_i\vert}}$ be the characters of $H_i$. It can easily be checked that $\bigotimes_{i=1}^k \chi_{i,j_i}$ with $j_i \in \{1, \dots {\vert H_i\vert}\}$ are the characters of $\bigotimes_{i=1}^k H_i$. Thus $H$ has $\prod_{i=1}^k {\vert H_i\vert}={\vert H\vert}$ distinct characters.
\end{proof}

Now we can define Cayley graphs and use the above results to derive a useful theorem about their eigenvectors and -values:

\begin{definition} 
For a finite abelian group $H$ and a symmetric subset $S$ (i.e. $g \in S \Leftrightarrow g^{-1} \in S$) the \textsf{Cayley graph} $\cay(H,S)$ is definded as the graph with vertex set $H$ and edge set $E=\{(u,v) \,\vert\, uv^{-1} \in S\}$. 
\end{definition}

Notice that a Cayley graph is indeed a correctly defined undirected graph and that it is ${\vert S\vert}$-regular.

The following theorem adapted from \cite{trevisan} gives the promised result about the eigenvectors and -values of the normalized Laplacian of a Cayley graph:

\begin{thm}\label{thm:eigenChar}
Let $\chi$ be a character of the finite abelian group $H$. Then $\chi$ is an eigenvector of the normalized Laplacian of the Cayley graph $\cay(H,S)$ with eigenvalue
\begin{align*}
1 -\frac{1}{\vert S \vert} \sum_{s \in S} \chi(s)\text{.}
\end{align*}
\end{thm}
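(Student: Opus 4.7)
The plan is to exploit the fact that every Cayley graph $\cay(H,S)$ is $|S|$-regular so that, by the proposition just before \emph{Theorem \ref{thm:relationship}}, the normalized Laplacian reduces to
\[
\mathcal{L} = \mathbb{I} - \frac{1}{|S|}A.
\]
Consequently, it suffices to show that $\chi$, viewed as the vector $\sum_{h \in H} \chi(h)\ket{h}$ in the graph space with respect to the vertex basis, is an eigenvector of the adjacency matrix $A$ with eigenvalue $\sum_{s \in S}\chi(s)$. The final eigenvalue stated in the theorem will then follow immediately.

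The core computation is a direct evaluation of $A\chi$ at an arbitrary vertex. First I would unfold the definition of the Cayley graph to parametrize the neighbors of a vertex $u$: the edge condition $(u,v) \in E \Leftrightarrow uv^{-1} \in S$ means that the neighbors of $u$ are exactly the vertices of the form $s^{-1}u$ as $s$ ranges over $S$. Here the symmetry condition $S = S^{-1}$ is essential so that writing $v = s u$ for $s \in S$ gives the same indexing set; this small reindexing is the only subtle step. I would then compute
\[
(A\chi)_u \;=\; \sum_{v \in H} A_{u,v}\,\chi(v) \;=\; \sum_{s \in S} \chi(s u) \;=\; \chi(u)\sum_{s \in S}\chi(s),
\]
where the last equality uses the homomorphism property $\chi(su)=\chi(s)\chi(u)$.

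Since this identity holds for every $u \in H$, we obtain $A\chi = \bigl(\sum_{s\in S}\chi(s)\bigr)\chi$, i.e.\ $\chi$ is an eigenvector of $A$ with eigenvalue $\sum_{s\in S}\chi(s)$. Plugging into $\mathcal{L} = \mathbb{I} - \frac{1}{|S|}A$ then yields the claimed eigenvalue $1 - \frac{1}{|S|}\sum_{s\in S}\chi(s)$. The only place where one has to be careful is the indexing of neighbors, which relies crucially on the symmetry of $S$; apart from that the proof is a one-line calculation once the regularity of the Cayley graph has been invoked. Note that Lemma \ref{lem:charLin} and Lemma \ref{lem:nChar} together guarantee that these eigenvectors form a complete basis of the graph space, so this theorem in fact diagonalizes $\mathcal{L}$ entirely, though that observation is not needed for the statement as written.
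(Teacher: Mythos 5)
Your proposal is correct and follows essentially the same route as the paper: compute $(A\chi)_u = \sum_{s\in S}\chi(su) = \bigl(\sum_{s\in S}\chi(s)\bigr)\chi(u)$ and then use the $\vert S\vert$-regularity to pass to $\mathcal{L} = \mathbb{I} - \frac{1}{\vert S\vert}A$. Your explicit remark about the symmetry of $S$ in reindexing the neighbors is a fair point of care, though the paper sidesteps it by parametrizing the neighbors of $v$ directly as $sv$ with $s = uv^{-1}\in S$.
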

\begin{proof}
Let A be the adjacency matrix of the Cayley graph $\cay(H,S)$ and $\chi$ a character of the group $H$. As we understand $\chi$ as both, as function and vector, we identify $\chi(v)$ with $\chi_v$.
\begin{align*}
(A\chi)_v &= \sum_{u\in G} A_{u,v} \chi_v\\
&=\sum_{u: uv^{-1}\in S} \chi(u)\\
&=\sum_{s \in S} \chi(sv)\\
&= \left(\sum_{s \in S} \chi(s)\right) \chi_v\text{.}
\end{align*}
Thus $\chi$ is an eigenvector of the adjacency matrix corresponding to the eigenvalue $\sum_{s \in S} \chi(s)$. As $\cay(H,S)$ is ${\vert S \vert}$-regular, $\chi$ is also an eigenvector of the normalized Laplacian to the eigenvalue $1-\frac{1}{\vert S \vert} \sum_{s \in S} \chi(s)$. 
\end{proof}

With the knowledge from Lemmata \ref{lem:charLin} and \ref{lem:nChar} that the group $H$ has ${\vert H \vert}$ linear independent characters, we can conclude that they form an eigenbasis of the normalized Laplacian.

Interestingly the eigenvectors of a Cayley graph are fully defined by the group $H$, whereas the specific choice of a subset $S$ only affects the eigenvalues.

We will finish this section by giving a formula for the eigenvectors and -values of a certain family of Cayley graphs, namely those with the group $\bigotimes_{i=1}^k(\mathbb{Z}/2\mathbb{Z},+) =  \big( \{0,1\}^k, \oplus \big)$. We will understand the elements of the group as bit strings of length $k$ and hence the group operation, which we will denote by $\oplus$, is just bitwise addition or bitwise XOR.

\begin{thm}\label{thm:cayleyBitwise}
Let $H = \big( \{0,1\}^k, \oplus \big)$ be the group with bitwise addition on the set of bit strings with length $k$ and $S \subseteq H$ a symmetric subset. Then the eigenvectors $(\chi_x)_{x\in H}$ of the normalized Laplacian of $\cay(H,S)$ have the form
\begin{align*}
\chi_{x,y}= (-1)^{\langle x, y\rangle}
\end{align*}
with the corresponding eigenvalues
\begin{align*}
1 - \frac{1}{\vert S \vert} \sum_{s\in S} (-1)^{\langle x, s\rangle}
\end{align*}
where $\langle \cdot,\cdot\rangle$ denotes the scalar product of two bit vectors modulo $2$.
\end{thm}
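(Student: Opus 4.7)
The plan is to apply Theorem \ref{thm:eigenChar} directly: once we identify the characters of $H = \bigl(\{0,1\}^k, \oplus\bigr)$, the eigenvector/eigenvalue claim is immediate. So the real work is to verify that the $2^k$ functions $\chi_x : y \mapsto (-1)^{\langle x, y\rangle}$, indexed by $x \in H$, exhaust the character group of $H$.

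First I would check that each $\chi_x$ is a genuine character, i.e.\ a homomorphism $H \to \mathbb{C}^\times$. This is a one-line computation: for any $y, z \in H$,
\[
\chi_x(y \oplus z) = (-1)^{\langle x, y \oplus z\rangle} = (-1)^{\langle x, y\rangle + \langle x, z\rangle} = \chi_x(y)\,\chi_x(z),
\]
where the middle equality uses that $\langle x, \cdot\rangle$ is $\mathbb{F}_2$-linear and that $(-1)^{a+b} = (-1)^a (-1)^b$ even when $a,b$ are only defined mod $2$. Second, I would verify that the map $x \mapsto \chi_x$ is injective: if $x \ne x'$, choose a coordinate $i$ on which they differ and let $e_i$ be the $i$-th standard basis vector; then $\langle x, e_i\rangle \ne \langle x', e_i\rangle \pmod 2$, so $\chi_x(e_i) = -\chi_{x'}(e_i)$ and the characters are distinct.

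This produces $|H| = 2^k$ distinct characters, which by Lemma \ref{lem:nChar} is the total number of characters of $H$. Hence $\{\chi_x\}_{x \in H}$ is the complete list. Plugging any such $\chi_x$ into Theorem \ref{thm:eigenChar} yields
\[
\mathcal{L}\,\chi_x = \Bigl(1 - \tfrac{1}{|S|}\sum_{s \in S}\chi_x(s)\Bigr)\chi_x = \Bigl(1 - \tfrac{1}{|S|}\sum_{s\in S}(-1)^{\langle x, s\rangle}\Bigr)\chi_x,
\]
which is exactly the claimed eigenvalue.

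There is essentially no hard step here; the only thing that could go subtly wrong is the homomorphism check, since $\langle x, \cdot\rangle$ lives in $\mathbb{F}_2$ while the exponent in $(-1)^{(\cdot)}$ looks like an integer — but as noted, $(-1)^a$ depends only on $a \bmod 2$, so the argument is clean. Everything else is bookkeeping: confirming the count matches $|H|$ and then invoking the previous theorem.
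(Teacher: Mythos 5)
Your proposal is correct and follows essentially the same route as the paper: verify each $\chi_x$ is a character, show they are pairwise distinct via the standard basis vectors $e_i$, and then invoke Theorem \ref{thm:eigenChar} for the eigenvalue formula. The only cosmetic difference is that the paper closes by citing Lemma \ref{lem:charLin} (linear independence of distinct characters) to conclude the $\chi_x$ form an eigenbasis, whereas you cite Lemma \ref{lem:nChar} to conclude the list of characters is complete; both are valid ways to finish.
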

\begin{proof}
For all $x\in H$ the above defined $\chi_x$ is a character of $H$ since for all $y,z \in H$ the following holds:
\begin{align*}
\chi_x(y \oplus z) &= (-1)^{\langle x, y \oplus z\rangle}
=(-1)^{{\langle x, y \rangle} \oplus {\langle x, z\rangle}}
=(-1)^{\langle x, y \rangle} (-1)^{\langle x, z\rangle}
= \chi_x(y) \chi_x(z)\text{.}
\end{align*}
It is left to check that we have really defined ${\vert H \vert}$ distinct characters. Assume $x,y \in H, \, x\ne y$, hence w.l.o.g. $x_i=0$ and $y_i=1$ for some bit position $i$. Let $e_i$ denote the bit string with all zeroes and and the only $1$ in position $i$. Then
\begin{align*}
\chi_x(e_i) = 1 \ne -1 = \chi_y(e_i)\text{.}
\end{align*}
So $(\chi_x)_{x\in H}$ are indeed ${\vert H \vert}$ distinct characters and hence linearly independant according to Lemma \ref{lem:charLin}. With Theorem \ref{thm:eigenChar} it follows that $(\chi_x)_{x\in H}$ is an eigenbasis of $\mathcal{L}(H)$ with the corresponding eigenvalues
\begin{align*}
1 - \frac{1}{\vert S \vert} &\sum_{s\in S} (-1)^{\langle x, s\rangle}\text{.} \qedhere
\end{align*}
\end{proof}


\section{Contractions and coverings}\label{sec:contraction}

In this section we return to weighted graphs and introduce covering and contraction, two useful tools that allow us to transform a weighted graph into another one with a related spectrum.

\begin{definition}\label{def:contraction}
Let $G=(V',E',w')$ and $H=(V,E,w)$ be weighted graphs. $H$ is called a \textsf{contraction} of  $G$, iff there exists a surjective \textsf{contraction function} $c: V' \rightarrow V$ such that for all $x,y \in V$
\begin{align*}
w(x,y) = \sum_{\substack{a \in c^{-1}(x) \\b \in c^{-1}(y)}} w'(a,b)\text{.}
\end{align*}
\end{definition}

If $c^{-1}(x)$ comprises more than one element, we say that the vertices of $c^{-1}(x)$ are contracted to the vertex $x$.

\begin{lem}\label{lem:contractionDegree}
Let $H=(V,E,w)$ be a contraction of $G=(V',E',w')$ via the contraction function $c$. The degree of a vertex $x$ in $H$ be denoted by $d_x$ and the degree of a vertex $a$ in $G$ by $d'_a$. Then it holds for all $x \in V$:
\begin{align*}
d_x = \sum_{a \in c^{-1}(x)} d'_a
\end{align*}
\end{lem}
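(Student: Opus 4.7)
The plan is to unpack the definition of degree and then use the fact that the contraction function partitions the vertex set of $G$ into preimages.

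First I would write $d_x = \sum_{y \in V} w(x,y)$ by definition of the degree in the weighted graph $H$. Next I would substitute in the defining identity of a contraction from Definition~\ref{def:contraction}, namely $w(x,y) = \sum_{a \in c^{-1}(x),\, b \in c^{-1}(y)} w'(a,b)$, to obtain a triple sum indexed by $y \in V$, $a \in c^{-1}(x)$, and $b \in c^{-1}(y)$.

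The key step is then to pull the sum over $a \in c^{-1}(x)$ to the outside (it does not depend on $y$) and exchange the remaining two sums. Since $c: V' \to V$ is a (surjective) function, the preimages $\{c^{-1}(y)\}_{y \in V}$ form a partition of $V'$, so
\begin{align*}
\sum_{y \in V} \sum_{b \in c^{-1}(y)} w'(a,b) = \sum_{b \in V'} w'(a,b) = d'_a.
\end{align*}
Putting this back gives $d_x = \sum_{a \in c^{-1}(x)} d'_a$, as required.

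There is no real obstacle here; the only thing to be a little careful about is justifying the partition step, which depends only on $c$ being a well-defined function on $V'$ (surjectivity is not strictly needed for the argument, but it is part of the definition of a contraction and ensures the sum on the right side is over a nonempty set).
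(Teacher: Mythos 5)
Your proposal is correct and follows exactly the same route as the paper's proof: expand $d_x$ by definition, substitute the contraction identity for $w(x,y)$, and collapse $\sum_{y\in V}\sum_{b\in c^{-1}(y)}$ into $\sum_{b\in V'}$ using that the preimages partition $V'$. The remark about surjectivity not being strictly needed is a fine observation but does not change the argument.
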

\begin{proof}
\begin{align*}
d_x &= \sum_{y \in V} w(x,y)
= \sum_{y \in V} \sum_{b \in c^{-1}(y)}\sum_{a \in c^{-1}(x)}  w'(a,b)
= \sum_{a \in c^{-1}(x)} \sum_{b \in V'} w'(a,b)
=\sum_{a \in c^{-1}(x)} d'_a\text{.} \qedhere
\end{align*}
\end{proof}

\begin{thm}\label{thm:contractionGap}
Let $G=(V',E',w')$ and $H=(V,E,w)$  be connected weighted graphs with more than one vertex and $H$ a contraction of $G$. Then the spectral gaps fulfill the following inequality:
\begin{align*}
\lambda(G) \le \lambda(H)\text{.}
\end{align*}
\end{thm}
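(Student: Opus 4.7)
The plan is to exploit the variational characterization of the spectral gap from Corollary \ref{cor:infRayleigh},
$$\lambda(G) = \inf_{\substack{\ket{f'} \neq 0 \\ \ket{f'} \perp T'\mathbf{1}}} \frac{\sum\limits_{a \sim b}(f'_a - f'_b)^2 w'(a,b)}{\sum\limits_{a \in V'} (f'_a)^2 d'_a},$$
and analogously for $\lambda(H)$. The idea is to take a vector on $V$ attaining $\lambda(H)$ and lift it to $V'$ via the contraction map in such a way that both the numerator and the denominator of the Rayleigh quotient are preserved; this pullback will then witness the desired inequality $\lambda(G) \le \lambda(H)$.

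Concretely, let $\ket{f} \in \mathbb{R}^V$ be a nonzero vector with $\ket{f}\perp T\mathbf{1}$ attaining the infimum $\lambda(H)$, and define the pullback $\ket{f'}\in\mathbb{R}^{V'}$ by $f'_a := f_{c(a)}$. Surjectivity of $c$ ensures $\ket{f'}\neq 0$. Lemma \ref{lem:contractionDegree} gives $\sum_{a\in c^{-1}(x)} d'_a = d_x$, and summing over the fibres yields the two identities
$$\sum_{a \in V'} f'_a d'_a = \sum_{x \in V} f_x d_x = 0 \quad\text{and}\quad \sum_{a \in V'}(f'_a)^2 d'_a = \sum_{x \in V} f_x^2 d_x,$$
showing that $\ket{f'}$ is an admissible test vector for $G$ and that the denominator of the Rayleigh quotient is unchanged by the pullback.

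For the numerator, I would argue that edges $\{a,b\}$ of $G$ with $c(a) = c(b)$ contribute zero, while the remaining unordered adjacent pairs can be grouped by the unordered image pair $\{c(a),c(b)\} = \{x,y\}$ with $x \neq y$; within each group Definition \ref{def:contraction} collapses the double sum $\sum_{a\in c^{-1}(x),\, b\in c^{-1}(y)} w'(a,b)$ to $w(x,y)$, giving
$$\sum_{a \sim b}(f'_a - f'_b)^2 w'(a,b) = \sum_{x \sim y}(f_x - f_y)^2 w(x,y).$$
Combined with the denominator identity this shows that the Rayleigh quotient of $\ket{f'}$ in $G$ equals $\lambda(H)$, so the variational formula forces $\lambda(G)\le \lambda(H)$.

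The main subtlety I expect is the combinatorial bookkeeping in the last step: matching the ordered double sum over fibres appearing in Definition \ref{def:contraction} with the unordered sum over adjacent pairs used in Lemma \ref{lem:expectationL} and Corollary \ref{cor:infRayleigh}, and confirming that each inter-fibre $G$-edge is counted exactly once on both sides. Once this is handled cleanly, the rest of the argument is an immediate application of the min-max principle.
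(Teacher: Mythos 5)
Your proposal is correct and follows essentially the same route as the paper: both pull back the minimizing vector for $\lambda(H)$ via $f'_a := f_{c(a)}$, use Lemma \ref{lem:contractionDegree} to preserve the orthogonality condition and the denominator, collapse the fibre sums via Definition \ref{def:contraction} to match the numerator, and conclude by the variational characterization of Corollary \ref{cor:infRayleigh}. Your extra care with the ordered-versus-unordered bookkeeping (and the observation that intra-fibre edges contribute zero) is exactly the detail the paper's displayed computation glosses over, so nothing further is needed.
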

\begin{proof}
We denote the contraction function by $c$, the degree of a vertex $x$ in $H$ by $d_x$, the degree of a vertex $a$ in $G$ by $d'_a$ and the degree matrices by $T$ and $T'$, respectively. Let $T^{1/2}\ket{f}$ be the eigenvector of $\mathcal{L}(H)$ with the eigenvalue $\lambda(H)$. Define $\ket{f'}: V' \rightarrow \mathbb{C}$ by
\begin{align*}
f'_a = f_{c(a)}\text{.}
\end{align*}
The fact that $\ket{f} \perp T\mathbf{1}$ implies directly $\ket{f'} \perp T'\mathbf{1}$ since
\begin{align*}
\sum_{a \in V'} f'_a d'_a
= \sum_{x \in V} \sum_{a \in c^{-1}(x)}  f_{c(a)} d'_a
=\sum_{x \in V} f_x d_x\text{.}
\end{align*}
Hence according to equation \eqref{cor:infRayleigh} the spectral gap of $G$ can be bounded by: 
\begin{align*}
 \lambda(G)
&= \frac{\sum\limits_{x,y \in V}\sum\limits_{\substack{a \in c^{-1}(x)\\ b \in c{-1}(y)}}  \big(f'_a - f'_b\big)^2 w'(a,b)}{\sum\limits_{x \in V}\sum\limits_{a \in c^{-1}(x)} (f_a')^2 d'_a}\\
&= \frac{\sum\limits_{x,y \in V} \big(f_x - f_y\big)^2 \sum\limits_{\substack{a \in c^{-1}(x)\\ b \in c{-1}(y)}} w'(a,b)}{\sum\limits_{x \in V} f_x^2\sum\limits_{a \in c^{-1}(x)} d'_a}\\
&= \frac{\sum\limits_{x,y \in V} \big(f_x - f_y\big)^2 w(x,y)}{\sum\limits_{x \in V} f_x^2 d_x}\\
&= \lambda(H)\text{.}\qedhere
\end{align*}
\end{proof}

Contraction clearly allows us to reduce the number of vertices in a graph by keeping the lower bound on the spectral gap. Covering is a similar tool which allows us to reduce the number of vertices in a graph while preserving the adjacency spectrum. Be aware that in some literature the above defined contraction is called covering. We instead follow with our definition the method presented in \cite{osborne}.

\begin{definition}\label{def:covering}
Let $G=(V',E',w')$ and $H=(V,E,w)$ be weighted graphs. $G$ is called a \textsf{covering} of $H$, iff there exists a surjective \textsf{covering function} $c: V' \rightarrow V$ that fulfills the following two properties:
\begin{enumerate}[label={\normalfont(\roman*)}]
\item For all $x,y \in V$ the following holds:
\begin{align*}
w(x,y) = \sum_{\substack{a \in c^{-1}(x) \\b \in c^{-1}(y)}} \frac{1}{\sqrt{\left\vert c^{-1}(x)\right\vert \left\vert c^{-1}(y)\right\vert}} w'(a,b)\text{.}
\end{align*}
\item For all $y \in V$ and for all $a,a^* \in V'$ with $c(a)=c(a^*)$ it holds:
\begin{align*}
\sum_{b \in c^{-1}(y)} w'(a,b) =\sum_{b \in c^{-1}(y)} w'(a^*,b)\text{.}
\end{align*}
\end{enumerate}
\end{definition}

\begin{definition}
Let $G=(V',E',w')$ be a covering of $H=(V,E,w)$ via the covering function $c$.Then the \textsf{pull-back operator} $P: V' \rightarrow V$ is defined as
\begin{align*}
P:=\sum_{\substack{x \in V\\a \in c^{-1}(x)}} \frac{1}{\sqrt{\left\vert c^{-1}(x) \right\vert}} \ket{x}\bra{a} \text{.}
\end{align*}
\end{definition}

\begin{lem}\label{lem:coveringP}
Let $G=(V',E',w')$ be a covering of $H=(V,E,w)$ with the pull-back operator $P$ and $A(G)$ and $A(H)$ the adjacency matrices of $G$ and $H$ respectively. Then
\begin{align*}
A(H) P = PA(G) \text{.}
\end{align*}
\end{lem}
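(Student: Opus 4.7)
The plan is to prove the operator identity by computing both sides entry-wise in the natural bases for $H$ and $G$, and checking they agree for an arbitrary matrix element $(x, b')$ with $x \in V$ and $b' \in V'$. Let $y := c(b') \in V$ throughout. Unpacking the definitions gives on the one hand
\begin{align*}
(PA(G))_{x,b'} \;=\; \frac{1}{\sqrt{\lvert c^{-1}(x)\rvert}} \sum_{a \in c^{-1}(x)} w'(a, b')
\end{align*}
and on the other hand, since $P$ sends the basis vector $\ket{b'}$ to $\frac{1}{\sqrt{\lvert c^{-1}(y)\rvert}}\ket{y}$,
\begin{align*}
(A(H)P)_{x,b'} \;=\; \frac{w(x, y)}{\sqrt{\lvert c^{-1}(y)\rvert}}\,.
\end{align*}

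The next step is to rewrite the right-hand side using property (i) of Definition \ref{def:covering}, which expresses $w(x,y)$ as $\frac{1}{\sqrt{\lvert c^{-1}(x)\rvert \lvert c^{-1}(y)\rvert}} \sum_{a \in c^{-1}(x), b \in c^{-1}(y)} w'(a,b)$. Substituting this produces
\begin{align*}
(A(H)P)_{x,b'} \;=\; \frac{1}{\lvert c^{-1}(y)\rvert \sqrt{\lvert c^{-1}(x)\rvert}} \sum_{b \in c^{-1}(y)} \sum_{a \in c^{-1}(x)} w'(a,b)\,.
\end{align*}
Comparing with the previous expression for $(PA(G))_{x,b'}$, equality reduces to showing that the inner sum $\sum_{a \in c^{-1}(x)} w'(a,b)$ takes the same value for every choice of $b \in c^{-1}(y)$, so that the outer average over $b$ simply recovers $\sum_{a \in c^{-1}(x)} w'(a,b')$.

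This invariance is precisely where property (ii) of Definition \ref{def:covering} enters. That property says that for any fixed target vertex, the total weight from a vertex into a fiber depends only on the image of that vertex under $c$. Applying symmetry of $w'$ (i.e. $w'(a,b) = w'(b,a)$) lets us swap the roles of $a$ and $b$: we are asking whether $\sum_{a \in c^{-1}(x)} w'(b,a)$ depends on $b$ within the fiber $c^{-1}(y)$, and (ii) guarantees it does not. Hence averaging over $b \in c^{-1}(y)$ is harmless, the two expressions coincide, and the identity $A(H)P = PA(G)$ follows.

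I expect no substantial obstacle; the only point requiring care is to apply property (ii) via the symmetry of $w'$ in the correct argument slot, and to keep track of the $\frac{1}{\sqrt{\lvert c^{-1}(y)\rvert}}$ normalization produced by $P$ when it acts on the right.
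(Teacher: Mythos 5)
Your proposal is correct and follows essentially the same route as the paper: both compute the matrix elements of $A(H)P$, expand $w(x,y)$ via property (i) of Definition \ref{def:covering}, and then use property (ii) (together with the symmetry of $w'$) to show the inner fiber sum is independent of the representative $b\in c^{-1}(y)$, collapsing the average to recover $PA(G)$. The only difference is cosmetic — you work with individual entries $(x,b')$ while the paper carries the full operator sum $\sum \ket{x}\bra{b}$ through the computation.
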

\begin{proof}
\begin{align*}
A(H)P &= \sum_{x \in V} \sum_{\substack{y \in V\\b \in c^{-1}(y)}} \frac{1}{\sqrt{\left\vert c^{-1}(y) \right\vert}} w(x,y) \ket{x}\bra{b}\\
& \overset{\ref{def:covering} \text{(i)}}{=} \sum_{x \in V} \sum_{\substack{y \in V\\b \in c^{-1}(y)}} \frac{1}{\left\vert c^{-1}(y) \right \vert \sqrt{\left\vert c^{-1}(x) \right\vert}}\underbrace{\sum_{c\in c^{-1}(y)} \sum_{a\in c^{-1}(x)} w'(c,a)}_{\substack{= \left\vert c^{-1}(y) \right \vert \sum\limits_{a\in c^{-1}(x)} w'(b,a)\\ \text{according to \ref{def:covering} (ii)}}} \ket{x}\bra{b}\\
& = \sum_{\substack{x \in V \\ a\in c^{-1}(x)}} \sum_{b \in V'} \frac{1}{\sqrt{\left\vert c^{-1}(x) \right\vert}} w'(b,a) \ket{x}\bra{b}\\
&=P A(G)\text{.}\qedhere
\end{align*}
\end{proof}

\begin{thm}\label{thm:covering}
Let $G=(V',E',w')$ be a covering of $H=(V,E,w)$ and $A(G)$ and $A(H)$ the adjacency matrices of $G$ and $H$ respectively. Then $\alpha$ is an eigenvalue of A(G) if and only if $\alpha$ is an eigenvalue of $A(H)$.
\end{thm}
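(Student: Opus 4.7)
The plan is to exploit the intertwining relation $A(H)P = PA(G)$ established in Lemma~\ref{lem:coveringP} together with the fact that the pull-back operator $P$ is a co-isometry from the graph space of $G$ onto the graph space of $H$.

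First I would verify that $PP^\dagger = \mathbb{I}_V$. This is a direct computation from the definition of $P$: the only contributions to $\braket{x|PP^\dagger|y}$ come from pairs $(a,b) \in c^{-1}(x) \times c^{-1}(y)$ with $a = b$, which forces $x = y$, and each of the $\vert c^{-1}(x)\vert$ surviving terms carries weight $1/\vert c^{-1}(x)\vert$. Since both adjacency matrices are symmetric, taking the adjoint of Lemma~\ref{lem:coveringP} also yields the dual intertwiner $P^\dagger A(H) = A(G) P^\dagger$.

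For the direction ``$\alpha \in \mathrm{spec}(A(H)) \Rightarrow \alpha \in \mathrm{spec}(A(G))$'', I would take any eigenvector $\ket{w} \neq 0$ with $A(H)\ket{w} = \alpha \ket{w}$ and set $\ket{v} := P^\dagger \ket{w}$. The dual intertwiner gives $A(G)\ket{v} = P^\dagger A(H)\ket{w} = \alpha \ket{v}$, and $P\ket{v} = PP^\dagger\ket{w} = \ket{w} \neq 0$ ensures $\ket{v} \neq 0$, so $\ket{v}$ is a genuine $\alpha$-eigenvector of $A(G)$.

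For the reverse direction, given $\ket{v}\neq 0$ with $A(G)\ket{v}=\alpha\ket{v}$, the natural candidate is $\ket{w} := P\ket{v}$, which satisfies $A(H)\ket{w} = PA(G)\ket{v} = \alpha\ket{w}$ by Lemma~\ref{lem:coveringP}. The hard part will be to guarantee that one can select a representative $\ket{v}$ in the $\alpha$-eigenspace of $A(G)$ with $P\ket{v}\neq 0$, i.e.\ that the eigenspace is not entirely absorbed into $\ker P$. Since $P A(G) = A(H) P$ implies that $\ker P$ is $A(G)$-invariant, the subspace $(\ker P)^\perp = \mathrm{Im}(P^\dagger)$ is also $A(G)$-invariant, and on it $A(G)$ is intertwined with $A(H)$ by the isometry $P^\dagger$. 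Thus the spectrum of $A(G)$ splits as the spectrum of $A(H)$ together with the spectrum of the restriction $A(G)|_{\ker P}$, and the argument would need to appeal to an additional feature of the covering to rule out new eigenvalues arising from the latter block. This is the step I expect to be the main obstacle.
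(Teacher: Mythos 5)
Your proof of the inclusion $\alpha \in \mathrm{spec}\big(A(H)\big) \Rightarrow \alpha \in \mathrm{spec}\big(A(G)\big)$ (the paper's ``$\Leftarrow$'' direction) is complete and in fact more careful than the paper's own: the paper applies $P^\dagger A(H) = A(G)P^\dagger$ to an eigenvector $\ket{f}$ of $A(H)$ and immediately declares $P^\dagger\ket{f}$ an eigenvector of $A(G)$, without remarking that it is nonzero. Your computation $PP^\dagger = \mathbb{I}_V$ supplies exactly that missing (easy) justification.

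The obstacle you single out in the converse direction is genuine, and the paper does not overcome it --- its proof of ``$\Rightarrow$'' is the one line $A(H)P\ket{f} = PA(G)\ket{f} = \alpha P\ket{f}$, ``hence $P\ket{f}$ is an eigenvector of $A(H)$'', with no argument that $P\ket{f}\neq 0$. No such argument is available from Definition \ref{def:covering} alone: let $G$ consist of two vertices $a,b$ joined by an edge of weight $1$ and let $c$ send both to a single vertex $x$. Conditions (i) and (ii) are satisfied, and $H$ is a single vertex with a loop of weight $w(x,x)=1$, so $\mathrm{spec}\big(A(G)\big)=\{1,-1\}$ while $\mathrm{spec}\big(A(H)\big)=\{1\}$; the $(-1)$-eigenvector $\ket{a}-\ket{b}$ lies in $\ker P$. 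Your decomposition of $\mathrm{spec}\big(A(G)\big)$ into $\mathrm{spec}\big(A(H)\big)$ together with the spectrum of $A(G)\vert_{\ker P}$ is the correct general statement, and the second block really can contribute new eigenvalues. So the step you expected to be the main obstacle is not one you failed to see how to carry out --- it cannot be carried out; only the inclusion you proved holds for an arbitrary covering, and an honest version of the theorem should assert only that inclusion (or add hypotheses ruling out eigenvectors supported in $\ker P$).
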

\begin{proof}
\quad

\begin{description}
\item[``$\Rightarrow$'']
Let $\ket{f}$ be an eigenvector of $A(G)$ with eigenvalue $\alpha$. Using Lemma \ref{lem:coveringP} one can easily see
\begin{align*}
A(H) P \ket{f} = P A(G) \ket{f} = \alpha P \ket{f}\text{,}
\end{align*}
hence $P\ket{f}$ is an eigenvector of $A(H)$ with eigenvalue $\alpha$.
\item[``$\Leftarrow$'']
Now let $\ket{f}$ be an eigenvector of $A(H)$ with eigenvalue $\alpha$. The adjoint of the expression in Lemma \ref{lem:coveringP} gives $P^\dagger A(H) = A(G) P^\dagger$ and therefore
\begin{align*}
A(G) P^\dagger \ket{f} = P^\dagger A(H) \ket{f} = \alpha P^\dagger \ket{f}\text{.}
\end{align*}
Thus $P^\dagger\ket{f}$ is an eigenvector of $A(G)$ with eigenvalue $\alpha$.\qedhere
\end{description}
\end{proof}

We have seen now that the spectra of the adjacency matrices of a graph and its covering graph are the same except for the multiplicities. As $\dim(A(H)) \le \dim(A(G))$ the multiplicity of an eigenvalue $\lambda$ of $A(H)$ is less than or equal  to the multiplicity of $\lambda$ in $A(G)$. Therefore a graph can only be the nontrivial cover of another graph, if its spectrum is degenerated.

In the case of contraction we could conclude from the spectral gap of a graph onto the spectral gap of its contracted graph. In the case of covering Theorem \ref{thm:covering} seems at first sight like an even stronger statement since it gives us the equivalence between eigenvalues. But unfortunately this equivalence refers to the eigenvalues of the adjacency matrix and not to the eigenvalues of the normalized Laplacian. But looking at specific graphs the relations between the different spectra stated in Theorem \ref{thm:relationship} often help to conclude from the adjacency gap bounds for the spectral gap of the normalized Laplacian.


\section{Parallel transport networks}

\begin{definition}\label{def:PTN}
A \textsf{parallel transport network} $\mathcal{G}=(V,E,w,\mathcal{T},\mathcal{U},\pi)$ consists of
\begin{itemize}
\item a weighted graph $(V,E,w)$, the so-called \textsf{underlying graph},
\item a \textsf{time map} $\mathcal{T}: V \rightarrow \{0,\dots,L\}$ such that $\forall (u,v) \in E: \left|\mathcal{T}(u)-\mathcal{T}(v)\right|\le 1$,
\item a map $\mathcal{U}: \{1,\dots,L\} \rightarrow U(k)$ that maps every \textsf{time step} $t \in  \{1,\dots,L\}$ to an element of the unitary group of degree $k$
\item and a representation $\pi: U(k) \rightarrow \mathcal{B}(\mathcal{H})$ of the unitary group in the space of bounded linear operators on a $k$-dimensional Hilbert space $\mathcal{H}$.
\end{itemize}
\end{definition}

\begin{rem}
We will use the short-hand $U_t:=\pi(\mathcal{U}(t))$. For the purpose of generality in the above definition $\mathcal{U}$ does not map $t$ directly to the Hilbert space operator $U_t$ but to an element of the unitary group $U(k)$ with an arbitrary representation in $\mathcal{B}(\mathcal{H})$. However, in this thesis we always assume the fundamental representation since $\pi(\mathcal{U}(t))$ has to equal arbitrary unitary gates of quantum circuits in our later applications.
\end{rem}

\begin{definition}
Let $\mathcal{G}=(V,E,w,\mathcal{T},\mathcal{U},\pi)$ be a parallel transport network, $\mathcal{T}:V \rightarrow \{0, \dots L\}$. We define for all $0 \le t \le L$:
\begin{align*}
V_t := \mathcal{T}^{-1}(t)\text{.}
\end{align*}
\end{definition}

\begin{definition}\label{def:unitaryPath}
A \textsf{path} $p = (v_0, v_1, \dots, v_n) $ of a parallel transport network is a path of its underlying graph. The \textsf{associated unitary of a path} $(v_0,v_1)$ of length $1$ (also called the \textsf{associated unitary of the edge} $(v_0,v_1)$)  is defined as
\begin{align*}
U((v_0,v_1))=
\begin{cases}
\mathbb{I} \quad\quad &\text{if } \mathcal{T}(v_0) = \mathcal{T}(v_1)\\
U_{\mathcal{T}(v_1)} \quad\quad &\text{if } \mathcal{T}(v_0) +1 = \mathcal{T}(v_1)\\
U_{\mathcal{T}(v_0)}^\dagger \quad\quad &\text{if } \mathcal{T}(v_0) -1= \mathcal{T}(v_1)\text{.}
\end{cases}
\end{align*}
The \textsf{associated unitary} $U(p)$ of a path $p= (v_0, v_1, \dots, v_n) $ with $\len(p) \ge 2$ is given by:
\begin{align*}
U(p) := U(v_{n-1},v_n) \dots U(v_1, v_2) U(v_0,v_1)\text{.}
\end{align*}
\end{definition}

From the above definition of the associated unitary of an edge follows directly:
\begin{prop}
For every edge $(v,u)\in E$ it holds that $U(v,u) = U(u,v)^\dagger$.\hspace{10em}\qed
\end{prop}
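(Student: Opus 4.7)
The plan is to verify the claim by case analysis on the time map values $\mathcal{T}(v)$ and $\mathcal{T}(u)$, since Definition \ref{def:unitaryPath} splits the associated unitary of an edge into exactly three cases according to whether $\mathcal{T}(v_0)$ equals, precedes, or follows $\mathcal{T}(v_1)$. Note that the definition of a parallel transport network forces $|\mathcal{T}(v)-\mathcal{T}(u)|\le 1$ for any edge $(v,u)\in E$, so these three cases are exhaustive.

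First I would handle the ``equal time'' case: if $\mathcal{T}(v)=\mathcal{T}(u)$, then by definition $U(v,u)=\mathbb{I}=U(u,v)$, and since $\mathbb{I}^\dagger=\mathbb{I}$ the identity holds trivially. Next, I would treat the ``forward'' case $\mathcal{T}(v)+1=\mathcal{T}(u)$: here $U(v,u)=U_{\mathcal{T}(u)}$. Swapping the roles of the endpoints shows that the pair $(u,v)$ satisfies $\mathcal{T}(u)-1=\mathcal{T}(v)$, which is precisely the third branch of Definition \ref{def:unitaryPath}, giving $U(u,v)=U_{\mathcal{T}(u)}^\dagger$. Taking the adjoint recovers $U(u,v)^\dagger=U_{\mathcal{T}(u)}=U(v,u)$. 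The remaining ``backward'' case $\mathcal{T}(v)-1=\mathcal{T}(u)$ is entirely symmetric, obtained by interchanging the roles of $v$ and $u$ in the previous case.

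There is no real obstacle here; the statement is essentially a bookkeeping check that the three piecewise definitions were chosen consistently so that reversing the edge corresponds to taking the adjoint. The only thing to be mindful of is that the roles of $v_0$ and $v_1$ must be tracked carefully when applying the definition in the swapped order, to ensure that the subscript on the unitary $U_t$ (which depends on the later time index) is updated correctly.
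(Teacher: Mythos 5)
Your case analysis is correct and is exactly the routine verification the paper has in mind when it states that the claim ``follows directly'' from Definition \ref{def:unitaryPath} (the paper gives no further proof). The three cases are handled properly, including the correct tracking of the subscript $U_{\mathcal{T}(u)}$ when the edge is reversed.
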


By induction over the length of a path one can easily verify even the more general statement:

\begin{prop}
Given a path $p = (v_0, v_1, \dots, v_n)$ with $\mathcal{T}(v_0)=t_0$ and $\mathcal{T}(v_n)=t_n$ its associated unitary equals
\begin{align*}
U(p)=
\begin{cases}
U_{t_n} U_{t_n -1} \dots U_{t_0+1} \quad\quad& \text{if } t_0 < t_n\\
U_{t_n +1}^\dagger \dots U_{t_0 -1}^\dagger U_{t_0}^\dagger \quad\quad& \text{if } t_0 > t_n\\
\mathbb{I} \quad\quad& \text{if } t_0 = t_n\text{.}\hspace{18.3em}\qed
\end{cases}
\end{align*}
\end{prop}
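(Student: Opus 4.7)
The plan is to prove this by induction on the length $n$ of the path $p$, exactly as suggested by the author.

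For the base case $n=1$ we have $p=(v_0,v_1)$ with $|t_0-t_1|\le 1$ (guaranteed by the time-map constraint in Definition \ref{def:PTN}). The three cases of Definition \ref{def:unitaryPath} correspond precisely to $t_0=t_1$, $t_0<t_1$ (so $t_1=t_0+1$, giving the empty/single-factor product $U_{t_1}$), and $t_0>t_1$ (so $t_0=t_1+1$, giving $U_{t_0}^\dagger$). In each case the claimed formula collapses to exactly the definitional value.

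For the inductive step, assume the formula for all paths of length $<n$ and consider $p=(v_0,\dots,v_n)$. Write $p'=(v_0,\dots,v_{n-1})$, let $t_{n-1}:=\mathcal{T}(v_{n-1})$, and use
\begin{align*}
U(p)=U((v_{n-1},v_n))\,U(p')
\end{align*}
together with the induction hypothesis applied to $p'$, which expresses $U(p')$ as a (descending-in-index) product depending only on $t_0$ and $t_{n-1}$. I would then do a case analysis on the sign of $t_{n-1}-t_n\in\{-1,0,+1\}$ combined with the sign of $t_0-t_{n-1}$. The $t_{n-1}=t_n$ case is immediate since $U((v_{n-1},v_n))=\mathbb{I}$. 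The other cases each reduce to either appending a factor $U_{t_n}$ to the front of an ascending product (or $U_{t_n}^\dagger$ to the back of a descending one), or to a telescoping cancellation such as $U_{t_{n-1}}^\dagger U_{t_{n-1}}=\mathbb{I}$ when the path turns around at the end.

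The only mild obstacle is bookkeeping: one must verify the reduction in the two subcases where the path reverses direction at its final edge, i.e.\ when $t_0<t_{n-1}$ but $t_n=t_{n-1}-1$, and the symmetric situation $t_0>t_{n-1}$ but $t_n=t_{n-1}+1$. In the first subcase, the last factor $U((v_{n-1},v_n))=U_{t_{n-1}}^\dagger$ cancels the leading factor $U_{t_{n-1}}$ of the inductive expression for $U(p')$, leaving $U_{t_{n-1}-1}\cdots U_{t_0+1}=U_{t_n}\cdots U_{t_0+1}$ (and correctly degenerating to $\mathbb{I}$ when $t_0=t_{n-1}-1=t_n$). The second subcase is handled identically after taking adjoints. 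Since these are the only ways the case-boundaries of the piecewise formula can be crossed by adding one edge, the induction closes and the proposition follows.
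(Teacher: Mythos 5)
Your induction on the path length, with the case split on $t_n-t_{n-1}\in\{-1,0,+1\}$ and the telescoping cancellations $U_{t_{n-1}}^\dagger U_{t_{n-1}}=\mathbb{I}$ at direction reversals, is correct and complete. This is precisely the argument the paper has in mind — it only remarks that the statement follows ``by induction over the length of a path'' and leaves the bookkeeping, which you have carried out correctly, to the reader.
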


We see that every path from a vertex $v_0$ to a vertex $v_n$ has the same associated unitary. To be more precise the associated unitary of a path does not even depend on the specific inital vertex $v_0$ and final vertex $v_n$, but just on the values $\mathcal{T}(v_0)$ and $\mathcal{T}(t_n)$. We will consider $\mathcal{T}$ in this thesis as a function that maps vertices to certain \textsf{time steps}. Hence the associated unitary of a path just depends on the time steps of its initial and final vertex.

We will define now the adjacency and Laplacian matrices of a parallel transport network and using the properties presented above, we show that they are closely related to the adjacency and Laplacian matrix of the underlying graph:

\begin{definition}
The \textsf{adjacency matrix} $A(\mathcal{G})$ of a parallel transport network $\mathcal{G}=(V,E,w,\mathcal{T}, \mathcal{U}, \pi)$, $\mathcal{U}: U(k) \rightarrow \mathcal{B}(\mathcal{H})$, is a linear operator on the tensor product of a $\vert V \vert$-dimensional \textsf{graph space} and a $k$-dimensional \textsf{computation space} and defined as
\begin{align*}
A(\mathcal{G})= \sum_{(v,u)\in E} w(v,u)\ket{u}\bra{v} \otimes U((v,u))\text{.}
\end{align*}
\end{definition}

The adjacency matrix of a parallel transport network is the sum over the edge operators $w(v,u)\linebreak[3]\ket{u}\bra{v}$ tensored by the associated unitary. That is why we represent the network pictorially as a graph with the associated unitaries as edge labels (multiplied by the weights if the underlying graph is weighted). Because the edge $(u,v)$ has label $U^\dagger$ if $(v,u)$ has label $U$, it is no longer sufficient to draw just one undirected line between two vertices. Instead we either draw two directed edges with the labels $U$ and $U^\dagger$ respectively or only one directed edge with its associated unitary and know that the opposite edge is implied by definition.

\begin{ex}
$\mathcal{G}=(G, \mathcal{T}, \mathcal{U},\pi)$ with $G=(V,E)$ an unweighted graph and
\begin{align*}
V&=\{0,1,2,a,b\}\\
E&=\{(0,1), (1,0), (1,2), (2,1), (a,b), (b,a), (0,a), (a,0), (b,2), (2,b)\}\\
\mathcal{T}(0)&=\mathcal{T}(a)=0, \, \mathcal{T}(1)=\mathcal{T}(b)=1,\, \mathcal{T}(2)=2
\end{align*}
is graphically represented as

\vspace{6pt}
\begin{minipage}{\textwidth}
\begin{tikzpicture}[scale=0.9]
\node (0) at (0,0) [circle,shade,draw,minimum size=8mm] {$0$};
\node (1) at (3,0) [circle,shade,draw,minimum size=8mm] {$1$};
\node (2) at (6,0) [circle,shade,draw,minimum size=8mm] {$2$};
\node (a) at (0,-3) [circle,shade,draw,minimum size=8mm] {$a$};
\node (b) at (3,-3) [circle,shade,draw,minimum size=8mm] {$b$};
\draw[->, line width=1pt] (0) to[out=10, in=170] (1);
\draw[->, line width=1pt] (1) to[out=190, in=350] (0);
\draw[->, line width=1pt] (1) to[out=10, in=170] (2);
\draw[->, line width=1pt] (2) to[out=190, in=350] (1);
\draw[->, line width=1pt] (a) to[out=10, in=170] (b);
\draw[->, line width=1pt] (b) to[out=190, in=350] (a);
\draw[->, line width=1pt] (0) to[out=280, in=80] (a);
\draw[->, line width=1pt] (a) to[out=100, in=260] (0);
\draw[->, line width=1pt] (b) to[out=55, in=215] (2);
\draw[->, line width=1pt] (2) to[out=235, in=35] (b);
\node at (1.5,0.5) {$U_1$};
\node at (1.5,-0.5) {$U_1^\dagger$};
\node at (4.5,0.5) {$U_2$};
\node at (4.5,-0.5) {$U_2^\dagger$};
\node at (0.5,-1.5) {$\mathbb{I}$};
\node at (-0.5,-1.5) {$\mathbb{I}$};
\node at (1.5,-2.5) {$U_1$};
\node at (1.5,-3.5) {$U_1^\dagger$};
\node at (3.5,-1.5) {$U_2$};
\node at (5,-2) {$U_2^\dagger$};
\end{tikzpicture}
\raisebox{5.5em}{\quad or\quad}
\raisebox{0.1\height}{
\begin{tikzpicture}[scale=0.9]
\node (0) at (0,0) [circle,shade,draw,minimum size=8mm] {$0$};
\node (1) at (3,0) [circle,shade,draw,minimum size=8mm] {$1$};
\node (2) at (6,0) [circle,shade,draw,minimum size=8mm] {$2$};
\node (a) at (0,-3) [circle,shade,draw,minimum size=8mm] {$a$};
\node (b) at (3,-3) [circle,shade,draw,minimum size=8mm] {$b$};
\draw[->, line width=1pt] (0) to (1);;
\draw[->, line width=1pt] (1) to (2);
\draw[->, line width=1pt] (a) to (b);
\draw[->, line width=1pt] (0) to (a);
\draw[->, line width=1pt] (b) to (2);
\node at (1.5,0.2) {$U_1$};
\node at (4.5,0.2) {$U_2$};
\node at (0.2,-1.5) {$\mathbb{I}$};
\node at (1.5,-2.8) {$U_1$};
\node at (4,-1.5) {$U_2$};
\end{tikzpicture}}
\end{minipage}

\end{ex}

\begin{definition}
The \textsf{degree matrix} $T(\mathcal{G})$ of a parallel transport network $\mathcal{G}=(G, \mathcal{T},\mathcal{U},\pi)$ is defined as
\begin{align*}
T(\mathcal{G}) = T(G) \otimes \mathbb{I}_k
\end{align*}
with $k$ the degree of the unitary group $\mathcal{U}$ maps to.
\end{definition}

\begin{definition}
The \textsf{standard Laplacian} $L(\mathcal{G})$ of a parallel transport network $\mathcal{G}=(V,E,w,\mathcal{T},\mathcal{U},\pi)$ is defined as $L(\mathcal{G})= T(\mathcal{G})-A(\mathcal{G})$. The \textsf{normalized Laplacian} $\mathcal{L}(\mathcal{G})$ of the network $\mathcal{G}$ is defined as $\mathcal{L}(\mathcal{G})=T(\mathcal{G})^{-1/2} L(\mathcal{G}) T(\mathcal{G})^{-1/2}$.
\end{definition}

\begin{lem}\label{lem:laplExplPTN}
The normalized Laplacian of a connected parallel transport network $\mathcal{G}=(V,E,w,\mathcal{T},\linebreak[3]\mathcal{U},\pi)$, $ \mathcal{T}:V\rightarrow \{0, \dots L\}$, with more than one vertex has the explicit form:
\begin{align*}
\mathcal{L}(\mathcal{G}) =& \sum_{t=0}^L \Bigg(
\sum_{v\in V_t} \ket{v}\bra{v} \otimes \mathbb{I}
 - \sum_{v\in V_t} \sum_{u\in V_t} \frac{w(v,u)}{\sqrt{d_v d_u}} \ket{u}\bra{v} \otimes \mathbb{I} \Bigg)\\
 &- \sum_{t=1}^L \sum_{v\in V_t} \sum_{u\in V_{t-1}} \frac{w(v,u)}{\sqrt{d_v d_u}} \left(\ket{v}\bra{u} \otimes U_t + \ket{u}\bra{v} \otimes U_t^\dagger\right)\text{.}\hspace{11.15em}\qed
\end{align*}
\end{lem}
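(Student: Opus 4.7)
The plan is to unpack the definition $\mathcal{L}(\mathcal{G}) = T(\mathcal{G})^{-1/2}\bigl(T(\mathcal{G})-A(\mathcal{G})\bigr)T(\mathcal{G})^{-1/2}$ and split the resulting sums according to the time map $\mathcal{T}$. Because the underlying graph is connected with more than one vertex, every $d_v > 0$, so $T(G)$ is invertible and $T(\mathcal{G})^{-1/2} = \sum_{v\in V} \tfrac{1}{\sqrt{d_v}}\ket{v}\bra{v}\otimes\mathbb{I}$ is well-defined. The two summands of $\mathcal{L}(\mathcal{G})$ can then be treated separately.

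First I would compute the ``diagonal'' contribution $T(\mathcal{G})^{-1/2}T(\mathcal{G})T(\mathcal{G})^{-1/2} = \sum_{v\in V}\ket{v}\bra{v}\otimes\mathbb{I}$. Since the sets $V_t = \mathcal{T}^{-1}(t)$ partition $V$, this sum equals $\sum_{t=0}^L \sum_{v\in V_t}\ket{v}\bra{v}\otimes\mathbb{I}$, which is the first group of terms in the claimed formula.

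Next I would expand
\begin{equation*}
T(\mathcal{G})^{-1/2} A(\mathcal{G}) T(\mathcal{G})^{-1/2} = \sum_{(v,u)\in E}\frac{w(v,u)}{\sqrt{d_v d_u}}\,\ket{u}\bra{v}\otimes U((v,u))
\end{equation*}
and partition the sum according to the defining property $|\mathcal{T}(u)-\mathcal{T}(v)|\le 1$. Edges with $\mathcal{T}(v)=\mathcal{T}(u)=t$ have associated unitary $\mathbb{I}$, and collecting them over $t$ yields $\sum_{t=0}^L\sum_{v,u\in V_t}\tfrac{w(v,u)}{\sqrt{d_vd_u}}\ket{u}\bra{v}\otimes\mathbb{I}$, matching (after the minus sign) the second piece inside the parentheses.

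For the remaining, cross-time edges I would index each unordered pair by the larger time step $t\in\{1,\dots,L\}$, with $v\in V_t$ and $u\in V_{t-1}$. Each such pair contributes both the directed edge $(v,u)$ and its reverse $(u,v)$ to $E$. By Definition \ref{def:unitaryPath}, since $\mathcal{T}(v)-1=\mathcal{T}(u)$ we have $U((v,u)) = U_t^\dagger$, while $\mathcal{T}(u)+1=\mathcal{T}(v)$ gives $U((u,v)) = U_t$. Using $w(v,u)=w(u,v)$ and summing yields
\begin{equation*}
\sum_{t=1}^L\sum_{v\in V_t}\sum_{u\in V_{t-1}}\frac{w(v,u)}{\sqrt{d_vd_u}}\Bigl(\ket{u}\bra{v}\otimes U_t^\dagger + \ket{v}\bra{u}\otimes U_t\Bigr),
\end{equation*}
which, after the overall minus sign from $-A(\mathcal{G})$, is exactly the second line of the claimed formula. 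Combining the three pieces completes the proof. The only real bookkeeping obstacle is to keep track of the two orientations of every undirected edge in $E$ so that neither a factor of $\tfrac{1}{2}$ nor an extra Hermitian conjugate sneaks in; the time map $\mathcal{T}$ makes this bookkeeping transparent because it forces edges to fall into exactly two disjoint classes (same-time and consecutive-time).
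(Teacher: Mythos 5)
Your proposal is correct: the paper states this lemma with a terminal \qed and no written proof, treating it as immediate from the definitions of $A(\mathcal{G})$, $T(\mathcal{G})$ and the associated edge unitaries, and your direct expansion of $T(\mathcal{G})^{-1/2}\bigl(T(\mathcal{G})-A(\mathcal{G})\bigr)T(\mathcal{G})^{-1/2}$ split by the time map is exactly the routine verification being left to the reader. Your handling of the two orientations of each cross-time edge (assigning $U_t$ to $\ket{v}\bra{u}$ and $U_t^\dagger$ to $\ket{u}\bra{v}$ for $v\in V_t$, $u\in V_{t-1}$) matches Definition \ref{def:unitaryPath}, so nothing is missing.
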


Similar to the case of a graph we call the eigenvectors and eigenvalues of $\mathcal{L}(\mathcal{G})$ just the eigenvectors and eigenvalues of $\mathcal{G}$ and $\lambda(\mathcal{G}):=\lambda_1(\mathcal{G})$ the \textsf{spectral gap} of $\mathcal{G}$. In the rest of this section we will show that the eigensystems of a parallel transport network and its underlying graph are closely related.

\begin{thm}\label{underlying}
Let $M(\mathcal{G})$ be the adjacency, standard Laplacian or normalized Laplacian matrix of a parallel transport network $\mathcal{G}=(G,\mathcal{T},\mathcal{U},\pi)$ and $M(G)$ the respective matrix of the underlying graph $G=(V,E,w)$. Then the following holds:
\begin{align*}
R^\dagger M(\mathcal{G}) R= M \otimes \mathbb{I}
\end{align*}
with the unitary matrix
\begin{align*}
R= \sum_{v \in V} \ket{v}\bra{v} \otimes U_{\mathcal{T}(v)} U_{\mathcal{T}(v)-1} \dots U_1\text{.}
\end{align*}
\end{thm}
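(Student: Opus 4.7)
The plan is to verify the identity for each of the three matrix types in turn, starting with the adjacency matrix where the essential cancellation occurs, and then reducing the Laplacian cases to it.

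First I would observe that $R$ is unitary: it is block-diagonal in the graph basis, with the $v$-block equal to the product $W_v := U_{\mathcal{T}(v)} U_{\mathcal{T}(v)-1}\cdots U_1$, which is a product of unitaries. Hence $R^\dagger R = \sum_v \ket{v}\bra{v}\otimes W_v^\dagger W_v = \mathbb{I}$. Moreover, $R$ commutes with any operator of the form $D\otimes \mathbb{I}$ where $D$ is diagonal in the vertex basis, because the graph-space factor of $R$ is diagonal too.

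Next, for the adjacency matrix, I would compute directly:
\begin{align*}
R^\dagger A(\mathcal{G}) R = \sum_{(v,u)\in E} w(v,u)\,\ket{u}\bra{v}\otimes W_u^\dagger\, U((v,u))\, W_v.
\end{align*}
The key step is to show that $W_u^\dagger\, U((v,u))\, W_v = \mathbb{I}$ for every edge $(v,u)\in E$. I would do this by cases according to Definition \ref{def:unitaryPath}: if $\mathcal{T}(v)=\mathcal{T}(u)$ then $W_u = W_v$ and $U((v,u))=\mathbb{I}$; if $\mathcal{T}(u)=\mathcal{T}(v)+1$ then $U((v,u))=U_{\mathcal{T}(u)}$ and $W_u = U_{\mathcal{T}(u)} W_v$, so the product telescopes to $\mathbb{I}$; the case $\mathcal{T}(u)=\mathcal{T}(v)-1$ is analogous via $W_v = U_{\mathcal{T}(v)} W_u$. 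Summing over edges then gives $R^\dagger A(\mathcal{G}) R = A(G)\otimes \mathbb{I}$.

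Finally, for the degree and Laplacian matrices I would use the observation from the first paragraph: since $T(\mathcal{G}) = T(G)\otimes \mathbb{I}$ commutes with $R$, one has $R^\dagger T(\mathcal{G}) R = T(G)\otimes \mathbb{I}$, and likewise $T(\mathcal{G})^{-1/2} = T(G)^{-1/2}\otimes \mathbb{I}$ commutes with $R$. Conjugation by $R$ therefore commutes with subtraction and with the left/right multiplication by $T(\mathcal{G})^{-1/2}$, giving
\begin{align*}
R^\dagger L(\mathcal{G}) R &= (T(G)-A(G))\otimes \mathbb{I} = L(G)\otimes \mathbb{I},\\
R^\dagger \mathcal{L}(\mathcal{G}) R &= \bigl(T(G)^{-1/2} L(G)\, T(G)^{-1/2}\bigr)\otimes \mathbb{I} = \mathcal{L}(G)\otimes \mathbb{I}.
\end{align*}
The only place any real work is done is the case analysis for the adjacency matrix, and even there the arithmetic is essentially a one-line check per case; I do not expect a serious obstacle.
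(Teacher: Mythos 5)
Your proposal is correct and follows essentially the same route as the paper's proof: unitarity of $R$ from its block-diagonal form, the three-case telescoping check $W_u^\dagger\, U((v,u))\, W_v = \mathbb{I}$ for the adjacency matrix, and then the reduction to the Laplacians via the fact that $R$ commutes with $T(\mathcal{G}) = T(G)\otimes\mathbb{I}$ and its inverse square root. No gaps.
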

\begin{proof}
$R$ is clearly unitary since:
\begin{align*}
R^\dagger R &= \sum_{u\in V} \sum_{v \in V} \ket{u}\overbrace{\braket{u | v}}^{=\delta_{u,v}} \bra{v} \otimes U_1^\dagger \dots U_{\mathcal{T}(u)}^\dagger U_{\mathcal{T}(v)} \dots U_1\\
&=\sum_{v \in V} \ket{v} \bra{v} \otimes \overbrace{U_1^\dagger \dots U_{\mathcal{T}(v)}^\dagger U_{\mathcal{T}(v)} \dots U_1}^{= \mathbb{I}}
= \mathbb{I}\text{.}
\end{align*}
So lets compute the desired expression for the adjacency matrices:
\begin{align*}
R^\dagger A(\mathcal{G}) R
=& R^\dagger \Bigg(\sum_{\substack{(v,u)\in E\\ \mathcal{T}(v)=\mathcal{T}(u)}}w(v,u) \ket{u}\bra{v} \otimes \mathbb{I}
+ \sum_{\substack{(v,u) \in E,\\ \mathcal{T}(u) =\mathcal{T}(v)+1}} w(v,u)\ket{u}\bra{v} \otimes U_{\mathcal{T}(u)}\\
&\hspace{2em}+ \sum_{\substack{(v,u) \in E,\\ \mathcal{T}(u) = \mathcal{T}(v)-1}} w(v,u)\ket{u}\bra{v} \otimes U_{\mathcal{T}(v)}^\dagger\Bigg) R
\end{align*}
\begin{align*}
\phantom{R^\dagger A(\mathcal{G}) R}
=& \sum_{\substack{(v,u)\in E\\ \mathcal{T}(v)=\mathcal{T}(u)}} w(v,u)\ket{u}\bra{v} \otimes \overbrace{\Big(U_1^\dagger \dots U_{\mathcal{T}(u)}^\dagger\Big) \cdot \mathbb{I} \cdot \Big(U_{\mathcal{T}(v)} \dots U_1\Big)}^{=\mathbb{I}}\\
&\hspace{2em}+ \sum_{\substack{(v,u) \in E,\\ \mathcal{T}(u) = \mathcal{T}(v)+1}} w(v,u)\ket{u}\bra{v} \otimes \overbrace{\Big(U_1^\dagger \dots U_{\mathcal{T}(u)}^\dagger\Big) \cdot U_{\mathcal{T}(u)} \cdot \Big(U_{\mathcal{T}(u)-1} \dots U_1\Big)}^{=\mathbb{I}}\\
&\hspace{2em}+ \sum_{\substack{(v,u) \in E,\\ \mathcal{T}(u) = \mathcal{T}(v)-1}} w(v,u)\ket{u}\bra{v} \otimes \overbrace{\Big(U_1^\dagger \dots U_{\mathcal{T}(v)-1}^\dagger\Big) \cdot U_{\mathcal{T}(v)}^\dagger \cdot \Big(U_{\mathcal{T}(v)} \dots U_1\Big)}^{=\mathbb{I}}\\
=& \sum_{(v,u) \in E} w(v,u)\ket{u}\bra{v} \otimes \mathbb{I}\\
=&A(G) \otimes \mathbb{I}\text{.}
\end{align*}
Since $R$ commutes with $T(\mathcal{G})= T(G) \otimes \mathbb{I}$ it follows directly:
\begin{align*}
R^\dagger L(\mathcal{G}) R &= R^\dagger \big(T(\mathcal{G}) - A(\mathcal{G})\big) R\\
&=T(\mathcal{G}) - R^\dagger A(\mathcal{G}) R\\
&=T(G) \otimes \mathbb{I} - A(G) \otimes \mathbb{I} \\
&= L(G) \otimes \mathbb{I}
\end{align*}
and
\begin{align*}
R^\dagger \mathcal{L}(\mathcal{G}) R &= R^\dagger\Big( T(\mathcal{G})^{-1/2}  L(\mathcal{G}) T(\mathcal{G})^{-1/2}\Big) R\\
&= T(\mathcal{G})^{-1/2} \Big(R^\dagger L(\mathcal{G}) R\Big) T(\mathcal{G})^{-1/2}\\
&= \Big(T(G)^{-1/2} \otimes \mathbb{I}\Big) \Big(L(G) \otimes \mathbb{I}\Big) \Big(T(G)^{-1/2} \otimes \mathbb{I}\Big)\\
&=\mathcal{L}(G) \otimes \mathbb{I}\text{.}\qedhere
\end{align*}
\end{proof}

Knowing the above relation it follows directly that the spectrum of a parallel transport network equals the $k$-times degenerated spectrum of the underlying graph:

\begin{cor}\label{cor:spectrumPTN}
Let $M(\mathcal{G})$ be the adjacency, standard Laplacian or normalized Laplacian matrix of a parallel transport network $\mathcal{G}=(G,\mathcal{T}, \mathcal{U},\pi)$, $\pi: U(k) \rightarrow \mathcal{B}(\mathcal{H})$, and $M(G)$ the respective matrix of the underlying graph $G=(V,E,w)$. Then the following two statements are equivalent:
\begin{enumerate}[label={\normalfont(\roman*)}]
\item 
$\ket{g}$ is an eigenvector of $M(G)$ with the eigenvalue $\lambda$.
\item
$R \left(\ket{g} \otimes \ket{e}\right)$ is for every $\ket{e}\in \mathbb{C}^k$ an eigenvector of $M(\mathcal{G})$ with the eigenvalue $\lambda$. \hspace{7em}\qed
\end{enumerate}
\end{cor}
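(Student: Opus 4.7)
The plan is to leverage Theorem \ref{underlying} directly: since $R$ is unitary, the relation $R^\dagger M(\mathcal{G}) R = M(G) \otimes \mathbb{I}$ is equivalent to $M(\mathcal{G}) = R\bigl(M(G)\otimes \mathbb{I}\bigr) R^\dagger$, so $M(\mathcal{G})$ and $M(G) \otimes \mathbb{I}$ are unitarily equivalent. The corollary should then follow by a standard bookkeeping: eigenvectors of two unitarily equivalent operators are related by the conjugating unitary, and the spectrum of $M(G) \otimes \mathbb{I}_k$ is simply the $k$-fold degenerate spectrum of $M(G)$, with eigenvectors of product form $\ket{g}\otimes \ket{e}$.

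For the direction (i) $\Rightarrow$ (ii), I assume $M(G)\ket{g} = \lambda \ket{g}$. Then for any $\ket{e} \in \mathbb{C}^k$,
\begin{align*}
M(\mathcal{G}) \, R\bigl(\ket{g} \otimes \ket{e}\bigr)
&= R\bigl(M(G) \otimes \mathbb{I}\bigr) R^\dagger R \bigl(\ket{g} \otimes \ket{e}\bigr)\\
&= R\bigl(M(G)\ket{g} \otimes \ket{e}\bigr)\\
&= \lambda \, R\bigl(\ket{g} \otimes \ket{e}\bigr)\text{,}
\end{align*}
which is precisely the claim in (ii).

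For the converse direction (ii) $\Rightarrow$ (i), I pick any nonzero $\ket{e} \in \mathbb{C}^k$ and assume $M(\mathcal{G}) R(\ket{g}\otimes \ket{e}) = \lambda \, R(\ket{g}\otimes \ket{e})$. Applying $R^\dagger$ from the left and using Theorem \ref{underlying} yields
\begin{align*}
\bigl(M(G) \otimes \mathbb{I}\bigr)\bigl(\ket{g} \otimes \ket{e}\bigr)
= \bigl(M(G)\ket{g}\bigr)\otimes \ket{e}
= \lambda \, \ket{g} \otimes \ket{e}\text{,}
\end{align*}
so that $\bigl(M(G)\ket{g} - \lambda \ket{g}\bigr) \otimes \ket{e} = 0$. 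Since $\ket{e} \ne 0$, this forces $M(G) \ket{g} = \lambda \ket{g}$, giving (i).

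There is essentially no obstacle here; the work was done in Theorem \ref{underlying}. The only minor subtlety to be careful about is the degenerate case $\ket{e} = 0$ in (ii), which trivially satisfies any eigenvalue equation and therefore does not produce information about $\ket{g}$; this is why, in proving (ii) $\Rightarrow$ (i), one must select a nonzero $\ket{e}$, which exists as soon as $k \ge 1$.
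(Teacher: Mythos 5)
Your proposal is correct and follows exactly the route the paper intends: the paper states this corollary with no written proof, treating it as an immediate consequence of the unitary equivalence $R^\dagger M(\mathcal{G})R = M(G)\otimes\mathbb{I}$ established in Theorem \ref{underlying}, which is precisely the bookkeeping you carry out. Your remark about needing a nonzero $\ket{e}$ for the converse direction is a sensible (if minor) point of care that the paper does not bother to make explicit.
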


\begin{cor}\label{cor:history}
The null-space of the normalized Laplacian of a connected parallel transport network $\mathcal{G}=(V,E,w,\mathcal{T},\mathcal{U},\pi)$, $\mathcal{T}:V\rightarrow \{0,\dots L\}$, $\pi: U(k) \rightarrow \mathcal{B}(\mathcal{H})$, is spanned by the orthonormal \textsf{history states}
\begin{align*}
\ket{\eta}= \frac{1}{\sqrt{\vol(V)}} \sum_{t=0}^L \sum_{v \in V_t} \sqrt{d_v} \ket{v} \otimes U_t \dots U_1 \ket{x}\text{,}\quad x\in \{0,1\}^k\text{.}
\end{align*}
\end{cor}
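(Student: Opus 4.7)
The plan is to derive this corollary essentially as a direct consequence of Corollary \ref{cor:spectrumPTN} together with part (ii) of Theorem \ref{thm:spectra}. By Corollary \ref{cor:spectrumPTN}, the null-space of $\mathcal{L}(\mathcal{G})$ is in bijective correspondence (via the unitary $R$) with the tensor product of the null-space of $\mathcal{L}(G)$ and the full computation space $\mathcal{H}$. Since $\mathcal{G}$ is connected so is its underlying graph $G$, so Theorem \ref{thm:spectra} (ii) tells us that the null-space of $\mathcal{L}(G)$ is one-dimensional and spanned by $T^{1/2}\mathbf{1} = \sum_{v\in V} \sqrt{d_v}\ket{v}$. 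Normalizing this vector introduces exactly the prefactor $\frac{1}{\sqrt{\vol(V)}}$ since $\|T^{1/2}\mathbf{1}\|^2 = \sum_v d_v = \vol(V)$.

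Fixing an orthonormal basis $\{\ket{x}\}_{x\in\{0,1\}^k}$ of the computation space, I would then apply $R$ to the product basis of the null-space:
\begin{align*}
R\!\left( \frac{1}{\sqrt{\vol(V)}} \sum_{v\in V}\sqrt{d_v}\,\ket{v}\otimes \ket{x}\right).
\end{align*}
Using the explicit form $R = \sum_{v\in V}\ket{v}\bra{v}\otimes U_{\mathcal{T}(v)}U_{\mathcal{T}(v)-1}\cdots U_1$ from Theorem \ref{underlying}, this evaluates directly to
\begin{align*}
\frac{1}{\sqrt{\vol(V)}} \sum_{v\in V}\sqrt{d_v}\,\ket{v}\otimes U_{\mathcal{T}(v)}\cdots U_1\ket{x},
\end{align*}
and grouping vertices according to their time step $t = \mathcal{T}(v)$, i.e.\ $v \in V_t$, gives precisely the history states $\ket{\eta}$ as stated in the corollary.

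The final step is to verify that these $\{0,1\}^k$-many history states are orthonormal. This is essentially automatic: $R$ is unitary, and the pre-images $\frac{1}{\sqrt{\vol(V)}}T^{1/2}\mathbf{1}\otimes \ket{x}$ for different $x$ are already orthonormal by orthonormality of the $\ket{x}$ together with the normalization $\|T^{1/2}\mathbf{1}/\sqrt{\vol(V)}\|=1$. I do not expect any genuine obstacle here; the only delicate point is keeping the bookkeeping of the $R$-action consistent, in particular that the factor $U_{\mathcal{T}(v)}\cdots U_1$ depends on $v$ only through its time step, which justifies the reordering of the sum over $V$ as a double sum over $t$ and $v\in V_t$.
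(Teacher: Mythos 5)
Your proposal is correct and follows exactly the paper's own argument: the paper likewise combines Corollary \ref{cor:spectrumPTN} with Theorem \ref{thm:spectra} to get that the null-space is spanned by the normalized vectors $R\left(T^{1/2}\mathbf{1}\otimes\ket{x}\right)$, which it then identifies with the history states. You merely spell out the intermediate computation (the action of $R$ and the regrouping by time step) that the paper leaves as ``which turns out to equal the above expression.''
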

\begin{proof}
According to the previous Corollary \ref{cor:spectrumPTN} and Theorem \ref{thm:spectra} the null-space is spanned by the orthogonal vectors
\begin{align*}
\ket{\eta}=\frac{R \left(T^{1/2} \mathbf{1} \otimes \ket{x}\right)}{\left\Vert R \left(T^{1/2} \mathbf{1} \otimes \ket{x}\right)\right\Vert}\text{,}\quad x\in \{0,1\}^k\text{,}
\end{align*}
which turns out to equal the above expression.
\end{proof}

 \chapter{Quantum Circuits and their Complexity}

\section{The quantum circuit model}\label{sec:circuits}

\begin{definition}\label{def:circuits}
\quad

\begin{itemize}
\item A \textsf{quantum circuit} $C=\big(n, (U_1,U_2, \dots U_L)\big)$ is a series of unitary operations $(U_1, U_2, \dots U_L)$, called \textsf{gates}, acting on a Hilbert space of $n$ qubits. A quantum circuit $C$ can be understood as a function mapping every $n$-qubit state $\ket{\phi}$ to the state $U_L \dots U_1 \ket{\phi}$.
\item $L$ is called the \textsf{length} of the circuit.
\item Let $\ket{0}$ and $\ket{1}$ be an orthonormal basis of the 1-qubit Hilbert space. Then the vectors $\ket{x} = \ket{x_1} \otimes \ket{x_2} \otimes \dots \otimes \ket{x_n}$ with $x \in \{0,1\}^n$ form the so-called \textsf{computational basis} of the $n$-qubit Hilbert space. We write $\ket{\mathbf{0}}:= \ket{0^n}$.
\end{itemize}
\end{definition}

A quantum circuit that depends on a computation input is a standard tool to realize quantum algorithms. Normally quantum circuits are used such that the computation input determines the number of circuit qubits and the gates which act then successively on the state $\ket{\mathbf{0}}$ to produce the computation output $U_L \dots U_2 U_1 \ket{\mathbf{0}}$. In contrast to this, here we allow a whole set of \textsf{valid circuit inputs} including the $\ket{\mathbf{0}}$ vector. The reasoning is as follows: One can imagine cases where more than one circuit output can be regarded as correct computation output, for example when the computation is a yes / no problem and only the first qubit of the circuit output represents the result. This implies that more states than only $\ket{\mathbf{0}}$ get mapped to a correct computation output, these are the states we will call the valid circuit inputs. The motivation beyond this idea is that a larger set of valid circuit inputs might give an efficiency improvement of the computation model that we will introduce in Chapter \ref{chap:model}.

Regarding the complexity of a computation model there is an important point to be made with respect to the pre- and postprocessing of the circuit and its output. One can argue that the initial mapping of the computation input to the circuit and the final decoding of the circuit output to the actual computation output allow us to shift the major part of the computation into these mappings and leave the quantum circuit rather trivial. This issue can only be resolved if one agrees on a specific encoding and decoding scheme and compares the complexity of quantum circuits under this premise. A typical setting for example would only allow a classical preprocessing and require for an efficient computation that the computation input is mapped in polynomial time onto the quantum circuit. In most concepts the computation output is also required to be classical and obtained (perhaps only with a certain probability) from the circuit output after measuring in the computational basis and / or tracing out some subsystem.

Indeed regarding measurements, one might find a deviation between ours and the usual definition of a quantum circuit. Normally quantum circuits are defined to include measurements at any point of the circuit. Measurements introduce true randomness which is an essential feature of quantum mechanics and should not be excluded from the tools of a quantum algorithms. The reason that we omitted measurements in Definition \ref{def:circuits} of quantum circuits is that adiabatic quantum computation, a computation model that we will introduce in the next chapter, can only simulate quantum circuits without measurements. But fortunately this restriction is not a problem since the adiabatic quantum computation ends with a projective measurement which gives a state arbitrarily close to the output state of the circuit with a certain probability. If a quantum circuit now consisted orginally of unitary gates and a final projective measurement, we can just simulate the circuit without the final measurement and combine  it instead into the final projective measurement of the adiabatic quantum computation. If a quantum circuit contains a measurement anywhere in between it can moreover be replaced by a projective measurement at the end of the circuit on an auxillary qubit. This proof can be found in \cite[sec 2.2.8]{nielson} and is how we justify the restriction to our definition of quantum circuits.

\section{Complexity of quantum circuits}

In classical computer science, memory space and running time constitute the central complexity quantities which characterise algorithms. Looking at a quantum algorithm defined by a family of quantum circuits there seems to be an intuitive correspondance to the number of input qubits and the circuit length. But if we allow any quantum circuit according to Definition \ref{def:circuits} the circuit length is clearly problematic since any series of unitaries gates could just be substituted by its product as a single gate and hence any quantum circuit could be reduced to length $1$.

Therefore there is a need to define a set of elementary gates that are only allowed to occur in quantum circuits. This is also motivated for practical reasons, since under this premise only some simple unitary transformation have to be implemented whose combinations can realize arbitrary unitary transformations. Because of practical issues we also wish that our elementary gates are $k$-local with $k$ some small integer since only those seem to be elementary transformations in nature.

\begin{definition}
A unitary operator $U$ on a $n$-qubit space is called \textsf{$k$-local} iff it has the form
\begin{align*}
U= \tilde{U} \otimes \mathbb{I}_{n-k}
\end{align*}
with $\tilde{U}$ a unitary acting on $k$ arbitrary qubits and the identity $\mathbb{I}_{n-k}$ acting on the rest of the qubits (note that the tensor product can be taken in any order). We say that $U$ acts on the rest $n-k$ qubits \textsf{trivially} and call it a \textsf{trivial extension} of $\tilde{U}$.
\end{definition}

The main requirement of our elementary gates is of course that they can be combined to approximate any unitary arbitrarily closely. This leads us to the definition of a universal gate set:

\begin{definition}
A \textsf{universal gate set} is a set of gates such that any unitary operation on qubits can be approximated to arbitrary accuracy by a product of trivial extensions of these gates.
\end{definition}

It is not obvious that a universal gate set exists and not at all if its elements are $k$-local or the set even finite. Fortunately all these properties do hold and the proof of the following theorem can be found in \cite[sec. 4.5]{nielson}:

\begin{thm}
The $1$-qubit Hadamard gate

\begin{tikzpicture}[scale=0.9]
\draw[line width=1pt] (-0.6,0) to (0,0);
\draw[line width=1pt] (0.8,0) to (1.4,0);
\draw[line width=0.7pt] (0,-0.4) -- (0,0.4) -- (0.8,0.4) -- (0.8,-0.4) -- (0,-0.4);
\node at (0.4,0) {$H$};
\node at (-8.57,0) {$\displaystyle H = \ket{+}\bra{0} + \ket{-}\bra{1}\,\,\text{ with }\,\, \ket{+}:= \frac{\ket{0} + \ket{1}}{\sqrt{2}} \,\,\text{ and }\,\, \ket{-}:= \frac{\ket{0} - \ket{1}}{\sqrt{2}}$,};
\end{tikzpicture}\\
the $1$-qubit $\pi/8$ gate

\begin{tikzpicture}[scale=0.9]
\draw[line width=1pt] (-0.6,0) to (0,0);
\draw[line width=1pt] (0.8,0) to (1.4,0);
\draw[line width=0.7pt] (0,-0.4) -- (0,0.4) -- (0.8,0.4) -- (0.8,-0.4) -- (0,-0.4);
\node at (0.4,0) {$T$};
\node at (-12.3,0) {$\displaystyle T = \ket{0}\bra{0} + e^{i \pi/4} \ket{1}\bra{1}$};
\end{tikzpicture}\\
and the $2$-qubit CNOT gate

\begin{tikzpicture}[scale=0.9]
\draw[line width=1pt] (0,0) to (2,0);
\draw[line width=1pt] (0,-1) to (2,-1);
\draw[line width=0.7pt] (1,0) to (1,-1.25);
\draw[line width=0.7pt] (1,-1) circle (0.25);
\fill (1,0) circle (0.15);
\node at (-9.62,-0.5) {$\displaystyle CNOT = \ket{0}\bra{0} \otimes \mathbb{I} + \ket{1}\bra{1} \otimes \big(\ket{0}\bra{1} + \ket{1}\bra{0} \big)$};
\end{tikzpicture}\\
form a universal gate set.
\end{thm}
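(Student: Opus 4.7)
The plan is to reduce the approximation problem in several stages, moving from arbitrary unitaries on many qubits down to a density statement for two specific single-qubit rotations, and then to lift that density back up. First, I would show that any unitary $U$ on $n$ qubits can be written as a product of so-called two-level unitaries, i.e.\ unitaries that act nontrivially only on the two-dimensional subspace spanned by a pair of computational basis vectors $\ket{x}, \ket{y}$. This is a routine Gaussian-elimination-style argument: one uses a sequence of two-level unitaries to zero out the first column of $U$ except for its top entry, reducing to a unitary on one fewer dimension, and iterates.

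Second, I would show that any such two-level unitary can be implemented exactly using CNOT gates together with single-qubit unitaries. The standard trick here is a Gray-code construction: a sequence of CNOTs conjugates the two basis states $\ket{x}$ and $\ket{y}$ to a pair differing in only one bit, so that the two-level action becomes a singly-controlled one-qubit gate on adjacent labels, and a controlled-$V$ with arbitrary $V$ can be decomposed into CNOTs and single-qubit gates by the standard $ABC$-decomposition. This step reduces the whole problem to the following: approximate an arbitrary $U \in SU(2)$ to arbitrary accuracy using finite products of $H$ and $T$.

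Third, I would carry out the single-qubit density argument, which is the real content. Note that $T$ is a rotation by $\pi/4$ about the $z$-axis of the Bloch sphere, and $HTH$ is a rotation by $\pi/4$ about the $x$-axis. Composing them, one computes that $R := HTHT$ is a rotation about some explicit axis $\hat n$ by an angle $\theta$ with $\cos(\theta/2) = \cos^2(\pi/8)$. The key number-theoretic fact is that this $\theta$ is an irrational multiple of $\pi$, so the iterates $R^k$ are dense among rotations about $\hat n$. Taking a conjugate $HRH$ gives a rotation by the same irrational angle about a second axis $\hat n'$ which is not parallel to $\hat n$. Arbitrary rotations about $\hat n$ and $\hat n'$ together generate a dense subgroup of $SO(3)$, and hence the closure of the group generated by $\{H,T\}$ contains $SU(2)$ up to an overall phase, which is all we need since global phases are unobservable. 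Combining this with the decomposition from the first two steps, any unitary operation on $n$ qubits can be approximated to arbitrary accuracy by products of trivial extensions of $H$, $T$ and CNOT.

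I expect the main technical obstacle to be the number-theoretic step that $\theta/\pi \notin \mathbb Q$: one has to rule out that $\cos^2(\pi/8) = \cos(\theta/2)$ could correspond to a rational multiple of $\pi$. This follows from Niven's theorem (the only rational values of $\cos(r\pi)$ for rational $r$ are $0, \pm\tfrac12, \pm 1$) applied to $2\cos^2(\pi/8)-1 = \cos(\pi/4) = \tfrac{1}{\sqrt 2}$, which would be forced to be rational if $\theta$ were a rational multiple of $\pi$. The remaining ingredients, while conceptually many, are standard linear-algebra manipulations, so I would cite them (as the problem permits, e.g.\ \cite{nielson}) rather than grind through the decompositions in detail.
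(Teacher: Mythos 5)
Your overall strategy is exactly the one the paper relies on: the thesis does not prove this theorem itself but defers to \cite[sec.~4.5]{nielson}, and your three-stage reduction (arbitrary unitaries to two-level unitaries, two-level unitaries to CNOT plus single-qubit gates via Gray codes and the $ABC$-decomposition, and density of the group generated by $H$ and $T$ in $SU(2)$ modulo phase via an irrational rotation) is precisely the argument given there. The geometric part is right: $T$ and $HTH$ are $\pi/4$ rotations about orthogonal Bloch axes, their product is a rotation by an angle $\theta$ with $\cos(\theta/2)=\cos^2(\pi/8)$, and two rotations by an irrational angle about non-parallel axes generate a dense subgroup of $SO(3)$.

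The one step that does not work as written is the number-theoretic one. Niven's theorem only constrains \emph{rational} values of $\cos(r\pi)$; it says nothing about rational multiples of $\pi$ whose cosine is irrational, and here $\cos(\theta/2)=\cos^2(\pi/8)=(2+\sqrt{2})/4$ is irrational --- indeed $\pi/4$ itself is a rational multiple of $\pi$ with irrational cosine, so nothing "forces $1/\sqrt{2}$ to be rational" in the way your sketch suggests, and plain Niven cannot rule out $\theta\in\pi\mathbb{Q}$. The standard repair is the algebraic-integer refinement: if $\theta/(2\pi)$ were rational, then $2\cos(\theta/2)=e^{i\theta/2}+e^{-i\theta/2}$ would be a sum of two roots of unity and hence an algebraic integer; but $2\cos(\theta/2)=1+\tfrac{1}{\sqrt{2}}$ has minimal polynomial $x^2-2x+\tfrac{1}{2}$, which is not monic over $\mathbb{Z}$, a contradiction. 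With that substitution your sketch is a complete and correct outline of the standard proof the paper cites.
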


The symbols on the right represent the gates in a graphical drawing of the quantum circuit. In such a drawing each qubit is represented by a line and the above symbols for the gates touch the lines of the qubits they act on. Another commonly used gate which can be built from the above gates is the $3$-qubit Toffoli gate

\begin{tikzpicture}[scale=0.9]
\draw[line width=1pt] (0,1) to (2,1);
\draw[line width=1pt] (0,0) to (2,0);
\draw[line width=1pt] (0,-1) to (2,-1);
\draw[line width=0.7pt] (1,1) to (1,-1.25);
\draw[line width=0.7pt] (1,-1) circle (0.25);
\fill (1,0) circle (0.15);
\fill (1,1) circle (0.15);
\node at (-7.2,0) {$\displaystyle
TOF = \big(\ket{00}\bra{00} + \ket{01}\bra{01} + \ket{10}\bra{10}\big) \otimes \mathbb{I} + \ket{11}\bra{11} \otimes  \big(\ket{0}\bra{1} + \ket{1}\bra{0} \big)
$.};
\end{tikzpicture}\\

The Hadamard, $\pi/8$ and CNOT gate form the most widely used universal gate set. We do not have to assume neccessarily from now on that all quantum circuits consist of these specific gates, but we do assume that their gates are elements from \textit{some} finite universal 2-local gate set. Consequently from now on the length of a quantum circuit is an important complexity quantity.

In complexity theory one is rarely interested in the actual value of a quantity but rather in its growing behaviour in terms of the input length $n$. To compare two functions regarding this behaviour we will use the so-called $\mathcal{O}$-notation:

\begin{definition}
For functions $f, g: \mathbb{N} \rightarrow \mathbb{N}$ we define
\begin{align*}
f(n) \in \mathcal{O}(g(n)) \quad &\Longleftrightarrow \quad \lim_{n\rightarrow \infty} \frac{f(n)}{g(n)} < \infty\\
f(n) \in \Omega(g(n)) \quad &\Longleftrightarrow \quad \lim_{n\rightarrow \infty} \frac{g(n)}{f(n)} < \infty\\
f(n) \in \Theta(g(n)) \quad &\Longleftrightarrow \quad \lim_{n\rightarrow \infty} \frac{f(n)}{g(n)} = \text{const.}
\quad \Longleftrightarrow\quad f(n) \in \mathcal{O}(g(n)) \text{ and } f(n) \in \Omega(g(n))\text{.}
\end{align*}
\end{definition}

An algorithm with input length $n$ is called \textsf{efficient}, if the complexity quantities of interest (e.g. running time, memory space) are elements of $\mathcal{O}(\poly(n))$, which means that they are upper bounded by a polynomial in the input length.

\section{Time-dependant circuits and identity extensions}

In the next chapters we will set up a different computation model that will simulate a quantum circuit. Because it might offer some features we will implement an identity extension of the original circuit:
\begin{definition}\label{def:identityExtension}
For a quantum circuit $C=\big(n, (\tilde{U}_1, \dots \tilde{U}_L)\big)$, the circuit $C'=\big(n, (U_1, \dots U_{L'})\big)$, $L' = L_0 + L + L_f$, with
\begin{align*}
U_t =
\begin{cases}
\mathbb{I} \quad\quad &\text{for } 0 \le t \le L_0 \text{ or } L'-L_f \le t \le L'\\
\tilde{U}_{t-L_0} &\text{otherwise}
\end{cases}
\end{align*}
is called an \textsf{identity extension} of the circuit $C$.
\end{definition}
As the quantum ciruit $C'$ is none other than the original circuit $C$ with $L_0$ initial and $L_f$ final identity gates appended, it obviously carries out the same computation. 

For our computation model we will also have to introduce a normalized time parameter $s$ into the circuit gates that lets them transform smoothly from the identity to the actual operator. The next proposition will allow us to find an easy solution for this objective:
\begin{prop}\label{prop:unitaryHermitean}
Any unitary operator $U$ can be written as
\begin{align*}
U = e^{i h}
\end{align*}
with $h$ a hermitean operator and $\Vert h \Vert \le 2\pi$.
\end{prop}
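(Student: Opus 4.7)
The plan is to invoke the spectral theorem for unitary operators. Since $U$ is unitary, it is a normal operator, and therefore it admits a spectral decomposition of the form
\begin{align*}
U = \sum_k e^{i\theta_k}\, \ket{v_k}\bra{v_k}
\end{align*}
with $\{\ket{v_k}\}$ an orthonormal eigenbasis and each eigenvalue of $U$ lying on the unit circle, hence expressible as $e^{i\theta_k}$ for some real $\theta_k$.

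First I would fix a canonical choice of the phases, for instance picking $\theta_k \in [0,2\pi)$ (or equivalently $[-\pi,\pi]$, which would even give the sharper bound $\pi$). This choice is possible because the exponential map $\theta \mapsto e^{i\theta}$ is surjective from any interval of length $2\pi$ onto the unit circle. Then I would define
\begin{align*}
h := \sum_k \theta_k\, \ket{v_k}\bra{v_k}\text{.}
\end{align*}
Since the $\theta_k$ are real, $h$ is self-adjoint by construction.

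Next I would verify that $e^{ih} = U$. Because $h$ is diagonal in the basis $\{\ket{v_k}\}$, the operator exponential acts diagonally as well, giving $e^{ih} = \sum_k e^{i\theta_k} \ket{v_k}\bra{v_k} = U$.

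Finally, the norm bound is immediate from the spectral decomposition: since $h$ is Hermitian, its operator norm equals the largest absolute value of its eigenvalues, so $\Vert h \Vert = \max_k |\theta_k| \le 2\pi$ by the chosen range of phases. There is no genuine obstacle here; the only subtlety is that the phase choice is not unique (one could shift any $\theta_k$ by an integer multiple of $2\pi$), but the bound $\Vert h\Vert \le 2\pi$ is ensured by sticking to one fundamental interval. If $U$ is infinite-dimensional the same argument goes through using the spectral measure instead of a discrete sum, but for the finite-dimensional unitary gates relevant in this thesis the diagonal argument above suffices.
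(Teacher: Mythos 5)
Your proof is correct and follows essentially the same route as the paper: both diagonalize the unitary via the spectral theorem, choose the eigenvalue phases in a fundamental interval of length $2\pi$, and exponentiate the resulting real diagonal operator back. Your remark that choosing phases in $[-\pi,\pi]$ would sharpen the bound to $\Vert h\Vert \le \pi$ is a valid observation, but the weaker bound is all the thesis needs.
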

\begin{proof}
Since every unitary operator $U$ is normal, it is diagonaliziable by a unitary operator $A$:
\begin{align*}
U = A D A^\dagger
\end{align*}
with $D$ a diagonal matrix. Its diagonal elements $d_i$ are all roots of unity since $U$ is unitary. Consequently $D$ can be written as $D = e^{iD'}$ with $\Vert D'_i \Vert \le 2 \pi$ and hence
\begin{align*}
U &= A e^{i D'} A^\dagger\\
&=e^{i A D' A^\dagger}\\
&= e^{i h}
\end{align*}
with $h := A D' A^\dagger$ obviously hermitean and $\Vert h \Vert = \Vert D' \Vert \le 2 \pi$.
\end{proof}

Given a quantum circuit $C=\big(n, (U_1, \dots U_L)\big)$ with $U_t = \exp(ih_t)$, the operators $U_t(s) := \exp(ish_t)$ are obviously also unitary. This allows us to introduce our desired time-dependency into quanutm circuits:
\begin{definition}\label{def:correspondingCircuit}
For a quantum circuit $C=\big(n, (U_1, \dots U_L)\big)$ with $U_t = \exp(ih_t)$, $\Vert h_t \Vert \le 2\pi$ hermitean, the \textsf{corresponding time-depending circuit} is defined as
\begin{align*}
C(s) := \big(n, (U_1(s), \dots U_L(s)\big)
\end{align*}
with $U_t(s) = \exp(ish_t)$.
\end{definition}

\chapter{Adiabatic Quantum Computation}

\section{Requirements for an efficient adiabatic quantum computation}\label{sec:AQCrequirements}

The concept of quantum circuits introduced in the previous chapter is the most widely used tool to formulate quantum algorithms. In this chapter we further introduce another kind of quantum computing, the so-called adiabatic quantum computation (AQC). Its centerpiece is the quantum adiabatic theorem, whose proof we postpone to Sections \ref{sec:AQTstates} and \ref{sec:AQTproj}. Notice that we use the convention $\hbar=c=1$:

\begin{thm*}[Quantum Adiabatic Theorem for states]
Let $H(s)$ be a finite-dimensional Hamiltonian such that for all $0 \le s \le 1$ it is twice differentiable and let $\ket{\eta(s)}$ be a differentiable, nondegenerate ground-state separated by an energy gap $\gamma(s)$ from the first excited state. Choose $\epsilon > 0$ arbitrarily and define
\begin{align*}
T :=
 \frac{1}{\epsilon} \left( \frac{1}{\gamma(0)^2} \left\Vert\frac{\mathrm{d} H}{\mathrm{d} s} \right\Vert_{s=0}
+ \frac{1}{\gamma(1)^2} \left\Vert\frac{\mathrm{d} H}{\mathrm{d} s} \right\Vert_{s=1}
+ \int_0^1 \mathrm{d}s \left( \frac{5}{\gamma(s)^3} \left\Vert\frac{\mathrm{d} H}{\mathrm{d} s} \right\Vert^2
+ \frac{1}{\gamma(s)^2} \left\Vert\frac{\mathrm{d}^2 H}{\mathrm{d} s^2} \right\Vert \right)
\right)\text{.}
\end{align*}
Consider a system evolving according to the Hamiltonian $H(t/T)$. Denote by $U\left(\frac{t_1}{T}, \frac{t_0}{T}\right)$ its time evolution operator. Then it holds:
\begin{align*}
\left\Vert U(1,0)\ket{\eta(0)} - e^{i\beta(s)}\ket{\eta(1)}\right\Vert \le \epsilon
\quad\quad\text{ with }\quad\quad
\beta(s):= \int_0^s i \braket{\eta(u) | \eta'(u)} du\text{.}
\end{align*}
\end{thm*}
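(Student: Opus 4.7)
The approach is to compare the physical evolution $U(s,0)$ (in rescaled time $s = t/T$, satisfying $i\partial_s U = T H(s) U$) with a reference evolution that parallel-transports the instantaneous ground state. I would introduce Kato's intertwiner $W(s)$, defined by $i\dot W = K(s) W$ with $W(0) = \mathbb{I}$ and generator $K(s) = i[\dot P(s), P(s)]$, where $P(s) = \ket{\eta(s)}\bra{\eta(s)}$ is the ground-state projector. A standard calculation shows that $W(s) P(0) W(s)^\dagger = P(s)$; combined with the identity $i\dot\beta(s) = -\braket{\eta(s)|\eta'(s)}$ obtained by differentiating the defining integral for $\beta$, this yields $W(s)\ket{\eta(0)} = e^{i\beta(s)}\ket{\eta(s)}$. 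Consequently the quantity to be bounded is $\|(U(1,0) - W(1))\ket{\eta(0)}\|$.

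Setting $\Omega(s) = W(s)^\dagger U(s,0)$, the two defining equations combine into
\begin{align*}
i\dot\Omega(s) = W(s)^\dagger \big( T H(s) - K(s) \big) W(s)\, \Omega(s)\text{.}
\end{align*}
I would then write $\Omega(1)\ket{\eta(0)} - \ket{\eta(0)}$ as an integral and exploit the rapid oscillation supplied by the factor $T H(s)$. The enabling algebraic fact is that $H(s) - E_0(s)$ is invertible on the range of $Q(s) = \mathbb{I} - P(s)$ with norm at most $1/\gamma(s)$; denoting its inverse there by $R(s)$, the commutator equation $[H(s), X(s)] = \dot P(s)$ admits the explicit solution $X(s) = R(s)\dot P(s) P(s) - P(s) \dot P(s) R(s)$, satisfying $\|X(s)\| \le 2\|\dot H(s)\|/\gamma(s)^2$ via the identity $\dot P = R \dot H P + P \dot H R$.

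Substituting $\dot P = [H, X]$ into the integrand allows a single integration by parts, which trades the oscillating factor supplied by $T H$ for the desired $1/T$. The boundary terms at $s = 0$ and $s = 1$ then produce exactly the contributions $\gamma(0)^{-2}\|\dot H(0)\|$ and $\gamma(1)^{-2}\|\dot H(1)\|$. The remaining interior integral contains $\dot X(s)$, which by the product rule splits into a piece involving $\ddot H$, bounded by $\|\ddot H(s)\|/\gamma(s)^2$, and a piece involving $\dot R(s)\dot H(s)$, bounded by $\|\dot H(s)\|^2/\gamma(s)^3$, the extra $1/\gamma$ arising from differentiating the resolvent. Collecting the numerical prefactors gives the constant $5$ in the statement; the contribution from $K(s)$ itself is absorbed into the same estimates since $\|K\| \le 2\|\dot H\|/\gamma$.

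The hardest part is the bookkeeping of these constants: one must verify that each differentiation of the reduced resolvent contributes exactly one additional factor of $1/\gamma$, and that the numerical coefficients combine to precisely the prefactors in the definition of $T$ rather than anything larger. A secondary subtlety is pinning down the phase in the conclusion: it is Kato's specific choice $K = i[\dot P, P]$, and no other gauge, that implements Berry-parallel transport and therefore reproduces exactly the geometric phase $\beta(s)$ asserted by the theorem.
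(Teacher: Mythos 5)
Your strategy---Kato's intertwiner together with the commutator equation $[H,X]=\dot P$ and a single integration by parts---is a recognizable and in principle workable route, and when carried out it collapses onto the paper's own argument: $W(s)\ket{\eta(0)}$ is exactly the phase-fixed ground state $e^{i\beta(s)}\ket{\eta(s)}$ of Lemma \ref{lem:phase}, and $X(s)\ket{\eta(s)}=-G(s)^2H'(s)\ket{\eta(s)}$ (with $G$ the reduced resolvent) reproduces precisely the integrand the paper integrates by parts, so the boundary terms and the constant $5$ would come out identically. As written, however, the proposal has three genuine gaps. First, the dynamical phase: $K=i[\dot P,P]$ implements only parallel transport, while $U(1,0)\ket{\eta(0)}$ additionally carries the phase $e^{-iT\int_0^1E_0(u)\,du}$, so $\Vert (U(1,0)-W(1))\ket{\eta(0)}\Vert$ is \emph{not} small unless one first replaces $H(s)$ by $H(s)-E_0(s)$; the paper performs this reduction explicitly at the start of its proof, and your argument needs it too. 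Second, the claim that the contribution of $K$ is ``absorbed into the same estimates since $\Vert K\Vert\le 2\Vert\dot H\Vert/\gamma$'' describes a step that fails: bounding the $K$ term in $\int_0^1 W^\dagger(TH-K)W\,\Omega\ket{\eta(0)}\,ds$ by its norm yields a contribution of order $\int_0^1\Vert\dot H\Vert/\gamma\,ds$ with no factor of $1/T$, which proves nothing. Once $E_0\equiv 0$ the $TH$ term annihilates the transported ground state and the $K$ term is the \emph{entire} leading integrand---it is exactly the object that must be rewritten via $\dot P=[H,X]$ and integrated by parts, not an error term.

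Third, with your ordering $\Omega=W^\dagger U$ the integrand of $\int_0^1\dot\Omega(s)\ket{\eta(0)}\,ds$ contains $T\,W(s)^\dagger H(s)\,U(s,0)\ket{\eta(0)}$, and $H(s)$ annihilates $W(s)\ket{\eta(0)}$ but not $U(s,0)\ket{\eta(0)}$, so the cancellation you rely on does not occur in this form. It does occur for $A(s)=U(s,0)^\dagger W(s)$, i.e.\ for $\Omega^\dagger$; passing to the adjoint is legitimate because $\Vert(\Omega-\mathbb{I})\ket{\eta(0)}\Vert=\Vert(\Omega^\dagger-\mathbb{I})\ket{\eta(0)}\Vert$ for unitary $\Omega$, but this step must be stated. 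None of these defects is fatal---each is repairable---but the second and third sit exactly where the analytic content of the proof lives, and the constant $5$ is asserted rather than derived, so the argument does not yet close.
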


The adiabatic theorem states that having a system initially prepared in its ground-state $\ket{\eta(0)}$ and letting it evolve with a long enough time $T$, the system will stay arbitrarily close to its ground-state $\ket{\eta(1)}$ up to some phase $\beta(1)$. This version of the quantum adiabatic theorem is based on Jeffrey Goldstone \cite[Appendix F]{jordan} and requires that the ground-state is nondegenerate. In Section \ref{sec:AQTproj} we will derive a similar result for degenerated ground-spaces:

\begin{thm*}[Quantum Adiabatic Theorem for projection operators]
Let $H(s)$ be a finite-dimensional Hamiltonian such that for all $0 \le s \le 1$ it is twice differentiable and let $P(s)$ be the projection operator onto the ground-space which is separated by an energy gap $\gamma(s)$ from the first excited state. Choose $\epsilon > 0$ arbitrarily and define
\begin{align*}
T :=
 \frac{2}{\epsilon} \left( \frac{1}{\gamma(0)^2} \left\Vert\frac{\mathrm{d} H}{\mathrm{d} s} \right\Vert_{s=0}
+ \frac{1}{\gamma(1)^2} \left\Vert\frac{\mathrm{d} H}{\mathrm{d} s} \right\Vert_{s=1}
+ \int_0^1 \mathrm{d}s \left( \frac{6}{\gamma(s)^3} \left\Vert\frac{\mathrm{d} H}{\mathrm{d} s} \right\Vert^2
+ \frac{1}{\gamma(s)^2} \left\Vert\frac{\mathrm{d}^2 H}{\mathrm{d} s^2} \right\Vert \right)
\right)\text{.}
\end{align*}
Consider a system evolving according to the Hamiltonian $H(t/T)$. Denote by $U\left(\frac{t_1}{T}, \frac{t_0}{T}\right)$ its time evolution operator. Then it holds:
\begin{align*}
\left\Vert U(1,0) P(0) U(0,1) - P(1) \right\Vert \le \epsilon\text{.}
\end{align*}
\end{thm*}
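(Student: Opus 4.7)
The plan is to reduce everything to a bound on the ``rotated projector''
$\Omega(s):=U(0,s)P(s)U(s,0)$. Since $U$ is unitary, one has
$\|U(1,0)P(0)U(0,1)-P(1)\|=\|\Omega(0)-\Omega(1)\|$,
so it suffices to show $\|\Omega(1)-\Omega(0)\|\le\epsilon$. Using $\dot U(s,0)=-iTH(s)U(s,0)$, $\dot U(0,s)=iT\,U(0,s)H(s)$, and the key fact $[H(s),P(s)]=0$ (because $P(s)$ is a spectral projector of $H(s)$), one computes
\[
\dot\Omega(s)=iT\,U(0,s)[H,P]U(s,0)+U(0,s)\dot P\,U(s,0)=U(0,s)\dot P(s)U(s,0),
\]
so that $\Omega(1)-\Omega(0)=\int_0^1 U(0,s)\dot P(s)U(s,0)\,ds$. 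Hence the whole task is to show that this oscillatory integral is of order $1/T$, with the right constants.

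The key technical device is a Kato-style intertwiner. Writing $Q:=\mathbb{I}-P$, the identity $P^2=P$ immediately yields the off-diagonal decomposition $\dot P=P\dot P Q+Q\dot P P$. Differentiating $[H,P]=0$ gives $[H,\dot P]=-[\dot H,P]$; and because $H$ acts as the ground-energy $E_g(s)$ on $\mathrm{ran}(P)$ while its remaining spectrum lies at distance $\ge\gamma(s)$ on $\mathrm{ran}(Q)$, the resolvent $(H-E_g)^{-1}Q$ is well-defined with norm at most $1/\gamma$. Sandwiching $[H,\dot P]=-[\dot H,P]$ between $Q$ and $P$ then produces the useful identity $Q\dot P P=-(H-E_g)^{-1}Q\dot H P$, which motivates defining
\[
F(s):=(H-E_g)^{-1}Q\dot P(s)P-P\dot P(s)Q(H-E_g)^{-1},
\]
an operator living purely in the two off-diagonal blocks $PFQ$ and $QFP$. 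A direct calculation verifies $[H,F]=\dot P$, and substituting the resolvent identity for $Q\dot P P$ gives $\|F(s)\|\lesssim\|\dot H(s)\|/\gamma(s)^2$.

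With $F$ in hand I would integrate by parts. The identity
\[
\tfrac{d}{ds}\bigl(U(0,s)F(s)U(s,0)\bigr)=iT\,U(0,s)[H,F]U(s,0)+U(0,s)\dot F\,U(s,0)=iT\,U(0,s)\dot P\,U(s,0)+U(0,s)\dot F\,U(s,0)
\]
leads to
\[
\Omega(1)-\Omega(0)=\frac{1}{iT}\Bigl[U(0,s)F(s)U(s,0)\Bigr]_0^1-\frac{1}{iT}\int_0^1 U(0,s)\dot F(s)U(s,0)\,ds.
\]
Using unitarity of $U$ to drop the $U$-factors inside norms, the boundary term is bounded by $(\|F(0)\|+\|F(1)\|)/T$, which directly reproduces the $\|\dot H(0)\|/\gamma(0)^2+\|\dot H(1)\|/\gamma(1)^2$ contribution in the claimed $T$. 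For the remaining integral one differentiates the explicit formula for $F$: the summands involving $\ddot P$ give rise to $\|\ddot H\|/\gamma^2$ (via a second use of the resolvent identity, $Q\ddot P P=-(H-E_g)^{-1}Q\ddot H P+\text{lower order}$), whereas summands quadratic in $\dot P$ or involving the derivative of the resolvent each cost an additional factor $1/\gamma$ and therefore contribute $\|\dot H\|^2/\gamma^3$. Assembling the pieces reproduces exactly the integral in the stated bound.

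The hard part will be pinning down the numerical constants --- in particular the coefficient $6$ in front of $\|\dot H\|^2/\gamma^3$ and the overall factor $2$ that distinguishes the projector version from the state version. This will require (i) a careful product-rule expansion of $\dot F(s)$, keeping track of $\dot E_g(s)$ through the Hellmann--Feynman-type identity $\dot E_g(s)P(s)=P(s)\dot H(s)P(s)$, which continues to hold even when $P$ has rank bigger than one because the eigenvalue is constant across $\mathrm{ran}(P)$, and (ii) correctly summing contributions from the two off-diagonal blocks $P\dot P Q$ and $Q\dot P P$ --- their simultaneous presence, compared with a single effective direction in the non-degenerate state version, is the structural origin of the extra factor $2$. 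Everything else (triangle inequality, unitarity of $U(0,s)$, and the basic resolvent bound $\|(H-E_g)^{-1}Q\|\le 1/\gamma$) is mechanical.
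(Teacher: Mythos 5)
Your argument is correct and, once unwound, is the paper's own proof in Kato-intertwiner clothing: with the ground energy normalized to zero your $F(s)$ equals $-G(s)^2H'(s)P(s)+P(s)H'(s)G(s)^2$ with $G=(H-E_g)^{-1}Q$, which is exactly the operator (together with its Hermitian conjugate) that appears in the paper's boundary terms, and your resolvent identity $Q\dot P P=-(H-E_g)^{-1}Q\dot H P$ is the paper's Lemma \ref{lem:derivP}. The only real difference is how the factor of $T$ is extracted: the paper substitutes the propagator identity $U(1,s)=-\tfrac{i}{T}\tfrac{d}{ds}U(1,s)\,G(s)+U(1,s)P(s)$ into the integrand, whereas you integrate by parts against $\tfrac{d}{ds}\bigl(U(0,s)F(s)U(s,0)\bigr)$ using $[H,F]=\dot P$; these are the same manipulation, and your packaging of the two conjugate blocks into a single off-diagonal $F$ is arguably cleaner (for purely off-diagonal $B$ one even has $\Vert B-B^\dagger\Vert=\Vert B\Vert$, so some of your intermediate bounds are slightly sharper than the paper's crude ``$+\,\mathrm{h.c.}$ costs a factor $2$''). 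Your structural diagnoses are also right: the overall factor $2$ relative to the state version does come from $\dot P$ having two off-diagonal blocks where $\tfrac{d}{ds}\ket{\phi_0}$ has one, and the coefficient $6$ (versus $5$) arises for the same reason when $\dot F$ is expanded. What you have not done is carry out that expansion: the bound $\Vert\dot F\Vert\le 6\Vert G\Vert^3\Vert\dot H\Vert^2+\Vert G\Vert^2\Vert\ddot H\Vert$ is asserted rather than derived, and it is precisely the content of the paper's Lemma \ref{lem:integralProj} (which needs the explicit formula $\dot G=PH'G^2-GH'G-G^2H'P$ and a six-term triangle inequality). That step is indeed mechanical, as you say, but it is where the stated constants live, so the proof is not complete until you write it out.
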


The arbitrary closeness of the projection operators implies of course also that an initially prepared ground-state $\ket{\eta(0)}$ evolves to a state that is arbitrarily close to a final ground-state. The decomposition ansatz $U(1,0)\ket{\eta(0)} = \sqrt{1-\delta^2} \ket{\eta(1)} + \delta \ket{\eta^\perp(1)}$ into a final ground-state $\ket{\eta(1)}$ and a vector $\ket{\eta^\perp(1)}$ orthogonal to the ground-space leads directly to
\begin{align*}
\left\Vert U(1,0) \ket{\eta(0)} - \ket{\eta(1)} \right\Vert
= \sqrt{2\left(1-\sqrt{1-\epsilon^2}\right)}
\le \sqrt{2} \epsilon\text{.}
\end{align*}

The important difference compared to the original adiabatic theorem for states is that it is normally not known which degenerate ground-state $\ket{\eta(1)}$ the evolved state is close to.

The idea of adiabatic quantum computation is to find for any computational task a time-dependant Hamiltonian $H(s)$ that has an easy to prepare initial ground-state and whose final ground-state(s) encode the desired output of the computation. Then we simply prepare the system in the initial ground-state and let it evolve according to the Hamiltonian $H\left(\frac{t}{T}\right)$. If we choose the evolution time $T$ long enough, we finally obtatin a state that is arbitrarily close to the encoded output of the computation. The ``encoded output'' does not have to equal exactly the output of the computation, but we expect to obtain from it the desired computation output with a non-vanishing probabilty after a final projective measurement and perhaps the discarding of an auxillary system.

A nice property of adiabatic quantum computation is that it is close to physical implementation. But of course this is only true if the Hamiltonian is physically realizable. This need along with the desirability of an efficient adiabatic quantum computation leads us to a list of requirements for the desired Hamiltonian. Recall that we consider an algorithm to be efficient if the crucial complexity quantities are bounded from above by a polynomial in the input size. The Hamiltonian is of course a function of the computation input and so are its properties. Therefore we use the $\mathcal{O}$-notation we introduced in the last chapter to express the bounds that guarantee an efficient computation and physical realization:
\tcbset{colback=black!5!white,colframe=black!50!white}
\vspace{2mm}
\begin{tcolorbox}[toptitle=2mm,fonttitle=\sffamily\bfseries\large,title=Requirements for an efficient adiabatic quantum computation]
to realize a computation with input of length $n$.
\tcblower
\begin{enumerate}
\item The Hamiltonian $H(s)$ is normalized and defined on a space of $\mathcal{O}(\poly(n))$ qubits.
\item The Hamiltonian is a sum of $\mathcal{O}(\poly(n))$ many, local interaction terms.
\item For every interaction term $H_1(s)$ it holds $\Vert H_1(s) \Vert \in \Omega\left(\frac{1}{\poly(n)}\right)$ for all $0\le s \le 1$.
\item There is a classical algorithm that computes the interaction terms in time $\mathcal{O}(\poly(n))$.
\item The Hamiltonian $H(0)$ has a ground-state independant of the computation that can be prepared from the all zero state $\ket{\mathbf{0}}$ by a quantum circuit of length $\mathcal{O}\left(\poly(n)\right)$.
\item After a projective measurement on any ground-state of $H(1)$ and a possible discarding of a subsystem the computation output is obtained with probability $p \in \Omega\left(\frac{1}{\poly(n)}\right)$.
\item For the gap $\gamma(s)$ of the Hamiltonian it holds that $\gamma(s) \in \Omega\left(\frac{1}{\poly(n)}\right)$ for all $0 \le s \le 1$.
\item The norm of the first and second derivative of $H(s)$ is for all $0 \le s \le 1$ polynomially bounded: $\left\Vert  \frac{dH(s)}{ds} \right\Vert \in \mathcal{O}(\poly(n))$ and $\left\Vert  \frac{d^2H(s)}{ds^2} \right\Vert \in \mathcal{O}(\poly(n))$.
\end{enumerate}
\end{tcolorbox}
\vspace{2mm}

In Section \ref{sec:circuitSimReq} we will rewrite this list under the assumption of a very specific Hamiltonian structure that includes normalized Laplacians of parallel transport networks.

Requirements 1--4 guarantee an efficient setup of a physical Hamiltonian, 3 is also motivated for reasons of accuracy. The classical algorithm mentioned in 4 computes of course a certain classical representation of the interaction terms. The Hamiltonians we use for adiabatic quantum computation should hence have a certain structure such that computing a classical encoding in polynomial time is possible. The idea behind requirement 5 is that the initial state is ``easy'' to prepare. Requirement 6 guarantees an acceptable rate of computing the correct output. Finally, if requirements 7 and 8 are fulfilled and an error $\epsilon$ fixed, the quantum adiabatic theorem implies that the neccessary evolution time $T$ is polynomial in the input size $n$ and hence efficient.

Actually we are not only concerned that our computation \textit{is} efficient but especially \textit{how} efficient it is. In particular we are interested in an optimization of the running time $T$ of the computation. While we simply demand that requirements 1--6 are fulfilled, we want particular low polynomials for the inverse gap and the Hamiltonian derivatives in requirements 7--8. In the next section we will discuss that a comparision of these polynomials makes sense only for normalized Hamiltonians, which is the reason why we included this in requirement 1.

\section{The requirement of a normalized Hamiltonian}\label{sec:norm}

In this thesis, when we refer to ``normalized'' Hamiltonians such as that of requirement 1, we assume that the norm of the Hamiltonian is upper bounded by a constant. This constant does not neccessarily have to equal 1 since the actual value does not affect the complexity classes written in $\mathcal{O}$-notation.

Why do we demand that the Hamiltonian in requirement 1 is normalized? The reason is that if we multiply the Hamiltonian by a factor $z$ (and hence consequently also its derivatives and the energy gap), the neccessary evolution time $T$ will be scaled by $\frac{1}{z}$ according to the adiabatic theorem. First of all this is consistent with the fact that the unit of the time $T$ has to equal the inverse energy under the convention $\hbar=c=1$. Physically this scaling also makes sense, since two initially identical states $\ket{\Psi_1(0)}=\ket{\Psi_2(0)}$, the first evolving under the Hamiltonian $H(t/T)$, the second under the Hamiltonian $z H(zt/T)$, fulfill $\ket{\Psi_2(t)} = \ket{\Psi_1(zt)}$ according to Schr\"odinger's equation. So indeed, in scaling the Hamiltonian arbitrarily large, we can make the evolution time $T$ as small as we want by keeping the final state the same. 

Hence not paying attention to the norm of the Hamiltonian would make every complexity discussion about the evolution time meaningless. Considering the norm of the Hamiltonian also as an complexity issue makes sense since, in practice, an arbitrary high energy scaling requires a significant effort. So either we look at $\Vert H \Vert \cdot T$ as relevant complexity quantity or we normalize the Hamiltonians to be able to compare the evolution time $T$. As requirement 1 reflects we pursue the second approach in this thesis.

Another convincing reason for considering $\Vert H \Vert T$ as the relevant complexity quantity or just $T$ requiring a normalized Hamiltonian occurs if we try to formulate the complexity of the evolution of the system. A standard way to formulate the evolution complexity is to count the numbers of gates and oracle queries for the matrix entries a quantum circuit needs to simulate this evolution. We don't want to go into detail about this method here, but we want to mention that Berry, Cleve and Somma \cite{berry} showed that this number grows like $\Vert H \Vert T \log^3 \left(\Vert H \Vert T +\Vert \frac{dH}{dt} \Vert T\right)$.

\section{The quantum adiabatic theorem for states}\label{sec:AQTstates}

In this section we will give the proof for the quantum adiabatic theorem for states as it has already been stated at the beginning of this chapter. But first we show that for every differentiable unit vector $\ket{\eta(s)}$ of some Hilbert space there exists another unit vector $\ket{\hat{\eta}(s)}$ that equals $\ket{\eta(s)}$ up to a phase and is initially even identical while the scalar product with its derivative vanishes:

\begin{lem}[adapted from \cite{ambainis}]\label{lem:phase}
For every differentiable unit vector $\ket{\eta(s)}$ of some Hilbert space the vector
\begin{align*}
\ket{\hat{\eta}(s)}:= e^{i\beta(s)} \ket{\eta(s)} \quad\quad\text{ with }\quad\quad\beta(s):= \int_0^s i \braket{\eta(u) | \eta'(u)} du
\end{align*}
fulfills
\begin{align*}
\ket{\hat{\eta}(0)}=\ket{\eta(0)}
\quad\quad and \quad\quad
\braket{\hat{\eta}(s) | \hat{\eta}'(s)}=0\quad\text{ for all }0\le s\le 1\text{.}
\end{align*}
\end{lem}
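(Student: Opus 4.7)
The plan is to verify both claims by direct computation, using the product rule on the defining formula and the fact that $\ket{\eta(s)}$ is a unit vector.

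First, I would dispatch the initial-value claim immediately: $\beta(0)=\int_0^0 i\braket{\eta(u)|\eta'(u)}\,du = 0$, so $\ket{\hat\eta(0)} = e^{0}\ket{\eta(0)} = \ket{\eta(0)}$. Before turning to the derivative condition, I would make one preliminary observation that justifies the definition of $\beta$: since $\braket{\eta(s)|\eta(s)}\equiv 1$, differentiating yields $\braket{\eta'(s)|\eta(s)} + \braket{\eta(s)|\eta'(s)} = 0$, so $\braket{\eta(s)|\eta'(s)}$ is purely imaginary. Consequently $i\braket{\eta(s)|\eta'(s)}$ is real, which means $\beta(s)$ is real and $e^{i\beta(s)}$ is genuinely a phase factor. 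This also tells me that $\beta'(s) = i\braket{\eta(s)|\eta'(s)}$ (by the fundamental theorem of calculus, applicable since $\ket{\eta}$ is differentiable, so $\braket{\eta|\eta'}$ is continuous).

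Next I would apply the product rule to obtain
\begin{align*}
\ket{\hat\eta'(s)} = i\beta'(s)\, e^{i\beta(s)}\ket{\eta(s)} + e^{i\beta(s)}\ket{\eta'(s)}.
\end{align*}
Taking the inner product with $\ket{\hat\eta(s)} = e^{i\beta(s)}\ket{\eta(s)}$, the phase factors $e^{-i\beta(s)}e^{i\beta(s)}$ cancel, leaving
\begin{align*}
\braket{\hat\eta(s)|\hat\eta'(s)} = i\beta'(s)\,\underbrace{\braket{\eta(s)|\eta(s)}}_{=1} + \braket{\eta(s)|\eta'(s)}.
\end{align*}
Plugging in $\beta'(s) = i\braket{\eta(s)|\eta'(s)}$ gives $i\beta'(s) = -\braket{\eta(s)|\eta'(s)}$, so the two terms cancel and $\braket{\hat\eta(s)|\hat\eta'(s)} = 0$.

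There is no real obstacle here; the only subtlety is checking that $\braket{\eta|\eta'}$ is purely imaginary so that $\beta$ defines a bona fide phase, after which the cancellation is a one-line computation. I would make sure to state the unit-vector consequence explicitly, since it is the conceptual content of the lemma: multiplying by the phase $e^{i\beta(s)}$ precisely removes the tangential component of $\ket{\eta'(s)}$ along $\ket{\eta(s)}$, a gauge-fixing that will be essential in the proof of the adiabatic theorem.
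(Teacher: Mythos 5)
Your proof is correct and follows essentially the same route as the paper's: differentiate $\ket{\hat\eta(s)}=e^{i\beta(s)}\ket{\eta(s)}$ by the product rule, take the inner product with $\ket{\hat\eta(s)}$, and observe that $i\beta'(s)=-\braket{\eta(s)|\eta'(s)}$ forces the cancellation. Your additional remark that $\braket{\eta|\eta'}$ is purely imaginary (so $\beta$ is real and $e^{i\beta}$ is a genuine phase) is a worthwhile observation the paper leaves implicit, but it does not change the argument.
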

\begin{proof}
$\ket{\hat{\eta}(0)}=\ket{\eta(0)}$ holds by definition. The derivative of $\ket{\hat{\eta}(s)}=e^{i\beta(s)}\ket{\eta(s)}$ can be written as
\begin{align*}
\ket{\hat{\eta}'(s)}= e^{i\beta(s)} \ket{\eta'(s)} + e^{i \beta(s)} i \beta'(s) \ket{\eta(s)}\text{.}
\end{align*}
The scalar product with $\ket{\hat{\eta}(s)}$ gives with the above defined $\beta(s)$:
\begin{align*}
\braket{\hat{\eta}(s) | \hat{\eta}'(s)}&= \braket{\eta(s) | \eta'(s)} + i\beta'(s) \underbrace{\braket{\eta(s) | \eta(s)}}_{=1}=0\text{.} \qedhere
\end{align*}
\end{proof}

As a ground-state multiplied by a phase is still a ground-state, we can assume w.l.og. in the next two technical lemmata that the scalar product of the ground-state with its derivative vanishes.

\begin{lem}\label{lem:derivPhi}
Let $H(s)$ be a finite dimensional, differentiable Hamiltonian with nondegenerate ground-state corresponding to the eigenvalue $0$. Denote by $\ket{\phi_j(s)}$, $0 \le j\le n-1$, its eigenstates with eigenvalues $E_j(s)$ and by $\ket{\phi_0(s)}$ the ground-state such that the scalar product with its derivative vanishes. Then it holds
\begin{align*}
\ket{\phi'_0(s)} &= - G(s) H'(s) \ket{\phi_0(s)}
\end{align*}
with
\begin{align*}
G(s):=\sum\limits_{j=1}^{n-1} \frac{\ket{\phi_j(s)}\bra{\phi_j(s)}}{E_j(s)}\text{.}
\end{align*}
\end{lem}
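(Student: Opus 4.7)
The plan is to differentiate the eigenvalue equation $H(s)\ket{\phi_0(s)} = 0$ with respect to $s$, and then expand the unknown derivative $\ket{\phi'_0(s)}$ in the eigenbasis of $H(s)$. The phase-fixing condition $\braket{\phi_0(s)\,|\,\phi'_0(s)} = 0$ will eliminate the component of $\ket{\phi'_0(s)}$ along $\ket{\phi_0(s)}$, so that only the orthogonal complement remains, where $H(s)$ is invertible with inverse exactly $G(s)$.

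First I would differentiate $H(s)\ket{\phi_0(s)} = E_0(s)\ket{\phi_0(s)} = 0$ to obtain
\begin{align*}
H'(s)\ket{\phi_0(s)} + H(s)\ket{\phi'_0(s)} = 0,
\end{align*}
i.e.\ $H(s)\ket{\phi'_0(s)} = -H'(s)\ket{\phi_0(s)}$. Next, since $H(s)$ is a finite-dimensional Hermitian operator, the eigenvectors $\{\ket{\phi_j(s)}\}_{j=0}^{n-1}$ form an orthonormal basis, so I can write $\ket{\phi'_0(s)} = \sum_{j=0}^{n-1} c_j(s)\,\ket{\phi_j(s)}$. The gauge condition $\braket{\phi_0(s)\,|\,\phi'_0(s)} = 0$ forces $c_0(s) = 0$.

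Then I would apply $H(s)$ termwise, using $H(s)\ket{\phi_j(s)} = E_j(s)\ket{\phi_j(s)}$, to get
\begin{align*}
\sum_{j=1}^{n-1} c_j(s)\, E_j(s)\,\ket{\phi_j(s)} = -H'(s)\ket{\phi_0(s)},
\end{align*}
and project onto $\bra{\phi_k(s)}$ for each $k \ge 1$. Since $E_k(s) > 0$ (the ground-state is nondegenerate and separated by the gap), I can solve $c_k(s) = -\braket{\phi_k(s)\,|\,H'(s)\,|\,\phi_0(s)}/E_k(s)$. Re-assembling gives
\begin{align*}
\ket{\phi'_0(s)} = -\sum_{j=1}^{n-1}\frac{\ket{\phi_j(s)}\bra{\phi_j(s)}}{E_j(s)}\,H'(s)\ket{\phi_0(s)} = -G(s)H'(s)\ket{\phi_0(s)},
\end{align*}
which is the claim.

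There is no real obstacle here; the only subtlety is making sure the $j=0$ term is handled correctly, which is precisely why the lemma is stated after Lemma \ref{lem:phase} guaranteeing $\braket{\phi_0\,|\,\phi'_0}=0$. Without that gauge choice the formula would only determine $\ket{\phi'_0(s)}$ up to an arbitrary imaginary multiple of $\ket{\phi_0(s)}$, reflecting the freedom in the overall phase of the ground-state.
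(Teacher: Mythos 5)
Your proposal is correct and follows essentially the same route as the paper: differentiate $H(s)\ket{\phi_0(s)}=0$, then invert $H(s)$ on the orthogonal complement of the ground-state using the gauge condition $\braket{\phi_0(s)|\phi'_0(s)}=0$. The paper simply applies $G(s)$ as an operator (using $G(s)H(s)=\mathbb{I}-\ket{\phi_0(s)}\bra{\phi_0(s)}$) where you solve for the coefficients component-wise in the eigenbasis, which is the same argument in different notation.
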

\begin{proof}
\begin{align*}
\frac{d}{ds}(H(s) \ket{\phi_0(s)})&=0\\
H(s) \ket{\phi'_0(s)} &= - H'(s) \ket{\phi_0(s)} \\
\mathbb{I} \ket{\phi'_0(s)} - \ket{\phi_0(s)} \underbrace{\braket{\phi_0(s) | \phi'_0(s)}}_{=0} &= - G(s) H'(s) \ket{\phi_0(s)}\qedhere 
\end{align*}
\end{proof}

\begin{lem}\label{lem:integral}
The quantities defined in the previous lemma fulfill
\begin{align*}
\left\Vert \frac{d}{ds} \left( G(s)^2 \frac{dH(s)}{ds} \ket{\phi_0(s)} \right) \right\Vert
\le 5 \left\Vert G(s)\right\Vert^3 \left\Vert \frac{dH(s)}{ds}\right\Vert^2 + \left\Vert G(s) \right\Vert^2 \left\Vert \frac{d^2H(s)}{ds^2}\right\Vert\text{.}
\end{align*}
\end{lem}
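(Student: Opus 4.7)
The plan is to apply the product rule to $G^2 H'\ket{\phi_0}$, express each occurrence of $G'$ by differentiating the identity $GH = \mathbb{I} - \ket{\phi_0}\bra{\phi_0}$, and then bound each resulting summand with the triangle inequality. The two essential inputs from the preceding lemmata are the identity $\ket{\phi_0'} = -GH'\ket{\phi_0}$ of Lemma~\ref{lem:derivPhi} and the phase normalisation $\braket{\phi_0|\phi_0'} = 0$ ensured by Lemma~\ref{lem:phase}.

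Applying the product rule first gives
\begin{align*}
\frac{d}{ds}\!\left(G^2 H'\ket{\phi_0}\right) = (G'G + GG')H'\ket{\phi_0} + G^2 H''\ket{\phi_0} + G^2 H'\ket{\phi_0'}.
\end{align*}
The middle summand is bounded immediately by $\|G\|^2\|H''\|$, and the last by $\|G\|^2\|H'\|\,\|\ket{\phi_0'}\| \le \|G\|^3\|H'\|^2$. To control $(G^2)'H'\ket{\phi_0}$ I need a formula for $G'$. Setting $P := \mathbb{I} - \ket{\phi_0}\bra{\phi_0}$, differentiating $GH = P$ yields $G'H = P' - GH'$ with $P' = -\ket{\phi_0'}\bra{\phi_0} - \ket{\phi_0}\bra{\phi_0'}$. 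Since $HG$ acts as the identity on $P\mathcal{H}$, this implies $G'\ket{v} = (P' - GH')G\ket{v}$ for every $\ket{v} \in P\mathcal{H}$, while on the complement differentiating $G\ket{\phi_0} = 0$ gives $G'\ket{\phi_0} = -G\ket{\phi_0'} = G^2H'\ket{\phi_0}$.

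Using these rules together with $GH'\ket{\phi_0} = -\ket{\phi_0'}$ and the decomposition $H'\ket{\phi_0} = PH'\ket{\phi_0} + \braket{\phi_0|H'|\phi_0}\ket{\phi_0}$, I would expand $G'GH'\ket{\phi_0}$ and $GG'H'\ket{\phi_0}$ into explicit sums of operator-theoretic terms involving only $G$, $H'$, $P'$ and $\ket{\phi_0'}$. The decisive step — and the main obstacle, as the naive bookkeeping here yields the wrong constant — is the cancellation $GP'\ket{\phi_0'} = 0$: because $\braket{\phi_0|\phi_0'} = 0$, the product $P'\ket{\phi_0'}$ collapses to a scalar multiple of $\ket{\phi_0}$, which is then annihilated by $G$. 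Without this observation, a direct count produces the coefficient $6$ instead of the required $5$.

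After eliminating the vanishing term and combining the two copies of $G^2H'\ket{\phi_0'}$ that appear in the sum, the triangle inequality together with the estimates $\|P'\| \le \|\ket{\phi_0'}\| \le \|G\|\|H'\|$ (the first inequality using again $\braket{\phi_0|\phi_0'}=0$ to diagonalise the rank-two operator $P'$) and $|\braket{\phi_0|H'|\phi_0}| \le \|H'\|$ leaves exactly five cubic summands each bounded by $\|G\|^3\|H'\|^2$, plus the $G^2H''\ket{\phi_0}$ contribution of $\|G\|^2\|H''\|$, yielding the claimed inequality.
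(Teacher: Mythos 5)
Your proposal is correct and follows essentially the same route as the paper: product rule, an explicit formula for $G'(s)$, and a term-by-term count that exploits $G(s)\ket{\phi_0(s)}=0$ and $\braket{\phi_0(s)|\phi_0'(s)}=0$ to reduce six potential cubic terms to five. The only difference is cosmetic — the paper obtains $G' = \ket{\phi_0}\bra{\phi_0}H'G^2 - GH'G + G^2H'\ket{\phi_0}\bra{\phi_0}$ by differentiating the inverse of $\tilde H = H + \ket{\phi_0}\bra{\phi_0}$, whereas you differentiate $GH=\mathbb{I}-\ket{\phi_0}\bra{\phi_0}$ directly and treat the kernel of $G$ separately; both yield the same operator identity, the same cancellation, and the same five summands.
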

\begin{proof}
The operator
\begin{align*}
\tilde{H}(s) := H(s) + \ket{\phi_0(s)}\bra{\phi_0(s)}
\end{align*}
is invertible and fulfills like every invertible differentiable operator
\begin{align}
\frac{d\tilde{H}(s)^{-1}}{ds} = \left( \frac{d\tilde{H}(s)^{-1}}{ds} \tilde{H}(s) \right) \tilde{H}(s)^{-1} 
= - \tilde{H}(s)^{-1} \frac{d\tilde{H}(s)}{ds} \tilde{H}(s)^{-1}\text{.}\label{eq:invertH}
\end{align}
For its derivative we find
\begin{align*}
\frac{d\tilde{H}(s)}{ds} = \frac{dH(s)}{ds} + \ket{\phi'_0(s)} \bra{\phi_0(s)} + \ket{\phi_0(s)}\bra{\phi'_0(s)}
\end{align*}
and hence
\begin{align}
G(s)\frac{d\tilde{H}(s)}{ds}G(s) = G(s)\frac{dH(s)}{ds}G(s)\text{.}\label{eq:derivHtilde}
\end{align}
Define
\begin{align*}
Q(s):= \mathbb{I} - \ket{\phi_0(s)}\bra{\phi_0(s)}\text{.}
\end{align*}
Then
\begin{align*}
G(s) = Q(s) \tilde{H}(s)^{-1} = \tilde{H}(s)^{-1} Q(s) = Q(s) \tilde{H}(s)^{-1} Q(s)
\end{align*}
and we can derive the following expression for the derivative of $G(s)$:
\begin{align*}
\frac{dG(s)}{ds} &= \frac{d}{ds} \left( Q(s) \tilde{H}(s)^{-1} Q(s) \right)\\
&= \frac{dQ(s)}{ds} \tilde{H}(s)^{-1} Q(s) + Q(s) \frac{d\tilde{H}(s)^{-1}}{ds} Q(s) + Q(s) \tilde{H}(s)^{-1} \frac{Q(s)}{ds}\\
&\stackrel{\eqref{eq:invertH}}{=}\frac{dQ(s)}{ds} G(s) - Q(s) \tilde{H}(s)^{-1} \frac{d\tilde{H}(s)}{ds} \tilde{H}(s)^{-1} Q(s) + G(s) \frac{dQ(s)}{ds}\\
&\stackrel{\eqref{eq:derivHtilde}}{=}\frac{dQ(s)}{ds} G(s) - G(s) \frac{dH(s)}{ds}G(s) + G(s) \frac{dQ(s)}{ds}\text{.}
\end{align*}
With the help of Lemma \ref{lem:derivPhi} we can substitute
\begin{align*}
\frac{dQ(s)}{ds} &= - \ket{\phi'_0(s)}\bra{\phi_0(s)} - \ket{\phi_0(s)}\bra{\phi'_0(s)}\\
&= G(s)\frac{dH(s)}{ds} \ket{\phi_0(s)} \bra{\phi_0(s)} + \ket{\phi_0(s)} \bra{\phi_0(s)} \frac{dH(s)}{ds} G(s)
\end{align*}
to get
\begin{align*}
\frac{dG(s)}{ds} &=\ket{\phi_0(s)} \bra{\phi_0(s)} \frac{dH(s)}{ds} G(s)^2
  - G(s) \frac{dH(s)}{ds}G(s)
+ G(s)^2 \frac{dH(s)}{ds} \ket{\phi_0(s)} \bra{\phi_0(s)}\text{.}
\end{align*}
With the above expression for $\frac{dG(s)}{ds}$ and the help of the triangle inequality we can derive our desired result:
\begin{alignat*}{3}
\bigg\Vert \frac{d}{ds} \bigg( &G(s)&&^2 \frac{dH(s)}{ds} \ket{\phi_0(s)} \bigg) \bigg\Vert\\
&\le \bigg\Vert&& \frac{dG(s)}{ds} G(s) \frac{dH(s)}{ds} \ket{\phi_0(s)} + G(s)\frac{dG(s)}{ds} \frac{dH(s)}{ds} \ket{\phi_0(s)}
 + G(s)^2 \frac{dH(s)}{ds} \ket{\phi'_0(s)} \bigg\Vert\\
&&&+ \left\Vert G(s) \right\Vert^2 \left\Vert \frac{d^2H(s)}{ds^2} \right\Vert\\
&= \bigg\Vert&& \left(\ket{\phi_0(s)} \bra{\phi_0(s)} \frac{dH(s)}{ds} G(s)^3 \frac{dH(s)}{ds} \ket{\phi_0(s)} - G(s) \frac{dH(s)}{ds}G(s)^2\frac{dH(s)}{ds} \ket{\phi_0(s)} \right)\\
&&&+ \left(- G(s)^2 \frac{dH(s)}{ds}G(s) \frac{dH(s)}{ds} \ket{\phi_0(s)} +  G(s)^3 \frac{dH(s)}{ds} \ket{\phi_0(s)} \bra{\phi_0(s)}\frac{dH(s)}{ds} \ket{\phi_0(s)}\right)\\
&&& +\left(- G(s)^2 \frac{dH(s)}{ds} G(s) \frac{dH(s)}{ds}\ket{\phi_0(s)} \right)\bigg\Vert 
+ \left\Vert G(s) \right\Vert^2 \left\Vert \frac{d^2H(s)}{ds^2} \right\Vert\\
&\le 5&& \left\Vert G(s) \right\Vert^3 \left\Vert\frac{dH(s)}{ds}\right\Vert^2 
+ \left\Vert G(s) \right\Vert^2 \left\Vert \frac{d^2H(s)}{ds^2} \right\Vert\text{.}
\hspace{16.2em}\qedhere
\end{alignat*}
\end{proof}

Now we can finally prove the complete quantum adiabatic theorem for states. As mentioned earlier the theorem is adapted from \cite[Appendix F] {jordan} based on a proof by Jeffrey Goldstone:

\begin{thm}[Quantum Adiabatic Theorem for states]\label{thm:AQTstates}
Let $H(s)$ be a finite-dimensional Hamiltonian such that for all $0 \le s \le 1$ it is twice differentiable and let $\ket{\eta(s)}$ be a differentiable, nondegenerate ground-state separated by an energy gap $\gamma(s)$ from the first excited state. Choose $\epsilon > 0$ arbitrarily and define
\begin{align*}
T :=
 \frac{1}{\epsilon} \left( \frac{1}{\gamma(0)^2} \left\Vert\frac{\mathrm{d} H}{\mathrm{d} s} \right\Vert_{s=0}
+ \frac{1}{\gamma(1)^2} \left\Vert\frac{\mathrm{d} H}{\mathrm{d} s} \right\Vert_{s=1}
+ \int_0^1 \mathrm{d}s \left( \frac{5}{\gamma(s)^3} \left\Vert\frac{\mathrm{d} H}{\mathrm{d} s} \right\Vert^2
+ \frac{1}{\gamma(s)^2} \left\Vert\frac{\mathrm{d}^2 H}{\mathrm{d} s^2} \right\Vert \right)
\right)\text{.}
\end{align*}
Consider a system evolving according to the Hamiltonian $H(t/T)$. Denote by $U\left(\frac{t_1}{T}, \frac{t_0}{T}\right)$ its time evolution operator. Then it holds:
\begin{align*}
\left\Vert U(1,0)\ket{\eta(0)} - e^{i\beta(s)}\ket{\eta(1)}\right\Vert \le \epsilon
\quad\quad\text{ with }\quad\quad
\beta(s):= \int_0^s i \braket{\eta(u) | \eta'(u)} du\text{.}
\end{align*}
\end{thm}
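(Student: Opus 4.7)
The plan is to reduce to a setting where the ground energy vanishes and the ground-state has zero overlap with its own derivative, then to construct an explicit ansatz for the deviation from the instantaneous ground-state that is manifestly of order $1/T$, so that the claimed error bound follows from an elementary energy estimate. First I would absorb the dynamical phase by shifting the Hamiltonian to $H(s) - E_0(s)\mathbb{I}$ so that $H(s)\ket{\eta(s)} = 0$; this leaves $\|dH/ds\|$, $\|d^2H/ds^2\|$, and $\gamma(s)$ invariant and only modifies $U(1,0)$ by a global phase, which can be absorbed into the final comparison. Next, by Lemma \ref{lem:phase} I may replace $\ket{\eta(s)}$ by $\ket{\hat{\eta}(s)}$ without loss of generality so that $\braket{\hat{\eta}(s)|\hat{\eta}'(s)} = 0$; this supplies precisely the phase $e^{i\beta(s)}$ appearing in the statement.

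Now introduce the error vector $\ket{\xi(s)} := U(s,0)\ket{\eta(0)} - \ket{\hat{\eta}(s)}$, which vanishes at $s=0$ and whose norm at $s=1$ is the quantity to bound. Since $\partial_s U(s,0) = -iTH(s) U(s,0)$ and $H(s)\ket{\hat{\eta}(s)} = 0$, Lemma \ref{lem:derivPhi} yields
$$\ket{\xi'(s)} = -iTH(s)\ket{\xi(s)} + G(s)H'(s)\ket{\hat{\eta}(s)}.$$
The crucial algebraic observation is that $H(s)G(s) = \mathbb{I} - \ket{\hat{\eta}(s)}\bra{\hat{\eta}(s)}$ and that $G(s)H'(s)\ket{\hat{\eta}(s)}$ is orthogonal to $\ket{\hat{\eta}(s)}$, whence $H(s)G(s)^2 H'(s)\ket{\hat{\eta}(s)} = G(s)H'(s)\ket{\hat{\eta}(s)}$. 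This motivates the ansatz $\ket{\chi(s)} := (iT)^{-1} G(s)^2 H'(s)\ket{\hat{\eta}(s)}$, which by construction exactly cancels the inhomogeneous term: $-iTH(s)\ket{\chi(s)} = -G(s)H'(s)\ket{\hat{\eta}(s)}$.

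Writing $\ket{\xi(s)} = \ket{\chi(s)} + \ket{r(s)}$ converts the equation for $\ket{\xi}$ into $\ket{r'(s)} = -iTH(s)\ket{r(s)} - \ket{\chi'(s)}$ with $\ket{r(0)} = -\ket{\chi(0)}$. Because $-iTH(s)$ is anti-Hermitean, the standard energy identity $\tfrac{d}{ds}\|r(s)\|^2 = -2\,\mathrm{Re}\braket{r(s)|\chi'(s)}$ implies $\|r(s)\| \le \|\chi(0)\| + \int_0^s \|\chi'(u)\|\,du$, and therefore
$$\|\xi(1)\| \le \|\chi(0)\| + \|\chi(1)\| + \int_0^1 \|\chi'(s)\|\,ds.$$
Using $\|G(s)\| = 1/\gamma(s)$ to bound $\|\chi(s)\| \le \|H'(s)\|/(T\gamma(s)^2)$, and invoking Lemma \ref{lem:integral} to bound $\|\chi'(s)\|$, the right-hand side is precisely $\epsilon$ for the $T$ given in the theorem.

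The main technical obstacle is choosing the ansatz $\ket{\chi(s)}$ so that all three terms, the two boundary contributions $\|\chi(0)\|,\|\chi(1)\|$ and the integral of $\|\chi'\|$, line up with the precise coefficients $1/\gamma(0)^2$, $1/\gamma(1)^2$, $5/\gamma^3$, and $1/\gamma^2$ that appear in the theorem. Once the choice $\chi = (iT)^{-1}G^2 H'\hat{\eta}$ is identified, the identity $HG^2 = G$ on the range orthogonal to $\ket{\hat{\eta}}$ makes the cancellation exact, and the estimate for $\|\chi'\|$ is exactly what Lemma \ref{lem:integral} was crafted to deliver; the remainder is the standard energy argument for unitary evolution.
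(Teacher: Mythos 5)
Your proposal is correct and lands on exactly the paper's bound, but it packages the argument differently. The paper writes the error as $\int_0^1 \frac{d}{ds}\bigl(U(1,s)\ket{\phi_0(s)}\bigr)\,ds$, substitutes the propagator identity $U(1,s) = -\frac{i}{T}\frac{d}{ds}U(1,s)\,G(s) + U(1,s)\ket{\phi_0(s)}\bra{\phi_0(s)}$ into the term $U(1,s)G(s)H'(s)\ket{\phi_0(s)}$, and integrates by parts; that produces the two boundary terms and the integral of $\frac{d}{ds}\bigl(G^2H'\ket{\phi_0}\bigr)$. You instead run a first-order adiabatic expansion: the ansatz $\ket{\chi}=(iT)^{-1}G^2H'\ket{\hat\eta}$ cancels the source term in the ODE for the error vector, and the remainder is controlled by the anti-Hermiticity of $-iTH$ via a Duhamel/Gr\"onwall estimate. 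The two routes are algebraically equivalent: your boundary values $\Vert\chi(0)\Vert,\Vert\chi(1)\Vert$ and the integral of $\Vert\chi'\Vert$ are precisely the paper's integrated-by-parts terms, both proofs lean on the same Lemmata \ref{lem:phase}, \ref{lem:derivPhi} and \ref{lem:integral}, and your identity $HG^2=G$ is what the paper uses implicitly when it discards the $U(1,s)\ket{\phi_0}\bra{\phi_0}G H'\ket{\phi_0}$ piece. What your version buys is that the propagator never has to be manipulated inside the integral; what it costs is the (routine) care needed to pass from $\frac{d}{ds}\Vert r\Vert^2=-2\,\mathrm{Re}\braket{r|\chi'}$ to the integral bound, e.g.\ by working with $U(0,s)\ket{r(s)}$, whose derivative is $-U(0,s)\ket{\chi'(s)}$.

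One shared imprecision worth flagging: you assert that the shift $H\mapsto H-E_0(s)\mathbb{I}$ leaves $\Vert dH/ds\Vert$ invariant. It does not, since the derivative becomes $H'-E_0'(s)\mathbb{I}$ (by Hellmann--Feynman $|E_0'|\le\Vert H'\Vert$, so the loss is at most a factor of $2$), and the surviving global phase $e^{-iT\int_0^1 E_0}$ is a dynamical phase that the theorem's conclusion, which carries only the Berry phase $e^{i\beta}$, does not account for. The paper's own reduction to zero ground energy glosses over the same points, so this is not a divergence from the paper's standard of rigor; but to make the stated $T$ and phase work verbatim one should either assume $E_0\equiv 0$ (the case actually used in the thesis, where $H$ is a normalized Laplacian) or track the dynamical phase explicitly.
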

\begin{proof}
Let $\ket{\phi_j(s)}$, $0 \le j\le n-1$, be the eigenstates of the Hamiltonian with eigenvalues $E_j(s)$ and $\ket{\phi_0(s)}=e^{i\beta(s)}\ket{\eta(s)}$. Lemma \ref{lem:phase} tells us that the scalar product of the ground-state $\ket{\phi_0(s)}$ with its derivative vanishes.

First of all we want to argue why it is sufficient to only prove the case $E_0(s)=0$ for all $0\le s\le 1$. Consider the time evolution operator $U_0(s,s')$ of the Hamiltonian $H_0(s):=H(s)-E_0(s)$ with time $t:=sT$. Of course a state evolving according to $H(s)$ just differs in a phase in comparision to an evolution with $H_0(s)$:
\begin{align*}
U(s,0)\ket{\eta(0)}&=e^{i\alpha(s)} U_0(s,0)\ket{\eta(0)} \quad\quad\text{ with }\alpha(s)= T\int_0^s E_0(u) du \text{.}
\end{align*}
Assume the adiabatic theorem holds for $H_0(s)$ and define the new ground-state $\ket{\eta_0(s)} =  e^{i\alpha(s)}\ket{\eta(s)}$. Obviously $\ket{\eta_0(0)}=\ket{\eta(0)}$ and $\beta_0(s):= \int_0^s i \braket{\eta_0(s) | \eta'_0(s)} du = \beta(s)$. It follows 
\begin{align*}
\left\Vert U(1,0)\ket{\eta(0)} - e^{i\beta(s)}\ket{\eta(1)}\right\Vert
&= \left\Vert e^{i\alpha(s)}U_0(1,0)\ket{\eta_0(0)} - e^{i\beta_0(s)} e^{i\alpha(s)}\ket{\eta_0(1)}\right\Vert
\le \epsilon\text{,}
\end{align*}
so the adiabatic theorem also holds for $H(s)$. Hence, it is sufficient to prove the adiabatic theorem just for the case of zero ground-energy. So from now on assume $E_0(s)=0$ for all $0\le s \le 1$.

As before define
\begin{align*}
G(s):=&\sum\limits_{j=1}^{n-1} \frac{\ket{\phi_j(s)}\bra{\phi_j(s)}}{E_j(s)}\\
Q(s):=& \mathbb{I} - \ket{\phi_0(s)}\bra{\phi_0(s)}
\end{align*}
and keep in mind that $G(s)$ almost ``inverts'' $H(s)$:
\begin{align*}
G(s)H(s) = H(s)G(s)=\mathbb{I} - \ket{\phi_0(s)}\bra{\phi_0(s)}=Q(s)\text{.}
\end{align*}

For any state $\ket{\psi(s)}$ Schr\"odinger's equation gives
\begin{align*}
i \frac{d}{dt}\ket{\psi(t/T)} &= H(t/T)\ket{\psi(t/T}\quad\quad\quad | \, s=\frac{t}{T}, \, \frac{ds}{dt} = \frac{1}{T}\\
\frac{i}{T} \frac{d}{ds}\ket{\psi(s)} &= H(s)\ket{\psi(s)}\text{.}
\end{align*}
Consequently the unitary evolution operator $U(s,s')$ is the solution of
\begin{align*}
\frac{i}{T} \frac{d}{ds}U(s,s') &= H(s)U(s,s'), \quad U^\dagger(s,s')=U(s',s), \quad U(s,s) = \mathbb{I}\text{.}
\end{align*}
Taking the adjoint of the first equation leads to
\begin{align}
-\frac{i}{T} \frac{d}{ds}U(s',s) &= U(s',s)H(s)\text{.}\label{eq:adjointU}
\end{align}
Later we will need an expression for $U(s',s)$, so we multiply $G(s)$ from the right and get:
\begin{align}
-\frac{i}{T} \frac{d}{ds}U(s',s) G(s)&= U(s',s) \left(\mathbb{I} - \ket{\phi_0(s)}\bra{\phi_0(s)}\right)\nonumber\\
U(s',s) &= -\frac{i}{T} \frac{d}{ds}U(s',s)G(s) + U(s',s) \ket{\phi_0(s)}\bra{\phi_0(s)}\text{.}\label{eq:explicitU}
\end{align}
We want to bound the following expression:
\begin{align*}
\left\Vert U(1,0) \ket{\eta(0)} - e^{i\beta(s)}\ket{\eta(1)}\right\Vert
&= \Vert U(1,0)\ket{\phi_0(0)} - \ket{\phi_0(1)}\Vert\\
&= \bigg\Vert \int_0^1 \frac{d}{ds} \big(U(1,s)\ket{\phi_0(s)}\big) ds\bigg\Vert\\
&= \bigg\Vert \int_0^1 \underbrace{\frac{d}{ds}U(1,s)}_\eqref{eq:adjointU} \ket{\phi_0(s)} + U(1,s) \underbrace{\ket{\phi'_0(s)}}_{\substack{\text{lemma}\\\ref{lem:derivPhi}}} ds\bigg\Vert\\
&= \Bigg\Vert \int_0^1 iT U(1,s) \underbrace{H(s) \ket{\phi_0(s)}}_{=0} - \underbrace{U(1,s)}_\eqref{eq:explicitU} G(s) H'(s) \ket{\phi_0(s)} ds\Bigg\Vert\\
&= \Bigg\Vert \int_0^1 \Big(\frac{i}{T} \frac{d}{ds}U(1,s)\Big) \Big(G^2(s) H'(s) \ket{\phi_0(s)}\Big)\\
&\hspace{3.5em} - U(1,s) \ket{\phi_0(s)}\underbrace{\bra{\phi_0(s)} G(s)}_{=0} H'(s) \ket{\phi_0(s)} ds\Bigg\Vert\\
&= \Bigg\Vert \bigg[\frac{i}{T} U(1,s)G^2(s) H'(s) \ket{\phi_0(s)}\bigg]_{s=0}^{s=1}\\
&\hspace{3.5em}- \int_0^1 \Big(\frac{i}{T} U(1,s)\Big) \frac{d}{ds}\Big(G(s)^2 H'(s) \ket{\phi_0(s)}\Big) ds\Bigg\Vert\\
&\le \frac{1}{T}\bigg( \big\Vert  G(s)^2 H'(s)\big\Vert_{s=1} + \big\Vert  G(s)^2 H'(s)\big\Vert_{s=0}\\
&\hspace{3.5em}+ \int_0^1 \bigg\Vert \frac{d}{ds}\Big(G(s)^2 H'(s) \ket{\phi_0(s)}\Big)\bigg\Vert ds \bigg)\text{.}
\end{align*}
Using Lemma \ref{lem:integral} for the integral expression and substituting $\left\Vert G(s)\right\Vert = \frac{1}{\gamma(s)}$ leads to
\begin{align*}
\left\Vert U(1,0) \ket{\eta(0)} - e^{i\beta(s)}\ket{\eta(1)}\right\Vert
\le \frac{1}{T}\bigg(& \frac{1}{\gamma(1)^2} \left\Vert H'(s)\right\Vert_{s=1} + \frac{1}{\gamma(0)^2} \left\Vert H'(s)\right\Vert_{s=0}\\
&+ \int_0^1 \frac{5}{\gamma(s)^3} \left\Vert H'(s) \right\Vert^2 +\frac{1}{\gamma(s)^2} \left\Vert H''(s) \right\Vert ds \bigg)\text{.}
\end{align*}
Replacing $T$ by its definition finally results in
\begin{align*}
\left\Vert U(1,0) \ket{\eta(0)} - e^{i\beta(s)}\ket{\eta(1)}\right\Vert &\le \epsilon\text{,}
\end{align*}
as desired.
\end{proof}

\section{The quantum adiabatic theorem for projections}\label{sec:AQTproj}

By analogue to the last section we will present in this section the proof for the quantum adiabatic theorem for projection operators. The proof idea and most calculations will look very familar. The first two lemmata are the analogues of \ref{lem:derivPhi} and \ref{lem:integral}:

\begin{lem}\label{lem:derivP}
Let $H(s)$ be a finite dimensional, differentiable Hamiltonian with $\ket{\phi_j(s)}$, $0 \le j\le n-1$, its eigenstates corresponding to the eigenvalues $E_j(s)$ and $P(s)$ the projection operator onto its ground-states $\ket{\phi_i(s)}$, $0 \le i \le k-1$, with energy $E_i(s)=0$. Then it holds
\begin{align*}
P'(s) &= - G(s) H'(s) P(s) - P(s) H'(s) G(s)
\end{align*}
with
\begin{align*}
G(s):=\sum\limits_{j=k}^{n-1} \frac{\ket{\phi_j(s)}\bra{\phi_j(s)}}{E_j(s)}\text{.}
\end{align*}
\end{lem}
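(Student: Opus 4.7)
The plan is to derive the formula by exploiting three structural identities: $P(s)^2 = P(s)$, the annihilation identity $H(s)P(s) = P(s)H(s) = 0$, and the fact that $G(s)$ acts as a pseudo-inverse of $H(s)$ on the complement of the ground-space, i.e.\ $G(s)H(s) = H(s)G(s) = \mathbb{I} - P(s)$. Unlike the nondegenerate case (Lemma~\ref{lem:derivPhi}), I would avoid differentiating individual eigenvectors $\ket{\phi_i(s)}$, since their parameter dependence is only defined up to an arbitrary unitary on the degenerate ground-space; the projector $P(s)$ is the natural gauge-invariant object and can be differentiated directly.

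First I would differentiate $P(s)^2 = P(s)$, obtaining $P'(s)P(s) + P(s)P'(s) = P'(s)$. This decomposes $P'(s)$ into its two ``off-diagonal'' blocks in the $P \oplus (\mathbb{I}-P)$ splitting; equivalently, $(\mathbb{I}-P(s))P'(s) = P'(s)P(s)$ and $P(s)P'(s) = P'(s)(\mathbb{I}-P(s))$. Next I would differentiate $H(s)P(s) = 0$ to get
\begin{align*}
H(s)P'(s) = -H'(s)P(s),
\end{align*}
and then apply $G(s)$ from the left. Using $G(s)H(s) = \mathbb{I}-P(s)$, this yields $(\mathbb{I}-P(s))P'(s) = -G(s)H'(s)P(s)$, i.e.\ the ``lower'' off-diagonal block of $P'(s)$.

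Symmetrically, differentiating $P(s)H(s) = 0$ gives $P'(s)H(s) = -P(s)H'(s)$, and multiplying by $G(s)$ on the right produces $P'(s)(\mathbb{I}-P(s)) = -P(s)H'(s)G(s)$, which is the ``upper'' off-diagonal block. Summing the two block expressions via $P'(s) = P'(s)P(s) + P(s)P'(s) = (\mathbb{I}-P(s))P'(s) + P(s)P'(s)$ delivers the claimed identity.

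I do not anticipate a serious obstacle: the only subtlety is to recognize that $P'(s)$ is purely block off-diagonal (a consequence of $P^2=P$), so that the two ``inversion'' computations using $G(s)$ account for all of $P'(s)$. The calculation is essentially an algebraic rearrangement of identities that have already been used in the proof of Lemma~\ref{lem:derivPhi}, now in a basis-free form.
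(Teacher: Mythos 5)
Your proposal is correct and follows essentially the same route as the paper's proof: differentiate $P^2=P$ to isolate the off-diagonal blocks of $P'(s)$, differentiate $H(s)P(s)=0$ and $P(s)H(s)=0$, invert with $G(s)$ using $G(s)H(s)=H(s)G(s)=\mathbb{I}-P(s)$, and add the two resulting block identities. The only cosmetic difference is that the paper extracts the vanishing diagonal block via $P(s)P'(s)P(s)=0$ and a final multiplication by $P(s)$, whereas you use the equivalent identity $(\mathbb{I}-P(s))P'(s)=P'(s)P(s)$ directly.
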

\begin{proof}
First of all we observe that
\begin{align*}
\frac{d}{ds}P(s) = \frac{d}{ds} \big(P^2(s)\big) = P'(s) P(s) + P(s) P(s')
\end{align*}
and hence
\begin{align*}
P(s) P'(s) P(s) = 2 P(s) P'(s) P(s) = 0\text{.}
\end{align*}
We will use this equality in the next calculation:
\begin{alignat}{3}
&\frac{d}{ds} (H(s) P(s)) = 0 &&\nonumber\\
&H(s) P'(s) = - H'(s) P(s) \hspace{12em} &&|\, \text{ multiply }G(s)\text{ from the left}\nonumber\\
&(\mathbb{I}-P(s))P'(s) = - G(s) H'(s) P(s)  &&|\,\text{ multiply }P(s)\text{ from the right}\nonumber\\
&P'(s) P(s) - \underbrace{P(s) P'(s) P(s)}_{=0} = - G(s) H'(s) P(s)\text{.}\label{eq:P1}&&
\end{alignat}
Analogously
\begin{alignat}{3}
&\frac{d}{ds} (P(s) H(s)) = 0 &&\nonumber\\
&P'(s) H(s)= - P(s) H'(s) \hspace{12em} &&|\, \text{ multiply }G(s)\text{ from the right} \nonumber\\
&P'(s) - P'(s) P(s) = - P(s) H'(s) G(s)\text{.}\label{eq:P2}&&
\end{alignat}
Adding equations \eqref{eq:P1} and \eqref{eq:P2} results in
\begin{align*}
P'(s) &= - G(s) H'(s) P(s) - P(s) H'(s) G(s)\text{.}\qedhere
\end{align*}
\end{proof}

\begin{lem}\label{lem:integralProj}
The quantities defined in the previous lemma fulfill
\begin{align*}
\left\Vert \frac{d}{ds} \left( G(s)^2 \frac{dH(s)}{ds} P(s) U(s,1) \right) \right\Vert
\le 6 \left\Vert G(s)\right\Vert^3 \left\Vert \frac{dH(s)}{ds}\right\Vert^2 + \left\Vert G(s) \right\Vert^2 \left\Vert \frac{d^2H(s)}{ds^2}\right\Vert
\end{align*}
with $U(s,s')$ the time evolution operator of the Hamiltonian $H(s)$ with time $t=sT$.
\end{lem}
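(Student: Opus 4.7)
The plan is to mimic the structure of Lemma \ref{lem:integral}, but with two modifications tailored to the projection case: replacing the rank-one projector $\ket{\phi_0(s)}\bra{\phi_0(s)}$ by the full ground-space projector $P(s)$, and exploiting the additional factor $U(s,1)$ to discard a term via Schr\"odinger's equation.

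First I would re-derive a formula for $G'(s)$ following the argument used in Lemma \ref{lem:integral}. I set $\tilde{H}(s) := H(s) + P(s)$ and $Q(s) := \mathbb{I} - P(s)$; then $\tilde{H}(s)$ is invertible and $G(s) = Q(s)\tilde{H}(s)^{-1}Q(s)$. Differentiating, using $\tilde{H}^{-1}{}'= -\tilde{H}^{-1}\tilde{H}'\tilde{H}^{-1}$, and observing that $G(s)P(s) = P(s)G(s)=0$ — which makes $G\,\tilde{H}'G = G\,H'G$ just as before — I obtain, after substituting $P'(s)$ from Lemma \ref{lem:derivP}, the compact formula
\begin{align*}
\frac{dG(s)}{ds} = P(s)H'(s)G(s)^2 - G(s)H'(s)G(s) + G(s)^2 H'(s)P(s).
\end{align*}

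Next I apply the product rule to $\frac{d}{ds}\bigl(G^2 H' P\, U(s,1)\bigr)$, producing five terms: $G'GH'PU$, $GG'H'PU$, $G^2H''PU$, $G^2H'P'U$, and $G^2H'PU'$. The critical new ingredient, and the one not present in Lemma \ref{lem:integral}, is that Schr\"odinger's equation gives $\frac{d}{ds}U(s,1) = -iTH(s)U(s,1)$, so the last term equals $-iTG^2H'PHU$; but $P(s)H(s)=0$ since $P(s)$ projects onto the zero-eigenspace, so this term vanishes identically. This cancellation is what keeps the bound independent of $T$.

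The remaining step is routine bookkeeping with the triangle inequality and the submultiplicativity of the operator norm, using unitarity of $U(s,1)$, $\Vert P(s)\Vert = 1$, and the identities $GP = PG = 0$ which kill cross-terms in $G'GH'P$ and $GG'H'P$. Each of $G'GH'P$, $GG'H'P$ and $G^2H'P'$ reduces to exactly two surviving summands, each bounded by $\Vert G\Vert^3\Vert H'\Vert^2$, for a total of six; the term $G^2H''P$ contributes $\Vert G\Vert^2\Vert H''\Vert$. Summing yields the claimed bound $6\Vert G\Vert^3\Vert H'\Vert^2 + \Vert G\Vert^2\Vert H''\Vert$. The only real subtlety is tracking the correct signs and orientations in $P' = -GH'P - PH'G$ and checking that the $U'$ term really annihilates; everything else is essentially a repetition of the computation in Lemma \ref{lem:integral}, which is why the constant grows only from $5$ to $6$.
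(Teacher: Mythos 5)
Your proposal is correct and follows essentially the same route as the paper's proof: the same resolvent construction $\tilde{H}=H+P$ to derive $G'(s)$, the same substitution of $P'=-GH'P-PH'G$ from Lemma \ref{lem:derivP}, the same observation that $P(s)\frac{d}{ds}U(s,1)=\frac{T}{i}P(s)H(s)U(s,1)=0$ kills the fifth product-rule term, and the same count of six surviving cross-terms after $GP=PG=0$ eliminates the rest. (Incidentally, your sign $+G^2H'P$ in the formula for $G'$ is the consistent one; the paper writes $-G^2H'P$ there, but this is immaterial for the norm bound.)
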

\begin{proof}
According to Schr\"odinger's equation
\begin{align*}
i \frac{d}{dt}\ket{\psi(t/T)} &= H(t/T)\ket{\psi(t/T}\quad\quad\quad | \, s=\frac{t}{T}, \, \frac{ds}{dt} = \frac{1}{T}\\
\frac{i}{T} \frac{d}{ds}\ket{\psi(s)} &= H(s)\ket{\psi(s)}
\end{align*}
the time evolution operator $U(s,s')$ is the solution of
\begin{align}
\frac{i}{T} \frac{d}{ds}U(s,s') &= H(s)U(s,s'), \quad U^\dagger(s,s')=U(s',s), \quad U(s,s) = \mathbb{I}\text{.}\label{eq:UProj}
\end{align}

We keep this in mind for later and look at the operator
\begin{align*}
\tilde{H}(s) := H(s) + P(s)\text{,}
\end{align*}
which is invertible and fulfills like every invertible differentiable operator
\begin{align}
\frac{d\tilde{H}(s)^{-1}}{ds} = \left( \frac{d\tilde{H}(s)^{-1}}{ds} \tilde{H}(s) \right) \tilde{H}(s)^{-1} 
= - \tilde{H}(s)^{-1} \frac{d\tilde{H}(s)}{ds} \tilde{H}(s)^{-1}\text{.}\label{eq:invertHProj}
\end{align}
With the help of Lemma \ref{lem:derivP} we find for the derivative of $\tilde{H}(s)$
\begin{align*}
\tilde{H}'(s) = H'(s) + P'(s) = H'(s) - G(s) H'(s) P(s) - P(s) H'(s) G(s)\text{.}
\end{align*}
Since $G(s)P(s) = P(s)G(s)=0$ it follows
\begin{align}
G(s) \tilde{H}'(s) G(s) = G(s) H'(s) G(s)\text{.}\label{eq:derivHtildeProj}
\end{align}
With
\begin{align*}
Q(s):= \mathbb{I} - P(s)
\end{align*}
it follows that
\begin{align*}
G(s) = Q(s) \tilde{H}(s)^{-1} = \tilde{H}(s)^{-1} Q(s) = Q(s) \tilde{H}(s)^{-1} Q(s)
\end{align*}
and we can derive the following expression for the derivative of $G(s)$:
\begin{alignat*}{3}
\frac{dG(s)}{ds} &= \frac{d}{ds} \left( Q(s) \tilde{H}(s)^{-1} Q(s) \right) && \\
&= \frac{dQ(s)}{ds} \tilde{H}(s)^{-1} Q(s) + Q(s) \frac{d\tilde{H}(s)^{-1}}{ds} Q(s) + Q(s) \tilde{H}(s)^{-1} \frac{Q(s)}{ds} \hspace{3em} &&| \, \eqref{eq:invertHProj}\\
&=\frac{dQ(s)}{ds} G(s) - Q(s) \tilde{H}(s)^{-1} \frac{d\tilde{H}(s)}{ds} \tilde{H}(s)^{-1} Q(s) + G(s) \frac{dQ(s)}{ds} &&| \, \eqref{eq:derivHtildeProj}\\
&=-P'(s) G(s) - G(s) H'(s) G(s) - G(s) P'(s) &&|\, \text{Lemma } \ref{lem:derivP}\\
&= P(s) H'(s) G(s)^2 - G(s) H'(s) G(s) - G(s)^2 H'(s) P(s)\text{.} &&
\end{alignat*}

With the above expression for $G'(s)$ and the help of the triangle inequality we can derive our desired result:
\begin{alignat*}{3}
\bigg\Vert \frac{d}{ds} &\bigg( G(&&s)^2 \frac{dH(s)}{ds} P(s) U(s,1) \bigg) \bigg\Vert\\
&\le \bigg\Vert && \frac{dG(s)}{ds} G(s) \frac{dH(s)}{ds} P(s) U(s,1) + G(s)\frac{dG(s)}{ds} \frac{dH(s)}{ds} P(s) U(s,1)\\
&&& + G(s)^2 \frac{dH(s)}{ds} \underbrace{P'(s)}_{\substack{\text{Lemma }\\ \ref{lem:derivP}}} U(s,1) + G(s)^2 \frac{dH(s)}{ds} \underbrace{P(s) \frac{dU(s,1)}{ds}}_{\substack{\overset{\ref{eq:UProj}}{=} P(s) H(s) U(s,1)\frac{T}{i}\\ = 0}} \bigg\Vert
+\Vert G(s) \Vert^2 \left\Vert \frac{d^2H(s)}{ds^2} \right\Vert\\
&= \bigg\Vert&& \left(P(s) \frac{dH(s)}{ds} G(s)^3 \frac{dH(s)}{ds} P(s) U(s,1) - G(s) \frac{dH(s)}{ds}G(s)^2\frac{dH(s)}{ds} P(s) U(s,1) \right)\\
&&&+ \left(- G(s)^2 \frac{dH(s)}{ds}G(s) \frac{dH(s)}{ds} P(s)U(s,1) +  G(s)^3 \frac{dH(s)}{ds} P(s)\frac{dH(s)}{ds} P(s)U(s,1)\right)\\
&&& +\left(- G(s)^2 \frac{dH(s)}{ds}G(s) \frac{dH(s)}{ds} P(s)U(s,1) - G(s)^2 \frac{dH(s)}{ds}P(s) \frac{dH(s)}{ds} G(s)U(s,1) \right)\bigg\Vert \\
&&&+ \left\Vert G(s) \right\Vert^2 \left\Vert \frac{d^2H(s)}{ds^2} \right\Vert\\
&\le 6&& \left\Vert G(s) \right\Vert^3 \left\Vert\frac{dH(s)}{ds}\right\Vert^2 
+ \left\Vert G(s) \right\Vert^2 \left\Vert \frac{d^2H(s)}{ds^2} \right\Vert\text{.}
\hspace{16.2em}\qedhere
\end{alignat*}
\end{proof}

Now we combine the above results to finally prove the quantum adiabatic theorem for projection operators:

\begin{thm}[Quantum Adiabatic Theorem for projection operators]\label{thm:AQTproj}
Let $H(s)$ be a finite-dimensional Hamiltonian such that for all $0 \le s \le 1$ it is twice differentiable and let $P(s)$ be the projection operator onto the ground-space which is separated by an energy gap $\gamma(s)$ from the first excited state. Choose $\epsilon > 0$ arbitrarily and define
\begin{align*}
T :=
 \frac{2}{\epsilon} \left( \frac{1}{\gamma(0)^2} \left\Vert\frac{\mathrm{d} H}{\mathrm{d} s} \right\Vert_{s=0}
+ \frac{1}{\gamma(1)^2} \left\Vert\frac{\mathrm{d} H}{\mathrm{d} s} \right\Vert_{s=1}
+ \int_0^1 \mathrm{d}s \left( \frac{6}{\gamma(s)^3} \left\Vert\frac{\mathrm{d} H}{\mathrm{d} s} \right\Vert^2
+ \frac{1}{\gamma(s)^2} \left\Vert\frac{\mathrm{d}^2 H}{\mathrm{d} s^2} \right\Vert \right)
\right)\text{.}
\end{align*}
Consider a system evolving according to the Hamiltonian $H(t/T)$. Denote by $U\left(\frac{t_1}{T}, \frac{t_0}{T}\right)$ its time evolution operator. Then it holds:
\begin{align}
\left\Vert U(1,0) P(0) U(0,1) - P(1) \right\Vert \le \epsilon\text{.}\label{eq:PineqProj}
\end{align}
\end{thm}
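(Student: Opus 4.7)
My approach is to parallel the proof of Theorem \ref{thm:AQTstates}, with the ground-space projection $P(s)$ playing the role of $\ket{\phi_0(s)}$ and Lemmata \ref{lem:derivP} and \ref{lem:integralProj} replacing Lemmata \ref{lem:derivPhi} and \ref{lem:integral}. First I would reduce to the case of vanishing ground energy, $E_0(s)=0$ for all $s\in[0,1]$. Subtracting $E_0(s)\mathbb{I}$ from $H(s)$ multiplies the evolution operator by a global phase $e^{i\alpha(s)}$ with $\alpha(s)=T\int_0^s E_0(u)\,du$, and since $U(0,1)=U(1,0)^\dagger$ these phases cancel exactly in the operator $U(1,0)P(0)U(0,1)$. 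After this reduction one has $H(s)P(s)=P(s)H(s)=0$, so $H(s)G(s)=G(s)H(s)=Q(s):=\mathbb{I}-P(s)$ and, crucially, $G(s)=H(s)G(s)^2=G(s)^2H(s)$.

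Next I would write the quantity to be bounded as a telescoping integral,
\begin{align*}
U(1,0)P(0)U(0,1)-P(1)=-\int_0^1\frac{d}{ds}\bigl[U(1,s)P(s)U(s,1)\bigr]\,ds.
\end{align*}
By Schr\"odinger's equation, $\frac{d}{ds}U(1,s)=iT\,U(1,s)H(s)$ and $\frac{d}{ds}U(s,1)=-iT\,H(s)U(s,1)$, so the contributions from differentiating the evolution operators vanish because of $HP=PH=0$; only the $P'(s)$ piece survives, and Lemma \ref{lem:derivP} turns the remaining expression into
\begin{align*}
\int_0^1 U(1,s)\bigl[G(s)H'(s)P(s)+P(s)H'(s)G(s)\bigr]U(s,1)\,ds.
\end{align*}

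The main step is an integration by parts. Using the ``almost-inverse'' identities I would rewrite
\begin{align*}
U(1,s)G(s)=\tfrac{1}{iT}\tfrac{d}{ds}U(1,s)\cdot G(s)^2,\qquad G(s)U(s,1)=\tfrac{i}{T}G(s)^2\tfrac{d}{ds}U(s,1),
\end{align*}
and then integrate by parts in each of the two summands. The boundary contributions are controlled by $\|G(s)^2 H'(s)\|\le\|H'(s)\|/\gamma(s)^2$ at $s\in\{0,1\}$, while the interior integrands have exactly the shape required by Lemma \ref{lem:integralProj}, which bounds them by $\frac{6}{\gamma(s)^3}\|H'(s)\|^2+\frac{1}{\gamma(s)^2}\|H''(s)\|$. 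Since the second summand of the integrand is the operator adjoint of the first, the two contributions yield identical norm bounds; adding them produces the overall factor $2$ appearing in the definition of $T$, and substituting the prescribed expression for $T$ finally gives the claimed inequality.

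The main obstacle, compared to the states version, is the symmetric appearance of both terms $GH'P$ and $PH'G$: in Theorem \ref{thm:AQTstates} the analogue of the second term disappears thanks to $\bra{\phi_0}G=0$, whereas here the projection is sandwiched between two evolution operators, so both halves must be processed. This is precisely why Lemma \ref{lem:integralProj} is stated for the composite object $G^2H'P\,U(s,1)$ rather than for $G^2H'\ket{\phi_0}$, and once both summands are checked to yield the same bound the argument closes.
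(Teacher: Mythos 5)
Your proposal is correct and follows essentially the same route as the paper's proof: reduction to zero ground energy via phase cancellation, the telescoping integral for $U(1,s)P(s)U(s,1)$, Lemma \ref{lem:derivP} for $P'(s)$, the rewriting of $U(1,s)G(s)$ via the almost-inverse identity, integration by parts, and Lemma \ref{lem:integralProj} for the interior integrand. The only cosmetic difference is that the paper bounds the sum of the term and its hermitian conjugate by twice the norm of one term before integrating by parts, whereas you integrate by parts in both summands and add identical bounds afterwards — this yields the same factor $2$ in $T$.
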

\begin{proof}
Let $\ket{\phi_j(s)}$, $0 \le j\le n-1$, be the eigenstates of the Hamiltonian corresponding to the eigenvalues $E_j(s)$ and $\ket{\phi_i(s)}$, $0\le i \le k-1$ its ground-states with eigenvalue $E_i=0$.

The evolution operator $U_0(s,s')$ of the Hamiltonian $H_0(s):=H(s)-E_0(s)$ equals the evolution operatur $U(s,s')$ up to a phase:
\begin{align*}
U(s,0)&=e^{iT\int_0^s E_0(u) du} U_0(s,0)
\end{align*}
which cancels in the unitary transformation of an operator such as $P(0)$. Hence
\begin{align*}
\left\Vert U(1,0) P(0) U(0,1) - P(1)\right\Vert = \left\Vert U_0(1,0) P(0) U_0(0,1) - P(1)\right\Vert\text{.}
\end{align*}
Therefore it is sufficient to prove the adiabatic theorem just for the case of zero ground-energy. So let's assume from now on $E_i(s)=0$ for all $0\le i \le k-1$.

As before define
\begin{align*}
G(s):=&\sum\limits_{j=k}^{n-1} \frac{\ket{\phi_j(s)}\bra{\phi_j(s)}}{E_j(s)}\\
Q(s):=& \mathbb{I} - P(s)
\end{align*}
and keep in mind that $G(s)$ ``inverts'' $H(s)$ up to the ground-space:
\begin{align*}
G(s)H(s) = H(s)G(s)=\mathbb{I} - P(s)=Q(s)\text{.}
\end{align*}

Taking the adjoint of the first equation in \eqref{eq:UProj} leads to
\begin{align}
-\frac{i}{T} \frac{d}{ds}U(s',s) &= U(s',s)H(s)\text{.}\label{eq:adjointUProj}
\end{align}
Later we will need an expression for $U(s',s)$, so we multiply $G(s)$ from the right and get:
\begin{align}
-\frac{i}{T} \frac{d}{ds}U(s',s) G(s)&= U(s',s) \left(\mathbb{I} - P(s)\right)\nonumber\\
U(s',s) &= -\frac{i}{T} \frac{d}{ds}U(s',s)G(s) + U(s',s) P(s)\label{eq:explicitUProj}
\end{align}
We look for a bound on
\begin{align*}
\Vert U(&1,0) P(0) U(0,1) - P(1)\Vert\\
&= \bigg\Vert \int_0^1 \frac{d}{ds} \big(U(1,s)P(s) U(s,1)\big) ds\bigg\Vert\\
&= \bigg\Vert \int_0^1 \bigg(\underbrace{\frac{d}{ds}U(1,s)}_\eqref{eq:adjointUProj}  P(s) U(s,1) +h.c.\bigg) + U(1,s) \underbrace{P'(s)}_{\substack{\text{Lemma}\\\ref{lem:derivP}}} U(s,1) ds\bigg\Vert\\
&= \Bigg\Vert \int_0^1 \Big(iT U(1,s) \underbrace{H(s) P(s)}_{=0} U(s,1) + h.c.\Big)
- \Big( \underbrace{U(1,s)}_{\eqref{eq:explicitUProj}} G(s) H'(s) P(s) U(s,1) + h.c. \Big) ds\Bigg\Vert\\
&= 2\Bigg\Vert \int_0^1 \Big(\frac{i}{T} \frac{d}{ds}U(1,s)\Big) \Big(G^2(s) H'(s) P(s) U(s,1)\Big)\\
& \hspace{4em}- U(1,s) \underbrace{P(s) G(s)}_{=0} H'(s) P(s)U(s,1) ds\Bigg\Vert\\
&= 2\Bigg\Vert \bigg[\frac{i}{T} U(1,s)G^2(s) H'(s) P(s) U(s,1)\bigg]_{s=0}^{s=1}\\
&\hspace{4em}- \int_0^1 \Big(\frac{i}{T} U(1,s)\Big) \frac{d}{ds}\Big(G(s)^2 H'(s) P(s) U(s,1)\Big) ds\Bigg\Vert\\
&\le \frac{2}{T}\bigg(\big\Vert  G(s)^2 H'(s)\big\Vert_{s=1} + \big\Vert  G(s)^2 H'(s)\big\Vert_{s=0}\\
&\hspace{4em}+ \int_0^1 \bigg\Vert \frac{d}{ds}\Big(G(s)^2 H'(s) P(s)U(s,1)\Big)\bigg\Vert ds \bigg)\text{.}
\end{align*}
Using Lemma \ref{lem:integralProj} for the integral expression and substituting $\left\Vert G(s)\right\Vert = \frac{1}{\gamma(s)}$ leads to
\begin{align*}
\Vert U(1,0) P(0) U(0,1) - P(1)\Vert
\le \frac{2}{T}\bigg(& \frac{1}{\gamma(1)^2} \left\Vert H'(s)\right\Vert_{s=1} + \frac{1}{\gamma(0)^2} \left\Vert H'(s)\right\Vert_{s=0}\\
&+ \int_0^1 \frac{6}{\gamma(s)^3} \left\Vert H'(s) \right\Vert^2 +\frac{1}{\gamma(s)^2} \left\Vert H''(s) \right\Vert ds \bigg)
\end{align*}
and replacing $T$ by its definition finally results in
\begin{align*}
\Vert U(1,0) P(0) U(0,1) - P(1)\Vert &\le \epsilon\text{,}
\end{align*}
as desired.
\end{proof}

\chapter{Quantum Circuit Simulation via Adiabatic Quantum Computation}\label{chap:model}
\ihead{CHAPTER 5.\,\, QUANTUM CIRCUIT SIMULATION VIA AQC}

\section{Standard graph Hamiltonians for circuit simulations}

In Section \ref{sec:AQCrequirements} we gave a list of requirements that a Hamiltonian has to fulfill in order to be suitable for an efficient adiabatic quantum computation. We formulated the objective of optimizing the evolution time $T$, which is determined by the derivatives of the Hamiltonian and the gap between the ground-space and the first excited state. Unfortunately it can sometimes be a difficult task to compute the gap of a Hamiltonian. But for a large family of Hamiltonians we can easily derive the gap by taking advantage of known results from spectral graph theory: Normalized Laplacian of parallel transport networks are hermitean according to their definition and their gap, according to Corollary \ref{cor:spectrumPTN}, exactly equals the gap of the underlying graph family. As a further advantage the gap is even time-independant as long as only the edge unitaries are time-dependant.

The question is now how the normalized Laplacian of a parallel transport network can encode a computation. Let's assume that the computation is already formalized in the most natural way, that is, in terms of a quantum circuit depending on the computation input. For a circuit simulation via adiabatic quantum computation we require that after a measurement on any final ground-state of the Hamiltonian and the possible discarding of a subsystem we obtain a correct computation output with nonvanishing probability (remember that in Section \ref{sec:circuits} we allowed a whole set of valid circuit inputs that lead to correct computation outputs). Therefore we require that our normalized Laplacian implements the corresponding time-dependant circuit to an identity extension of the original circuit (recall Definitions \ref{def:correspondingCircuit} and \ref{def:identityExtension}). Notice that we distinguish between ``simulating'' and ``implementing'' a circuit:

\begin{definition}\label{def:implement}
A connected parallel transport network $\mathcal{G}=(G,\mathcal{T},\mathcal{U},\pi)$ with more than one vertex \textsf{implements} a quantum circuit $C'=\big(n, (U_1, \dots U_{L'})\big)$, iff $\mathcal{T}: V \rightarrow \{0, \dots, L'\}$ is surjective and $\pi(\mathcal{U}(t))= U_t$ for all $1 \le t \le L'$.

A parallel transport network \textsf{simulates} a quantum circuit $C$, iff it implements the corresponding time-dependant circuit to an identity extension of $C$.
\end{definition}

Requiring the time map $\mathcal{L}$ of the parallel transport network to be surjective implies $V_t\ne \emptyset$ for all $0\le t \le L'$. We know from Corollary \ref{cor:history} that the ground-space of our parallel transport network is spanned by the history states
\begin{align*}
\ket{\eta(s)}= \frac{1}{\sqrt{\vol(V)}} \sum_{t=0}^{L'} \sum_{v \in V_t} \sqrt{d_v} \ket{v} \otimes U_t(s) \dots U_1(s) \ket{x}\text{,}\quad x\in \{0,1\}^n\text{.}
\end{align*}
So if we measure on the graph subsystem of any final ground-state in the vertex basis and trace out this subsystem afterwards we obtain a linear combination of the states $U_{L'} \dots U_1 \ket{x}$, $x\in \{0,1\}^n$, with a nonvanishing probability. Since there is no restriction on $\ket{x}$ this can be any $n$-qubit state, but we only want to obtain a correct computation output, which equals $U_{L'} \dots U_1 \ket{x}$ with $x \in S$ a valid circuit input. So we add a so-called penalty term to the normalized Laplacian of the parallel transport network such that only history states with $x\in S$ remain as ground-states.

In Section \ref{sec:AQCrequirements} we required the Hamiltonian to be defined on a qubit space. To avoid the restriction to graphs whose vertex set size is a power of $2$, we consider only a subspace of the whole Hilbert space as actual space of the parallel transport network and introduce a further penalty term that ensures that the gap of the Hamiltonian is determined by the gap of this subspace.

All these ideas together lead now to the definition of the type of Hamiltonian for adiabatic quantum computation we want to concentrate on for the rest of this thesis:
\begin{definition}\label{def:standard}
A \textsf{standard graph Hamiltonian} for simulating a quantum circuit $C=\big(n, (\tilde{U}_1, \dots \tilde{U}_{L})\big)$ via adiabatic quantum computation is defined on the tensor product of a $m$-qubit \textsf{extended graph space} and a $n$ qubit \textsf{computation space} and has the form
\begin{align*}
H(s) = H_{prop}(s) + H_{in} + H_{graph}\text{.}
\end{align*}

The Hamiltonian $H(s)$ leaves the subspace $D$ of \textsf{proper network states} and the orthogonal subspace $D^\perp$ invariant. $H_{prop\vert D}(s)$ is the normalized Laplacian of a parallel transport network $\mathcal{G}=(V,E,w,\mathcal{T},\mathcal{U},\pi)$ implementing the quantum circuit $C(s)=\big(n, (U_1(s),\dots U_{L'}(s))\big)$, the corresponding time-dependant circuit to an identity extension  of $C$ with $L_i$ initial and $L_f$ final identity gates.

The term $H_{in}$ is positive semi-definite and the null-space of $H_{in\vert D}$ is spanned by 
\begin{align*}
\big\{ \ket{v} \otimes \ket{x} \,\big\vert\, \mathcal{T}(v) \le L_i,\, x \in S\big\}
\cup \big\{ \ket{v} \otimes \ket{x} \,\big\vert\, \mathcal{T}(v)>L_i,\, x \in \{0,1\}^n\big\}
\end{align*}
with $S$ the set of valid circuit inputs.

$H_{graph}$ vanishes on $D$ and has a positive expectation value for all $\ket{\phi} \in D^\perp$.
\end{definition}

\begin{definition}
The graph $G$ of the previous definition is called the \textsf{underlying graph} of the Hamiltonian $H(s)$, $\mathcal{G}$ the \textsf{underlying parallel transport network}. The vertex set $\bigcup_{t=0}^{L_i} V_t$ is called the \textsf{set of initial vertices}, the set $\bigcup_{t=L-L_f}^{L} V_t$ the \textsf{set of final vertices}.
\end{definition}

The orthonormal vertex states $\ket{v}$, $v\in V$ can no longer be assumed to span the whole $m$-qubit extended graph space, hence we define a new extended basis:
\begin{definition}
Given a standard graph Hamiltonian with underlying graph $G=(V,E,w)$ the extension of the orthonormal vertex states $\ket{v}$, $v\in V$, to a orthonormal basis of the $m$-qubit extended graph space is called the \textsf{extended vertex basis}.
\end{definition}

The null-space of the penalty term $H_{in}$ is defined such that the null-space of the whole Hamiltonian $H(s)$ is spanned by
\begin{align}
\ket{\eta(s)}= \frac{1}{\sqrt{\vol(V)}} \sum_{t=0}^{L'} \sum_{v \in V_t} \sqrt{d_v} \ket{v} \otimes U_t(s) \dots U_1(s) \ket{x}\text{,}\quad x\in S\text{,}\label{eq:finalGroundstates}
\end{align}
as wanted. Moreover the null-space of $H_{in}$ has the above definition since it can be realized by a term of the form
\begin{align*}
H_{in}= \sum_{t=0}^{L_i} \sum_{v \in V_t}\sum_{x\notin S} a_{v,x} \ket{v}\bra{v} \otimes \ket{x}\bra{x}\text{,}
\end{align*}
which is often the easiest expression for $H_{in}$ that can be written in local terms. As we will see later the gap of the Hamiltonian $H(s)$ will grow with the number of vertex projections $\ket{v}\bra{v}$ in the above sum. But the idea to simply enlarge the sum over relevantly many vertex projections $\ket{v}\bra{v}$ with $\mathcal{T}(v)> L_i$ causes locality problems, since the vertex projections would have to be tensored by $U_t(s) \dots U_1(s) \ket{x}\bra{x}U_t^\dagger(s) \dots U_1^\dagger(s)$, $x\in \overline{S}$, for arbitrary circuits in order to obtain the same null-space of the Hamiltonian $H(s)$. The next idea that naturally comes to mind is to change the circuit by inserting the inverse gates $U_t^\dagger(s), \dots U_1^\dagger(s)$ after the gate $U_t(s)$ and to include the terms $\ket{v}\bra{v}\otimes\ket{x}\bra{x}$ with $v$ a vertex belonging to the time step $2t$ into the above sum. But instead of executing and then inverting again a part of the computation one could also substitute these gates by identity gates and raise the value of $L_i$, which recovers our orignal construction of $H_{in}$. As long as we don't have any further information about the specific circuit this construction seems to be the simplest and best one possible.

\section{Specification of an efficient adiabatic quantum computation with standard graph Hamiltonians}\label{sec:circuitSimReq}

In this section we want to discuss the requirements for standard graph Hamiltonians used for an efficient quantum circuit simulation via adiabatic quantum computation and derive some useful expressions for the crucial complexity quantities. The first complexity quantity we want to look at is the probability to obtain a correct computation output at the end of the adiabatic quantum computation. From expression \eqref{eq:finalGroundstates} for the final ground-states it follows directly:
\begin{prop}\label{prop:outputProb}
Let $H(s)$ be a standard graph Hamiltonian according to Definition \ref{def:standard}. Measuring in the extended vertex basis on the extended graph subsystem of a final ground-state and discarding the graph subsystem afterwards, a correct computation output is obtained with probability
\begin{align*}
p &= \frac{\sum\limits_{t=L'-L_f}^{L'} \vol(V_t)}{\vol(V)}\text{.}\hspace{29.1em}\qed
\end{align*}
\end{prop}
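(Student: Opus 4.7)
My plan is to read off the probability directly from the explicit form of the final ground-states in equation \eqref{eq:finalGroundstates}. First I would note that a general final ground-state $\ket{\Psi}$ of $H(1)$ can be written as $\ket{\Psi} = \sum_{x \in S} c_x \ket{\eta_x(1)}$ with $\sum_{x \in S}|c_x|^2 = 1$, where $\ket{\eta_x(1)}$ denotes the history state in \eqref{eq:finalGroundstates} for the fixed valid input $x$. Since $\ket{\Psi}$ lies in the proper network subspace $D$ (otherwise $H_{graph}$ would contribute positive energy), measuring in the extended vertex basis on the graph factor gives zero probability to any basis vector outside $\{\ket{v}\}_{v\in V}$, so only genuine vertex outcomes contribute.

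Next I would compute, for each $v \in V$, the outcome probability $p_v$ of the vertex $\ket{v}$. Projecting $\ket{\Psi}$ onto $\ket{v}$ in the graph factor and substituting \eqref{eq:finalGroundstates} yields the unnormalized computation-space state
\begin{align*}
\frac{\sqrt{d_v}}{\sqrt{\vol(V)}}\, U_{\mathcal{T}(v)}(1)\cdots U_1(1) \sum_{x\in S} c_x \ket{x}.
\end{align*}
Using the unitarity of the partial gate product together with the orthonormality of $\{\ket{x}\}_{x\in S}$, the squared norm collapses to $p_v = d_v/\vol(V)$, \emph{independently} of the coefficients $c_x$.

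The key observation is then that the network implements the corresponding time-dependent circuit of an identity extension of $C$ with $L_i$ initial and $L_f$ final identity gates, so $L' = L_i + L + L_f$ and for every vertex $v$ with $\mathcal{T}(v) \ge L' - L_f$ all remaining gates $U_{\mathcal{T}(v)+1}(1),\dots,U_{L'}(1)$ act as the identity. After normalization, the residual computation-space state therefore equals $U_{L'}(1)\cdots U_1(1)\sum_{x\in S} c_x \ket{x}$, which by linearity of the circuit action is a correct computation output on a superposition of valid inputs. For $\mathcal{T}(v) < L' - L_f$ the computation has not yet completed and, in general, the residual state is not a correct output. Summing $p_v$ over the final vertices gives
\begin{align*}
p = \sum_{t=L'-L_f}^{L'} \sum_{v \in V_t} \frac{d_v}{\vol(V)} = \frac{\sum_{t=L'-L_f}^{L'} \vol(V_t)}{\vol(V)},
\end{align*}
as claimed. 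The only delicate point is the coefficient-independence of $p_v$; once that is verified, the summation step is immediate from the definition of $\vol(V_t)$.
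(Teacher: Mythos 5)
Your argument is correct and is precisely the computation the paper leaves implicit: the proposition is stated in the paper as following ``directly'' from the explicit form of the final ground-states in equation \eqref{eq:finalGroundstates}, with no written proof, and your decomposition into history states, the observation that $p_v = d_v/\vol(V)$ independently of the coefficients, and the identification of the final-vertex outcomes with the completed circuit output via the trailing identity gates is exactly the intended verification. No gaps.
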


The next important efficiency quantity is the gap of the Hamiltonian. The contruction of a standard graph Hamiltonian allows us first of all to break down the gap to the gap on the subspace $D$ of proper network states:
\begin{lem}\label{lem:gaps}
Denote by $\gamma(H)$ the energy gap of a Hamiltonian $H$ between its ground- and its first excited eigenstate. Let $H(s)$ be a standard graph Hamiltonian according to Definition \ref{def:standard}. Then it holds
\begin{align*}
\gamma\big(H(s)\big)=\min\left\{\gamma(H_{graph}),\gamma\left(H_{prop\vert D}(s) + H_{in \vert D}\right) \right\}\text{.}
\end{align*}
\end{lem}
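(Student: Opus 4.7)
The plan is to exploit the invariance of $D$ and $D^\perp$ under $H(s)$, which is explicit in Definition 5.1, to block-diagonalise $H(s) = H(s)|_D \oplus H(s)|_{D^\perp}$. Once this is done, the spectrum of $H(s)$ is the disjoint union of the spectra of the two restrictions, so both the ground energy and the first excited energy can be read off blockwise.

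On the block $D$, the summand $H_{graph}$ vanishes by hypothesis, so $H(s)|_D = H_{prop|D}(s) + H_{in|D}$. The history states $\ket{\eta}$ with $x\in S$ from Corollary 2.32 lie in $D$ and are simultaneously annihilated by $H_{prop|D}(s)$ (they span the null-space of the parallel transport network's normalized Laplacian) and by $H_{in|D}$ (the valid-circuit-input condition $x\in S$ exactly matches the null-space specification of $H_{in|D}$ in Definition 5.1). Hence $H(s)|_D$ has ground energy $0$ with these history states in the ground space, and by definition the first excited level sits at $\gamma\bigl(H_{prop|D}(s) + H_{in|D}\bigr)$ above $0$.

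On the block $D^\perp$, both $H_{prop|D^\perp}(s)$ (as the restriction of a normalized Laplacian to an invariant subspace) and $H_{in|D^\perp}$ (by positive semi-definiteness in Definition 5.1) are positive semi-definite. Moreover $H_{graph}$ is positive semi-definite with ground space containing $D$, so its spectrum on $D^\perp$ consists precisely of its positive eigenvalues and is thus bounded below by its gap: $H_{graph}|_{D^\perp} \ge \gamma(H_{graph})\,\mathbb{I}_{D^\perp}$. Adding the other two non-negative terms preserves this inequality, giving $H(s)|_{D^\perp} \ge \gamma(H_{graph})\,\mathbb{I}_{D^\perp}$; equivalently, every eigenvalue of $H(s)|_{D^\perp}$ is at least $\gamma(H_{graph})$.

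Combining both blocks the ground energy of $H(s)$ is $0$, attained in $D$, and the lowest nonzero eigenvalue is bounded below by $\min\bigl\{\gamma(H_{graph}),\,\gamma(H_{prop|D}(s) + H_{in|D})\bigr\}$. For the matching upper bound the second argument is easy — any eigenvector of $H_{prop|D}(s)+H_{in|D}$ realizing its gap lifts to an eigenvector of $H(s)$ inside $D$ with the same eigenvalue — while for the first argument one invokes the complementary fact that $H_{graph}$ attains the value $\gamma(H_{graph})$ on some vector of $D^\perp$ that simultaneously vanishes against $H_{prop}$ and $H_{in}$; this is where the main subtlety lies, and without further structural input the $\le$ direction is the step I would have to examine most carefully. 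Once both inequalities are in hand the stated equality follows.
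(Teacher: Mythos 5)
Your block decomposition of $H(s)$ into $H(s)_{\vert D}\oplus H(s)_{\vert D^\perp}$ is exactly the route the paper takes: it likewise notes that $D$ and $D^\perp$ are invariant and hence spanned by eigenvectors of $H(s)$, and then does casework on whether the eigenvector realising the gap lies in $D$ (giving $\gamma(H(s))=\gamma(H_{prop\vert D}(s)+H_{in\vert D})$) or in $D^\perp$ (giving $\gamma(H(s))\ge\braket{n|H_{graph}|n}\ge\gamma(H_{graph})$). Your lower-bound argument is the same in substance, and your observation that $H(s)_{\vert D^\perp}\ge\gamma(H_{graph})\,\mathbb{I}_{D^\perp}$ because the other two summands are positive semi-definite is a slightly cleaner way of phrasing the paper's first case.

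The subtlety you flag at the end is genuine, and it is worth noting that the paper's own proof does not resolve it either: for the stated equality one would need $H_{graph}$ to attain the value $\gamma(H_{graph})$ on a vector of $D^\perp$ on which $H_{prop}(s)+H_{in}$ vanishes, and Definition \ref{def:standard} imposes no such structure on $H_{prop\vert D^\perp}(s)$ or $H_{in\vert D^\perp}$. The paper's two cases only yield $\gamma(H(s))\ge\min\left\{\gamma(H_{graph}),\gamma(H_{prop\vert D}(s)+H_{in\vert D})\right\}$, which is also the form actually used downstream (the requirements box in Section \ref{sec:circuitSimReq} states the bound with $\ge$). So your proposal establishes everything the paper's proof establishes; neither proves the claimed equality, and you are right that without further assumptions only the inequality holds.
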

\begin{proof}
Firstly, we know that $H(s)$ has ground-energy $0$, hence the gap $\gamma(H(s))$ equals its first nonzero eigenvalue. Furthermore as $D$ is the null-space of $H_{graph}$ the gap $\gamma(H_{graph})$ is the lowest expectation value of $H_{graph}$ regarding any vector from $D^\perp$.

According to Definition \ref{def:standard} $H(s)$ leaves the subspace of proper network states $D$ and its orthogonal subspace $D^\perp$ invariant, hence they are both spanned by eigenvectors of $H(s)$. If the corresponding eigenvector $\ket{n}$ to $\gamma(H(s))$ is from $D^\perp$, the following clearly holds:
\begin{align*}
\gamma(H(s)) \ge \braket{n |H_{graph} | n} \ge \gamma(H_{graph})\text{.}
\end{align*}
If on the other hand $\ket{n}\in D$, then obviously
\begin{align*}
\gamma(H(s)) =\gamma\left(H_{prop\vert D}(s) + H_{in\vert D}\right)\text{,}
\end{align*}
leading to the desired bound.
\end{proof}

The idea is to define $H_{graph}$ for all $\ket{\Psi} \in D^\perp$ with expectation values higher than the gap of $H_{prop\vert D}(s) + H_{in\vert D}$. Then this gap determines the gap of $H(s)$. We know that the gap of $H_{prop \vert D}(s)$ equals the spectral gap of the parallel transport network, but to calculate the gap of the sum $H_{prop\vert D}(s) + H_{in\vert D}$ we need the next lemma which is a consequence from a lemma by Kitaev presented in \cite{npsurvey}.

\begin{lem}\label{lem:angle}
Let $H_1$ and $H_2$ be two Hermitean positive semidefinite matrices with nonempty null-spaces $N_1$ and $N_2$ and $\lambda_1$ and $\lambda_2$ the lowest nonzero eigenvalues, respectively. Define $\mu:= \min(\lambda_1, \lambda_2)$ and denote by $\theta$ the angle between $N_1$ and $N_2 \setminus N_1\ne \emptyset$. Then, the lowest non-zero eigenvalue of $H_1 + H_2$, denoted by $\lambda(H_1 + H_2)$, can be bounded in the following way:
\begin{align*}
\mu \sin^2 \left(\frac{\theta}{2}\right) \le \lambda(H_1+H_2) \le \Vert H_1\Vert \sin^2(\theta)\text{.}
\end{align*}
\end{lem}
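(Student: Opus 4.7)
The plan is to treat the two bounds separately, using two standard ingredients: first, the operator inequality $H \ge \lambda(\mathbb{I} - P_N)$ for any positive semi-definite Hermitean $H$ with null-space $N$ and lowest non-zero eigenvalue $\lambda$ (which holds because $H$ restricted to $N^\perp$ has spectrum starting at $\lambda$); and second, Jordan's lemma for two projectors, which decomposes the ambient Hilbert space into jointly invariant subspaces of dimension at most two. With these in hand, the lemma reduces essentially to bounding the spectrum of a combination of projectors.

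For the lower bound, I would first apply $H_i \ge \lambda_i(\mathbb{I} - P_{N_i}) \ge \mu(\mathbb{I} - P_{N_i})$ to deduce
\begin{align*}
H_1 + H_2 \ge \mu\bigl(2\mathbb{I} - P_{N_1} - P_{N_2}\bigr)\text{.}
\end{align*}
Since $H_1, H_2 \succeq 0$, the null-space of $H_1 + H_2$ is exactly $N_1 \cap N_2$, so by the variational characterisation
\begin{align*}
\lambda(H_1+H_2) = \min_{0 \ne \ket{\psi} \perp N_1 \cap N_2} \frac{\braket{\psi | H_1+H_2 | \psi}}{\braket{\psi | \psi}}
\end{align*}
it is enough to bound the smallest non-zero eigenvalue of $2\mathbb{I} - P_{N_1} - P_{N_2}$. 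Using Jordan's decomposition, the restriction of this operator to $(N_1 \cap N_2)^\perp$ has spectrum consisting of $1$, $2$ and values $1 - \cos\theta_i$ arising from the two-dimensional invariant blocks, where the $\theta_i$ are the principal angles between $N_1$ and $N_2$. The smallest such quantity is $1-\cos\theta = 2\sin^2(\theta/2)$, which gives the claimed lower bound (up to the constant convention used for $\theta$).

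For the upper bound I would construct an explicit trial vector. Choose $\ket{\beta} \in N_2 \setminus N_1$ whose angle with $N_1$ realises $\theta$, and set $\ket{\beta_1} := \ket{\beta} - P_{N_1 \cap N_2}\ket{\beta}$. Because $N_1 \cap N_2 \subseteq N_2$, the vector $\ket{\beta_1}$ still lies in $N_2$, is orthogonal to $N_1 \cap N_2$ by construction, and is non-zero since $\beta \notin N_1$. Then $\braket{\beta_1 | H_2 | \beta_1} = 0$, and using $H_1 \succeq 0$ together with $H_1\ket{v} = 0$ for $\ket{v} \in N_1$ gives
\begin{align*}
\braket{\beta_1 | H_1 | \beta_1} = \braket{P_{N_1^\perp}\beta_1 | H_1 | P_{N_1^\perp}\beta_1} \le \Vert H_1\Vert \cdot \Vert P_{N_1^\perp}\beta_1\Vert^2 = \Vert H_1\Vert \sin^2(\theta) \cdot \Vert \beta_1\Vert^2\text{.}
\end{align*}
Feeding this into the variational characterisation then yields the upper bound.

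The main obstacle I anticipate is the geometric step in the lower bound: one must invoke the simultaneous two-projector block decomposition carefully, identify the $\theta$ in the lemma's informal description with the smallest principal angle $\theta_i$ of that decomposition, and verify that the Rayleigh quotient minimum of $2\mathbb{I} - P_{N_1} - P_{N_2}$ over $(N_1 \cap N_2)^\perp$ is indeed $1 - \cos\theta$. Once this geometric content is secured, the remainder is a routine operator-inequality argument paired with a Rayleigh quotient estimate.
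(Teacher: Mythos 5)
Your proof is correct. The upper bound is essentially the paper's own argument: both exhibit a trial vector in $N_2 \ominus (N_1\cap N_2)$ realising the angle $\theta$ with $N_1$ and estimate its Rayleigh quotient (your extra step of subtracting $P_{N_1\cap N_2}\ket{\beta}$ is vacuous once $N_2\setminus N_1$ is read, as the paper does, as the orthocomplement of $N_1\cap N_2$ inside $N_2$). The lower bounds, however, genuinely differ. The paper replaces $H_2$ by $\hat{H}_2 := H_2 + \Vert H_1+H_2\Vert P_{N_1\cap N_2}$, which lifts the common null-space to the top of the spectrum so that $\lambda(H_1+H_2)$ becomes the \emph{smallest} eigenvalue of $H_1+\hat{H}_2$; it then bounds $\braket{\Psi | H_1+\hat{H}_2 | \Psi}$ for an arbitrary unit vector by the elementary observation that $\ket{\Psi}$ must make angle at least $\theta/2$ with $N_1$ or with $\hat{N}_2$, and simply discards the other summand. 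You instead use $H_i \ge \mu(\mathbb{I}-P_{N_i})$ and diagonalise $2\mathbb{I}-P_{N_1}-P_{N_2}$ on $(N_1\cap N_2)^\perp$ via the two-projector (Jordan) decomposition. The paper's route is self-contained and needs no structure theory for pairs of projectors; yours requires the block decomposition and the identification of the lemma's $\theta$ with the smallest nonzero principal angle --- the point you rightly flag, and which does hold, since the $N_1$-projection of any vector in $N_2\ominus(N_1\cap N_2)$ is again orthogonal to $N_1\cap N_2$ --- but it rewards you with the sharper constant $2\mu\sin^2(\theta/2)$ (Kitaev's original bound), of which the stated $\mu\sin^2(\theta/2)$ is an immediate consequence. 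One cosmetic remark: each $2\times 2$ block contributes both $1-\cos\theta_i$ and $1+\cos\theta_i$ to the spectrum of $2\mathbb{I}-P_{N_1}-P_{N_2}$, not only the former, though this does not affect the minimum.
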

\begin{proof}
Define $\hat{H}_2:= H_2 + \Vert H_1+ H_2 \Vert P_N$ with $P_N$ the projection operator onto $N:= N_1\cap N_2$. The elements of $N$ are the eigenvectors of $H_1 + H_2$ corresponding to the eigenvalue $0$ and eigenvectors of $H_1+ \hat{H}_2$ corresponding to the highest eigenvalue. Since $H_1+H_2$ and $H_1 + \hat{H}_2$ are diagonalizable in the same basis, the rest of the spectrum of $H_1 + \hat{H}_2$ compared to $H_1 + H_2$ is unchanged. Hence the lowest eigenvalue of $H_1 + \hat{H}_2$ equals the lowest nonzero eigenvalue of $H_1 + H_2$.

Consequently it is sufficient to show that
\begin{alignat*}{3}
1) \quad&\mu \sin^2 \left(\frac{\theta}{2}\right) \le \braket{\Psi | H_1+\hat{H}_2 | \Psi}
\quad &&\text{ for all unit vectors } \ket{\Psi}\\
2) \quad &\braket{\phi | H_1+\hat{H}_2 | \phi} \le \Vert H_1\Vert \sin^2(\theta)
\quad&&\text{ for some unit vector } \ket{\phi}\text{.}
\end{alignat*}

We denote the null-space of $\hat{H}_2$ by $\hat{N}_2 = N_2 \setminus N_1$. The angle between the null-spaces of $H_1$ and $\hat{H}_2$ is per definition $\theta$ and $\lambda_2$ is not only the lowest nonzero eigenvalue of $H_2$ but also the lowest nonzero eigenvalue of $\hat{H}_2$.

We first show the lower bound. Let $\ket{\Psi}$ be an arbitrary unit vector. The angle between $\ket{\Psi}$ and one of the null-spaces $N_1$ or $\hat{N}_2$ is at least $\frac{\theta}{2}$. W.l.o.g. assume that $N_1$ is this space. We write $\ket{\Psi} = \ket{n_1} + \ket{n_1^\perp}$ with $\ket{n_1}$ the projection of $\ket{\Psi}$ onto $N_1$ and $\ket{n_1^\perp}$ the projection onto the orthogonal subspace.

Now we can approximate
\begin{align*}
\braket{\Psi | H_1 + \hat{H}_2 | \Psi} &\ge \braket{\Psi | H_1 | \Psi}\\
&= \braket{n_1^\perp | H_1 | n_1^\perp}\\
&\ge \mu \left\Vert \ket{n_1^\perp} \right\Vert^2\\
&\ge \mu \sin^2\left( \frac{\theta}{2}\right)
\end{align*}
where $\left\Vert \ket{n_1^\perp} \right\Vert \ge \sin\left( \frac{\theta}{2}\right)$ because the angle between $\ket{\Psi}$ and $N_1$ is at least $\frac{\theta}{2}$.

It remains to show the upper bound. Let $\ket{\phi} \in \hat{N}_2$ be the unit vector with angle $\theta$ to $N_1$. We again write $\ket{\phi} = \ket{n_1} + \ket{n_1^\perp}$ with $\ket{n_1}$ the projection of $\ket{\phi}$ onto $N_1$ and $\ket{n_1^\perp}$ the projection onto the orthogonal subspace. We can now bound the expectation value:
\begin{align*}
\braket{\phi | H_1 + \hat{H}_2 | \phi} &= \braket{\phi | H_1 | \phi}\\
&= \braket{n_1^\perp | H_1 | n_1^\perp}\\
&\le \Vert H_1 \Vert \left\Vert \ket{n_1^\perp} \right\Vert ^2\\
&= \Vert H_1 \Vert \sin^2(\theta)\text{.}\qedhere
\end{align*}
\end{proof}

In most cases we will define the penalty term $H_{in}$ such that its gap is larger than the spectral gap $\lambda$ of the parallel transport network. Then the gap $\gamma$ of the whole Hamiltonian $H(s)$ which determines the running time of the adiabatic quantum computation is lower bounded by
\begin{align*}
\gamma \ge \lambda \sin^2\left(\frac{\theta}{2}\right)
\end{align*}
with $\theta$ the angle between the null-space $N_{in}$ of $H_{in\vert D}$ and $N_{prop}(s)\setminus N_{in}$ with $N_{prop}(s)$ the null-space of $H_{prop\vert D}(s)$. Also the upper bound of $\gamma$ scales with $\sin^2(\theta)$, hence if the angle $\theta$ is too small an efficient adiabatic quantum computation is not possible. Because $\theta$ is such an important complexity quantity we will give it an own name:
\begin{definition}
Given a standard graph Hamiltonian according to Definition \ref{def:standard} let $N_{in}$ be the null-space of $H_{in\vert D}$ and $N_{prop}(s)$ the null-space of $H_{prop\vert D}(s)$. The angle $\theta$ between $N_{in}$ and $N_{prop}(s)\setminus N_{in}$ is called the \textsf{gap angle}.
\end{definition}

The optimization of the energy gap $\gamma$ for an efficient evolution time in the adiabatic quantum computation is now reduced to the optimization of the spectral gap of the underlying graph and the gap angle. But notice that if the spectral gap and the gap angle get too large then the gaps of the penalty terms $H_{in}$ and $H_{graph}$ might determine $\gamma$. As $H_{in}$ and $H_{graph}$ are supposed to be normalized, they often obey gaps that scale like $\Theta\left(\frac{1}{L}\right)$. Hence if one optimizes the spectral gap of the underlying graph beyond $\Theta\left(\frac{1}{L}\right)$, one has to consider again the gaps of the penalty terms.

It is, by the way, an interesting fact that the normalization of the penalty terms $H_{in}$ and $H_{graph}$ do not affect the  evolution time in the Adiabatic Theorems \ref{thm:AQTstates} and \ref{thm:AQTproj}. Only the norms of the derivatives appear in the expression and these vanish for the time-independant penalty terms. Still we stick to our normalization requirement of the whole Hamiltonian because of the reasons discussed in Section \ref{sec:norm}. Moreover it might be that the independance from the norms of the penalty terms is only a feature of the specific bound for the evolution time that is derived in the adiabatic theorems.

Let's turn our focus back to the improvement of the spectral gap and the gap angle. Our definition of a standard graph Hamiltonian allows us to derive a very simple expression for the gap angle:
\begin{prop}\label{prop:angleExpl}
Let $H(s)$ be a standard graph Hamiltonian according to Definition \ref{def:standard}. Then its gap angle $\theta$ fulfills
\begin{align*}
\sin^2(\theta) = \frac{\sum\limits_{t=0}^{L_i}\vol(V_t)}{\vol(V)}\text{.}
\end{align*}
\end{prop}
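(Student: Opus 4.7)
The approach is to identify explicit orthonormal bases for the relevant subspaces and reduce the angle computation to a single norm computation. From Corollary \ref{cor:history}, the null-space $N_{prop}(s)$ of $H_{prop\vert D}(s)$ is spanned by the history states
\begin{align*}
\ket{\eta_x(s)} = \frac{1}{\sqrt{\vol(V)}} \sum_{t=0}^{L'} \sum_{v \in V_t} \sqrt{d_v} \ket{v} \otimes U_t(s) \cdots U_1(s) \ket{x}, \quad x \in \{0,1\}^n,
\end{align*}
and these states are mutually orthogonal (their $V_0$-components are proportional to $\ket{v}\otimes\ket{x}$ for distinct computational-basis labels). By the specification of $H_{in}$ in Definition \ref{def:standard}, the intersection $N_{prop}(s) \cap N_{in}$ is exactly the null-space of $H(s)$ on $D$, spanned by $\{\ket{\eta_x(s)} : x \in S\}$. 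Hence the orthogonal complement of $N_{prop}(s) \cap N_{in}$ inside $N_{prop}(s)$ is spanned by $\{\ket{\eta_x(s)} : x \notin S\}$, and these are the vectors whose angle to $N_{in}$ defines $\theta$.

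The key step is to decompose each such history state into a part inside and a part orthogonal to $N_{in}$. Since the circuit has been identity-extended, $U_t(s) \cdots U_1(s) = \mathbb{I}$ for all $t \le L_i$, so the ``initial part''
\begin{align*}
\ket{\eta_x^{init}} := \frac{1}{\sqrt{\vol(V)}} \sum_{t=0}^{L_i} \sum_{v \in V_t} \sqrt{d_v} \ket{v} \otimes \ket{x}
\end{align*}
is, for $x \notin S$, orthogonal to every basis vector of $N_{in}$ listed in Definition \ref{def:standard}; the ``remainder'' $\ket{\eta_x(s)} - \ket{\eta_x^{init}}$ is supported on vertices with $\mathcal{T}(v) > L_i$ and therefore lies entirely in $N_{in}$. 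Hence $\ket{\eta_x^{init}}$ is precisely the $N_{in}^\perp$-component of $\ket{\eta_x(s)}$, and a direct computation gives the $x$-independent value
\begin{align*}
\bigl\Vert \ket{\eta_x^{init}} \bigr\Vert^2 = \frac{1}{\vol(V)}\sum_{t=0}^{L_i} \sum_{v \in V_t} d_v = \frac{\sum_{t=0}^{L_i} \vol(V_t)}{\vol(V)}.
\end{align*}

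Finally, I would verify that every unit vector $\ket{\psi} = \sum_{x \notin S} c_x \ket{\eta_x(s)}$ in the relevant complement of $N_{in} \cap N_{prop}(s)$ subtends the same angle with $N_{in}$: since the initial parts $\{\ket{\eta_x^{init}}\}_{x \notin S}$ are mutually orthogonal (they carry distinct computational basis states in the second tensor factor), Pythagoras gives
\begin{align*}
\bigl\Vert P_{N_{in}^\perp} \ket{\psi} \bigr\Vert^2 = \sum_{x \notin S} \vert c_x \vert^2 \bigl\Vert \ket{\eta_x^{init}} \bigr\Vert^2 = \frac{\sum_{t=0}^{L_i}\vol(V_t)}{\vol(V)} \sum_{x \notin S} \vert c_x \vert^2 = \frac{\sum_{t=0}^{L_i}\vol(V_t)}{\vol(V)},
\end{align*}
which is exactly $\sin^2(\theta)$.

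The main obstacle is conceptual rather than computational: one must check that ``the'' gap angle is well-defined, i.e., that every unit vector in $N_{prop}(s)$ orthogonal to $N_{prop}(s) \cap N_{in}$ really does make the same angle with $N_{in}$. This is guaranteed precisely by the $x$-independence of $\Vert \ket{\eta_x^{init}} \Vert^2$, which in turn is a consequence of the identity extension, so the whole argument collapses to the two-line norm identity above.
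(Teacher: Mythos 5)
Your proof is correct, and it takes a genuinely different route from the paper's. The paper parametrizes arbitrary unit vectors $\ket{n_{in}(s)}\in N_{in}$ and $\ket{n_{prop}(s)}\in N_{prop}(s)\setminus N_{in}$ with coefficients $p_v$, $q_{t,x}$, $r_x$, computes their overlap explicitly, and then maximizes $\left\vert\braket{n_{prop}(s)|n_{in}(s)}\right\vert^2$ by applying Cauchy--Schwarz twice and exhibiting coefficients that saturate both bounds; this yields $\cos^2(\theta)$ directly. You instead observe that the identity extension splits each history state $\ket{\eta_x(s)}$, $x\notin S$, cleanly into an initial part orthogonal to $N_{in}$ and a remainder lying inside $N_{in}$, so that $P_{N_{in}^\perp}\ket{\eta_x(s)}=\ket{\eta_x^{init}}$ and the whole computation collapses to one norm. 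Your route buys two things: it eliminates the maximization entirely (the mutual orthogonality of the $\ket{\eta_x^{init}}$ shows every unit vector in the complement of $N_{in}\cap N_{prop}(s)$ subtends the \emph{same} angle with $N_{in}$, i.e.\ all principal angles coincide, which also settles the well-definedness of ``the'' gap angle that the paper leaves implicit), and it makes transparent exactly where the identity extension enters. The paper's version has the minor advantage of working directly with the operational definition $\cos^2(\theta)=\max\left\vert\braket{n_{prop}(s)|n_{in}(s)}\right\vert^2$ without needing to argue that this maximum is attained uniformly. One small remark: your parenthetical justification that the $\ket{\eta_x(s)}$ are mutually orthogonal by looking only at their $V_0$-components is not by itself a proof of orthogonality of the full vectors, but the full inner product $\braket{\eta_x(s)|\eta_y(s)}=\delta_{x,y}$ follows in one line from unitarity of the $U_t(s)$ (and is asserted in Corollary \ref{cor:history}), so nothing is at stake.
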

\begin{proof}
Let $N_{prop\vert D}(s)$ be the null-space of $H_{prop\vert D}(s)$ and $N_{in}(s)$ the null-space of $H_{in\vert D}(s)$. According to Definition \ref{def:standard} and since a unitary just causes a basis transformation we can write
\begin{align*}
N_{in}(s)
&=\spanN \big(\big\{ \ket{v} \otimes \ket{x} \,\big\vert\, \mathcal{T}(v) \le L_i,\, x \in S\big\} \\
&\hspace{3.8em} \cup\big\{ \ket{v} \otimes U_t(s) \dots U_1(s) \ket{x} \,\big\vert\, \mathcal{T}(v)>L_i,\, x \in \{0,1\}^n\big\}\big)
\end{align*}
and hence
\begin{align*}
N_{prop}(s) \setminus N_{in\vert D}(s) = \spanN \left\{ \frac{1}{\sqrt{\vol(V)}} \sum_{t=0}^{L'} \sum_{v\in V_t} \sqrt{d_v} \ket{v} \otimes U_t(s)\dots U_1(s) \ket{x} \,\big\vert\, x\in \overline{S}\right\}\text{.}
\end{align*}
Let $\ket{n_{in}(s)} \in N_{in}(s)$ and $\ket{n_{prop}(s)} \in N_{prop}(s) \setminus N_{in}(s)$ be arbitrary unit vectors. Then they can be expressed in the following form:
\begin{align}
\ket{n_{in}(s)}=& \sum_{t=0}^{L'} \sum_{v \in V_t} p_v \ket{v} \otimes \sum_{x \in \{0,1\}^n} q_{t,x} U_t(s) \dots U_1(s) \ket{x}\nonumber\\
&p_v, q_{t,x} \in \mathbb{C},\quad \sum_{v\in V} |p_v|^2 = 1\text{,}\nonumber\\
& \sum_{x\in \{0,1\}^n} |q_{t,x}|^2 = 1\,\text{ for all } 0\le t\le L', \quad\quad q_{t,x}= 0 \,\text{ for } t\le L_i \text{ and } x\in \overline{S}\text{.}\label{eq:i}\\
\ket{n_{prop}}=& \frac{1}{\sqrt{\vol(V)}} \sum_{t=0}^{L'} \sum_{v \in V_t} \sqrt{d_v} \ket{v} \otimes U_t(s) \dots U_1(s) \sum_{x \in \overline{S}} r_x \ket{x}\nonumber\\
&r_x\in \mathbb{C}, \quad\sum_{x\in \overline{S}} |r_x|^2 = 1\text{.}\label{eq:ii}
\end{align}
Now we can derive the expression for the angle between $N_{in}(s)$ and $N_{prop}(s) \setminus N_{in}(s)$:
\begin{align*}
\cos^2(\theta) &= \max_{\substack{\ket{n_{prop}(s)} \in N_{prop}(s)\\\ket{n_{in}(s)}\in N_{in}(s)}} \left\vert\braket{n_{prop}(s)| n_{in}(s)}\right\vert^2\\
&= \max_{\substack{p_v,q_{t,x},r_x\\\text{fulfilling \eqref{eq:i}, \eqref{eq:ii}}}} \frac{1}{\vol(V)} \Bigg\vert \sum_{t=0}^{L'} \sum_{t'=0}^{L'} \sum_{v\in V_t} \sum_{v\in V_{t'}}\sqrt{d_v} p_{v'} \overbrace{\braket{v | v'}}^{=\delta_{v,v'} \delta_{t,t'}}\\
&\hspace{12em}\otimes \sum_{x\in \overline{S}} \sum_{x' \in \{0,1\}^n} r_x^* q_{t,x} \braket{x | U_1^\dagger(s) \dots U_t^\dagger(s) U_{t'}(s) \dots U_1(s) | x'} \Bigg\vert^2\\
&= \max_{\substack{p_v,q_{t,x},r_x\\\text{fulfilling \eqref{eq:i}, \eqref{eq:ii}}}} \frac{1}{\vol(V)} \Bigg\vert \sum_{t=0}^{L'} \sum_{v\in V_t} \sqrt{d_v} p_v \sum_{x\in \overline{S}} r_x^* q_{t,x} \Bigg\vert^2\text{.}
\end{align*}
Since $q_{t,x}=0$ for all $t\le L_i$ and all $x \in \overline{S}$ we can start summing the time steps from $t=L_i+1$. Moreover we can omit the modulus function as the maximum will be reached by real, non-negative coefficients $p_v$, $r_x$ and $q_{t,x}$ ( $|p_v|$, $|r_x|$, $|q_{t,x}|$ fulfill the upper bound of the triangle inequality):
\begin{align*}
\cos^2(\theta) &=\max_{\substack{p_v,q_{t,x},r_x\ge 0\\\text{fulfilling \eqref{eq:i}, \eqref{eq:ii}}}} \frac{1}{\vol(V)} \Bigg( \sum_{t=L_i+1}^{L'} \sum_{v\in V_t} \sqrt{d_v} p_v \sum_{x\in \overline{S}} r_x q_{t,x} \Bigg)^2\text{.}
\end{align*}
According to the Cauchy-Schwarz inequality $\sum_{x\in \overline{S}} r_x q_{t,x} \le 1$ and as the maximizing set includes for all $t>L_i$ cases with $q_{t,x}=r_x$ for all $x\in \{0,1\}^n$ which fulfill the upper bound it holds:
\begin{align*}
\cos^2(\theta) &=\max_{\substack{p_v\ge 0\\\text{fulfilling \eqref{eq:ii}}}} \Bigg( \sum_{t=L_i+1}^{L'} \sum_{v\in V_t}  \sqrt{\frac{d_v}{\vol(V)}} p_v \Bigg)^2\text{.}
\end{align*}
The argument runs similar now for $p_v$. According to Cauchy-Schwarz
\begin{align}
\Bigg(  \sum_{t=L_i+1}^{L'} \sum_{v\in V_t}   \sqrt{\frac{d_v}{\vol(V)}} p_v \Bigg)^2
\le  \sum_{t=L_i+1}^{L'} \sum_{v\in V_t}   \frac{d_v}{\vol(V)}
= 1-\frac{\sum_{t=0}^{L_i} \vol(V_t)}{\vol(V)}
=: c\label{eq:CS}
\end{align}
and the case
\begin{align*}
p_v=
\begin{cases}
\frac{1}{\sqrt{c}}\sqrt{\frac{d_v}{\vol(V)}} \quad &\text{if } v\in \bigcup_{t=L_i+1}^{L'} V_t\\
0 &\text{else}
\end{cases}
\end{align*}
is included in the maximizing set and leads to equality in \eqref{eq:CS}. Hence
\begin{align*}
\sin^2(\theta) &= \frac{ \sum_{t=0}^{L_i}\vol(V_t)}{\vol(V)}\text{.}\qedhere
\end{align*}
\end{proof}

The expression for the gap angle is surprisingly simple and given a standard graph Hamiltonian just determined by the graph. The gap angle is also independant of the set $S$ of correct circuit inputs. So the idea that multiple ground-states which correspond to multiple valid circuit inputs might improve the gap is wrong for the standard graph Hamiltonian model. This idea was the original motivation for deriving the Quantum Adiabatic Theorem for Projections \ref{thm:AQTproj}. But in the model of standard graph Hamiltonians we can also restrict simply to $\ket{\mathbf{0}}$ as only valid circuit input and thus to a unique ground-state in the adiabatic quantum computation. 
 If one wants to take an advantage out of multiple valid circuit input one has to go beyond the model of standard graph Hamiltonians and especially look at models where the circuit is not just a black box but instead some additional information about the specific computation is known. 

We can now specify and comment on our efficiency list from Section \ref{sec:AQCrequirements} for an efficient adiabatic quantum computation (AQC) under the assumption of a standard graph Hamiltonian. The initial complexity quantities are now the circuit length $L$ and its number of qubits $n$ instead of the length of the original computation input. Notice that for a full efficient computation the initial mapping from the computation input to the circuit should, of course, also be efficient.

For the rest of this thesis we make the realistic assumption that $n \in \mathcal{O}(L)$, otherwise the circuit would act trivially on some qubits for large enough $L$, which would allows us to define another circuit with less qubits for the same computation purpose. This assumption will allow us to write the bounds just in terms of $L$.

\tcbset{colback=black!5!white,colframe=black!50!white}
\vspace{2mm}
\begin{tcolorbox}[toptitle=2mm,fonttitle=\sffamily\bfseries\large,title=Requirements for an efficient AQC with standard graph Hamiltonians]
We assume the notation of a standard graph Hamiltonian $H(s)$ from Definition \ref{def:standard} simulating the quantum circuit $C=(n, (U_1, U_2, \dots U_{L}))$.
\tcblower
\begin{enumerate}
\item The Hamiltonian $H(s)$ is normalized and defined on a space of $\mathcal{O}(\poly(L))$ qubits.\\
\textit{\small Notice that the underlying graph can still have a vertex set of exponential size.}
\item The Hamiltonian is a sum of $\mathcal{O}(\poly(L))$ many, local interaction terms.
\item  For every interaction term $H_1(s)$ it holds $\Vert H_1(s) \Vert \in \Omega\left(\frac{1}{\poly(L)}\right)$ for all $0\le s \le 1$.\\
\textit{\small Notice that this implies a restriction onto the weights of the underlying graph $G$, since its normalized Laplacian contains interactions terms with factors $\frac{w(v,u)}{\sqrt{d_v d_u}}$.}
\item There is a classical algorithm that computes the interaction terms in time $\mathcal{O}(\poly(L))$.\\
\textit{\small Notice that given two vertices $u,v$ from the underlying graph, their weight $w(u,v)$ has to be computable in polynomial time.}
\item The state $\mathbf{0} \otimes \ket{\mathbf{0}}$ has to be transformed into the initial history state
\begin{align*}
\ket{\eta(0)}= \frac{1}{\sqrt{\vol(V)}} \sum_{v \in V} \sqrt{d_v} \ket{v} \otimes \ket{\mathbf{0}}
\end{align*}
 by a quantum circuit with length $\mathcal{O}(\poly(L))$.\\
\textit{\small As we introduced time-dependant unitary gates equaling initially $\mathbb{I}$ the initial history state is automatically independant of the computation.}
\item After a projective measurement on the extended graph subsystem of any ground-state of $H(1)$ in the extended vertex basis and the discarding of the subsystem, the correct computation output is obtained with probability $p\in \Omega\left(\frac{1}{\poly(L)}\right)$.\\
\textit{\small Proposition \ref{prop:outputProb}:
$p = \frac{\sum\limits_{t=L'-L_f}^{L'} \vol(V_t)}{\vol(V)}$.}
\item For the gap $\gamma$ of the Hamiltonian $H(s)$ it holds that $\gamma \in \Omega\left(\frac{1}{\poly(L)}\right)$.\\
\textit{\small Lemmata \ref{lem:gaps} and \ref{lem:angle}: 
$\gamma \ge \min\left\{\gamma(H_{graph\vert D}), \sin^2(\theta/2)\cdot \min\{\lambda, \gamma(H_{in\vert D})  \} \right\}$.\\
Proposition \ref{prop:angleExpl}: 
$\sin^2(\theta) = \frac{\sum\limits_{t=0}^{L_i}\vol(V_t)}{\vol(V)}$.}
\item  --\\
\textit{\small Requirement 8 is no longer neccessary, since in the next section we will show that the derivatives $\left\Vert  \frac{dH(s)}{ds} \right\Vert$ and $\left\Vert  \frac{d^2H(s)}{ds^2} \right\Vert$ are always constant for standard graph Hamiltonians.}
\end{enumerate}
\end{tcolorbox}
\vspace{2mm}

As commented under point 8 we will prove in the next section that the Hamiltonian derivatives that appear in the expression for the evolution time $T$ in the quantum adiabatic theorem will be upper bounded by a constant for standard graph Hamiltonians. Hence our objective to optimize $T$ reduces to optimizing the energy gap $\gamma$ of the Hamiltonian.

We want to finish this section with a closing remark about the technique of extending the circuit by identity gates and about the final measurement of the adiabatic quantum computation. In \cite{aharonov2}, among other works, it is required that a state arbitrarily close to the desired computation output is obtained directly at the end of the adiabatic evolution without a final measurement. This can be achieved by appending linearly many final identity gates in a standard graph Hamiltonian construction with underlying path graph (the so-called Kitaev Hamiltonian which we will present as first example in Chapter \ref{chap:path}). The reason is that the states belonging to final time steps dominate the history state. But this concept would forbid the addition of any initial identity gates as this would reduce again the dominance of the final vertex set.

This concept causes two problems regarding standard graph Hamiltonians with arbitrary underlying graphs: First of all there are graphs other than the path graph for which it is not sufficient to append linearly many final identity gates to achieve a history state which is arbitrarily close to the desired output state. Secondly some graphs require appending a certain ratio of initial identity gates to achieve an efficient, inverse polynomial gap at all since we know from Proposition \ref{prop:angleExpl} that the gap angle relies on the set of initial vertices. It would be a pity not to consider these graphs for standard graph Hamiltonian constructions, especially since a final projective measurement is a well-established tool in quantum computation. Therefore in Chapter \ref{chap:path} the Kitaev Hamiltonian will also be discussed under the conditions of our concept with identity extension and final measurement in order to achieve a comparibility to standard graph Hamiltonians based on other graphs.

\section{Norm of Hamiltonian derivatives}

The upper bound for the evolution time of a Hamiltonian according to the Adiabatic Theorems \ref{thm:AQTstates} and \ref{thm:AQTproj} contains the terms $\left\Vert \frac{d H(s)}{ds} \right\Vert$ and $\left\Vert \frac{d^2 H(s)}{ds^2} \right\Vert$, that is, the norms of the first and second derivative of the Hamiltonian with respect to the normalized time parameter $s$. The next theorem shows that these derivatives of a standard graph Hamiltonian (equaling the derivatives of the normalized Laplacian) do not play any role in the complexity discussion since they are always bounded from above by a constant, no matter which underlying graph is used nor which quantum circuit is encoded by the parallel transport network. 

\begin{thm}
Given a connected parallel transport network $\mathcal{G}=(V,E,w,\mathcal{T},n,\mathcal{U},\pi)$ with more than one vertex and time dependant unitaries $U_t(s)=\exp(i s h_t)$, $h_t$ hermitean, $\Vert h_t \Vert \le 2\pi$, the norms of the first and second derivative of the normalized Laplacian are bounded from above by a constant:
\begin{align*}
\left\Vert \frac{d\mathcal{L}(\mathcal{G})(s)}{ds} \right\Vert \in \mathcal{O}(1)\quad\text{ and }\quad
\left\Vert \frac{d^2\mathcal{L}(\mathcal{G})(s)}{ds^2} \right\Vert \in \mathcal{O}(1)\text{.}
\end{align*}
\end{thm}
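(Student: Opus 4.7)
The plan is to exploit the explicit form of $\mathcal{L}(\mathcal{G})(s)$ from Lemma~\ref{lem:laplExplPTN} and note that only the cross-time summands depend on $s$ via $U_t(s)=\exp(ish_t)$. Since $\frac{dU_t}{ds}=ih_tU_t(s)$ and $\frac{d^2U_t}{ds^2}=-h_t^2U_t(s)$ (using that $h_t$ and $U_t(s)$ commute), the derivatives split according to time step:
\begin{align*}
\frac{d\mathcal{L}(\mathcal{G})(s)}{ds}=\sum_{t=1}^L X_t(s),\qquad
\frac{d^2\mathcal{L}(\mathcal{G})(s)}{ds^2}=\sum_{t=1}^L W_t(s),
\end{align*}
where, writing $C_t:=\sum_{v\in V_t,\,u\in V_{t-1}}\frac{w(v,u)}{\sqrt{d_v d_u}}\ket{v}\bra{u}$ for the cross-time weight block, one finds
$X_t(s)=-i(\mathbb{I}\otimes h_t)\bigl(C_t\otimes U_t(s)-C_t^\dagger\otimes U_t^\dagger(s)\bigr)$ and
$W_t(s)=(\mathbb{I}\otimes h_t^2)\bigl(C_t\otimes U_t(s)+C_t^\dagger\otimes U_t^\dagger(s)\bigr)$. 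Crucially, both $X_t$ and $W_t$ have nonzero matrix entries on the graph side only between the disjoint vertex sets $V_{t-1}$ and $V_t$.

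The core estimate for a single summand is $\|C_t\|\le 1$, which I would establish by a direct Cauchy--Schwarz argument: for any $\ket{x}$ supported on $V_{t-1}$,
\begin{align*}
\|C_t\ket{x}\|^2=\sum_{v\in V_t}\left|\sum_{u\in V_{t-1}}\frac{w(v,u)\,x_u}{\sqrt{d_v d_u}}\right|^2\le\sum_{v\in V_t}\left(\sum_u\frac{w(v,u)}{d_v}\right)\left(\sum_u\frac{w(v,u)\,|x_u|^2}{d_u}\right)\le\|x\|^2,
\end{align*}
since $\sum_u w(v,u)/d_v\le 1$ and, after interchanging the sums, $\sum_{v\in V_t}w(v,u)\le d_u$. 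Because $C_t$ and $C_t^\dagger$ act between the disjoint subspaces spanned by $V_{t-1}$ and $V_t$, one has $C_t^2=(C_t^\dagger)^2=0$, so $(C_t\otimes U_t\pm C_t^\dagger\otimes U_t^\dagger)^2=(C_tC_t^\dagger+C_t^\dagger C_t)\otimes\mathbb{I}$ has operator norm at most $\|C_t\|^2\le 1$. Combined with the hypothesis $\|h_t\|\le 2\pi$, this yields $\|X_t(s)\|\le 2\pi$ and $\|W_t(s)\|\le 4\pi^2$ uniformly in $t$ and in the underlying graph.

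The hard part is then upgrading this per-summand bound to a bound on the full sum without picking up a factor of $L$ from the triangle inequality. The key observation is a parity decomposition: the graph supports $V_{t-1}\cup V_t$ and $V_{t'-1}\cup V_{t'}$ are disjoint whenever $|t-t'|\ge 2$, so within each parity class the summands $X_t$ live in mutually orthogonal blocks of the graph space. Consequently $\sum_{t\text{ even}}X_t(s)$ and $\sum_{t\text{ odd}}X_t(s)$ are block-diagonal in the vertex basis with blocks of norm at most $\max_t\|X_t(s)\|\le 2\pi$. The triangle inequality across the two parities then gives $\left\|\frac{d\mathcal{L}(\mathcal{G})(s)}{ds}\right\|\le 4\pi$, and applying the same parity argument to the $W_t(s)$'s gives $\left\|\frac{d^2\mathcal{L}(\mathcal{G})(s)}{ds^2}\right\|\le 8\pi^2$; both bounds are independent of $L$, of $|V|$ and of the specific unitaries, which establishes the claim.
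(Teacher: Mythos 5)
Your proof is correct, and its skeleton coincides with the paper's: decompose the derivative into per-time-step summands, observe that summands of equal parity act on disjoint graph supports $V_{t-1}\cup V_t$ so each parity class is block-diagonal, and pay the triangle inequality only once across the two parities. Where you genuinely diverge is in how a single block is bounded. The paper conjugates $M_t(s)$ by a unitary $Q_t(s)$ to reach the tensor-product form $m_t\otimes h_t$ with $m_t = C_t + C_t^\dagger$, and then bounds $\Vert m_t\Vert \le 3$ by completing $m_t$ to the normalized Laplacian of an auxiliary weighted graph with self-loops (norm $\le 2$) plus a diagonal correction (norm $\le 1$). You instead bound the off-diagonal block directly, $\Vert C_t\Vert\le 1$, by Cauchy--Schwarz using $\sum_{u}w(v,u)\le d_v$ and $\sum_{v\in V_t}w(v,u)\le d_u$, and you dispense with the conjugation by factoring out $\mathbb{I}\otimes h_t$ (legitimate since $h_t$ commutes with $U_t(s)$) and exploiting the nilpotency $C_t^2=(C_t^\dagger)^2=0$ together with the orthogonal supports of $C_tC_t^\dagger$ and $C_t^\dagger C_t$. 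Your route is more elementary, avoids the auxiliary graph construction entirely, and yields the sharper constants $4\pi$ and $8\pi^2$ in place of the paper's $12\pi$ and $24\pi^2$. Two cosmetic points: for the anti-Hermitian combination the square is $-(C_tC_t^\dagger+C_t^\dagger C_t)\otimes\mathbb{I}$ (the sign you wrote is only correct for the Hermitian combination), which does not affect the norm; and the passage from $\Vert A^2\Vert\le\Vert C_t\Vert^2$ to $\Vert A\Vert\le\Vert C_t\Vert$ uses that $A$ is Hermitian or anti-Hermitian, hence normal --- worth one explicit word.
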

\begin{proof}
We will show that the first derivative is bounded from above by a constant and then argue that the proof for the second derivative works analogously.

Let $\mathcal{T}:V \rightarrow \{0, \dots L\}$. From Lemma \ref{lem:laplExplPTN} we can conclude for the first derivative of the normalized Laplacian:
\begin{align*}
\frac{d \mathcal{L}(\mathcal{G})(s)}{d s} &= -\sum_{t=1}^L  \sum_{v\in V_t} \sum_{u\in V_{t-1}} \frac{w(v,u)}{\sqrt{d_v d_u}} \left(\ket{v}\bra{u} \otimes i U_t(s) h_t+ \ket{u}\bra{v} \otimes -i h_t U_t^\dagger(s)\right)\\
&= \sum_{\substack{t=1 \\ t \text{ even}}}^L M_t(s) + \sum_{\substack{t=1 \\ t \text{ odd}}}^L M_t(s)
\end{align*}
with
\begin{align}
M_t(s) &=  \sum_{v\in V_t} \sum_{u\in V_{t-1}}
\frac{w(v,u)}{\sqrt{d_v d_u}} \left(\ket{v}\bra{u} \otimes -i U_t(s)h_t + \ket{u}\bra{v} \otimes i h_t U_t^\dagger(s)\right)\text{.}\label{eq:Mt}
\end{align}

In writing the derivative as two sums, one over the even and one over the odd time steps, we achieve that each sum has block-diagonal form. In $\sum_{t=1,t \text{ even}}^L M_t(s)$ the blocks are restricted to the interaction terms between vertices of an even time step $t$ and vertices belonging to the next lower odd time step:

\begin{tikzpicture}[scale=0.9]
\node at (0.8,3.3) {$
\hspace{0.5em}\overbrace{\hspace{2.7em}}^{\displaystyle V_1}
\overbrace{\hspace{2em}}^{\displaystyle V_2}\hspace{0.8em}
\overbrace{\hspace{1.7em}}^{\displaystyle V_3}
\overbrace{\hspace{2.9em}}^{\displaystyle V_4}\hspace{0.8em}
\overbrace{\hspace{1.6em}}^{\displaystyle V_5}
\cdots
$};
\node at (5.3,0)[rotate=-90] {$
\hspace{0.3em}\overbrace{\hspace{2.3em}}^{}
\overbrace{\hspace{1.5em}}^{}\hspace{0.6em}
\overbrace{\hspace{1.3em}}^{}
\overbrace{\hspace{2.1em}}^{}\hspace{0.6em}
\overbrace{\hspace{1.4em}}^{}
\cdots
$};
\node at (5.7,2.1){$V_1$};
\node at (5.7,1.1){$V_2$};
\node at (5.7,0.3){$V_3$};
\node at (5.7,-0.67){$V_4$};
\node at (5.7,-1.67){$V_5$};
\fill[color=gray!20] (-2.15,0.75) rectangle (0.05,2.65);
\fill[color=gray!20] (0.05,0.75) rectangle (2.35,-1.15);
\shade[shading=axis,bottom color=white,top color=gray!20,shading angle=45] (2.35,-1.15) rectangle (4.15,-2.65);
\node at (0,0) {
$\displaystyle
\sum_{\substack{t=1\\t \text{ even}}}^L M_t(s)=
\begin{pmatrix}
&&& &&&&&&\\
&* &\cdots &* &0&\cdots&0&\cdots&&\quad\\
&\vdots &\ddots &\vdots &\vdots&\ddots&\vdots&\cdots&&\\
&* &\cdots &* &0&\cdots&0&\cdots&&\\
&0&\cdots&0 &*&\cdots &*&\ddots&&\\
&\vdots&\ddots&\vdots &\vdots&\ddots &\vdots&&&\\
&0&\cdots&0 &*&\cdots &*&&&\\
&\vdots&\vdots&\vdots &\ddots&&&*&&\\
&&& &&&&&\ddots&\\
&&& &&&&&&
\end{pmatrix}$
};
\end{tikzpicture}.\\
The same holds for $\sum_{t=1,t \text{ odd}}^L M_t(s)$ with the words ``even'' and ``odd'' exchanged. 

The norm of a block diagonal matrix equals the largest norm of its blocks, hence:
\begin{align*}
\left\Vert \frac{d \mathcal{L}(\mathcal{G})(s)}{d s} \right \Vert
&\le \left\Vert  \sum_{\substack{t=1 \\ t \text{ even}}}^L M_t(s) \right \Vert + \left\Vert  \sum_{\substack{t=1 \\ t \text{ odd}}}^L M_t(s) \right\Vert
\le 2 \max_{1\le t\le L} \Vert M_t(s) \Vert\text{.}
\end{align*}

To bound $M_t(s)$ we carry out a unitary transformation with the unitary
\begin{align}
Q_t(s):&= \sum_{v \in V_t} \ket{v}\bra{v} \otimes -i U_t(s)  +
 \sum_{v \in V\setminus V_t} \ket{v}\bra{v} \otimes \mathbb{I}\label{eq:Qt}\\
Q_t^\dagger(s)&= \sum_{v \in V_t} \ket{v}\bra{v} \otimes i U^\dagger_t(s)  +
 \sum_{v \in V\setminus V_t} \ket{v}\bra{v} \otimes \mathbb{I}\text{.}\nonumber
\end{align}
to achieve a tensor product form:
\begin{alignat*}{3}
Q^\dagger_t(s) M_t(s) Q_t(s) = &\sum_{v\in V_t} \sum_{u\in V_{t-1}} \frac{w(v,u)}{\sqrt{d_v d_u}}
\Big( && \ket{v}\bra{u} \otimes \big(iU_t^\dagger(s) \big) \big(-i U_t(s) h_t\big) \\
 &&+& \ket{u}\bra{v} \otimes \big(i h_t U_t^\dagger(s) \big) \big(- i U_t(s)\big) \Big) \\
=& m_t \otimes h_t
\end{alignat*}
with
\begin{align*}
m_t= \sum_{v\in V_t} \sum_{u\in V_{t-1}} \frac{w(v,u)}{\sqrt{d_v d_u}}
\Big( \ket{v}\bra{u} + \ket{u}\bra{v}\Big)\text{.}
\end{align*}

Since the norm of the hermitean matrix $h_t$ is smaller than $2\pi$, the only interesting task left is to bound $m_t$. We consider the term $m_t$ as part of a normalized Laplacian. We can achieve that by simply adding and subtracting a diagonal term. Define the new weighted graph $G'$ (see also Figure \ref{fig:G'}) on the vertex set $V'=V_{t-1}\cup V_t$ with the weight function
\begin{align*}
w'(v,u) =
\begin{cases}
w(v,u) &\text{ if } \vert \mathcal{T}(v) - \mathcal{T}(u) \vert = 1\\
d_v - \sum\limits_{u\in V_{t-1}} w(v,u) &\text{ if } u=v\in V_t\\
d_u - \sum\limits_{v\in V_t} w(v,u) &\text{ if } u=v \in V_{t-1}\\
0 &\text{ otherwise.}
\end{cases}
\end{align*}

The weights of the loops are defined such that the original degrees of the vertices are preserved, i.e. $d'_v=d_v$ for all $v \in V'$.

\begin{figure}[htb]
\centering
\begin{tikzpicture}[scale=0.9]
\fill[gray!20] (10,2) ellipse (1.7 and 4);
\node at (10,-1.3){${\color{gray}{V_{t-1}}}$};
\fill[gray!20] (17,2) ellipse (1.7 and 3);
\node at (17,-0.3) {${\color{gray}{V_{t}}}$};

\draw[line width=1pt] (10,4.6) circle (0.4);
\draw[line width=1pt] (10,2.6) circle (0.4);
\draw[line width=1pt] (10,0.6) circle (0.4);
\draw[line width=1pt] (17,1.6) circle (0.4);
\draw[line width=1pt] (17,3.6) circle (0.4);

\node (c) at (10,0) [circle,shade,draw,minimum size=8mm] {$c$};
\node (b) at (10,2) [circle,shade,draw,minimum size=8mm] {$b$};
\node (a) at (10,4) [circle,shade,draw,minimum size=8mm] {$a$};
\node (y) at (17,1) [circle,shade,draw,minimum size=8mm] {$y$};
\node (x) at (17,3) [circle,shade,draw,minimum size=8mm] {$x$};

\draw[-, line width=0.5pt] (a) to (x);
\draw[-, line width=0.5pt] (a) to (y);
\draw[-, line width=0.5pt] (b) to (y);
\draw[-, line width=0.5pt] (c) to (x);

\draw[-, dotted,gray,line width=1pt] (a) to (6,2.5);
\draw[-, dotted,gray,line width=1pt] (b) to(6,5);
\draw[-, dotted,gray,line width=1pt] (c) to (6,1);
\draw[-, dotted,gray,line width=1pt] (b) to (6,1.5);
\draw[-, dotted,gray,line width=1pt] (x) to (21,4.5);
\draw[-, dotted,gray,line width=1pt] (x) to (21,2);
\draw[-, dotted,gray,line width=1pt] (y) to (21,2.5);
\draw[-, dotted,gray,line width=1pt] (y) to (21,0);

\node at (13.8,3.8) [rotate=-8.1] {$w(a,x)$};
\node at (12.9,1.9) [rotate=-8.1] {$w(b,y)$};
\node at (12.9,0.8) [rotate=23.2] {$w(c,x)$};
\node at (13.7,2.75) [rotate=-23.2] {$w(a,y)$};

\node at (10,5.2) {$\mathbf{a_1}$};
\node at (10,3.22) {$\mathbf{b_1+b_2}$};
\node at (10,1.2) {$\mathbf{c_1}$};
\node at (17,4.2) {$\mathbf{x_1+x_2}$};
\node at (17,2.22) {$\mathbf{y_1+y_2}$};

\node at (7,3.1) [rotate=20.6] {${\color{gray}{a_1}}$};
\node at (7,4.6) [rotate=-36.9] {${\color{gray}{b_1}}$};
\node at (7,1.9) [rotate=7.1] {${\color{gray}{b_2}}$};
\node at (7,1) [rotate=-14] {${\color{gray}{c_1}}$};
\node at (20,4.4) [rotate=20.6] {${\color{gray}{x_1}}$};
\node at (19.5,2.65) [rotate=-14] {${\color{gray}{x_2}}$};
\node at (19.5,1.6) [rotate=20.6] {${\color{gray}{y_1}}$};
\node at (20,0.53) [rotate=-14] {${\color{gray}{y_2}}$};
\end{tikzpicture}
\caption{Example contruction of the graph $G'$: The orginal gray dotted edges from $G$ are missing, therefore loops are added to the vertices with the corresponding weights.}
\label{fig:G'}
\end{figure}
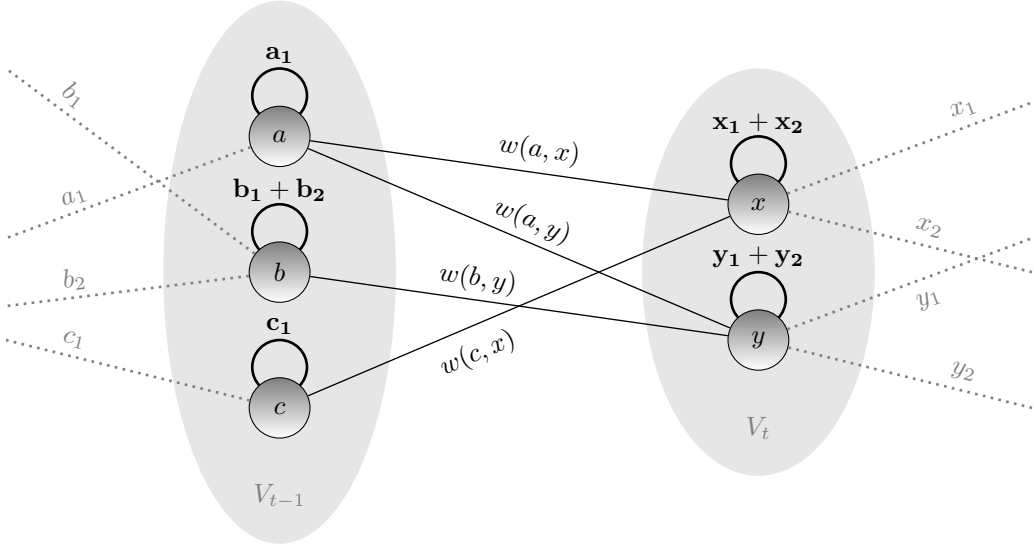

The normalized Laplacian of the graph $G'$ is then
\begin{align*}
\mathcal{L}(G') = m_t +  \sum_{v\in V'} \left(1-\frac{w'(v,v)}{d'_v}\right) \ket{v}\bra{v}\text{.}
\end{align*}
As the norm of a normalized Laplacian is bounded by $2$ and the norm of a diagonal matrix by its largest absolute diagonal element we can bound $m_t$ by
\begin{align*}
\Vert m_t \Vert \le \Vert \mathcal{L}(G') \Vert + \left\Vert \sum_{v\in V'} \left(1-\frac{w'(v,v)}{d'_v}\right) \ket{v}\bra{v} \right\Vert
\le 3\text{.}
\end{align*}

Consequently, the whole derivative $\frac{d \mathcal{L}(\mathcal{G})(s)}{d s}$ is bounded by a constant:
\begin{align*}
\left\Vert \frac{d \mathcal{L}(\mathcal{G})(s)}{d s} \right \Vert
&\le 2 \max_{1\le t\le L} \Vert M_t(s) \Vert\\
&= 2 \max_{1\le t\le L} \Vert m_t \Vert \cdot\Vert h_t \Vert\\
&\le 12\pi\text{.}
\end{align*}

For the second derivative of the normalized Laplacian the proof works analogously. The derivative is still the sum of the even and odd $M_t(s)$, we just have to change slightly the definition of the $M_t(s)$ in equation \eqref{eq:Mt}:
\begin{align*}
M_t(s) &=  \sum_{v\in V_t} \sum_{u\in V_{t-1}}
\frac{w(v,u)}{\sqrt{d_v d_u}} \left(\ket{v}\bra{u} \otimes - U_t(s)h_t^2 + \ket{u}\bra{v} \otimes - h_t^2 U_t^\dagger(s)\right)\text{.}
\end{align*}
If we alter also the definition of the unitary $Q_t(s)$ in equation \eqref{eq:Qt} to
\begin{align*}
Q_t(s):&= \sum_{v \in V_t} \ket{v}\bra{v} \otimes U_t(s)  +
 \sum_{v \in V\setminus V_t} \ket{v}\bra{v} \otimes \mathbb{I}
\end{align*}
we get
\begin{align*}
Q^\dagger_t(s) M_t(s) Q_t(s) &= m_t \otimes -h_t^2
\end{align*}
with the original $m_t$. Hence the rest of the proof is exactly the same until the end when we obtain for the norm of the second derivative a slightly different but still constant term:
\begin{align*}
\left\Vert \frac{d \mathcal{L}(\mathcal{G})(s)}{d s} \right \Vert
&\le 2 \max_{1\le t\le L} \Vert m_t \Vert \cdot\Vert h_t^2 \Vert\\
&\le 24\pi^2\text{.}\qedhere
\end{align*}
\end{proof}

\section{Limitations on graphs for an efficient adiabatic quantum computation}\label{sec:limitations}

In this section we will derive some consequences from vertex expansion for (unweighted) graphs with an attractive large gap for adiabatic quantum computation. The basic requirement for a parallel transport network implementing a quantum circuit concerns its diameter:
\begin{lem}\label{lem:diameterNeed}
The underlying graph of a parallel transport network implementing a quantum circuit with $L$ gates has at least diameter $L$.
\end{lem}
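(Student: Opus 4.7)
The plan is to exploit the Lipschitz property of the time map $\mathcal{T}$ together with the surjectivity requirement in Definition \ref{def:implement}. Since the parallel transport network implements a circuit with $L$ gates, the time map $\mathcal{T}:V \to \{0,\dots,L\}$ is surjective, so I can pick vertices $v_0, v_L \in V$ with $\mathcal{T}(v_0)=0$ and $\mathcal{T}(v_L)=L$.

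Next I would use Definition \ref{def:PTN}, which requires $|\mathcal{T}(u)-\mathcal{T}(v)|\le 1$ for every edge $(u,v)\in E$. By a trivial induction along any path $p=(w_0, w_1, \dots, w_n)$ of the underlying graph this gives the ``discrete Lipschitz estimate'' $|\mathcal{T}(w_n)-\mathcal{T}(w_0)|\le n = \len(p)$. Applied to an arbitrary path connecting $v_0$ and $v_L$, this immediately forces $\len(p)\ge L$.

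Taking the minimum over all such paths yields $\dist(v_0, v_L) \ge L$, and since the diameter is the maximum of all pairwise distances I conclude $\diam \ge L$, as claimed. There is no real obstacle here: the statement is essentially a reformulation of the Lipschitz condition on $\mathcal{T}$, and the only ingredients needed are surjectivity of $\mathcal{T}$ plus the defining property of edges in a parallel transport network.
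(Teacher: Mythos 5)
Your proof is correct and follows exactly the same route as the paper's: the paper likewise combines the edge condition $|\mathcal{T}(u)-\mathcal{T}(v)|\le 1$ from Definition \ref{def:PTN} with the surjectivity of $\mathcal{T}$ onto $\{0,\dots,L\}$ from Definition \ref{def:implement}, merely stating the conclusion without writing out the induction along a path. Your version just makes that induction explicit.
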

\begin{proof}
This follows directly from the fact that only vertices whose assigned time steps differ maximally by $1$ can be connected (review the Definition \ref{def:PTN} of a parallel transport network) and that the connected graph contains vertices belonging to $L+1$ different time steps (review Definition \ref{def:implement} about circuit implementation).
\end{proof}

\begin{thm}\label{thm:linearDiam}
Let $\mathcal{G}=(G,\mathcal{T},\mathcal{U},\pi)$  be a parallel transport network simulating a quantum circuit with length $L$ and $G=(V,E)$ a graph with a constant degree ratio $\frac{d_{max}}{d_{min}}\in \mathcal{O}(1)$. If
\begin{align*}
\lambda(\mathcal{G}) \in \Omega\left(\frac{1}{L^k}\right)
\end{align*}
with $k<2$, then
\begin{align*}
 \vert V\vert \notin \mathcal{O}\left(\poly(L)\right)\text{.}
\end{align*}
\end{thm}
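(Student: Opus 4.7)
The plan is to combine three ingredients already established in the excerpt: (i) the diameter lower bound forced by circuit simulation, (ii) the identification of $\lambda(\mathcal{G})$ with the spectral gap of the underlying graph, and (iii) the Alon--Milman style diameter bound from Corollary \ref{cor:diameter}. The argument will be a short proof by contradiction.

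First, by Lemma \ref{lem:diameterNeed}, any parallel transport network simulating a circuit of length $L$ must have an underlying graph whose diameter satisfies $\diam(G) \ge L$ (since simulation amounts to implementation of an identity-extended circuit, so the time map covers at least $L+1$ values and neighboring vertices differ in time by at most $1$). Second, by Corollary \ref{cor:spectrumPTN}, the spectrum of $\mathcal{L}(\mathcal{G})$ is just the $k$-fold repetition of the spectrum of $\mathcal{L}(G)$, so the spectral gap of the underlying graph satisfies $\lambda(G) = \lambda(\mathcal{G}) \in \Omega(1/L^k)$. Third, assuming $|V| > 1$, Corollary \ref{cor:diameter} yields
\[
\diam(G) \;\le\; 2\sqrt{2\,\frac{d_{\max}}{d_{\min}}\,\frac{1}{\lambda(G)}}\;\log_2 |V|.
\]

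Combining the three facts, there is a constant $C > 0$ (absorbing the assumed constant degree ratio and the constant implicit in the $\Omega$-bound on $\lambda$) such that
\[
L \;\le\; \diam(G) \;\le\; C\, L^{k/2}\, \log_2 |V|,
\]
so $\log_2 |V| \ge L^{1 - k/2}/C$. Since the hypothesis $k < 2$ gives $1 - k/2 > 0$, the right-hand side grows like a positive power of $L$, which is asymptotically larger than any $\log_2(\poly(L)) \in \Theta(\log L)$. Hence $|V|$ cannot be bounded by any polynomial in $L$, completing the proof.

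No step looks like a serious obstacle: the only point that needs a little care is invoking Corollary \ref{cor:spectrumPTN} to transfer the gap hypothesis from the network to its underlying graph so that Corollary \ref{cor:diameter} becomes applicable, and remembering that $d_{\max}/d_{\min}$ is constant precisely so that the square-root factor in the diameter bound is a harmless constant. Everything else is elementary algebra on inequalities.
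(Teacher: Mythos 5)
Your proof is correct and follows essentially the same route as the paper's: combine the diameter lower bound $\diam(G)\ge L$ from Lemma \ref{lem:diameterNeed} with the Alon--Milman bound of Corollary \ref{cor:diameter} and the constant degree ratio to force $\log_2|V|\in\Omega\left(L^{1-k/2}\right)$, which exceeds $\log_2(\poly(L))$ for $k<2$. The only difference is cosmetic: you make explicit the transfer of the gap hypothesis from $\mathcal{L}(\mathcal{G})$ to $\mathcal{L}(G)$ via Corollary \ref{cor:spectrumPTN}, which the paper leaves implicit.
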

\begin{proof}
If $\mathcal{G}$ simulates a quantum circuit with length $L$, it implements a quantum circuit with length $L'\ge L$. We remind ourselves of the relationship between the diameter of a graph and its spectral gap given by Corollary \ref{cor:diameter} in the first chapter:
\begin{align*}
\sqrt{\frac{d_{max}}{d_{min}}} \frac{ \log_2 \vert V\vert}{\sqrt{\lambda}}\in \Theta(\diam(G))\text{.}
\end{align*}
Now let $\lambda \in \Omega\left(\frac{1}{L^k}\right)\subseteq \Omega\left(\frac{1}{L'^k}\right)$, $k<2$. With the knowledge of the previous lemma that $G$ has at least diameter $L'$ it follows
\begin{align*}
 L'^{\frac{k}{2}} \log_2 \vert V\vert &\in \Omega(L')\\
\vert V \vert &\in \Omega\left(2^{L'^{1-\frac{k}{2}}}\right)\subseteq  \Omega\left(2^{L^{1-\frac{k}{2}}}\right)\text{.}
\end{align*}
As $2^{L^\epsilon}$ with a constant $\epsilon>0$ always exceeds any polynomial in $L$ for large enough $L$, $\vert V\vert$ cannot be upper bounded by a polynomial in $L$.
\end{proof}

The previous theorem tells us that the vertex set size of a graph family with constant degree ratio and desirable large spectral gap $\Omega\left(\frac{1}{L^k}\right)$, $k<2$, scales faster than polynomially in the circuit length. This does not neccessarily contradict the efficiency requirement that the vertices can be represented by polynomially many qubits, since an $m$-qubit Hilbert space offers $2^m$ distinct basis vectors. But it might be more difficult to write the normalized Laplacian of the parallel transport network in local terms.

Furthermore a vertex set scaling faster than any polynomial may cause a problem for expressions in that the volume of the vertex set appears in the denominator like for the gap angle in Proposition \ref{prop:angleExpl} or the output probability in Proposition \ref{prop:outputProb}. If the volumes of the initial and final vertices do not scale comparibly fast enough, the gap angle and the output probability decrease faster than polynomial and hence the construction is not capable of an efficient adiabatic quantum computation. We will see an example of such an behaviour in the construction of the hypercube Hamiltonian in Chapter \ref{chap:hypercube}.

The next theorem will show that for graphs with even stricter gap $\lambda \in \Omega\left(\frac{1}{L^k}\right)$, $k<1$, the property of vertex expansion forbids a large volume of the initial and the final vertices at the same time and hence directly implies that either the gap angle or the output probability is too small for an efficient adiabatic quantum computation. This is a quite strong negative result: It excludes for example all standard graph Hamiltonians based on expander graphs from an efficient adiabatic quantum computation and the energy gap $\gamma \in \Theta\left(\frac{1}{L}\right)$ seems to be the natural lower bound for the efficiency that can be reached with standard constructions (but notice that there are indeed functions with growing behaviour truely between $L$ and $L^k$, $k<1$, for example $\frac{L}{\log L}$). On the other hand we have just discussed that even reaching $\gamma \in \Theta\left(\frac{1}{L}\right)$ might be difficult since the implied nonpolynomial vertex set might cause problems for local implementation and for the sufficiently high volume fractions needed for the gap angle and the output probability.

\begin{thm}\label{thm:tradeoff}
Let $H(s)$ be a standard graph Hamiltonian according to Definition \ref{def:standard} simulating a quantum circuit with length $L$. Let $G=(V,E)$ be its  underlying graph with spectral gap $\lambda \in \Omega\left(\frac{1}{L^k}\right)$, $k <1$. Denote by $\theta$ the gap angle and by $p$ the probability to obatin a correct circuit output after measuring on the graph subsystem of any final ground-state in the extended vertex basis and discarding the subsystem. Then it holds 
\begin{align*}
p  \notin \Omega\left(\frac{1}{\poly(L)}\right) \quad\quad \text{or} \quad\quad
\sin^2\left(\theta\right) \notin \Omega\left(\frac{1}{\poly(L)}\right)\text{.}
\end{align*}
\end{thm}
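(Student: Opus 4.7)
The plan is to proceed by contradiction: assume both $p\in\Omega(1/\poly(L))$ and $\sin^2(\theta)\in\Omega(1/\poly(L))$. By Propositions \ref{prop:outputProb} and \ref{prop:angleExpl} these quantities equal the volume fractions $\vol(W_2)/\vol(V)$ and $\vol(W_1)/\vol(V)$, where $W_1:=\bigcup_{t=0}^{L_i}V_t$ is the set of initial vertices and $W_2:=\bigcup_{t=L'-L_f}^{L'}V_t$ is the set of final vertices. A key preliminary observation is that $\dist(W_1,W_2)\ge L$: Definition \ref{def:PTN} only permits edges between vertices whose time steps differ by at most one, whereas every vertex in $W_1$ has time step $\le L_i$ and every vertex in $W_2$ has time step $\ge L_i+L$.

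Next I would apply Lemma \ref{lem:expVertexExp} separately to $W_1$ and $W_2$ with some constant $a>2$ (say $a=3$), so that for sufficiently large $L$ the neighborhoods $\dist_{L/a+1}(W_1)$ and $\dist_{L/a+1}(W_2)$ cannot overlap -- any shared vertex would force $\dist(W_1,W_2)\le 2(L/a+1)<L$, contradicting the previous paragraph. For each $i\in\{1,2\}$ the lemma yields the dichotomy: either $\vol\bigl(\dist_{L/a+1}(W_i)\bigr)\le\vol(V)/2$, in which case $\vol(V)/\vol(W_i)\ge 2(1+\lambda/2)^{L/a}$, or $\vol\bigl(\dist_{L/a+1}(W_i)\bigr)>\vol(V)/2$.

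Under the hypothesis $\lambda\in\Omega(1/L^k)$ with $k<1$ the first alternative is untenable: since $\lambda\le 2$, the elementary estimate $\ln(1+\lambda/2)\ge \lambda/4$ gives $(1+\lambda/2)^{L/a}\ge\exp(c\,L^{1-k})$ for some positive constant $c$, which grows faster than every polynomial in $L$ and thus contradicts $\vol(W_i)/\vol(V)\in\Omega(1/\poly(L))$. Hence the second alternative must hold for \emph{both} $i=1$ and $i=2$; but then the two disjoint neighborhoods each have volume exceeding $\vol(V)/2$, so their total volume exceeds $\vol(V)$, which is impossible.

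The only delicate point is balancing the two roles of $a$: it must be chosen large enough that $2(L/a+1)<L$ for large $L$ to keep the expanded neighborhoods disjoint, and simultaneously such that $L/a$ still grows linearly in $L$ so that the exponential factor $(1+\lambda/2)^{L/a}$ overpowers any polynomial. Any constant $a>2$ reconciles both requirements, so no fine-tuning is needed; this is also precisely why the argument breaks down at $k=1$, since $L^{1-k}$ then degenerates into a constant and the crucial super-polynomial gap is lost.
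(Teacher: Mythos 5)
Your proposal is correct and follows essentially the same route as the paper's proof: both use the distance bound $\dist(V_I,V_F)\ge L$, the disjointness of the two $\Theta(L)$-neighborhoods, and Lemma \ref{lem:expVertexExp} to force a super-polynomial decay of at least one volume fraction, which by Propositions \ref{prop:outputProb} and \ref{prop:angleExpl} is either $p$ or $\sin^2(\theta)$. The only difference is presentational — you argue by contradiction via a dichotomy, while the paper directly picks the neighborhood of volume at most $\vol(V)/2$ — but the underlying argument is identical.
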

\begin{proof}
Since the set of initial vertices $V_I:= \bigcup_{t=0}^{L_i} V_t$ and the set of final vertices $V_F := \bigcup_{t=L'-L_f}^{L'} V_t$ fulfill $\dist\left(V_I,V_F\right)\ge L$, it follows for large enough $L$ that $\dist_{\frac{L}{4}+1}(V_{I}) \cap \dist_{\frac{L}{4}+1}(V_{F}) = \emptyset$. Let $X$ be the set $V_I$ or $V_F$ that fulfills $\vol\left(\dist_{\frac{L}{4}+1}(X)\right) \le \frac{1}{2}\vol(V)$. According to Lemma \ref{lem:expVertexExp} it holds
\begin{align*}
\frac{\vol(X)}{\vol(V)} &\le \frac{1}{2} \left( 1+ \frac{\lambda}{2}\right)^{-\frac{L}{4}}\\
\frac{\vol(X)}{\vol(V)} &\in \mathcal{O}\left(e^{-\frac{1}{8} L^{1-k}} \right)
\end{align*}
The fraction $\frac{\vol(X)}{\vol(V)}$ equals according to Propositions \ref{prop:angleExpl} and \ref{prop:outputProb} either $\sin^2(\theta)$ or the probability $p$, depending on whether $X=V_I$ or $X=V_F$ and cannot be lower bounded by a polynomial since $e^{-\frac{1}{8}L^{1-k}}$ decreases faster than any polynomial for $k<1$.
\end{proof}

\chapter{Adiabatic Quantum Computation with the Kitaev Hamiltonian}\label{chap:path}
\ihead{CHAPTER 6.\,\, AQC WITH THE KITAEV HAMILTONIAN}

In this chapter we present the so-called Feynman or Kitaev Hamiltonian which is the standard graph Hamiltonian that  is mainly used for adiabatic quantum computation at the current state of research. Originally this Hamiltonian gained significance as center piece in the QMA-completeness proof of the local Hamiltonian problem (see \cite{npsurvey}). As a Hamiltonian for adiabatic quantum computation it offers an efficient running time thanks to the energy gap of $\Omega\left(\frac{1}{L^2}\right)$ which originates from the spectral gap of the underlying regular path graph.

\begin{definition}
The $2$-regular graph $G=(V,E)$ with $V=\{0, \dots L\}$ and adjacency matrix
\begin{align*}
A = \ket{0}\bra{0} + \ket{L}\bra{L} + \sum_{t=1}^{L} \ket{t}\bra{t-1} + \ket{t-1}\bra{t}
\end{align*}
is called the \textsf{regular path graph} of length $L+1$.
\end{definition}

For a reference of the spectral gap see for example \cite[Theorem 4]{yueh}:
\begin{prop}
The normalized Laplacian spectrum of the regular path graph of length $L+1$ is
\begin{align*}
\lambda_k = 1 - \cos\left(\frac{\pi k}{L+1}\right), \quad\quad 0\le k\le L
\end{align*}
and hence the spectral gap obeys
\begin{align*}
\lambda \in \Theta\left(\frac{1}{L^2}\right)\text{.}
\end{align*}
\end{prop}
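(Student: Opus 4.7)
The plan is to exploit that the self-loops at vertices $0$ and $L$ make every vertex have degree exactly $2$, so the degree matrix is $T = 2\mathbb{I}$ and hence $\mathcal{L} = \mathbb{I} - \tfrac{1}{2}A$. Finding the normalized Laplacian spectrum therefore reduces to diagonalizing $A$, and by the relation $\lambda_k = 1 - \alpha_k/2$ I just need to find the eigenvalues of the adjacency matrix.

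The eigenvalue equation $A\ket{f} = \alpha \ket{f}$ reads $f_{t-1} + f_{t+1} = \alpha f_t$ for interior $t$ (with $1 \le t \le L-1$), together with the boundary relations $f_0 + f_1 = \alpha f_0$ and $f_{L-1} + f_L = \alpha f_L$ coming from the self-loops. The second-order linear recurrence in the interior suggests the ansatz $f_t = \cos\bigl(\theta(t + \tfrac{1}{2})\bigr)$, which automatically satisfies the interior recurrence with $\alpha = 2\cos\theta$ via the product-to-sum identity. First I would substitute this ansatz into the boundary equation at $t=0$ and use the half-angle identities $\cos\theta - 1 = -2\sin^2(\theta/2)$ and $\sin\theta = 2\sin(\theta/2)\cos(\theta/2)$ to verify that the shift $\tfrac{1}{2}$ in the argument is exactly the right choice (it is forced by the self-loop). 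Then I would impose the boundary at $t=L$, and after the same manipulation the condition collapses to $\sin\bigl(\theta(L+1)\bigr) = 0$, giving the quantization
\begin{equation*}
\theta_k = \frac{k\pi}{L+1}, \qquad k = 0, 1, \dots, L.
\end{equation*}
This produces $L+1$ distinct eigenvectors, which exhausts the spectrum, so $\alpha_k = 2\cos\bigl(k\pi/(L+1)\bigr)$ and consequently $\lambda_k = 1 - \cos\bigl(k\pi/(L+1)\bigr)$.

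For the $\Theta(1/L^2)$ estimate of the spectral gap $\lambda = \lambda_1 = 1 - \cos\bigl(\pi/(L+1)\bigr)$, I would use the identity $1 - \cos x = 2\sin^2(x/2)$ together with the elementary bounds $\frac{2x}{\pi} \le \sin x \le x$ valid on $[0, \pi/2]$, applied with $x = \pi/(2(L+1))$. This immediately gives the two-sided bound
\begin{equation*}
\frac{2}{(L+1)^2} \le \lambda \le \frac{\pi^2}{2(L+1)^2},
\end{equation*}
which is precisely $\lambda \in \Theta(1/L^2)$.

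There is no real obstacle here; the only subtle point is recognizing that the self-loops are precisely what forces the half-integer shift $t + \tfrac{1}{2}$ in the cosine ansatz (as opposed to the integer-argument eigenvectors one gets for the usual Dirichlet path graph without loops). Once that shift is on the table, the rest is routine trigonometry.
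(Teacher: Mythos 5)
Your proof is correct and complete. The paper itself does not prove this proposition at all --- it simply cites Yueh's results on eigenvalues of tridiagonal matrices --- so your argument is a genuinely self-contained alternative rather than a variant of the paper's route. The key steps all check out: the self-loops at $0$ and $L$ make the graph $2$-regular, so $\mathcal{L}=\mathbb{I}-\tfrac{1}{2}A$ and the problem reduces to the adjacency spectrum; the ansatz $f_t=\cos\bigl(\theta(t+\tfrac12)\bigr)$ satisfies the interior recurrence with $\alpha=2\cos\theta$, and the left boundary condition $f_1=(2\cos\theta-1)f_0$ holds identically for every $\theta$ precisely because of the half-integer shift (one line with the product-to-sum formula confirms this). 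The right boundary condition reduces to $\cos\bigl(\theta(L+\tfrac32)\bigr)=\cos\bigl(\theta(L+\tfrac12)\bigr)$, i.e.\ $\sin\bigl(\theta(L+1)\bigr)\sin(\theta/2)=0$, which gives exactly the quantization $\theta_k=k\pi/(L+1)$, $0\le k\le L$; these yield $L+1$ distinct values of $\cos\theta_k$ and hence exhaust the spectrum. The asymptotic estimate via $1-\cos x=2\sin^2(x/2)$ and $\tfrac{2x}{\pi}\le\sin x\le x$ on $[0,\pi/2]$ gives the stated two-sided bound and is the standard way to extract $\Theta(1/L^2)$. The only thing I would add for completeness is the one-sentence observation that each eigenvector is nonzero (for $0<\theta_k<\pi$ the entries $\cos\bigl(\theta_k(t+\tfrac12)\bigr)$ cannot all vanish), but this is routine. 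Your explicit derivation arguably adds value over the paper, which leaves the reader to chase the reference.
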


The Kitaev Hamiltonian $H(s)=H_{prop}(s) + H_{in} + H_{graph}$ is a standard graph Hamiltonian with underlying regular path graph of length $L'+1=L_i+L+L_f+1$ simulating a quantum circuit $C=\big(n, (U_1, \dots U_L)\big)$. The Hamiltonian has an $L'$-qubit extended graph space and is defined by
\begin{align*}
H_{prop}(s):=&  \frac{1}{2}\Big( \overline{\ket{0}\bra{0}}\otimes \mathbb{I} + \overline{\ket{L'}\bra{L'}}\otimes \mathbb{I} \Big) + \sum_{t=1}^{L'-1} \overline{\ket{t}\bra{t}} \otimes \mathbb{I}\\
&- \frac{1}{2} \sum_{t=1}^{L'} \Big( \overline{\ket{t}\bra{t-1}} \otimes U_t(s) + \overline{\ket{t-1}\bra{t}} \otimes U_t^\dagger(s)\Big)\\
H_{in} :=& \frac{1}{n} \sum_{t=0}^{L_i}\sum_{i=1}^n \overline{\ket{t}\bra{t}} \otimes \ket{1}\bra{1}_i\\
H_{graph}:=& \frac{1}{L'} \sum_{i = 2}^{L'} \ket{01}\bra{01}_{i-1,i} \otimes \mathbb{I}\text{.}
\end{align*}
A lower index of an operator indicates the qubit(s) on which the operator acts non-trivially. The notation of overlined operators will also be used in the following chapter and is defined as
\begin{definition}\label{def:overlined}
Given a $L'$-qubit space we write
\begin{align*}
\overline{\ket{t}\bra{t}}:=& \ket{110}\bra{110}_{t-1, t, t+1}\\
\overline{\ket{t-1}\bra{t-1}}:=& \ket{100}\bra{100}_{t-1, t, t+1}\\
\overline{\ket{t}\bra{t-1}}:=& \ket{110}\bra{100}_{t-1, t, t+1}\\
\overline{\ket{t-1}\bra{t}}:=& \ket{100}\bra{110}_{t-1, t, t+1}
\end{align*}
for $1 < t < L'$ and
\begin{align*}
\overline{\ket{1}\bra{1}}:=& \ket{10}\bra{10}_{1,2}& \hspace{1em}
\overline{\ket{L'}\bra{L'}}:=& \ket{11}\bra{11}_{L'-1, L'}\\
\overline{\ket{0}\bra{0}}:=& \ket{00}\bra{00}_{1,2}& 
\overline{\ket{L'-1}\bra{L'-1}}:=& \ket{10}\bra{10}_{L'-1, L'}\\
\overline{\ket{1}\bra{0}}:=& \ket{10}\bra{00}_{1,2}& 
\overline{\ket{L'}\bra{L'-1}}:=& \ket{11}\bra{10}_{L'-1, L'}\\
\overline{\ket{0}\bra{1}}:=& \ket{00}\bra{10}_{1,2}&
\overline{\ket{L'-1}\bra{L'}}:=& \ket{10}\bra{11}_{L'-1, L'}\text{.}
\end{align*}
\end{definition}

We use the notation of the overlined operators since they reduce to the operators between the corresponding vertex states if we restrict the Hamiltonian to the subspace of proper network states
\begin{align*}
D:= \big\{\ket{t} \otimes \ket{x} \,\vert\, 0 \le t \le L',\, x \in \{0,1\}^n\big\}
\end{align*}
with vertex states
\begin{align*}
\ket{t} := &\ket{1\dots \mathrel{\mathop{1}_{\substack{\uparrow\\ \mathclap{\text{t-th qubit}}}}} 0 \dots 0}\text{.}
\end{align*}
Explicitly the Hamiltonian $H(s)$ restricted to $D$ becomes
\begin{align*}
H(s)_{\vert D}:=& H_{prop \vert D}(s) + H_{in\, \vert D}\\
H_{prop\vert D}(s):=& \frac{1}{2}\Big( \ket{0}\bra{0}\otimes \mathbb{I} + \ket{L'}\bra{L'}\otimes \mathbb{I} \Big) + \sum_{t=1}^{L'-1} \ket{t}\bra{t} \otimes \mathbb{I}\\
&- \frac{1}{2} \sum_{t=1}^{L'} \Big( \ket{t}\bra{t-1} \otimes U_t(s) + \ket{t-1}\bra{t} \otimes U_t^\dagger(s)\Big)\\
H_{in\vert D}:=& \frac{1}{n} \sum_{t=0}^{L_i} \sum_{t=1}^n \ket{t}\bra{t} \otimes \ket{1}\bra{1}_i\text{.}
\end{align*}

We can now see that $H(s)$ indeed fulfills Definition \ref{def:standard} of a standard graph Hamiltonian: It leaves the subspace $D$ and the orthogonal subspace $D^\perp$ invariant, $H_{prop\vert D}(s)$ equals the normalized Laplacian of our desired parallel transport network with $\mathcal{T}(\ket{t})=t$, the penalty term $H_{in}$ is positive semi-definite, the null-space of $H_{in\vert D}$ is spanned by
\begin{align*}
\big\{ \ket{t} \otimes \ket{\mathbf{0}} \,\big\vert\, 0\le t \le L_i \big\}
\cup \big\{ \ket{t} \otimes \ket{x} \,\big\vert\, t>L_i,\, x\in \{0,1\}^n\big\}
\end{align*}
and $H_{graph}$ vanishes on $D$ and has an expectation value $\ge \frac{1}{L'}$ for all states from $D^\perp$.
\begin{figure}[bht]
\centering
\begin{tikzpicture}[scale=0.8]
\draw[line width=1pt] (-0.8,0) circle (0.4);
\draw[line width=1pt] (15.8,0) circle (0.4);
\node (0) at (0,0) [circle,shade,draw,minimum size=10mm] {$0$};
\node (1) at (3,0) [circle,shade,draw,minimum size=10mm] {$1$};
\node (2) at (6,0) [circle,shade,draw,minimum size=10mm] {$2$};
\node (P) at (9,0) [minimum size=10mm] {$\cdots$};
\node (L1) at (12,0) [circle,shade,draw,minimum size=10mm] {${\scriptstyle L'-1}$};
\node (L) at (15,0) [circle,shade,draw,minimum size=10mm] {$L'$};
\draw[->, line width=1pt] (0) to[out=10, in=170] (1);
\draw[->, line width=1pt] (1) to[out=190, in=350] (0);
\draw[->, line width=1pt] (1) to[out=10, in=170] (2);
\draw[->, line width=1pt] (2) to[out=190, in=350] (1);
\draw[->, line width=1pt] (2) to[out=10, in=170] (P);
\draw[->, line width=1pt] (P) to[out=190, in=350] (2);
\draw[->, line width=1pt] (P) to[out=10, in=170] (L1);
\draw[->, line width=1pt] (L1) to[out=190, in=350] (P);
\draw[->, line width=1pt] (L1) to[out=10, in=170] (L);
\draw[->, line width=1pt] (L) to[out=190, in=350] (L1);
\node at (1.5,0.6) {$U_1$};
\node at (1.5,-0.6) {$U_1^\dagger$};
\node at (4.5,0.6) {$U_2$};
\node at (4.5,-0.6) {$U_2^\dagger$};
\node at (13.5,0.6) {$U_{L'}$};
\node at (13.5,-0.6) {$U_{L'}^\dagger$};
\node at (-1.5,0) {$\mathbb{I}$};
\node at (16.5,0) {$\mathbb{I}$};
\end{tikzpicture}
\caption{The parallel transport network of the Kitaev Hamiltonian}
\end{figure}
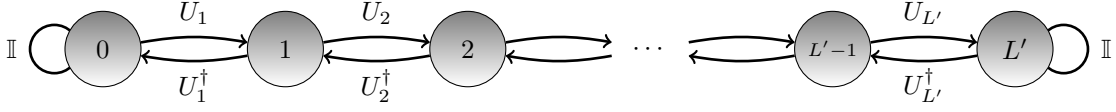

$H_{in}$ is defined such that $\ket{\mathbf{0}}$ is the only valid circuit input. We can hence use the Kitaev Hamiltonian for an adiabatic quantum computation with unique ground-state
\begin{align*}
\ket{\eta(s)}:=& \frac{1}{\sqrt{L'+1}} \sum_{t=0}^{L'} \ket{t} \otimes U_t(s) U_{t-1}(s) \dots U_1(s) \ket{\mathbf{0}}\text{.}
\end{align*}

One can easily check that the efficiency requirements 1--5 from the list in Section \ref{sec:circuitSimReq} are fulfilled. It remains to determine the gap of the Hamiltonian and the probability to measure the correct computation output at the end. Since the gap $\frac{1}{L'}$ of $H_{graph}$ and the gap $\frac{1}{n}$ of $H_{in\vert D}$ are larger than the spectral gap $\lambda\in \theta\left(\frac{1}{L'^2}\right)$ of $H_{prop\vert D}(s)$, the gap $\gamma$ of the Hamiltonian is determined according to Lemmata \ref{lem:gaps} and \ref{lem:angle} by
\begin{align*}
\gamma \in \Omega\left( \lambda \sin^2\left(\frac{\theta}{2}\right) \right)
\end{align*}
with $\theta$ denoting the gap angle.

According to Proposition \ref{prop:angleExpl} the gap angle is given by
\begin{align*}
\sin^2(\theta)&= \frac{\sum_{t=0}^{L_i}\vol(V_t)}{\vol(V)}= \frac{L_i+1}{L'+1}\\
\sin^2\left(\frac{\theta}{2}\right) &= \frac{1}{4} \frac{\sin^2\theta}{\cos^2\frac{\theta}{2}}
\ge\frac{1}{4}\frac{L_i+1}{L'+1}
\end{align*}
and according to Proposition \ref{prop:outputProb} we obtain the correct computation output $U_{L'} \dots U_1 \ket{\mathbf{0}}$ after measuring on the extended graph subsystem of the final ground-state $\ket{\eta(1)}$ in the extended vertex basis and discarding the subsystem with probability
\begin{align*}
p &= \frac{\sum\limits_{t=L'-L_f}^{L'} \vol(V_t)}{\vol(V)} = \frac{L_f+1}{L'+1}\text{.}
\end{align*}

If we had not added any initial and final identity gates to the circuit, then the gap angle would be lower bounded by $\Omega\left(\frac{1}{L}\right)$ and we would obtain
\begin{align*}
\gamma \in \Omega\left(\frac{1}{L^3}\right) \quad\quad\text{ and }\quad\quad p \in \Theta\left(\frac{1}{L}\right)\text{.}
\end{align*}
At the end of Section \ref{sec:circuitSimReq} we noted that in the literature adiabatic quantum computation is usually presented without final measurement and consequently forbids the addition of identity gates to improve the gap angle. Hence the scaling behaviour $\Omega\left(\frac{1}{L^3}\right)$ is usually given as result for the gap of the Kitaev Hamiltonian. But we also stated that our concept with a final measurement and the tool of adding identity gates makes more sense for comparisions with the performance of other standard graph Hamiltonians. Hence if we choose $L_i=L$ initial and $L_f=L$ final identity gates, the gap angle is lower bounded by a constant and we can improve our bounds to
\begin{align*}
\gamma \in \Omega\left(\frac{1}{L^2}\right) \quad\quad\text{ and }\quad\quad p \in \Theta\left(1\right)\text{.}
\end{align*}

In the next chapters we will try to improve the gap by looking at standard graph Hamiltonians with different underlying graphs. Unfortunately we will not succeed in lifting the gap above $\Omega\left(\frac{1}{L^2}\right)$. So it may be the case that the Kitaev contruction, based on the most simple and natural path graph, already offers the optimal adiabatic quantum computation by a standard graph Hamiltonian.

\chapter{Adiabatic Quantum Computation with a Hypercube Hamiltonian}\label{chap:hypercube}
\ihead{CHAPTER 7.\,\, AQC WITH A HYPERCUBE HAMILTONIAN}

\section{Motivation and definition of the Hamiltonian}\label{sec:hypercube}

The Kitaev Hamiltonian of the previous chapter offered us an effective adiabatic quantum computation with an energy gap that equaled the spectral gap $\lambda \in \Omega\left(\frac{1}{L^2}\right)$ of the underlying path graph. We already know from Theorem \ref{thm:linearDiam} that a standard graph Hamiltonian construction based on a graph with constant degree ratio and a better gap $\lambda \in \Omega\left(\frac{1}{L^k}\right)$, $k<2$, implies that the vertex set of the graph grows faster than polynomial. On the other hand we are required to write the Hamiltonian in local terms which often turns out to be very difficult for nonpolynomial sized graphs. 

It is probably the simplest solution to investigate graphs whose vertices are identified by bitstrings and have edges only between vertices that maximally differ by a constant number of bits. These graphs might offer a good chance for local implementation. This idea leads us to Cayley graphs $\cay(H,S)$ based on the group $H=\big(\{0,1\}^N, \oplus\big)$ of $N$-bitstrings with bitwise addition $\oplus$ and the symmetric set $S\subseteq\{s \,\vert\, s\in \{0,1\}^{N},\, \vert s \vert_1 \le k\}$ with $k$ a locality constant and $\vert s\vert_1$ denoting the Hamming weight of the bit string $s$. As a further advantage we already know from Theorem \ref{thm:cayleyBitwise} an easy expression for the eigenvalues of such graphs:
\begin{align}
\lambda_x = 1 - \frac{1}{\vert S\vert} \sum_{s\in S} (-1)^{\langle x, s\rangle}\text{,} \quad\quad x\in \{0,1\}^k\text{.}\label{eq:eigCay}
\end{align}

The most simple graph of this kind is certainly the hypercube whose vertices are adjacent iff they differ in exactly one bit:
\begin{definition}\label{def:hypercube}
The Cayley graph $\cay(H,S)$ with
\begin{align*}
H:=&\big(\{0,1\}^{L},\oplus\big)\\
S:=&\{e_i \,\vert\, 1\le i \le L\}\\
e_i:=& 0\dots 0  \mathrel{\mathop{1}_{\substack{\uparrow\\ \mathclap{\text{i-th bit}}}}} 0 \dots 0
\end{align*}
is called the \textsf{hypercube graph} of degree $L$.
\end{definition}

\begin{lem}\label{lem:hypercubeGap}
For the hypercube graph of degree $L$ the spectral gap obeys
\begin{align*}
\lambda = \frac{2}{L}\text{.}
\end{align*}
\end{lem}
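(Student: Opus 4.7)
The plan is to apply Theorem \ref{thm:cayleyBitwise} directly, since the hypercube is defined as the Cayley graph $\cay(H,S)$ with $H=(\{0,1\}^L,\oplus)$ and $S=\{e_i\}_{i=1}^L$, precisely the setting where the theorem gives an explicit formula for all eigenvalues in closed form. Consequently the result should be essentially an evaluation of that formula; no further structural arguments about the graph are needed.

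First I would compute $\langle x, e_i\rangle$: since $e_i$ has a single $1$ in position $i$, we have $\langle x,e_i\rangle = x_i \pmod 2$, so $(-1)^{\langle x,e_i\rangle}=(-1)^{x_i}$. Plugging into the eigenvalue formula of Theorem \ref{thm:cayleyBitwise} with $|S|=L$ gives, for every $x\in\{0,1\}^L$,
\begin{align*}
\lambda_x \;=\; 1 - \frac{1}{L}\sum_{i=1}^{L}(-1)^{x_i} \;=\; 1 - \frac{(L - |x|_1) - |x|_1}{L} \;=\; \frac{2|x|_1}{L},
\end{align*}
where $|x|_1$ denotes the Hamming weight of $x$.

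Next I would identify the spectral gap as the smallest nonzero $\lambda_x$. Since $\lambda_x$ depends only on $|x|_1$ and ranges over $\{0,\tfrac{2}{L},\tfrac{4}{L},\dots,2\}$ as $|x|_1$ varies over $\{0,1,\dots,L\}$, the value $\lambda_x=0$ is uniquely attained at $x=0$ (consistent with Theorem \ref{thm:spectra} since the hypercube is connected), and the next smallest value is achieved by any $x$ of Hamming weight $1$, giving $\lambda = \tfrac{2}{L}$, as claimed.

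There is no real obstacle here; the only small point to mention is that Theorem \ref{thm:cayleyBitwise} yields $2^L$ linearly independent eigenvectors, so the list $\{\lambda_x\}_{x\in H}$ is the full spectrum and the minimum over $x\ne 0$ is indeed the spectral gap rather than a lower bound on it. $\qed$
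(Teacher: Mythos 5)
Your proposal is correct and follows essentially the same route as the paper: both apply the eigenvalue formula of Theorem \ref{thm:cayleyBitwise} for Cayley graphs over $(\{0,1\}^L,\oplus)$, reduce the eigenvalue at $x$ to $\tfrac{2}{L}$ times the Hamming weight of $x$ (the paper phrases this as $\tfrac{2}{L}\,\bigl\vert\{s\in S \mid \langle x,s\rangle=1\}\bigr\vert$), and minimize over $x\ne 0$ at weight one. Your closing remark that the $2^L$ characters exhaust the spectrum is a welcome explicit justification of a point the paper leaves implicit.
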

\begin{proof}
The all zero bitstring $x=0^N$ leads in equation \eqref{eq:eigCay} to the nondegenerated eigenvalue $0$, the next lowest eigenvalue i.e. the spectral gap is consequently given by
\begin{align*}
\lambda&=\min_{\substack{x\in \{0,1\}^{N}\\x \ne 0^N}} \left(1-\frac{1}{\vert S \vert} \sum_{s\in S} (-1)^{\langle x, s\rangle} \right)
=\frac{2}{L} \min_{\substack{x\in \{0,1\}^{N}\\x \ne 0^N}} \left\vert \big\{ s\in S \,\big\vert\, {\langle x, s\rangle}=1\big\}\right\vert
\stackrel{x=e_i}{=} \frac{2}{L}\text{.}\qedhere
\end{align*}
\end{proof}

A hypercube of degree $L$ has obviously diameter $L$ and can hence simulate a quantum circuit of that length. Since we proved in Theorem \ref{thm:tradeoff} that no standard graph Hamiltonian based on a graph with spectral gap $\lambda \in \Omega\left(\frac{1}{L^k}\right)$, $k<1$, can be used for an efficient adiabatic quantum computation, the gap $\lambda\in \Theta\left(\frac{1}{L}\right)$ of the hypercube might ultimately be the optimal possible improvement. So at first glance the hypercube looks like a very promising candidate for an efficient simulation of a quantum circuit $C$ via adiabatic quantum computation. Our first goal is hence to find a local formulation of the normalized Laplacian of a parallel transport network based on the hypercube graph that implements $C'(s)=\big(n, (U_1(s), \dots U_{L'}(s)) \big)$, the corresponding time-dependent circut to an identitiy extension of $C$.

It is very natural to locally implement the normalized Laplacian of a hypercube \textit{graph} on a $L'$-qubit space by identifying each computational basis state $\ket{z}$ with the vertex $z$ according to the above Definition \ref{def:hypercube}. Each interaction term representing an edge is then local since it exactly inverts one qubit. The local implementation gets unfortunately a little more complicated for the normalized Laplacian of the \textit{parallel transport network} with underlying hypercube graph, since the interaction terms have to include the unitary belonging to this edge. Our idea is to consider the Hamming weight $\vert z\vert_1$ of a vertex $z$ as corresponding time step $t$. Notice that this is possible because the Hamming weight of adjacent vertices differs exactly by $1$. Unfortunately an interaction term that inverts a certain qubit affects every vertex $z$, independently of its Hamming weight and hence of its time step. That's why we will identify each vertex $z$ in Definition \ref{def:hypercube} by the computational basis vector $\ket{z} \otimes \ket{t}$ with $t$ the Hamming weight of $z$. The extended graph space of our standard graph Hamiltonian $H(s)$ according to Definition \ref{def:standard} is hence already a tensor product of two $L'$-qubit spaces, the first is supposed to carry the label of a vertex and the second the information about its time step (its Hamming weight). For the latter one we use again an encoding with the overlined operators from Definition \ref{def:overlined}:
\begin{align*}
H(s):=&H_{prop}(s) + H_{in} + H_{graph}\\
H_{prop}(s):=& \mathbb{I}\otimes  \mathbb{I}\otimes \mathbb{I}
- \frac{1}{L'}\sum_{j=1}^{L'} \sum_{t=1}^{L'} 
\Big(\ket{1}\bra{0}_j \otimes \overline{\ket{t}\bra{t-1}} \otimes U_t(s)
+\ket{0}\bra{1}_j \otimes \overline{\ket{t-1}\bra{t}} \otimes U_t^\dagger(s) \Big)\\
H_{in}:=& \frac{1}{n}\mathbb{I} \otimes \sum_{t=0}^{L_i}\overline{\ket{t}\bra{t}} \otimes \sum_{i=1}^n \ket{1}\bra{1}_i\\
H_{graph}:=& H_{label}+H_{weight}\\
H_{label}:=& \frac{1}{L'}\sum_{i=1}^{L'} \ket{1}\bra{1}_i  \otimes  \overline{\ket{0}\bra{0}} \otimes \mathbb{I}\\
H_{weight}:=& \frac{1}{L'}\mathbb{I} \otimes \sum_{i=2}^{L'} \ket{01}\bra{01}_{i-1,i} \otimes \mathbb{I}\text{.}
\end{align*}

We can recognize the structure of the parallel transport network better, if we again restrict the Hamiltonian to the subspace of proper network states
\begin{align*}
D:=\big\{ \ket{z} \otimes \ket{t} \otimes \ket{x} \quad\big\vert\, 0 \le t \le L',\, z\in\{0,1\}^{L'},\, \vert z\vert_1 = t,\, x \in \{0,1\}^n \big\}
\end{align*}
with
\begin{align*}
\ket{t} := &\ket{1\dots \mathrel{\mathop{1}_{\substack{\uparrow\\ \mathclap{\text{t-th qubit}}}}} 0 \dots 0}\text{.}
\end{align*}

Our Hamiltonian $H(s)$ restricted to this subspace becomes
\begin{align*}
H_{\vert D}(s):=&H_{prop\vert D}(s) +  H_{in\vert D}\\
H_{prop\vert D}(s):=& \sum_{t=0}^{L'} \sum_{\substack{z\in \{0,1\}^{L'} \\ \vert z \vert_1 = t}} \ket{z}\bra{z} \otimes \ket{t}\bra{t} \otimes \mathbb{I}\\
&-\frac{1}{L'}\sum_{t=1}^{L'} \sum_{\substack{z\in\{0,1\}^{L'}\\ \vert z \vert_1 = t}} \sum_{\substack{j\\ \vert z \vert_j = 1}}\Bigg(\ket{z}\bra{z \oplus e_j} \otimes \ket{t}\bra{t-1} \otimes U_t(s)\\
&\hspace{11.3em}+ \ket{z \oplus e_j}\bra{z} \otimes \ket{t-1}\bra{t} \otimes U_t^\dagger(s) \Bigg)\\
H_{in\vert D}:=& \frac{1}{n}\mathbb{I} \otimes  \sum_{t=0}^{L_i}\ket{t}\bra{t} \otimes \sum_{i=1}^n \ket{1}\bra{1}_i\text{.}
\end{align*}

Now we can see that $H_{prop\vert D}(s)$ is indeed the normalized Laplacian of a parallel transport network $\mathcal{G}=(G,\mathcal{T},\mathcal{U},\pi)$ implementing the circuit $C(s)=\big(n, (U_1(s), \dots U_{L'}(s))\big)$ whose underlying graph $G$ is a hypercube of degree $L'$ defined on the vertex set $\big\{ \ket{z} \otimes \ket{t} \,\big\vert\, 0 \le t \le L',\, z\in\{0,1\}^{L'},\, \vert z\vert_1 = t\big\}$ with $\mathcal{T}(\ket{z} \otimes \ket{t})=t$.

\begin{figure}[hb]
\centering
\begin{tikzpicture}[scale=0.8]
\node (0000) at (0,0) [circle,shade,draw,minimum size=8mm] {$0000$};
\node (0001) at (4,3) [circle,shade,draw,minimum size=8mm] {$0001$};
\node (0010) at (4,1) [circle,shade,draw,minimum size=8mm] {$0010$};
\node (0100) at (4,-1) [circle,shade,draw,minimum size=8mm] {$0100$};
\node (1000) at (4,-3) [circle,shade,draw,minimum size=8mm] {$1000$};
\node (0011) at (8,5) [circle,shade,draw,minimum size=8mm] {$0011$};
\node (0101) at (8,3) [circle,shade,draw,minimum size=8mm] {$0101$};
\node (0110) at (8,1) [circle,shade,draw,minimum size=8mm] {$0110$};
\node (1001) at (8,-1) [circle,shade,draw,minimum size=8mm] {$1001$};
\node (1010) at (8,-3) [circle,shade,draw,minimum size=8mm] {$1010$};
\node (1100) at (8,-5) [circle,shade,draw,minimum size=8mm] {$1100$};
\node (0111) at (12,3) [circle,shade,draw,minimum size=8mm] {$0111$};
\node (1011) at (12,1) [circle,shade,draw,minimum size=8mm] {$1011$};
\node (1101) at (12,-1) [circle,shade,draw,minimum size=8mm] {$1101$};
\node (1110) at (12,-3) [circle,shade,draw,minimum size=8mm] {$1110$};
\node (1111) at (16,0) [circle,shade,draw,minimum size=8mm] {$1111$};

\node at (1.9,1.9) {$U_1$};
\node at (5.9,4.35) {$U_2$};
\node at (10.1,4.35) {$U_3$};
\node at (14.1,1.9) {$U_4$};

\node at (0,-1.1){${\color{gray}{\ket{0000}\otimes \ket{0}}}$};
\node at (4,-4.1){${\color{gray}{\ket{1000}\otimes \ket{1}}}$};
\node at (6.2,-5.3){${\color{gray}{\ket{1100}\otimes \ket{2}}}$};
\node at (12,-4.1){${\color{gray}{\ket{1110}\otimes \ket{3}}}$};
\node at (16.1,-1.1){${\color{gray}{\ket{1111}\otimes \ket{4}}}$};

\node at (4,4){${\color{gray}{\ket{0001}\otimes \ket{1}}}$};
\node at (6.2,5.3){${\color{gray}{\ket{0011}\otimes \ket{2}}}$};
\node at (12,4.05){${\color{gray}{\ket{0111}\otimes \ket{3}}}$};

\draw[->, line width=1pt] (0000) to (0001);
\draw[->, line width=1pt] (0000) to (0010);
\draw[->, line width=1pt] (0000) to (0100);
\draw[->, line width=1pt] (0000) to (1000);

\draw[->, line width=1pt] (0001) to (0011);
\draw[->, line width=1pt] (0001) to (0101);
\draw[->, line width=1pt] (0001) to (1001);

\draw[->, line width=1pt] (0010) to (0011);
\draw[->, line width=1pt] (0010) to (0110);
\draw[->, line width=1pt] (0010) to (1010);

\draw[->, line width=1pt] (0100) to (0101);
\draw[->, line width=1pt] (0100) to (0110);
\draw[->, line width=1pt] (0100) to (1100);

\draw[->, line width=1pt] (1000) to (1001);
\draw[->, line width=1pt] (1000) to (1010);
\draw[->, line width=1pt] (1000) to (1100);

\draw[->, line width=1pt] (0111) to (1111);
\draw[->, line width=1pt] (1011) to (1111);
\draw[->, line width=1pt] (1101) to (1111);
\draw[->, line width=1pt] (1110) to (1111);

\draw[->, line width=1pt] (0011) to (0111);
\draw[->, line width=1pt] (0101) to (0111);
\draw[->, line width=1pt] (0110) to (0111);

\draw[->, line width=1pt] (0011) to (1011);
\draw[->, line width=1pt] (1010) to (1011);
\draw[->, line width=1pt] (1001) to (1011);

\draw[->, line width=1pt] (0101) to (1101);
\draw[->, line width=1pt] (1001) to (1101);
\draw[->, line width=1pt] (1100) to (1101);

\draw[->, line width=1pt] (0110) to (1110);
\draw[->, line width=1pt] (1010) to (1110);
\draw[->, line width=1pt] (1100) to (1110);
\end{tikzpicture}
\caption{Parallel transport network with underlying hypercube graph of degreee $4$. 
For the sake of clarity only the upper path is labelled by unitaries. Edges vertically below carry the same unitaries. The actual vertex states are attached by way of example to the upper and lower vertices.}
\end{figure}
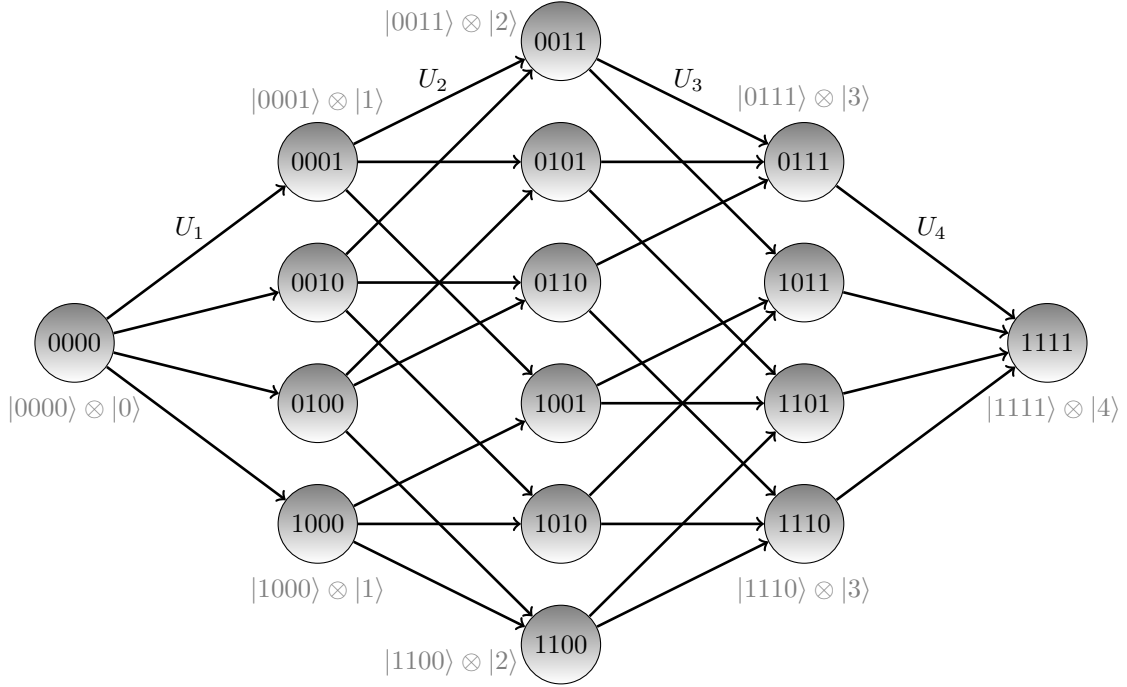

Also the other properties of a standard graph Hamiltonian from Definition \ref{def:standard} are fulfilled: $H(s)$ leaves the subspace $D$ of proper network states and its orthogonal space $D^\perp$ invariant, the penalty term $H_{in}$ is positive semi-definite, the null-space of $H_{in\vert D}$ is spanned by
\begin{align*}
&\big\{ \ket{z} \otimes \ket{t} \otimes \ket{\mathbf{0}} \,\big\vert\, z\in \{0,1\}^{L'},\, \vert z \vert_1 = t\le L_i\big\}\\
\cup &\big\{ \ket{z} \otimes \ket{t} \otimes \ket{x} \,\big\vert\,  z\in \{0,1\}^{L'},\, \vert z \vert_1=  t>L_i,\, x \in \{0,1\}^n\big\}
\end{align*}
and $H_{graph}$ vanishes on D and has an expectation value $\ge \frac{1}{L'} $ for all states from $D^\perp$.

$H_{in}$ is defined such that again $\ket{\mathbf{0}}$ is the only valid circuit input. We can hence use the constructed hypercube Hamiltonian for an adiabatic quantum computation with unique ground-state
\begin{align*}
\ket{\eta(s)}=\frac{1}{\sqrt{2^{L'}}}\sum_{t=0}^{L'} \sum_{\substack{z\in \{0,1\}^{L'}\\ \vert z\vert_1=t}} \ket{z} \otimes \ket{t} \otimes U_t(s) U_{t-1}(s) \dots U_1(s) \ket{\mathbf{0}}\text{.}
\end{align*}

The efficiency requirements 1--4 from the list in Section \ref{sec:circuitSimReq} are obviously fulfilled. Moreover the initial ground-state can be prepared from the the state $\ket{\mathbf{0}}\otimes \ket{\mathbf{0}}\otimes \ket{\mathbf{0}}$ by an efficient quantum circuit as required in point 5: Let's call the first $L'$ qubits the ``vertex qubits'' since they actually label the vertices and call the second $L'$ qubits the ``Hamming weight qubits''. First we apply Hadamard gates to each vertex qubit. Afterwards we apply for each vertex qubit the following circuit, which acts in addition to the vertex qubit on all Hamming weight qubits and increases the encoded Hamming weight by $1$ iff the vertex qubit equals $\ket{1}$:

\begin{figure}[\textwidth]
\centering
\begin{tikzpicture}[scale=0.9]
\draw[line width=1pt] (0,0) to (10,0);
\draw[line width=1pt] (0,-1.5) to (10,-1.5);
\draw[line width=1pt] (0,-2) to (10,-2);
\draw[line width=1pt] (0,-3) to (10,-3);
\draw[line width=1pt] (0,-3.5) to (10,-3.5);
\draw[line width=1pt] (0,-4) to (10,-4);

\draw[line width=0.7pt] (1,0) to (1,-4.25);
\draw[line width=0.7pt] (1.6,0) to (1.6,-4);
\draw[line width=0.7pt] (1,-4) circle (0.25);
\draw[line width=0.7pt] (1.6,-3.5) circle (0.25);
\fill (1,0) circle (0.15);
\fill (1.6,0) circle (0.15);
\fill (1,-3.5) circle (0.15);
\fill (1.6,-4) circle (0.15);

\draw[line width=0.7pt] (3,0) to (3,-3.75);
\draw[line width=0.7pt] (3.6,0) to (3.6,-3.5);
\draw[line width=0.7pt] (3,-3.5) circle (0.25);
\draw[line width=0.7pt] (3.6,-3) circle (0.25);
\fill (3,0) circle (0.15);
\fill (3.6,0) circle (0.15);
\fill (3,-3) circle (0.15);
\fill (3.6,-3.5) circle (0.15);

\draw[line width=0.7pt] (7,0) to (7,-2.25);
\draw[line width=0.7pt] (7.6,0) to (7.6,-2);
\draw[line width=0.7pt] (7,-2) circle (0.25);
\draw[line width=0.7pt] (7.6,-1.5) circle (0.25);
\fill (7,0) circle (0.15);
\fill (7.6,0) circle (0.15);
\fill (7,-1.5) circle (0.15);
\fill (7.6,-2) circle (0.15);

\draw[line width=0.7pt] (9,0) to (9,-1.75);
\draw[line width=0.7pt] (9,-1.5) circle (0.25);
\fill (9,0) circle (0.15);

\node at (-1.55,0) {\textsf{\small vertex qubit}};
\node at (-3.4,-2.75) {\textsf{\small qubits for Hamming weight encoding}};
\node at (-0.3,-2.75)[rotate=90] {$\overbrace{\hspace{6.7em}}^{}$};
\fill (5.3,-2.5) circle (0.03);
\fill (5.6,-2.4) circle (0.03);
\fill (5,-2.6) circle (0.03);

\end{tikzpicture}.
\end{figure}

\section{Trade-off between the gap and the output probability}

 It remains to determine the gap $\gamma$ of the Hamiltonian and the probability $p$ to obtain the correct computation output $U_{L'} \dots U_1 \ket{\mathbf{0}}$ after measuring on the graph subsystem of the final ground-state
\begin{align*}
\ket{\eta(1)}=\frac{1}{\sqrt{2^{L'}}}\sum_{t=0}^{L'} \sum_{\substack{z\in \{0,1\}^{L'}\\ \vert z\vert_1=t}} \ket{z} \otimes \ket{t} \otimes U_t U_{t-1} \dots U_1 \ket{\mathbf{0}}
\end{align*}
in the extended vertex basis and then discarding this subsystem. According to  Proposition \ref{prop:outputProb}
\begin{align*}
p = \sum_{t=L'-L_f}^{L'}\frac{\vol(V_t)}{\vol(V)} = \frac{1}{2^{L'}} \sum_{t=L'-L_f}^{L'}\binom{L'}{t}
= \frac{1}{2^{L'}} \sum_{t=0}^{L_f}\binom{L'}{t}\text{.}
\end{align*}

The gap $\frac{1}{L'}$ of $H_{graph}$ and the gap $\frac{1}{n}$ of $H_{in\vert D}$ are contained in $\Omega(\lambda)$ with $\lambda=\frac{2}{L'}$ being the spectral gap of our normalized Laplacian $H_{prop\vert D}$. Hence the gap $\gamma$ of $H(s)$ is determined according to Lemmata \ref{lem:gaps} and \ref{lem:angle} by
\begin{align*}
\Omega\left( \lambda \sin^2\left(\frac{\theta}{2}\right) \right) \ni \gamma \in \mathcal{O}\big(\sin^2(\theta)\big)
\end{align*}
with $\theta$ denoting the gap angle.

Proposition \ref{prop:angleExpl} allows us to write for the gap angle
\begin{align*}
\sin^2(\theta)&= \frac{\sum_{t=0}^{L_i}\vol(V_t)}{\vol(V)}
= \frac{1}{2^{L'}} \sum_{t=0}^{L_i}\binom{L'}{t}\text{,}
\end{align*}
which is exactly the expression for the probability $p$ just with the variable $L_i$ instead of $L_f$. If we did not add any initial or final identity gates, both quantities $p$ and $\sin^2(\theta)$ would decrease like $\frac{1}{2^{L'}}\le \frac{1}{2^{L}}$, which violates the efficieny requirement that $p$ and $\gamma$ are lower bounded by an inverse polynomial in the number of original circuit gates $L$.

The question now is whether the number of initial and final identity gates, $L_i$ and $L_f$, respectively, can be choosen large enough, such that both expressions are lower bounded by an inverse polynomial. Let's investigate the case $L_i=L_f \ge 1$, since if the equality case does not lead to lower bounds by inverse polynomials, then no case with $L_i \ne L_f$ does it either. Since $2L_i + L = L'$ we investigate now the expression
\begin{align*}
p=\sin^2(\theta) = \frac{1}{2^{L'}} \sum_{t=0}^{\frac{L'-L}{2}}\binom{L'}{t}\text{.}
\end{align*}

Unfortunatly it is difficult to transform this sum of binomial coefficients into a useful exact expression, but the next lemma, whose proof we simply refer to \cite[lemmata 2.5 and 3.3]{worsch}, helps us at least to derive a bound:

\begin{lem}\label{lem:worsch}
\quad

\begin{enumerate}[label={\normalfont(\roman*)}]
\item
${\displaystyle \quad\quad 
\sum_{t=0}^{\frac{L'}{a}} \binom{L'}{t} \in \mathcal{O}\left( \sqrt{L'} C(a)^{L'} \right)\quad}$
with $C$ a function obeying $1<C(a)<2$ for all $a>2$.
\item
${\displaystyle \quad\quad 
\binom{L'}{\frac{L'}{2}} < (1+ \epsilon) \sqrt{\frac{2}{\pi}} \frac{1}{\sqrt{L'}} 2^{L'} \quad\,\,}$
for all $\epsilon >0$.
\end{enumerate}
\end{lem}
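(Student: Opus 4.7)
Both bounds are classical facts about binomial coefficients, so the plan is to assemble them from standard tools (binary entropy and Stirling) rather than to invent anything new.

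For part (ii), my plan is to apply Stirling's formula in the sharp form $n! = \sqrt{2\pi n}\,(n/e)^n (1+o(1))$ to both the numerator and denominator of $\binom{L'}{L'/2} = L'!/((L'/2)!)^2$. The $(n/e)^n$ factors collapse to $2^{L'}$, and the $\sqrt{2\pi n}$ factors collapse to $\sqrt{2/(\pi L')}$, yielding
\begin{align*}
\binom{L'}{L'/2} = \sqrt{\tfrac{2}{\pi}}\,\tfrac{1}{\sqrt{L'}}\,2^{L'}\,(1+o(1)).
\end{align*}
The asymptotic error term can be absorbed into the arbitrary $\epsilon>0$ by taking $L'$ sufficiently large (for small $L'$ one checks the finitely many cases by hand, possibly inflating the constant once and for all). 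There is essentially no obstacle here.

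For part (i), the plan is the standard binary-entropy tail bound. Setting $\alpha:=1/a<1/2$ (since $a>2$), the idea is to show
\begin{align*}
\sum_{t=0}^{\lfloor\alpha L'\rfloor}\binom{L'}{t} \le C(a)^{L'}
\end{align*}
with $C(a):=2^{H(\alpha)}$, where $H(\alpha)=-\alpha\log_2\alpha-(1-\alpha)\log_2(1-\alpha)$ is the binary entropy. Since $\alpha<1/2$, the binomial coefficients $\binom{L'}{t}$ are strictly increasing in $t$ on $[0,\alpha L']$, so the largest term dominates; using $\binom{L'}{\alpha L'}\le 2^{H(\alpha) L'}$ (which is immediate from $1 = (\alpha+(1-\alpha))^{L'} \ge \binom{L'}{\alpha L'}\alpha^{\alpha L'}(1-\alpha)^{(1-\alpha)L'}$) and then bounding the rest of the sum by a geometric series with ratio $\alpha/(1-\alpha)<1$ gives the $C(a)^{L'}$ bound. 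The extra $\sqrt{L'}$ factor in the statement is comfortable slack: one can either absorb the geometric-series constant into it, or use the refined estimate $\binom{L'}{\alpha L'} \le \sqrt{L'}\cdot 2^{H(\alpha)L'}/\sqrt{2\pi\alpha(1-\alpha)L'}$ coming from Stirling (the Stirling refinement also produces an explicit $\sqrt{L'}$ factor). Finally, $C(a)=2^{H(1/a)}$ indeed satisfies $1<C(a)<2$ for every $a>2$ since $H$ takes values strictly between $0$ and $1$ on $(0,1/2)$.

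The only mildly delicate step is checking that $\alpha L'$ is treated correctly when it is not an integer; this is handled by replacing $\alpha L'$ with $\lfloor\alpha L'\rfloor$ everywhere and observing that the entropy bound still applies up to a multiplicative constant absorbed into $C(a)$ (or into the $\sqrt{L'}$ prefactor). No step looks like a real obstacle: both statements are textbook-level consequences of Stirling and of the entropy bound on partial binomial sums.
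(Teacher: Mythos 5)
Your plan is sound on both counts, but note that the paper does not actually prove this lemma at all: it simply defers to the cited reference (lemmata 2.5 and 3.3 of \cite{worsch}), so there is no in-paper argument to compare against. What you supply is a self-contained, standard derivation, and it goes through. For (i), the chain $1=(\alpha+(1-\alpha))^{L'}\ge\sum_{t\le\alpha L'}\binom{L'}{t}\alpha^{t}(1-\alpha)^{L'-t}\ge\sum_{t\le\alpha L'}\binom{L'}{t}\alpha^{\alpha L'}(1-\alpha)^{(1-\alpha)L'}$ (using $\alpha/(1-\alpha)<1$ for $t\le\alpha L'$) in fact bounds the \emph{whole} partial sum by $2^{H(\alpha)L'}$ in one step, so you do not even need the geometric-series argument or the floor-function care you flag; the $\sqrt{L'}$ in the statement is, as you say, pure slack, and $C(a)=2^{H(1/a)}\in(1,2)$ for $a>2$ is immediate. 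For (ii), Stirling gives the asymptotic with the correct constant $\sqrt{2/\pi}$, and since the sharp two-sided estimate $\binom{2m}{m}<4^{m}/\sqrt{\pi m}$ actually holds for every $m\ge 1$, the $(1+\epsilon)$ factor is never in danger even for small $L'$ (your remark about ``inflating the constant'' for finitely many cases is unnecessary and would be illegitimate for a fixed arbitrary $\epsilon$, but the statement is only used asymptotically, so the large-$L'$ reading suffices). In short: your route is the textbook entropy-plus-Stirling argument, it is correct, and it buys the reader a proof where the thesis only offers a citation.
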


Let's first assume that we add linearly many identities gates to the original circuit to keep the spectral gap $\lambda \in \Theta\left(\frac{1}{L'}\right)=\Theta\left(\frac{1}{L}\right)$, so
\begin{align*}
L' &= cL  \quad &&\text{with }c> 1\\
\frac{L'-L}{2}  &= \frac{L'}{a} \quad &&\text{with }a=\frac{2c}{c-1}\text{.}
\end{align*}

Since $a>2$ it holds $C(a)<2$ for the function in Lemma \ref{lem:worsch} (i). Hence there exists an $\epsilon>0$ such that $q + \epsilon <1$ with $q:=\left(\frac{C(a)}{2}\right)^c$. We can now derive 
\begin{align*}
p = \sin^2(\theta) &= \frac{1}{2^{L'}}\sum_{t=0}^{\frac{L'}{a}} \binom{L'}{t}\\
&\in \mathcal{O}\left( \sqrt{L'} \left(\frac{C(a)}{2}\right)^{L'}\right)\\
&= \mathcal{O}\left( \sqrt{L} \left(\left(\frac{C(a)}{2}\right)^{c}\right)^{L} \right)\\
&\subseteq \mathcal{O}\left( \left(\frac{1}{q+\epsilon}\right)^{L} q^{L} \right)\\
&= \mathcal{O}\left( \left(\frac{q}{q+\epsilon}\right)^{L} \right)\text{.}
\end{align*}
We see that the gap angle and hence the energy gap $\gamma$ as well as the output probablity $p$ still decrease exponentially with $L$ and can therefore not be lower bounded by an inverse polynomial as required for an efficient adiabatic quantum computation.

Of course one can easily find a non-uniform distribution of initial and final identity gates that allows at least one of the quantities to be lower bounded by an inverse polynomial. We can reach even the constant probability $p=\frac{1}{2}$ by $L_i=0$ and $L_f=L$ or a constant angle $\sin^2(\theta)=\frac{1}{2}$ and therefore a gap $\gamma \in \Omega\left(\frac{1}{L}\right)$ by $L_i=L$ and $L_f=0$. But unfortunately we cannot achieve a sufficient large initial and final vertex set at the same time to balance the exponential volume of the whole vertex set. This is exactly the behaviour described in Section \ref{sec:limitations} for graphs with nonpolynomial vertex set.

Let's consider now to add more than linearly many identity gates symmetrically ($L_i=L_f$) to the original circuit. With the help of Lemma \ref{lem:worsch} (ii) we can derive a sufficiently large lower bound for the output probability and the gap angle:
\begin{align*}
p=\sin^2(\theta)&= \frac{1}{2^{L'}} \sum_{t=0}^{\frac{L'-L}{2}}\binom{L'}{t}\\
&= \frac{1}{2}- \frac{1}{2^{L'}}\sum_{t=\frac{L'-L}{2}+1}^{\frac{L'}{2}}\binom{L'}{t}\\
&\ge\frac{1}{2} -  \frac{1}{2^{L'}} \frac{L}{2} \binom{L'}{ \frac{L'}{2}}\\
& \ge \frac{1}{2} - \frac{1}{2^{L'}}\frac{L}{2} (1+\epsilon) \sqrt{\frac{2}{\pi}} \frac{1}{\sqrt{L'}}2^{L'}\\
&\ge \frac{1}{2} - \frac{L}{\sqrt{L'}}\frac{1+\epsilon}{\sqrt{2\pi}}\text{.}
\end{align*}
If we choose $L'=L^2$, there is an $\epsilon>0$ such that the expression is always lower bounded by a non-vanishing constant, hence
\begin{align*}
p = \sin^2(\theta) \in \Theta(1)\text{.}
\end{align*}
So the probability of measuring the correct output is constant and the gap $\gamma$ of the Hamiltonian $H(s)$ is lower bounded by the spectral gap of the graph Laplacian according to Lemma \ref{lem:angle}. Unfortunately in terms of the origianl circuit this lower bound is now $\gamma \in \Omega\left(\frac{1}{L'}\right) = \Omega\left(\frac{1}{L^2}\right)$, which is not an improvement over the path graph. 

We re-iterate that in this chapter we have set-up a standard graph Hamiltonian based on a hypercube graph that easily fulfills requirements 1--5 from the efficiency and practicability list in Section \ref{sec:circuitSimReq}. Hence the Hamiltonian can be used for an adiabatic quantum computation, but the probability of measuring the correct computation output and its running time, determined by the energy gap, are not ensured to be efficient. If we choose the hypercube of degree $L' \in \Theta(L)$ with $L$ the length of the simulated circuit, one of the mentioned quantities is certainly not efficient. For larger degrees the performance improves, the edge for a provable efficient adiabatic quantum computation is passed at $L' \in \Theta(L^2)$ which implies an energy gap $\gamma \in \Omega\left(\frac{1}{L^2}\right)$ of the Hamiltonian.  Perhaps it is more than just a coincidence that this gap exactly equals the one of the standard Kitaev Hamiltonian construction. Also in the next chapter, when we undertake a short attempt to improve the spectral gap by looking at weighted graphs, we will finally end up again with $\gamma \in \Omega\left(\frac{1}{L^2}\right)$. Apart from this it is at least notable that we have found in this chapter a Hamiltonian based on an exponentially sized graph that can be used for efficient adiabatic quantum computation.

\chapter{Some Ideas about Weighted Graph Hamiltonians}\label{chap:weighted}
\ihead{CHAPTER 8.\,\, SOME IDEAS ABOUT WEIGHTED GRAPH HAMILTONIANS}

\section{Motivation for weighted path reductions}

It is a challenging task to find a Hamiltonian construction for an efficient adiabatic quantum computation that exceeds the gap $\lambda \in \Theta\left(\frac{1}{L^2}\right)$ known from the standard Kitaev construction. From Theorem \ref{thm:linearDiam} we know that any nonweighted parallel transport network with $\lambda \in \Omega\left(\frac{1}{L^k}\right)$, $k<2$, has an vertex set whose size is not upper bounded by a polynomial, hence local implementation of such a network with polynomial many qubits is a non-trivial, to impossible task. In the previous chapter we found a local implementation for one graph of this family, namely the hypercube. Unfortunately it turned out that this graph has a vertex expansion that makes it impossible to have an inverse polynomial output probability and an inverse polynomial gap angle of Laplacian and input penalty term at the same time.

In this chapter we avoid the difficulty to handle nonpolynomial sized graphs by broadening our focus to include weighted graphs. The connection between a large gap and vertex expansion just holds for unweighted graphs, hence among weighted graphs there might be (and indeed are) polynomial sized graphs with arbitrary gaps, even constant ones. Unfortunately the research area of spectral graph theory mostly concentrates on unweighted graphs and therefore weighted graph families with interesting gaps and properties are not widely known. That's why we concentrate in this chaper on transforming unweighted graph families with interesting gaps into weighted graph families by keeping a lower bound on the spectral gap and the neccessary diameter. Section \ref{sec:contraction} introduced two tools for this purpose: contraction and covering.

The easiest graph that can be implemented locally, even if weights are assigned to its edges, is a path graph like in the Kitaev contruction. Since both contraction and covering only affect edges touching the combined vertices, they both offer us a controlled method of reducing the underlying graph of a parallel transport network to a path graph by keeping the diameter and hence the capability to implement the same quantum circuit.

\section{Simplification of local implementation via path contraction}

\begin{definition}
Let $\mathcal{G}=(G,\mathcal{T}',\mathcal{U},\pi)$, $G=(V',E',w')$, be a parallel transport network with time map $\mathcal{T}: V' \rightarrow \{0,\dots L'\}$. Then the parallel transport network $\mathcal{H}=(V,E,w,\mathcal{T},\mathcal{U},\pi)$ with $V=\{0, \dots L'\}$, $\mathcal{T}(t) = t$ for all $t\in V$ and $H=(V,E,w)$ the contraction of $G$ via the contraction function
\begin{align*}
c(a)&= \mathcal{T}'(a)
\end{align*}
for all $a\in V'$ is called the \textsf{path contraction} of $\mathcal{G}$.
\end{definition}

Informally speaking all vertices belonging to the same time step are contracted and the new vertex represents that time step.
Since there is exactly one vertex in $V$ for every time step, we call $H$ a path graph with weights. Of course the definition of the new time map $\mathcal{T}$ is consistent with Definition \ref{def:PTN} of a parallel transport network: vertices of the contracted graph are not connected if their time steps differ more than by $1$. But of course the graph stays connected as a whole,  hence we can use it for an implementation of the same circuit. 

We have seen that networks with a spectral gap $\lambda \in \Omega\left(\frac{1}{L^k}\right)$, $k<2$, may have an output probability or gap angle that is to small for an efficient adiabatic quantum computation because of their nonpolynomial vertex set size. Unfortunately if the original parallel transport network shows this behaviour, then the path contracted network will also:
\begin{lem}
Let $H(s)$ be a standard graph Hamiltonian with underlying parallel transport network $\mathcal{G}$. Then a standard graph Hamiltonian whose underlying parallel transport network is the path contraction of $\mathcal{G}$ has the same gap angle $\theta$ and probability $p$ of measuring the correct computation output as $H(s)$.
\end{lem}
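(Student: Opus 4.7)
The plan is to apply Propositions \ref{prop:outputProb} and \ref{prop:angleExpl}, which both express $p$ and $\sin^2(\theta)$ purely in terms of the ratios $\vol(V_t)/\vol(V)$ of the underlying graph. Hence it suffices to show that these volume ratios are invariant under path contraction.

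First, I would unpack the notation. Let $\mathcal{G}=(G,\mathcal{T}',\mathcal{U},\pi)$ with $G=(V',E',w')$, and let $\mathcal{H}=(H,\mathcal{T},\mathcal{U},\pi)$ be its path contraction with $H=(V,E,w)$, $V=\{0,\dots,L'\}$, $\mathcal{T}(t)=t$. By construction of the path contraction, $c^{-1}(t)=\mathcal{T}'^{-1}(t)=V'_t$ for each $t\in V$. Hence in $\mathcal{H}$ the time-step class $V_t$ consists of the single vertex $t$, and $V=\bigsqcup_{t=0}^{L'} V_t$ trivially.

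Next, the key step is to apply Lemma \ref{lem:contractionDegree} to the contraction function $c$: the degree of the vertex $t$ in $H$ satisfies
\begin{align*}
d_t \;=\; \sum_{a\in c^{-1}(t)} d'_a \;=\; \sum_{a\in V'_t} d'_a \;=\; \vol_G(V'_t).
\end{align*}
Consequently, for every time step $t$,
\begin{align*}
\vol_H(V_t) \;=\; d_t \;=\; \vol_G(V'_t),
\end{align*}
and summing over all $t\in\{0,\dots,L'\}$ gives $\vol_H(V)=\vol_G(V')$.

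Finally, plugging these identities into Propositions \ref{prop:angleExpl} and \ref{prop:outputProb} yields
\begin{align*}
\sin^2(\theta_{\mathcal{H}}) \;=\; \frac{\sum_{t=0}^{L_i}\vol_H(V_t)}{\vol_H(V)} \;=\; \frac{\sum_{t=0}^{L_i}\vol_G(V'_t)}{\vol_G(V')} \;=\; \sin^2(\theta_{\mathcal{G}}),
\end{align*}
and analogously $p_{\mathcal{H}}=p_{\mathcal{G}}$. No genuine obstacle arises; the only subtlety is making sure that the standard graph Hamiltonian built on the contracted network still admits the same $H_{in}$-null-space structure (i.e.\ that initial and final vertex sets are mapped correctly by $c$), which follows because $c$ respects the time map by definition.
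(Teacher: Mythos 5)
Your proposal is correct and follows essentially the same route as the paper: both invoke Lemma \ref{lem:contractionDegree} to show that each time cluster's volume is preserved under path contraction ($\vol_H(V_t)=d_t=\vol_G(V'_t)$) and then read off the equality of $p$ and $\sin^2(\theta)$ from Propositions \ref{prop:outputProb} and \ref{prop:angleExpl}. Your closing remark about the $H_{in}$-null-space structure is a small additional check the paper leaves implicit, but it does not change the argument.
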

\begin{proof}
The reasoning beyond the inheritance of gap angle and output probability is that path contraction preserves the volume of each time cluster. Denote by $G=(V',E',w')$ the underlying graph of $\mathcal{G}$, by $H=(V,E,w)$ the underlying graph of the path contraction $\mathcal{H}$ and by $c$ the contraction function. Then it holds for every time step $t$ according to Lemma \ref{lem:contractionDegree}:
\begin{align*}
\vol(V_t) &= d_t
=  \sum_{a \in c^{-1}(t)} d'_a
=  \sum_{a \in V'_t} d'_a
=\vol(V'_t)\text{,}
\end{align*}
hence the expressions for the gap angle according to Proposition \ref{prop:angleExpl} and for the output probability according to Proposition \ref{prop:outputProb} give the same result for both standard graph Hamiltonians.
\end{proof}
 
We know from Theorem \ref{thm:tradeoff} that for standard graph Hamiltonians with spectral gap $\lambda \in \Omega\left(\frac{1}{L^k}\right)$, $k<1$, it is always the case that either the output probability $p$ or the gap angle $\theta$ cannot be lower bounded by an inverse polynomial. Consequently standard graph Hamiltonians based on path contractions of such parallel transport networks cannot be used for an efficient adiabatic quantum computation either. The only new helpful tool that path contraction offers us is an easier local implementation: Assume one finds a parallel transport network with an attractive spectral gap (for example $\lambda\in \Theta\left(\frac{1}{L}\right)$) and sufficiently large gap angle and output probability but no way to write its normalized Laplacian as a local Hamiltonian. Then one can instead just take the normalized Laplacian of its path contraction, which is the local Kitaev Hamiltonian from Chapter \ref{chap:path} with adjusted prefactors of the interaction terms. Of course one has to ensure that the prefactors now do not contradict requirement 3 from the effecieny list in Section \ref{sec:circuitSimReq}.

As an example let's write down the standard graph Hamiltonian based on the path contraction of the hypercube of degree $L'=L^2$ with symmetrically appended identity gates as we have presented it at the end of the last chapter. Of course we found a way to write a local Hamiltonian for this construction and furthermore its energy gap $\gamma=\Theta\left(\frac{1}{L^2}\right)$ is the same as for the Kitaev Hamiltonian, but it can serve as example to demonstrate the easier local implementation.

Definition \ref{def:contraction} of path contraction tells us that the weights and degrees of the path contracted hypercube equal the following:
\begin{align*}
w(t-1,t)&=t\binom{L'}{t}\\
d_t&=w(t-1,t) + w(t,t+1)= (L'-1) \binom{L'}{t}\text{.}
\end{align*}
Weights and degrees get exponentially large around the center time $t=\frac{L'}{2}$, but the relevant prefactors in the normalized Laplacian
\begin{align*}
\frac{w(t-1,t)}{\sqrt{d_{t-1} d_t}} &=\frac{\sqrt{t(L'-t)}}{L'-1}
\end{align*}
are still lower bounded by polynomials. Hence the standard graph Hamiltonian $H(s)=H_{prop}(s) + H_{in} + H_{graph}$ with
\begin{align*}
H_{prop}(s):=&\left(\sum_{t=0}^{L'} \overline{\ket{t}\bra{t}}\otimes \mathbb{I} \right)
-\left(\sum_{t=1}^{L'} \frac{\sqrt{t(L'-t)}}{L'-1} \left(\overline{\ket{t}\bra{t-1}} \otimes U_t(s)
+\overline{\ket{t-1}\bra{t}} \otimes U^\dagger_t(s)\right)
\right)\\
H_{in} :=& \frac{1}{n} \sum_{t=0}^{L_i}\sum_{i=1}^n \overline{\ket{t}\bra{t}} \otimes \ket{1}\bra{1}_i\\
H_{graph}:=& \frac{1}{L'} \sum_{i = 2}^{L'} \ket{01}\bra{01}_{i-1,i} \otimes \mathbb{I}
\end{align*}
can be used for an efficient adiabatic quantum computation with energy gap $\gamma=\Theta\left(\frac{1}{L^2}\right)$ and constant output probability. Compared to the original hypercube construction this Hamiltonian is only defined on $L'+n$ instead of $2L'+n$ qubits since we got rid of the need to define the artifical Hamming weight space that originally ensured us a local implementation.

\begin{center}
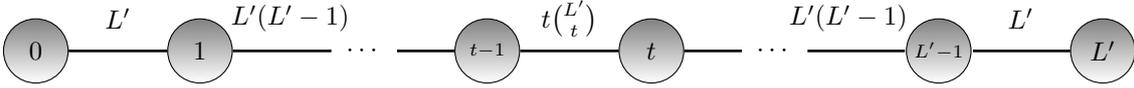
\begin{figure}[h]
\begin{tikzpicture}[scale=1.08]
\node (0) at (0,0) [circle,shade,draw,minimum size=8.5mm] {$0$};
\node (1) at (2,0) [circle,shade,draw,minimum size=8.5mm] {$1$};
\node (P) at (4,0) [minimum size=8.5mm] {$\cdots$};
\node (M1) at (5.5,0) [circle,shade,draw,minimum size=8.5mm] {$\,$};
\node (M1*) at (5.5,0) [minimum size=8.5mm] {$\scriptstyle t-1$};
\node (M2) at (7.5,0) [circle,shade,draw,minimum size=8.5mm] {$\,$};
\node (M2*) at (7.5,0) [minimum size=8.5mm] {$t$};
\node (P2) at (9,0) [minimum size=8.5mm] {$\cdots$};
\node (L1) at (11,0) [circle,shade,draw,minimum size=8.5mm] {$\,$};
\node (L1) at (11,0) [minimum size=8.5mm] {$\scriptstyle L'-1$};
\node (L) at (13,0) [circle,shade,draw,minimum size=8.5mm] {$L'$};
\draw[line width=1pt] (0) to (1);
\draw[line width=1pt] (1) to (P);
\draw[line width=1pt] (P) to (M1);
\draw[line width=1pt] (M1) to (M2);
\draw[line width=1pt] (M2) to (P2);
\draw[line width=1pt] (P2) to (L1);
\draw[line width=1pt] (L1) to (L);
\node at (1,0.4) {$ L'$};
\node at (3.1,0.4) {$ L'(L'-1)$};
\node at (6.5,0.4) {$ t\binom{L'}{t}$};
\node at (9.9,0.4) {$ L'(L'-1)$};
\node at (12,0.4) {$ L'$};
\end{tikzpicture}
\caption{Path contraction of a hypercube with degree $L'$.}
\end{figure}
\end{center}

\section{Hypercube of linear degree as covering graph}

In the previous section we investigated contractions of arbitrary graphs to weighted path graphs to ensure local implementation while keeping the spectral gap. Unfortunately too small gap angles and output probabilities are preserved, too, since the degrees of contracted vertices simply sum up. In this section we therefore want to consider a graph as a covering of a weighted path graph, since in this method the degrees of vertices get adjusted differently -- perhaps for the efficiency benefit of gap angle and output probability?

The weight definition \ref{def:covering} of a covered graph makes the formulation of general statements for path coverings more difficult than for path contractions. Moreover not every graph can be seen as covering of a weighted path graph in contrast to the contraction method. Therefore we will concentrate on the hypercube with a degree linear in the length $L$ of the simulated circuit as an example of a graph with attractive gap $\lambda\in \Theta\left(\frac{1}{L}\right)$ that can be understood as covering of a weighted path graph. It would be a remarkable progress if we could construct via covering a Hamiltonian construction that inherits the gap while achieving an improved output probability and gap angle, which turned out not to be efficient for the hypercube of linear degree.

\begin{definition}\label{def:pathCovering}
Let $\mathcal{G}=(G,\mathcal{T}',\mathcal{U},\pi)$, $G=(V',E',w')$, be a parallel transport network with time map $\mathcal{T}: V' \rightarrow \{0,\dots L'\}$ and $\mathcal{H}=(H,\mathcal{T},\mathcal{U},\pi)$, $H=(V,E,w)$, a parallel transport network with $V=\{0, \dots L'\}$, $\mathcal{T}(t) = t$ for all $t\in V$. Iff $G$ is the covering of $H=(V,E,w)$ via the covering function
\begin{align*}
c(a)&= \mathcal{T}'(a)
\end{align*}
for all $a\in V'$, then $\mathcal{H}$ is called the \textsf{covered path} of $\mathcal{G}$ and $\mathcal{G}$ the \textsf{path covering} of $\mathcal{H}$.
\end{definition}

Similar to the case of path contraction also path covering ensures that the covered network $\mathcal{H}$ is connected and vertices whose assigned time steps differ by more than $1$ are not adjacent. Since there is again exactly one vertex for every time step, $H$ has indeed the form of a weighted path graph. 

Now let $\mathcal{G}=(V',E',w',\mathcal{T}',\mathcal{U},\pi)$ be the parallel transport network with underlying hypercube as it was presented in the previous chapter. According to Definition \ref{def:covering} its covered path $\mathcal{H}=(V,E,w,\mathcal{T},\mathcal{U},\pi)$ has only edges between successive time steps with weights
\begin{align*}
w(t,t+1) &= \frac{1}{\sqrt{\left\vert V_t \right\vert \left\vert V_{t+1}\right\vert}}\sum_{\substack{a \in V'_t \\b \in V'_{t+1}}} w'(a,b)\\
&= \frac{1}{\sqrt{\binom{L'}{t} \binom{L'}{t+1}}} \sqrt{\binom {L'}{t} (L'-t)} \sqrt{\binom{L'}{t+1}(t+1)}\\
&=  \sqrt{(L'-t)(t+1)}\text{.}
\end{align*}

Since this weight definition also fulfills the second requirement in Definition \ref{def:covering}
\begin{align*}
\sum_{b \in V_{t}} w'(a,b) =\sum_{b \in V_{t}} w'(a^*,b) \quad\quad \text{ for all } a,a^* \in V'_{t'} \text{ and for all }t \in V\text{,}
\end{align*}
$\mathcal{G}$ is indeed a correct covering of $\mathcal{H}$.

We already see that the weights grow polynomially and not exponentially as was the case with the path contraction in the previous section. As our covered path has only polynomial many vertices the volume of any vertex set is upper bounded by a polynomial and that is according to Propositions \ref{prop:angleExpl} and \ref{prop:outputProb} already enough to conclude that gap angle and output probability are efficient in contrast to the covering hypercube network.

Let's derive some explicit bounds for the gap angle $\theta$ and the output probability $p$. The weight $w(t,t+1)$ as a function of $t\in \{0, \dots L'-1\}$ is concave and symmetric around its maximum $\frac{L'+1}{2}$ at $t= \frac{L'-1}{2}$ (Figure \ref{fig:hypercube_covering_w}). With this knowledge we can bound
\begin{align*}
d_{max}&\le 2 \max_{0\le t\le L'-1} w(t,t+1) =  L'+1\\
d_{min}&=d_0=w(0,1)= \sqrt{L'}\\
\vol(V) &\ge (L'+1) d_{min} = (L'+1)\sqrt{L'} \ge L'^{1.5}\\
\vol(V) &\le (L'+1) d_{max} \le (L'+1)^2\text{.}
\end{align*}

\begin{figure}[h]
\centering
 \includegraphics[width=0.61\textwidth]{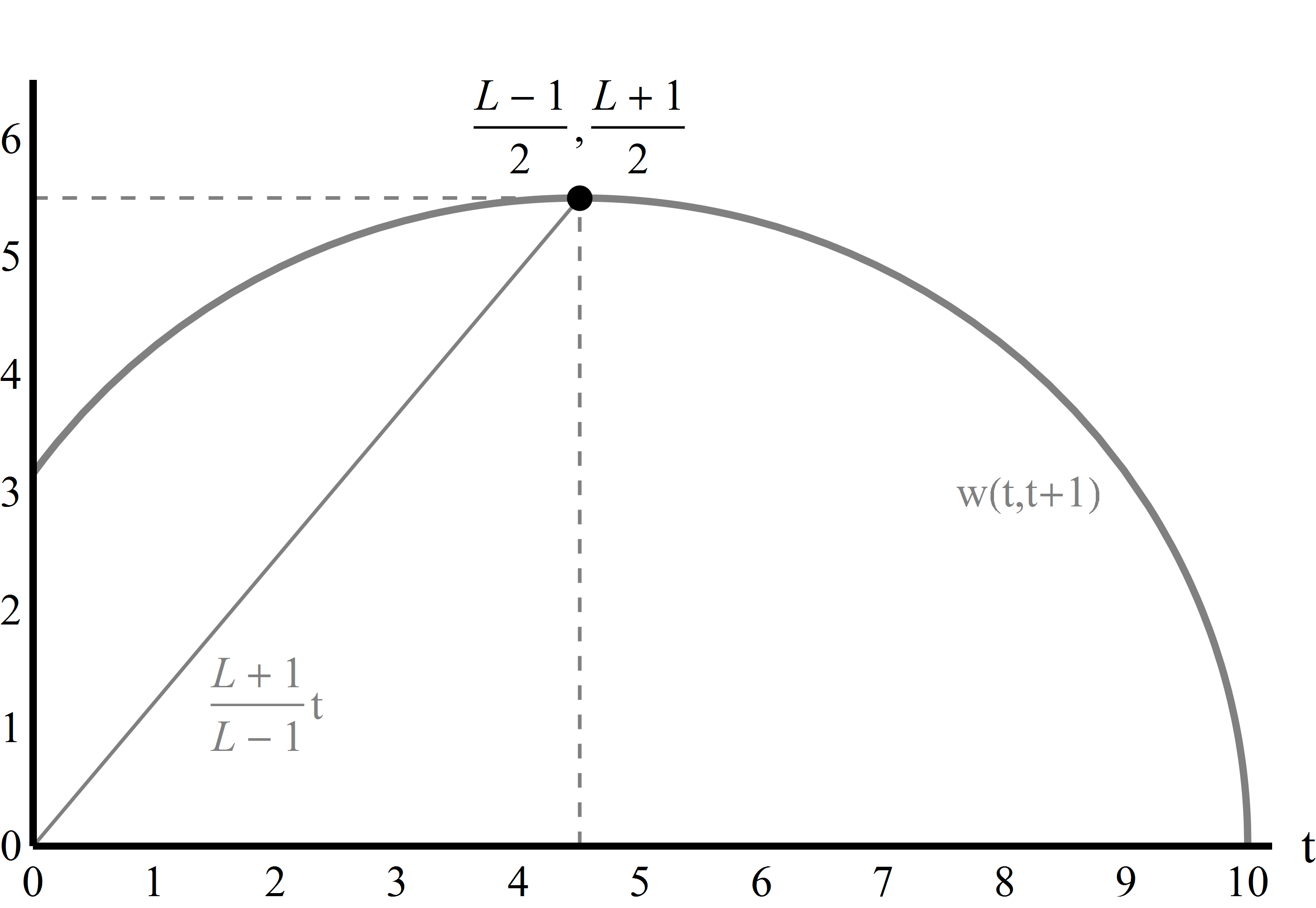}
  \caption{The weight function $w(t,t+1)$ for $L'=10$.}
  \label{fig:hypercube_covering_w}
\end{figure}

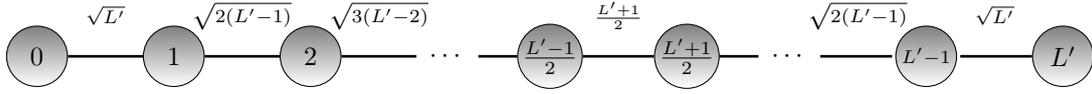
\begin{figure}[ht]
\centering
\begin{tikzpicture}[scale=0.9]
\node (0) at (0,0) [circle,shade,draw,minimum size=8mm] {$0$};
\node (1) at (2,0) [circle,shade,draw,minimum size=8mm] {$1$};
\node (2) at (4,0) [circle,shade,draw,minimum size=8mm] {$2$};
\node (P) at (6,0) [minimum size=8mm] {$\cdots$};
\node (M1) at (7.5,0) [circle,shade,draw,minimum size=8.5mm] {$\,$};
\node (M1*) at (7.5,0) [minimum size=8.5mm] {$\frac{L'-1}{2}$};
\node (M2) at (9.5,0) [circle,shade,draw,minimum size=8.5mm] {$\,$};
\node (M2*) at (9.5,0) [minimum size=8.5mm] {$\frac{L'+1}{2}$};
\node (P2) at (11,0) [minimum size=8mm] {$\cdots$};
\node (L1) at (13,0) [circle,shade,draw,minimum size=8mm] {$\,$};
\node (L1) at (13,0) [minimum size=8mm] {$\scriptstyle L'-1$};
\node (L) at (15,0) [circle,shade,draw,minimum size=8mm] {$L'$};
\draw[line width=1pt] (0) to (1);
\draw[line width=1pt] (1) to (2);
\draw[line width=1pt] (2) to (P);
\draw[line width=1pt] (P) to (M1);
\draw[line width=1pt] (M1) to (M2);
\draw[line width=1pt] (M2) to (P2);
\draw[line width=1pt] (P2) to (L1);
\draw[line width=1pt] (L1) to (L);
\node at (1,0.6) {$\scriptstyle\sqrt{L'}$};
\node at (3,0.6) {$\scriptstyle\sqrt{2(L'-1)}$};
\node at (5,0.6) {$\scriptstyle\sqrt{3(L'-2)}$};
\node at (8.5,0.6) {$\scriptstyle\frac{L'+1}{2}$};
\node at (12,0.6) {$\scriptstyle\sqrt{2(L'-1)}$};
\node at (14,0.6) {$\scriptstyle\sqrt{L'}$};
\end{tikzpicture}
\caption{Covered path of the hypercube with degree $L'$.}
\end{figure}

If we assume now that there are no initial and final identity gates appended to the original circuit, $t=0$ is the only initial and $t=L=L'$ the only final vertex and hence according to Propositions \ref{prop:angleExpl} and \ref{prop:outputProb}
\begin{align*}
\sin^2(\theta) = \frac{d_0}{\vol(V)} =\frac{d_{L}}{\vol(V)} = p
\end{align*}
which results after inserting the derived bounds in 
\begin{align*}
\Omega\left(\frac{1}{L^{1.5}}\right) \ni \sin^2(\theta)=p \in \mathcal{O}\left(\frac{1}{L} \right)\text{.}
\end{align*}

This is a remarkable improvement over the original hypercube construction where either $\theta$ or $p$ could not be lower bounded by an inverse polynomial. The bounds can be further improved by appending initial and final identity gates to the original quantum circuit. Let's add again symmetrically linear many identiy gates, so $L_i=L_f\le \frac{L'}{a}$, $a > 2$:
\begin{align*}
p=\sin^2(\theta) &= \frac{\sum\limits_{t=0}^{L_0} d_t}{\vol(V)}\\
&\ge \frac{ \sum\limits_{t=0}^{\frac{L'}{a}} w(t,t+1)}{\vol(V)}\\
&\ge \frac{ L' w^*}{a\vol(V)}
\end{align*}
with $w^*$ denoting the average of the value $w(t,t+1)$ for the integers $t$ from $0$ to $\left\lfloor \frac{L'}{a} \right\rfloor$. Since $w(t,t+1)$ is concave, this average can be lower bounded by the average of the straight line $f(t) = \frac{L'+1}{L'-1}t$ which turns out to be $\frac{L'+1}{L'-1}\frac{1}{2}\left\lfloor\frac{L'}{a}\right\rfloor \ge \frac{L'+1}{4a}$ for large enough $L'$. Inserting this and the derived upper bound for $\vol(V)$ leads to:
\begin{align*}
p=\sin^2(\theta)&\ge\frac{1}{4a^2}\frac{L'}{(L'+1)}\\
&=\frac{1}{4a^2} \left(1-\frac{1}{L'}\right)\\
p=\sin^2(\theta) &\in \Theta(1)\text{.}
\end{align*}

In the orginal hypercube contruction even adding linearly many initial and final identity gates did not resolve the problem that either $p$ or $\theta$ could not be lower bounded by an inverse polynomial, whereas now in the case of its covered path graph these quantities increase to a constant. Because the gap angle is constant, the gap of the Hamiltonian equals the spectral gap of the weighted path. If the spectral gap $\Theta\left(\frac{1}{L}\right)$ of the hypercube is preserved, then the gap of the Hamiltonian will be a significant improvement over the lower bound $\Omega\left( \frac{1}{L^2}\right)$ of the previous constructions.

Unfortunately a graph does not inherit the Laplacian eigenvalues of its covering graph, but the adjacency eigenvalues (see Theorem \ref{thm:covering}). The second largest adjacency eigenvalue of the $L'$-regular hypercube and hence the covered path graph is $L'-2$ as we can conclude from Lemma \ref{lem:hypercubeGap}. With Theorem \ref{thm:relationship} regarding the relationship between the different spectra we can derive the upper bound
\begin{align*}
\lambda \in \mathcal{O}\left(\frac{1}{L'}\right) = \mathcal{O}\left(\frac{1}{L}\right)
\end{align*}
for the spectral gap of the weighted path graph. This bound is not a contradiction to the desired improvement. Unfortunately we cannot derive a meaningful lower bound on the spectral gap by Theorem \ref{thm:relationship}. But a simple numerical plot in Figure \ref{fig:hypercube_covering_gap} reveals to us that the gap rather scales like $\Theta\left(\frac{1}{L^2}\right)$, which is the well-known behaviour from all our constructions so far. So unfortunately path covering worsens the orignal gap $\Theta\left(\frac{1}{L}\right)$ of the hypercube to $\Theta\left(\frac{1}{L^2}\right)$ and our hope to find a Hamiltonian for a more efficient adiabatic quantum computation using this method is unfulfilled, too.

There is also an intuition explaining why the spectral gap of the weighted path behaves like $\Theta\left(\frac{1}{L^2}\right)$. The weights of the two edges of each vertices are not too different, so the off-diagonal elements of the normalized Laplacian are close to $-\frac{w(t,t+1)}{\sqrt{d_t d_{t+1}}} \approx - \frac{1}{2}$ as one can also see in Figure \ref{fig:hypercube_covering_off}. But a tridiagonal matrix with diagonal elements $1$ and off-diagonal elements $-\frac{1}{2}$ is exactly the normalized Laplacian of an unweighted path graph whose gap is known to be $\Theta\left(\frac{1}{L^2}\right)$ (this holds even if the path has no initial and final loop as one can see in \cite{yueh}).

\begin{figure}[b]
\begin{minipage}[h]{0.47\textwidth}
 \includegraphics[width=\textwidth]{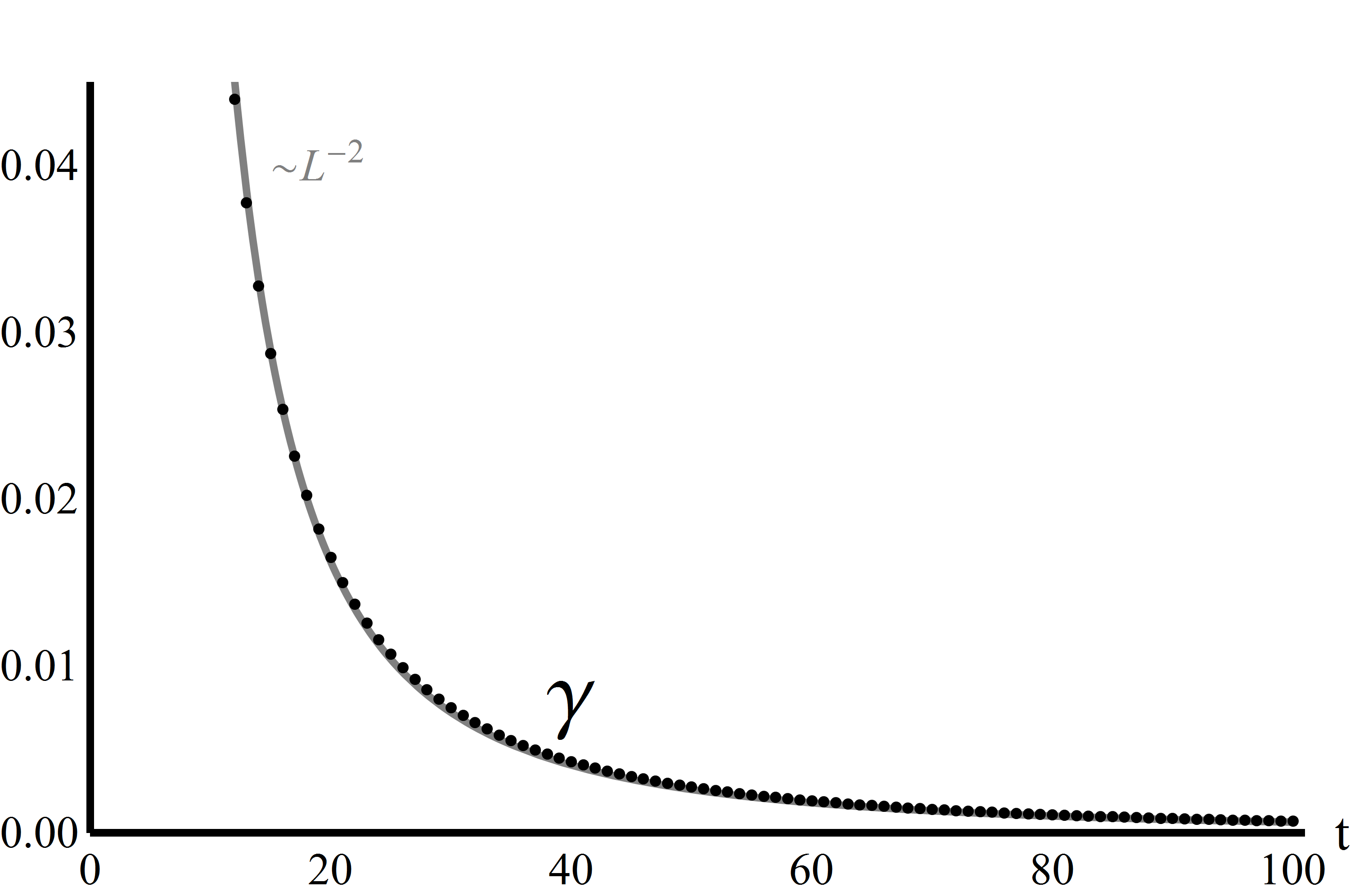}
  \caption{The gap $\gamma$ of the covered path Hamiltonian in comparision to $\Theta(L^{-2})$.}
  \label{fig:hypercube_covering_gap}
\end{minipage}
\begin{minipage}[h]{0.03\textwidth}
\quad
\end{minipage}
\begin{minipage}[h]{0.50\textwidth}
 \includegraphics[width=\textwidth]{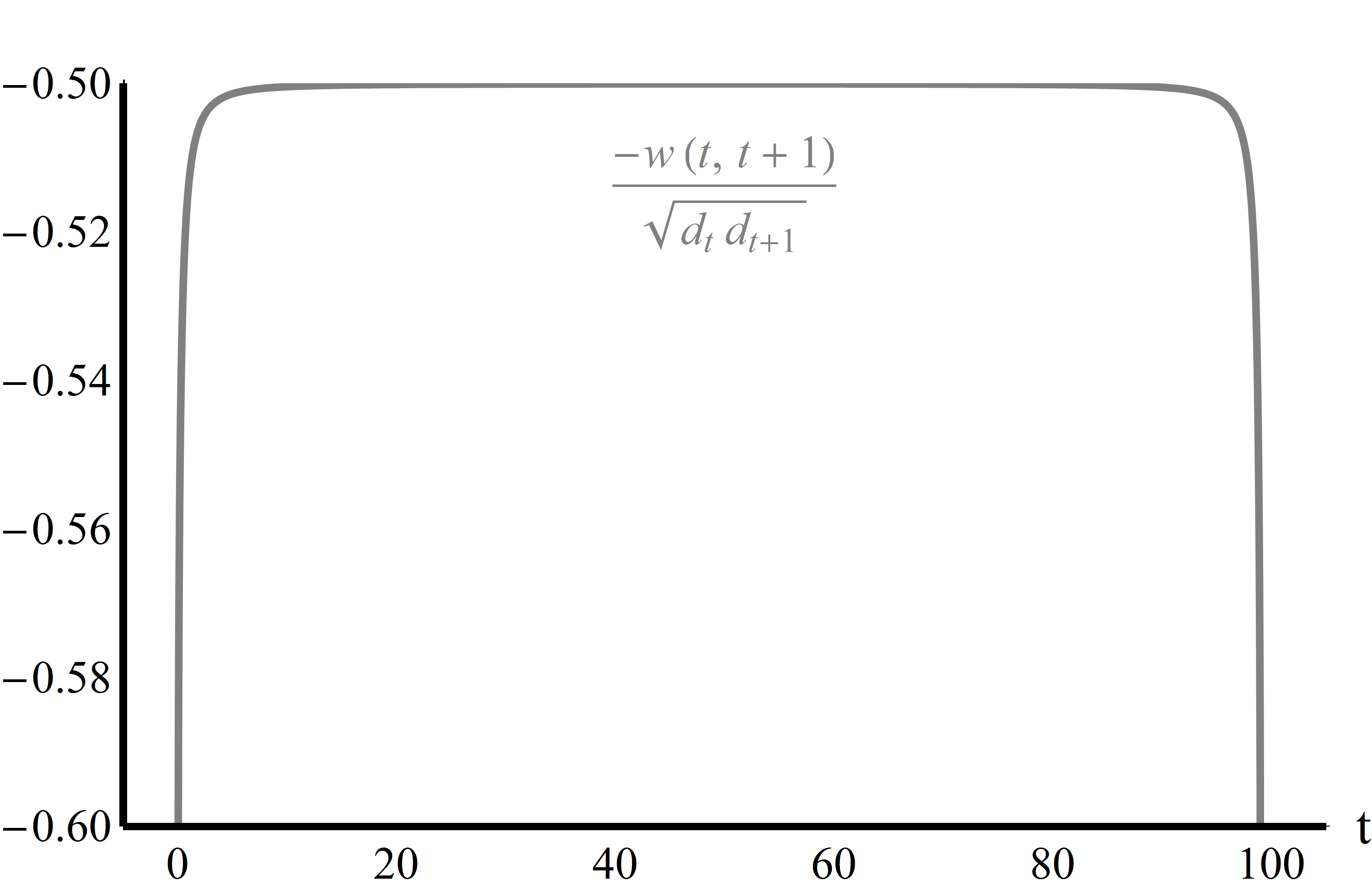}
  \caption{The off-diagonal elements of the normalized Laplacian for $L'=100$.}
  \label{fig:hypercube_covering_off}
\end{minipage}
\end{figure}

The above argument is not mathematically rigorous since the normalized Laplacian does not actually converge to the one of the unweighted path graph. But the idea that the similar neighboring weights lead to this behaviour can be extended to an idea of a further trade-off for weighted path graphs: In the previous section the weights of the path contraction of the hypercube differed exponentially which caused a gap angle $\theta$ or output probability $p$ which was too small. In contrast to this in this section the neighboring weights are too similiar such that the normalized Laplacian is very close to the one of an unweighted path graph and shows its gap behaviour of $\Theta\left(\frac{1}{L^2}\right)$. This raises the question of whether there is a weight assignment that balances both effects off such that the Hamiltonian construction based on the path graph has an efficient output probability and an energy gap better than $\Theta\left(\frac{1}{L^2}\right)$ at the same time.

As a covered path graph does not neccessarily inherit the exponential weight ratios of its covering graph, it is even worth investigating expander graphs as covering graphs. Perhaps one of them leads to a weight assignment as just described. However most expander graphs have a rather difficult structure for defining a suitable time map $\mathcal{T}:V\rightarrow \{0,\dots L'\}$ and hence calculating the weights of the covered path might not be an easy task. Finally we can at least note that we have found in this section again a truely weighted graph that can be used in a standard Hamiltonian construction for an efficient adiabatic quantum computation with energy gap $\gamma \in \Theta\left(\frac{1}{L^2}\right)$ and constant output probability.

\chapter{Summary and Outlook}
\ihead{CHAPTER 9.\,\, SUMMARY AND OUTLOOK}

In this thesis, we have investigated extensions of the Kitaev Hamiltonian for the problem of achieving efficient adiabatic quantum computation since the specific Hamiltonian structure allows us to change the efficiency by modifying the spectral gap of an underlying graph. Although we haven't succeeded in actually contructing a Hamiltonian which improves the spectral gap of $\lambda\in\Omega\left(\frac{1}{L^2}\right)$ ($L$: length of the simulated circuit) we have shown several useful restrictive results and presented new Hamiltonians of the same efficiency performance as the Kitaev Hamiltonian.

Already in the first three basic chapters of the thesis we have introduced some new methods, for example the concept of parallel transport networks in Chapter 2 and a quantum adiabatic theorem for projection operators in Chapter 4. Moreover we motivated a list of requirements for an efficient adiabatic quantum computation which was later specified in Chapter 5 under the assumption of standard graph Hamiltonians. We showed that the norm of their time derivatives is always constant and hence do not influence the evolution time of the adiabatic quantum computation. Furthermore the definition of a standard graph Hamiltonian directly led to some easy expressions for the relevant complexity quantities and allowed us to derive some restrictive results for graphs to be capable of an efficient adiabatic quantum computation. On the one hand we have seen that a spectral gap $\lambda\in\Omega\left(\frac{1}{L^k}\right)$, $k<2$, of a graph with constant degree ratio directly implies a vertex set size that scales faster than polynomial, on the other hand we could prove that graphs with spectral gap $\lambda\in\Omega\left(\frac{1}{L^k}\right)$, $k<1$, cannot even be used for an efficient adiabatic quantum computation at all.

After presenting the Kitaev Hamiltonian in Chapter 6, we constructed a standard graph Hamiltonian based on a hypercube graph in Chapter 7. In its first configuration (hypercube of linear degree) this Hamiltonian showed a vertex expansion that led to an inefficient adiabatic quantum computation. This possible behaviour had already been predicted earlier in Chapter 5. The second hypercube configuration (of degree $L^2$) scaled the spectral gap again down to the same as in the Kitaev contruction, but thereby became a Hamiltonian that can be used for an efficient adiabatic quantum computation.

The thesis finished with a short excursion to standard graph Hamiltonians based on weighted graphs in Chapter 8. It turned out that contraction does not resolve problematic expansion properties of a graph but might help for an easier local implementation of the Hamiltonian, whereas path covering suggests to be the more promising tool for turning unweighted graphs into weighted graphs with adjusted expansion properties.

In particular, the last Chapters motivate us to investigate some more graphs since the restriction results so far do not contradict a possible gap of $\lambda\in\Omega\left(\frac{1}{L}\right)$ and other graphs beside the path graph of the Kitaev construction have been presented to be capable of an efficient adiabatic quantum computation. For this purpose it would be desirable to find a straight forward way to contruct optimzied graphs candidates. The research field of spectral graph theory has developed a lot of tools for the adjustment of certain graph properties, inlcuding powering, replacing and certain graph products. Perhaps there is a way to extend some of these tools to parallel transport network, guaranteeing a certain diameter and time map, to transform a starting graph into a candidate for an effective adiabatic quantum computation.

From the other point of view a result proving that no improvement is possible can perhaps be derived by further investigating the tradeoff and implications between spectral gap, vertex expansion and gap angle. It is also conceivable that a combinatorical argument will show that the expansion properties of graphs with a spectral gap better than $\lambda\in\Omega\left(\frac{1}{L^2}\right)$ contradicts the required locality of the normalized Laplacian.

In this case one has to go beyond the model of standard graph Hamiltonians to improve the evolution time of the adiabatic quantum computation. As mentioned in the context of several acceptable input and output states, a concept that extends and assumes certain knowledge about the simulated quantum circuit might provide the oppertunity to define a different penalty term that results in an improved gap angle. Taking one step further, one could try to extend the idea of underlying graphs to hypergraphs or can overthrow the invariance assumption of parallel transport networks that different paths between two vertices always have to carry the same assigned unitary. Another suggestion is to find a totally different realization of a clock. It is conspicuous that in the original Kitaev construction the path graph directly corresponds to the function of a clock, whereas any other graph like the hypercube manifests a clear distinction between the actual vertex space and an artificial clock space.

Finally the whole model of adiabatic quantum computation is based on the fact that the ground-state carries the computation information. If a system does not stay in its ground-state it will most likely end up in one of the next lowest excited states. One can fundamentally change the computation model by considering not only the ground-state but also the first excited states as acceptable output states. Unfortunately no quantum adiabatic theorem that ensures the close evolution to states of different energy levels is known so far. But other computation models related to adiabatic quantum computation like measurement algorithms exist that are capable of evaluating such behaviour. One may also consider including error correction into such an algorithm to improve its efficiency.

\bibliography{master}
\bibliographystyle{plain}
\end{document}